\definecolor{blue}{rgb}{0.1,0.2,0.5}
\definecolor{brown}{rgb}{0.6,0.6,0.2}
\newcommand*\patchAmsMathEnvironmentForLineno[1]{%
  \expandafter\let\csname old#1\expandafter\endcsname\csname #1\endcsname
  \expandafter\let\csname oldend#1\expandafter\endcsname\csname end#1\endcsname
  \renewenvironment{#1}%
     {\linenomath\csname old#1\endcsname}%
     {\csname oldend#1\endcsname\endlinenomath}}%
\newcommand*\patchBothAmsMathEnvironmentsForLineno[1]{%
  \patchAmsMathEnvironmentForLineno{#1}%
  \patchAmsMathEnvironmentForLineno{#1*}}%
\theoremstyle{plain}
\newtheorem{theorem}{Theorem}
\newcommand{\newtheoremwithcrefformat}[2]{%
  \newtheorem{#1}[theorem]{#2}%
  \crefformat{#1}{##2\MakeUppercase#1~##1##3}%
  \Crefformat{#1}{##2\MakeUppercase#1~##1##3}%
}
\newcommand{\newseptheoremwithcrefformat}[2]{%
  \newtheorem{#1}{#2}%
  \crefformat{#1}{##2\MakeUppercase#1~##1##3}%
  \Crefformat{#1}{##2\MakeUppercase#1~##1##3}%
}
\newcommand{\newclaimwithcrefformat}[2]{%
  \newtheorem{#1}{#2}[theorem]%
  \crefformat{#1}{##2\MakeUppercase#1~##1##3}%
  \Crefformat{#1}{##2\MakeUppercase#1~##1##3}%
}
\newenvironment{cthm}[1]
  {\cthmin}
  {\endcthmin}
\theoremstyle{definition}
\theoremstyle{nonumberplain}
\newcommand{\cD}{\mathcal{D}}
\newcommand{\cH}{\mathcal{H}}
\newcommand{\cP}{\mathcal{P}}
\newcommand{\cQ}{\mathcal{Q}}
\newcommand{\cT}{\mathcal{T}}
\newcommand{\cX}{\mathcal{X}}
\newcommand{\cY}{\mathcal{Y}}
\newcommand{\cZ}{\mathcal{Z}}
\newcommand{\tG}{\widetilde{G}}
\newcommand{\tH}{\widetilde{H}}
\newcommand{\tL}{\widetilde{L}}
\newcommand{\tF}{\widetilde{F}}
\newcommand{\tpi}{\widetilde{\pi}}
\newcommand{\Cg}{\mathscr{C}_g}
\newcommand{\ork}[1]{\mathrm{OR}_{#1}}
\newcommand{\nand}[1]{\mathrm{NAND}_{#1}}
\DeclareMathOperator{\dist}{dist}
\newcommand{\N}{\mathbb{N}}
\newcommand{\R}{\mathbb{R}}
\newcommand{\X}{\mathbb{X}}
\newcommand{\Y}{\mathbb{Y}}
\newcommand{\vphi}{\varphi}
\newcommand{\tphi}{\widetilde{\vphi}}
\newcommand{\eps}{\varepsilon}
\renewcommand{\epsilon}{\varepsilon}
\newcommand{\Oh}{\mathcal{O}}
\newcommand{\DD}{\mathbb{D}}
\newcommand{\PP}{\mathbb{P}}
\newcommand{\oX}{\overline{\cX}}
\newcommand{\oY}{\overline{\cY}}
\newcommand{\oZ}{\overline{\cZ}}
\newcommand{\oalpha}{\overline{\alpha}}
\newcommand{\obeta}{\overline{\beta}}
\newcommand{\ogamma}{\overline{\gamma}}
\newcommand{\dd}{\textbf{d}}
\newcommand{\od}{\overline{\textbf{d}}}
\newcommand{\x}{\textbf{x}}
\newcommand{\y}{\textbf{y}}
\newcommand{\z}{\textbf{z}}
\renewcommand{\leq}{\leqslant}
\renewcommand{\geq}{\geqslant}
\renewcommand{\le}{\leqslant}
\renewcommand{\ge}{\geqslant}
\newcommand{\homo}[1]{\textsc{Hom}(\ensuremath{#1})\xspace}
\newcommand{\lhomo}[1]{\textsc{LHom}(\ensuremath{#1})\xspace}
\newcommand{\coloring}[1]{\ensuremath{#1}-\textsc{Coloring}\xspace}
\newcommand{\tw}[1]{{\operatorname{tw}(#1)}}
\newcommand{\pw}[1]{{\operatorname{pw}(#1)}}
\newcommand{\fvs}[1]{{\operatorname{fvs}(#1)}}
\newcommand{\ctw}[1]{{\operatorname{ctw}(#1)}}
\newcommand{\mim}[1]{{\operatorname{mim}(#1)}}
\newcommand{\girth}[1]{{\operatorname{girth}(#1)}}
\newenvironment{claimproof}{\noindent {\emph{Proof of Claim.}}}{\hfill$\blacksquare$\smallskip}
\declaretheorem[sibling=theorem]{lemma}
\declaretheorem[sibling=theorem]{corollary}
\declaretheorem[sibling=theorem]{definition}
\newcommand\rsout{\bgroup\markoverwith{\textcolor{red}{\rule[0.5ex]{2pt}{0.4pt}}}\ULon}
\begin{document}
\title{Fine-grained complexity of the list homomorphism problem: feedback vertex set and cutwidth\thanks{This work is supported by Polish National Science Centre grant no. 2018/31/D/ST6/00062.}}

\author{Marta Piecyk\thanks{Warsaw University of Technology, Faculty of Mathematics and Information Science, \texttt{m.piecyk@mini.pw.edu.pl}} \and Paweł Rzążewski\thanks{Warsaw University of Technology, Faculty of Mathematics and Information Science and University of Warsaw, Institute of Informatics, \texttt{p.rzazewski@mini.pw.edu.pl}}
}

\begin{titlepage}
\def\thepage{}
\thispagestyle{empty}
\maketitle

\begin{abstract}
For graphs $G,H$, a homomorphism from $G$ to $H$ is an edge-preserving mapping from $V(G)$ to $V(H)$.
In the list homomorphism problem, denoted by \textsc{LHom}($H$), we are given a graph $G$,
whose every vertex $v$ is equipped with a list $L(v) \subseteq V(H)$, and we need to determine whether
there exists a homomorphism from $G$ to $H$ which additionally respects the lists $L$.
List homomorphisms are a natural generalization of (list) colorings.

Very recently Okrasa, Piecyk, and Rz\k{a}\.zewski [ESA 2020] studied the fine-grained
complexity of the problem, parameterized by the treewidth of the instance graph $G$.
They defined a new invariant $i^*(H)$, and proved that for every relevant graph $H$,
this invariant is the correct base of the exponent in the running time 
of any algorithm solving the  \textsc{LHom}($H$) problem.

In this paper we continue this direction and study the complexity of the problem under different
parameterizations. As the first result, we show that $i^*(H)$ is also the right complexity base 
if the parameter is the size of a minimum feedback vertex set of $G$, denoted by $\textrm{fvs}(G)$.
In particular, for every relevant graph $H$, the \textsc{LHom}($H$) problem
\begin{itemize}
\item can be solved in time $i^*(H)^{\textrm{fvs}(G)} \cdot |V(G)|^{\mathcal{O}(1)}$, if
a minimum feedback vertex set of $G$ is given,
\item cannot be solved in time $(i^*(H) - \varepsilon)^{\textrm{fvs}(G)} \cdot |V(G)|^{\mathcal{O}(1)}$,
for any $\varepsilon >0$, unless the SETH fails.
\end{itemize}

Then we turn our attention to a parameterization by the cutwidth $\textrm{ctw}(G)$ of $G$.
Jansen and Nederlof~[ESA 2018] showed that \textsc{List $k$-Coloring} (i.e., \textsc{LHom}($K_k$))  can be solved in time 
$\mathcal{O}^*\left (c^{\textrm{ctw}(G)}\right)$ for an absolute constant $c$, i.e., the base
of the exponential function does not depend on the number of colors.
Jansen asked whether this behavior extends to graph homomorphisms.
As the main result of the paper, we answer the question in the negative.
We define a new graph invariant $mim^*(H)$, closely related to the size of a maximum
induced matching in $H$, and prove that for all relevant graphs $H$, the \textsc{LHom}($H$)
problem cannot be solved in time $\mathcal{O}^*\left ((mim^*(H)-\varepsilon)^{\textrm{ctw}(G)}\right)$
for any $\varepsilon >0$, unless the SETH fails.
In particular, this implies that there is no constant $c$,
such that for every odd cycle the non-list version of the problem can be solved in time $\mathcal{O}^*\left (c^{\textrm{ctw}(G)} \right)$.

Finally, we generalize the algorithm of Jansen and Nederlof, so that it can be used to solve
\textsc{LHom}($H$) for every graph $H$; its complexity depends on $\textrm{ctw}(G)$ and  another invariant of $H$, which is constant for cliques.

\end{abstract}
\end{titlepage}

\section{Introduction}
The $k \coloring$ problem, which asks whether an input graph $G$ admits a proper coloring with $k$ colors,
is arguably one of the  best studied computational problems.
The problem is known to be notoriously hard: it is polynomial-time solvable (and, in fact, very simple) only for $k \leq 2$, and \textsf{NP}-complete otherwise, even in very restricted classes of graphs~\cite{GAREY1976237,DBLP:journals/siamcomp/Holyer81a,DBLP:journals/combinatorica/KhannaLS00,DBLP:journals/ejc/Huang16}.

When dealing with such a hard problem, an interesting direction of research is to study its \emph{fine-grained complexity} depending on some parameters of input instances, in order to understand where the boundary of easy and hard cases lies.
Such investigations usually follow two paths in parallel.
On one hand, we extend our algorithmic toolbox in order to solve the problem efficiently in various settings.
On the other hand, we try to show hardness of the problem, using appropriate reductions. This way we can show some lower bounds for the algorithms solving the problem.

In order to obtain meaningful lower bounds, the basic assumption of the classical complexity theory, i.e., $\textsf{P} \neq \textsf{NP}$, is not strong enough.
The usual assumptions used in this context are the Exponential Time Hypothesis (ETH) and the Strong Exponential Time Hypothesis (SETH), both formulated by Impagliazzo and Paturi~\cite{ImpagliazzoPaturi,DBLP:journals/jcss/ImpagliazzoPZ01}. Let us point out that the SETH is indeed stronger than the ETH, i.e, the former implies the latter one~\cite{DBLP:books/sp/CyganFKLMPPS15}.

\begin{conjecture}[ETH]
There exists $\delta > 0$, such that \textsc{3-Sat} with $n$ variables cannot be solved in time $2^{\delta \cdot n} \cdot n^{\Oh(1)}$.
\end{conjecture}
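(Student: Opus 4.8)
The ETH is a conjectural hardness assumption rather than a provable statement, so a ``proof'' in the usual sense is out of reach; the plan is therefore to explain precisely why, and to record the evidence that justifies adopting it as an axiom for the remainder of the paper.

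First I would observe that ETH implies $\textsf{P} \neq \textsf{NP}$: were $\textsf{P} = \textsf{NP}$, then \textsc{3-Sat} would admit an algorithm running in time $n^{\Oh(1)}$, which is smaller than $2^{\delta n} \cdot n^{\Oh(1)}$ for every $\delta > 0$, contradicting the conjecture for each choice of $\delta$. Consequently, any honest proof of ETH would in particular separate $\textsf{P}$ from $\textsf{NP}$, and would have to evade the known barriers to such separations (relativization, natural proofs, algebrization). Since no technique capable of this is known, the statement is --- like $\textsf{P} \neq \textsf{NP}$ itself, but quantitatively stronger --- taken on faith; its role in this work is to serve as the complexity-theoretic baseline against which the fine-grained lower bounds for \textsc{LHom}($H$) are calibrated.

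Second, I would assemble the positive evidence a reader expects at this point. Despite sustained effort, the fastest known algorithms for \textsc{$k$-Sat} run in time $2^{(1 - c_k) n}$ with $c_k \to 0$, and in particular no $2^{o(n)}$-time algorithm for \textsc{3-Sat} is known; moreover, by the Sparsification Lemma of Impagliazzo, Paturi and Zane, ETH is equivalent to the same lower bound for instances with only $\Oh(n)$ clauses, which makes the hypothesis robust and connects it to a large web of equivalent and closely related statements. The SETH, which is the assumption actually used for the lower bounds in this paper, is the stronger and more fragile claim that the exponent savings must in fact tend to $0$, that is, for every $\eps > 0$ there is a $k$ such that \textsc{$k$-Sat} has no $2^{(1 - \eps) n}$-time algorithm.

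The hard part, predictably, is the only part that matters: a genuine proof would resolve $\textsf{P}$ versus $\textsf{NP}$ (and far more), so none is available; the right move is simply to proceed with the conjecture, and later the SETH, as unproven assumptions, exactly as the rest of the paper does.
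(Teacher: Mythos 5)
You are right that the ETH is stated in the paper only as a conjecture (attributed to Impagliazzo and Paturi) and is used as an unproven hardness assumption, so no proof exists or is attempted in the paper; your recognition of this, together with the supporting context, matches the paper's treatment exactly. Nothing further is needed.
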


\begin{conjecture}[SETH]
\textsc{CNF-Sat} with $n$ variables and $m$ clauses cannot be solved in time $(2-\eps)^n \cdot (n+m)^{\Oh(1)}$ for any $\eps>0$.
\end{conjecture}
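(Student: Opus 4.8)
The statement is the Strong Exponential Time Hypothesis, which is an open \emph{conjecture}, not a theorem: no proof is known, and producing one would be a breakthrough far outside the scope of this paper. So rather than a genuine proof sketch, the ``plan'' here is an account of what establishing the statement would require, and why each ingredient is presently out of reach.

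First, one must be precise about the target: for \emph{every} constant $\eps>0$ and \emph{every} algorithm in a fixed model (say, the word RAM), deciding satisfiability of an $n$-variable, $m$-clause CNF formula must take more than $(2-\eps)^n\cdot(n+m)^{\Oh(1)}$ time. This is an \emph{unconditional} lower bound against arbitrary algorithms, so step one would be to develop a lower-bound method that applies to general, unrestricted computation. Because \textsc{CNF-Sat} lies in $\mathsf{NP}$, any such bound already implies $\mathsf{P}\neq\mathsf{NP}$ (indeed, ruling out $2^{o(n)}$-time algorithms yields ETH via the sparsification lemma), and SETH demands the far finer conclusion that the constant in the base cannot be pushed below $2$ by any margin. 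No technique is known that yields super-polynomial — let alone tight exponential — lower bounds for any $\mathsf{NP}$ problem, and the classical barriers (relativization, natural proofs, algebrization) explain why the standard tools stall here. I expect this to be the main obstacle: it is essentially the central open problem of complexity theory, and no reduction of SETH to a more tractable question is known.

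A more modest route would be to \emph{derive} SETH from a weaker, widely believed assumption, or to isolate a ``hardest'' fragment of \textsc{CNF-Sat} — a concrete instance family on which exhaustive search over all $2^n$ assignments is provably optimal — and then reduce general \textsc{CNF-Sat} to it with only polynomial blow-up in the number of variables. The reduction step is delicate but conceivable; the hardness step is not, since proving that no algorithm beats brute force on an explicit family again needs exactly the lower-bound technology we lack. Moreover, any candidate family must have \emph{unbounded} clause width: for every fixed $k$ there is $\delta_k>0$ with a $(2-\delta_k)^n$-time algorithm for $k$-\textsc{Sat}, so the conjecture genuinely lives in the regime of arbitrarily wide clauses, and any argument must exploit that feature without being allowed to fix $k$.

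In short, there is no known proof strategy. The honest position — and the one the rest of the paper adopts — is to treat SETH as a hypothesis: it is assumed because it has resisted decades of algorithmic attack and because, as the later sections demonstrate, it produces sharp and falsifiable predictions for the complexity of concrete problems such as \textsc{LHom}($H$), which is what we actually analyze.
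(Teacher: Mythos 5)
You are right: this statement is the Strong Exponential Time Hypothesis of Impagliazzo and Paturi, which the paper states as a \emph{conjecture}, cites, and uses purely as an assumption for its lower bounds --- no proof is given or expected. Your assessment that it cannot be proved with current techniques and should be treated as a hypothesis matches the paper's treatment exactly, so there is nothing to fix.
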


In case of $k \coloring$, the most natural parameter is the number of vertices.
While the brute-force approach to solve the problem on an $n$-vertex instance takes time $k^n \cdot n^{\Oh(1)}$, it is known that this can be improved, so that the base of the exponential function does not depend on $k$. The currently best algorithm is due to Bj\"orklund, Husfeldt, and Koivisto~\cite{DBLP:journals/siamcomp/BjorklundHK09} and has complexity $2^n \cdot n^{\Oh(1)}$. On the other hand, the standard hardness reduction shows that the problem cannot be solved in time $2^{o(n)}$, unless the ETH fails~\cite{DBLP:books/sp/CyganFKLMPPS15}.

Similarly, we can ask how the complexity depends on some parameters, describing the structure of the instance.
The most famous structural parameter is arguably the \emph{treewidth} of the graph, denoted by $\tw{G}$~\cite{DBLP:journals/dam/ArnborgP89,DBLP:journals/jal/RobertsonS86,Bodlaender:2008:COG:2479371.2479375}.
Intuitively, treewidth measures how tree-like the graph is. Thus, on graphs with bounded treewidth, we can mimick the bottom-up dynamic programming algorithms that works very well on trees.
In case of the $k \coloring$ problem, the complexity of such a straightforward approach is $k^{\tw{G}} \cdot n^{\Oh(1)}$, where $n$ is the number of vertices of of an instance graph $G$, provided that $G$ is given along with its tree decomposition of width $\tw{G}$. 
One might wonder whether this could be improved, in particular, if one can design an algorithm with running time $c^{\tw{G}} \cdot n^{\Oh(1)}$, where $c$ is a constant that does not depend on $k$, as it was possible in the case if the parameter is $n$.
Lokshtanov, Marx, and Saurabh~\cite{DBLP:conf/soda/LokshtanovMS11a} proved that this is unlikely, and an algorithm with running time $(k-\epsilon)^{\tw{G}} \cdot n^{\Oh(1)}$, for any $\epsilon >0$, would contradict the SETH. This lower bounds holds even if we replace treewidth with pathwidth $\pw{G}$; the latter result is stronger, as we always have $\tw{G} \leq \pw{G}$.

Another way to measure how close a graph $G$ is to a tree or forest is to analyze the size of a minimum \emph{feedback vertex set}, i.e., the minimum number of vertices that need to be removed from $G$ to break all cycles. This parameter is denoted by $\fvs{G}$.
If $G$ is given with a minimum feedback vertex set $S$, we can easily solve $k \coloring$ by enumerating all possible colorings of $S$, and trying to extend them on the forest $G-S$ using dynamic programming. The running time of such a procedure is $k^{\fvs{G}} \cdot n^{\Oh(1)}$.
This is complemented by a hardness result of Lokshtanov, Marx, and Saurabh~\cite{DBLP:conf/soda/LokshtanovMS11a}, who showed that the problem cannot be solved in time $(k-\epsilon)^{\fvs{G}} \cdot n^{\Oh(1)}$ for any $\epsilon >0$, unless the SETH fails.
Let us point that $\pw{G}$ and $\fvs{G}$ are incomparable parameters, so this result is incomparable with the previously mentioned lower bound.
These two lower bounds were later unified by Jaffke and Jansen~\cite{DBLP:conf/ciac/JaffkeJ17}, who considered the parameterization by the \emph{distance to a linear forest}.

The above examples show a behavior which is typical for many other parameters: the running time of the algorithm depends on the number $k$ of colors and this dependence is necessary under standard complexity assumptions~\cite{DBLP:conf/soda/GolovachL0Z18,DBLP:conf/icalp/Lampis18,DBLP:conf/ciac/JaffkeJ17}.
Thus, it was really surprising that Jansen and Nederlof~\cite{DBLP:journals/tcs/JansenN19} showed that for any $k$, the $k \coloring$ problem can be solved in time $c^{\ctw{G}} \cdot n^{\Oh(1)}$, where $c$ is an absolute constant and $\ctw{G}$ is the \emph{cutwidth} of $G$.
Intuitively, we can imagine $\ctw{G}$ as follows. We fix some permutation of the vertices of $G$ and place them on a horizontal line in this ordering. The edges of $G$ are drawn as arcs above the line; we do not care about intersections. Now, the width of this arrangement is the maximum number of edges that can be cut by a vertical line. The cutwidth is the minimum width over all linear arrangements of vertices of $G$.
The substantial difference between cutwidth and the previously mentioned parameters is that cutwidth corresponds to the number of edges, not the number of vertices, and, in particular, $\ctw{G}$ is not upper-bounded by $|V(G)|$.
Also, it is known that $\pw{G} \leq \ctw{G}$~\cite{PathwidthCutwidth}.
To be more specific, Jansen and Nederlof~\cite{DBLP:journals/tcs/JansenN19} presented two algorithms for $k \coloring$, parameterized by the cutwidth. The first one is deterministic and has running time $2^{\omega \cdot \ctw{G}} \cdot n^{\Oh(1)}$, where $\omega < 2.373$ is the matrix multiplication exponent, see Coppersmith, Winograd~\cite{DBLP:journals/jsc/CoppersmithW90} and Vassilevska-Williams~\cite{DBLP:conf/stoc/Williams12}. The second algorithm is randomized and works in time $2^{\ctw{G}} \cdot n^{\Oh(1)}$. Also, the authors show that the latter complexity is optimal under the SETH, even for $3 \coloring$.

Let us point out that all the algorithms mentioned above work also for the more general \textsc{List $k$-Coloring} problem, where each vertex $v$ of $G$ is equipped with a list $L(v) \subseteq \{1,2,\ldots,k\}$, and we additionally require that the assigned color comes from this list.
The general direction of our work is to investigate how further the techniques developed for $k \coloring$ can be generalized.

\paragraph{Graph homomorphisms.}
A rich family of graph problems that generalize $k \coloring$ comes from considering \emph{graph homomorphisms}.
A homomorphism from a graph $G$ to a graph $H$ (called \emph{target}) is an edge-preserving mapping from $V(G)$ to $V(H)$. In the \homo{H} problem we ask if the input graph $G$ admits a homomorphism to $H$, which is usually treated as a fixed graph.
Observe that if $H$ is $K_k$, i.e., a complete graph on $k$ vertices, then \homo{H} is equivalent to $k \coloring$.
The complexity classification of \homo{H} was provided by the seminal paper by Hell and Ne\v{s}et\v{r}il~\cite{DBLP:journals/jct/HellN90}: the problem is polynomial-time solvable if $H$ is bipartite or has a vertex with a loop, and \textsf{NP}-complete otherwise.
This problem can also be considered in a list setting, where every vertex $v$ of $G$ is equipped with a list $L(v) \subseteq V(H)$, and we ask for a homomorphism from $G$ to $H$, which additionally respects lists $L$. The corresponding computational problem is denoted by \lhomo{H}.

The complexity dichotomy for \lhomo{H} was proven in three steps: first, for reflexive graphs $H$ (i.e., where every vertex has a loops) by Feder and Hell~\cite{FEDER1998236}, then for irreflexive graphs $H$ (i.e., with no loops) by Feder, Hell, and Huang~\cite{DBLP:journals/combinatorica/FederHH99}, and finally, for all graphs $H$, again by  Feder, Hell, and Huang~\cite{DBLP:journals/jgt/FederHH03}.
The problem appears to be polynomial-time solvable if $H$ is a so-called \emph{bi-arc graph}. We will now skip the definition of this class and return to it in \cref{sec:preliminaries}. Let us also mention a special case if $H$ is irreflexive and bipartite: then the \lhomo{H} problem is in \textsf{P} if the complement of $H$ is a circular-arc graphs, and otherwise the problem is \textsf{NP}-complete. This special case will play a prominent role in our paper.

Let us point out that despite the obvious similarity of \homo{H} and \lhomo{H}, the methods used to prove lower bounds are very different.
In case of the \homo{H}, all hardness results use some algebraic tools, which allow us to capture the structure of the whole graph $H$ at once.
On the other hand, hardness proofs for \lhomo{H} are purely combinatorial and are based on the analysis of some small subgraphs of $H$.

The study of the complexity of \homo{H}, \lhomo{H}, and their variants led to many interesting algorithms and lower bounds~\cite{DBLP:journals/algorithmica/ChitnisEM17,DBLP:journals/mst/EgriKLT12,DBLP:conf/wg/ChenCD19,DBLP:conf/soda/HellR11,DBLP:conf/wg/FialaK06,DBLP:conf/lics/DalmauEHLR15,DBLP:conf/soda/EgriHLR14}. Let us mention few of them, that are most relevant to our results. For more information about the combinatorics and complexity of graph homomorphisms, we refer the reader to the comprehensive monograph by Hell and Ne\v{s}et\v{r}il~\cite{hell2004graphs}.

A brute-force approach to solving \homo{H} (and \lhomo{H}) has complexity $|V(H)|^n \cdot n^{\Oh(1)}$.
This can be improved if $H$ has some special structure: several algorithms with running time $f(H)^n \cdot n^{\Oh(1)}$ were obtained, where $f$ is a function of some structural parameter of $H$. Among possible choices of this parameter we can find the maximum degree (folklore), treewidth~\cite{DBLP:journals/mst/FominHK07}, clique-width~\cite{DBLP:journals/mst/Wahlstrom11}, or bandwidth of the complement~\cite{DBLP:journals/ipl/Rzazewski14}. A natural open question was whether one can obtain a $c^{n}$ algorithm, where $c$ is a constant that does not depend on $H$~\cite{DBLP:journals/mst/Wahlstrom11}.
This question was finally answered in the negative by Cygan \emph{et al.}~\cite{DBLP:journals/jacm/CyganFGKMPS17}, who proved that the brute force algorithm is essentially optimal under the ETH.

If we are interested in the complexity, parameterized by the treewidth of $G$, then both \homo{H} and \lhomo{H} can be solved in time $|V(H)|^{\tw{G}} \cdot n^{\Oh(1)}$ by a naive dynamic programming (again, provided that $G$ is given with a tree decomposition). 
The fine-grained complexity of the \homo{H} problem, parameterized by the treewidth of $G$, was studied recently by Okrasa and Rz\k{a}\.zewski~\cite{OkrasaSODA}. Using mostly algebraic tools, they were able obtain tight bounds, conditioned on two conjectures from algebraic graph theory from early 2000s.

The analogous question for the \lhomo{H} problem was first investigated by Egri, Marx, and Rz\k{a}\.zewski~\cite{DBLP:conf/stacs/EgriMR18} for reflexive graphs $H$, and then by Okrasa, Piecyk, and Rz\k{a}\.zewski~\cite{LhomoTreewidth,LhomoTreewidthFull} for the general case.
The authors defined a new graph invariant $i^*(H)$, and proved the following, tight bounds (recall that always $\tw{G} \leq \pw{G}$).

\begin{theorem}[Okrasa, Piecyk, Rzążewski~\cite{LhomoTreewidth,LhomoTreewidthFull}]\label{thm:lhomo-treewidth}
Let $H$ be a connected, non-bi-arc graph.
\begin{enumerate}[a)]
\item Even if $H$ is given in the input, every instance $(G,L)$ of  $\lhomo H$ can be solved in time $i^*(H)^{\tw{G}}\cdot (|V(G)| \cdot |V(H)|)^{\Oh(1)}$, provided that $G$ is given along with a tree decomposition of width $\tw{G}$.
\item Even if $H$ is fixed, there is no algorithm that solves every instance $(G,L)$ of $\lhomo H$ in time $(i^*(H)-\eps)^{\pw{G}}\cdot |V(G)|^{\Oh(1)}$ for any $\eps>0$, unless the SETH fails.
\end{enumerate}
\end{theorem}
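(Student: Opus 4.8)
The two parts are proved by essentially independent arguments, linked only through the combinatorial meaning of $i^*(H)$: algorithmically, $i^*(H)$ should bound the number of ``essentially distinct'' images a vertex of $G$ can take once the instance has been cleaned up; for the lower bound, $i^*(H)$ states should be transmittable without loss along a separator of size one in $G$.

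For part~(a), the plan is to first make $(G,L)$ consistent by running arc-consistency on the lists (rejecting if some list empties). Then, using the fact that $N_H(a)\subseteq N_H(b)$ implies that $a$ may be deleted from any list containing $b$ without changing the answer — applied inside an auxiliary bipartite graph $H^*$ associated with $H$, where this domination argument is clean and one must work per side of the bipartition — every list may be replaced by the set of its maximal elements in the neighbourhood-containment preorder, an \emph{incomparable set} (antichain). The key step is then to invoke the structural decomposition of $H^*$ underlying the definition of $i^*(H)$: it allows the instance to be split into a bounded number of independent instances over the ``factors'' of $H^*$, each factor admitting no incomparable set of size more than $i^*(H)$. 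Solving each factor-instance by the naive bag-by-bag dynamic programming over the given tree decomposition costs $i^*(H)^{\tw G}\cdot(|V(G)|\cdot|V(H)|)^{\Oh(1)}$, and recombining the $\Oh(1)$ partial answers is polynomial. The delicate points I expect are: (i) that the reduction to incomparable lists stays sound when $G$ is disconnected and $H$ is bipartite, so that ``which side'' is a real degree of freedom; and (ii) that the factor decomposition is solution-preserving and efficiently computable — that no interaction between factors is lost and the projected lists behave correctly.

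For part~(b), the plan is a SETH reduction from \textsc{CNF-SAT} modelled on the Lokshtanov--Marx--Saurabh lower bound for \textsc{List $k$-Coloring}~\cite{DBLP:conf/soda/LokshtanovMS11a}, taking $k=i^*(H)$. Fix a factor of $H^*$ attaining $i^*(H)$ together with an incomparable set $S=\{h_1,\dots,h_{i^*(H)}\}$ inside it. Partition the $n$ variables of the input formula $\varphi$ into $n'\approx n/\log_2 i^*(H)$ blocks, the assignment of each block encoding one element of $S$. For each block I would build a long path-like ``wire'' gadget in $G$, with lists chosen so that the element of $S$ picked at one end is forced to reappear unchanged at the other end; incomparability of $S$ and connectedness of the chosen factor are exactly what make such a lossless wire exist. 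For each clause of $\varphi$ I would attach to the relevant wires a constant-size gadget that is list-homomorphic to $H$ if and only if the clause is satisfied. Building this ``OR-gadget'' for an \emph{arbitrary} non-bi-arc $H$ is the technical heart of the reduction; it is carried out by a case analysis along the Feder--Hell--Huang description of non-bi-arc graphs~\cite{DBLP:journals/jgt/FederHH03}. Arranging the $n'$ wires ``in parallel'' and feeding the clauses in one at a time produces an instance $G$ with $\pw G = n'+\Oh(1)$, so an algorithm running in time $(i^*(H)-\eps)^{\pw G}\cdot|V(G)|^{\Oh(1)}$ would decide $\varphi$ in time $(2-\eps')^{n}\cdot(n+m)^{\Oh(1)}$, contradicting the SETH.

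In both directions the crux is the same structural statement: that $i^*(H)$ is simultaneously an upper bound on the useful state space of the dynamic program and a lower bound on what a unit-width separator of $G$ can carry. For~(a) the main obstacle is the factorization-and-recombination argument; for~(b) it is realizing the OR-gadget uniformly over every non-bi-arc target, which is precisely what forces the proof to pass through the full \lhomo{H} dichotomy machinery rather than relying on a single generic construction.
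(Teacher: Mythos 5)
First, note that \cref{thm:lhomo-treewidth} is not proved in this paper at all: it is quoted from Okrasa, Piecyk and Rz\k{a}\.zewski, and what can be compared against is the machinery of that work which the present paper imports and reuses (\cref{obs:walks-between-corners}, the assignment/switching gadgets, \cref{thm:factors}, \cref{prop:bipartite-associted}) and the closely parallel lower-bound proof given here for feedback vertex set (\cref{thm:bipartite-undecomp}). Measured against that, your part~(a) sketch is essentially the published route (arc-consistency, incomparable lists, reduction to undecomposable factors, naive DP), with the caveat that the ``bounded number of independent instances'' is really a polynomial-overhead bottom-up recombination over induced subgraphs as in \cref{thm:factors}~(4), and that the bound $i(H')\le i^*(H)$ only holds for the undecomposable factors that are not complements of circular-arc graphs (the remaining factors are handled by the polynomial algorithm); those two points, which you flag as ``delicate'', are in fact the bulk of the proof of (a), not side conditions.

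For part~(b) there are two genuine gaps. The central one is your ``lossless wire'': you assert that incomparability of $S$ and connectedness of the factor yield a path gadget with list $S$ at both ends that forces the two endpoints to carry the same value while every value remains feasible. Nothing in the available machinery gives this: the walk-based gadgets (\cref{obs:walks-between-corners} together with the gadget-of-walks lemma, distinguishers, assignment gadgets) provide one-directional implications of the form ``if the control vertex is $\gamma$ then $x$ must equal a prescribed $v$'' (property (A4.) of \cref{def:assign-gadget}), not equality propagation of all $|S|$ states, and the actual construction never needs wires --- the block vertices $x^i_s$ themselves sit in every bag of the path decomposition (exactly as in the fvs proof of \cref{thm:bipartite-undecomp}) and clause paths of switching gadgets attach to them through assignment gadgets. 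Your architecture would require proving an equality-gadget existence theorem for arbitrary incomparable sets in arbitrary undecomposable targets, which you have not done and which is not known to hold. The second gap is quantitative: encoding one block of $\lfloor\log_2 i^*(H)\rfloor$ variables into a single $S$-valued vertex does not beat SETH when $i^*(H)$ is not a power of two (for $i^*(H)=3$ you would only get running time $2^{\log_2(3-\eps)\cdot n}$, no contradiction); you need the Lokshtanov--Marx--Saurabh grouping of $p$ vertices per block with $\lfloor p\log_2 i^*(H)\rfloor$ variables and $p$ large, exactly as in \cref{eq:reduction1}--\cref{eq:reduction4} here. Two smaller inaccuracies: the clause gadgets are not constant-size (they are chains of switching gadgets, one per satisfying block-assignment, which is harmless for pathwidth but should be said), and the hardness is naturally established for bipartite undecomposable targets and must then be lifted to general non-bi-arc $H$ via consistent instances and \cref{prop:bipartite-associted}, a step your sketch omits.
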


To the best of our knowledge, the complexity depending on other structural parameters of $G$ was not investigated.
In this paper, we make some progress to fill this gap. In particular, our main motivation is the following question by Jansen~\cite{BartPersonal}, repeated by Okrasa, Piecyk, Rzążewski~\cite{LhomoTreewidth,LhomoTreewidthFull}.

\begin{question}[Jansen~\cite{BartPersonal}]\label{question}
Is there a universal constant $c$, such that for every $H$, every instance $G$ of the \homo{H} problem can be solved in time $c^{\ctw{G}} \cdot |V(G)|^{\Oh(1)}$?
\end{question}

\paragraph{Our results.}
As our first result, we complement the recent result of Okrasa~\emph{et al.}~\cite{LhomoTreewidth,LhomoTreewidthFull} and show tight complexity bounds, parameterized by the size of a minimum feedback vertex set of the instance graph.

\begin{theorem} \label{thm:main}
Let $H$ be a connected, non-bi-arc graph.
\begin{enumerate}[a)]
\item Even if $H$ is given in the input, every instance $(G,L)$ of  $\lhomo H$ can be solved in time $i^*(H)^{\fvs{G}}\cdot (|V(G)| \cdot |V(H)|)^{\Oh(1)}$, provided that $G$ is given along with a feedback vertex set of size $\fvs{G}$.
\item Even if $H$ is fixed, there is no algorithm that solves every instance $(G,L)$ of $\lhomo H$ in time $(i^*(H)-\eps)^{\fvs{G}}\cdot |V(G)|^{\Oh(1)}$ for any $\eps>0$, unless the SETH fails.
\end{enumerate}
\end{theorem}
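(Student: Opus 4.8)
\textbf{Part (a)} follows at once from \cref{thm:lhomo-treewidth}(a): given $G$ together with a feedback vertex set $S$, $|S|=\fvs{G}$, take a tree decomposition of the forest $G-S$ of width at most $1$ and add $S$ to each of its bags; this yields, in polynomial time, a tree decomposition of $G$ of width at most $\fvs{G}+1$. Running the algorithm of \cref{thm:lhomo-treewidth}(a) on it decides $(G,L)$ in time $i^*(H)^{\fvs{G}+1}\cdot(|V(G)|\cdot|V(H)|)^{\Oh(1)}$, and since $i^*(H)\le|V(H)|$ the surplus factor $i^*(H)$ is absorbed into $(|V(G)|\cdot|V(H)|)^{\Oh(1)}$.

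For \textbf{part (b)} I would revisit the SETH-hardness reduction of Okrasa, Piecyk and Rzążewski~\cite{LhomoTreewidth,LhomoTreewidthFull} behind \cref{thm:lhomo-treewidth}(b) and reorganise it so that the produced instance has a small \emph{feedback vertex set} instead of small pathwidth. That reduction turns a \textsc{CNF-Sat} formula $\Phi$ with $n$ variables and $m$ clauses into an equivalent instance $(G,L)$ of \lhomo{H}, assembled from gadgets of size bounded in terms of the fixed graph $H$: \emph{assignment gadgets}, which confine a prescribed \emph{bundle} of $r$ vertices of $G$ (a fixed-size block of vertices used as a single high-arity unit) to one of $i^*(H)^{r}$ designated joint states, and \emph{disjunction gadgets}, satisfiable exactly when at least one incident bundle is in a ``true'' state. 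In~\cite{LhomoTreewidth,LhomoTreewidthFull} the bundles are placed into $\Theta(n/\log_2 i^*(H))$ parallel columns scanned left to right and the clauses are verified along the way, which caps the pathwidth at $\Oh(n/\log_2 i^*(H))$ but makes $G$ grid-like, hence of large feedback vertex set.

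The new layout keeps a \emph{single static} copy of the bundles and hangs everything else off it as a forest. Fix a large integer constant $b=b(\eps)$, set $s:=\lfloor b\log_2 i^*(H)\rfloor$, split the variables of $\Phi$ into $t:=\lceil n/s\rceil$ groups of at most $s$ variables, and give each group a bundle $B_i$ of $b$ vertices; let $S:=B_1\cup\dots\cup B_t$, so $|S|=bt$. Attach to each $B_i$ an assignment gadget pinning $B_i$ to one of $i^*(H)^{b}\ge 2^{s}$ states, read as the truth assignments of group $i$. For every clause $C_j$ create a fresh gadget $D_j$ which, for each group touched by $C_j$, contains a small ``filter'' sub-gadget attached to $B_i$ that signals ``true'' precisely when the state of $B_i$ is an assignment satisfying some literal of $C_j$ lying in group $i$, and $D_j$ combines these signals through a balanced tree of disjunction gadgets forced to output ``true''; then $(G,L)$, with $G$ the union of $S$, the assignment gadgets and all $D_j$, is a yes-instance of \lhomo{H} if and only if $\Phi$ is satisfiable. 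Distinct gadgets are vertex-disjoint and attach only to vertices of $S$, so $G-S$ is a disjoint union of trees and $\fvs{G}\le|S|=bt\le \tfrac{n}{\log_2 i^*(H)}\bigl(1+\tfrac{1}{b\log_2 i^*(H)-1}\bigr)+b$. Since $\log_2(i^*(H)-\eps)=(1-\gamma)\log_2 i^*(H)$ for a constant $\gamma=\gamma(\eps)>0$ (recall $i^*(H)\ge 2$), choosing $b$ so large that $(1-\gamma)\bigl(1+\tfrac{1}{b\log_2 i^*(H)-1}\bigr)<1$ makes any $(i^*(H)-\eps)^{\fvs{G}}\cdot|V(G)|^{\Oh(1)}$-time algorithm for \lhomo{H} yield, via this polynomial-size reduction, a $2^{(1-\delta)n}\cdot(n+m)^{\Oh(1)}$-time algorithm for \textsc{CNF-Sat} for some $\delta>0$ (here $|V(G)|=(n+m)^{\Oh(1)}$ since $H$ is fixed), contradicting the SETH.

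Correctness of the individual gadgets, the computation above, and the standard variable grouping are inherited, with cosmetic changes, from~\cite{LhomoTreewidth,LhomoTreewidthFull}. I expect the crux to be realising the gadgets of~\cite{LhomoTreewidth,LhomoTreewidthFull} in \emph{tree form}: as there are $\Theta(tm)$ gadgets, we cannot afford to move even a bounded number of a gadget's internal vertices into $S$, so every assignment and disjunction gadget, together with the full internal wiring of each clause gadget $D_j$, must induce a forest once its interface vertices in $S$ are deleted. Establishing that the gadgets of~\cite{LhomoTreewidth,LhomoTreewidthFull} already have this shape, or can be reshaped into it without altering $i^*(H)$, is where the technical effort lies; granting that, the feedback-vertex-set lower bound is essentially a relayout of the pathwidth one.
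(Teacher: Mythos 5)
Your part~(a) is exactly the paper's argument (\cref{prop:fvs-tw} plus \cref{thm:lhomo-treewidth}~a)), and your overall plan for part~(b) — a Lokshtanov--Marx--Saurabh-style relayout of the \textsc{CNF-Sat} reduction in which groups of variables are encoded by constant-size bundles of vertices whose lists form an incomparable set of size $i^*(H)$, all clause machinery hangs off the bundles as a forest, and the SETH arithmetic is the usual grouping computation — is also the route the paper takes (\cref{thm:bipartite-undecomp}).

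There is, however, a genuine gap, and you point at it yourself: the entire technical content of part~(b) is constructing the gadgets in ``tree form,'' and you neither construct them nor is it true that the gadgets behind \cref{thm:lhomo-treewidth}~b) already have this shape. The paper has to rebuild them from scratch: it re-proves the existence of $\nand{2}$-gadgets as paths and $\ork{k}$-gadgets as subcubic trees (\cref{lem:relationgadgets}), builds detector gadgets from distinguisher paths plus an $\ork{k}$-tree (\cref{lem:detector-gadget}), and assembles the assignment gadget $A_v$ with the crucial property that $A_v-\{x\}$ is a tree (\cref{lem:assign-gadget}), together with a switching gadget that is a path (\cref{lem:switching-gadget}); all of this rests on the walk-avoidance triple of \cref{obs:walks-between-corners} and must be carried out for every connected, undecomposable, bipartite $H$ whose complement is not circular-arc. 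Your ``filter plus balanced tree of disjunctions'' clause gadget is an additional unproved primitive: a tree attached only to bundle vertices that signals ``true'' exactly when the bundle coloring satisfies the clause is precisely the kind of object whose existence has to be derived from the structure of $H$; the paper avoids needing such a detector of a whole predicate by enumerating, per clause and group, the constantly many satisfying partial assignments and chaining one switching gadget per assignment along a clause path, with $p$ assignment gadgets forcing the bundle to that specific assignment whenever the switch occurs. Finally, to make the bound match $i^*(H)$ you also need the two (easy but necessary) reductions the paper performs: restricting to an undecomposable induced subgraph $H'$ with $i(H')=i^*(H)$, and lifting from bipartite targets to general non-bi-arc $H$ via the associated bipartite graph $H^*$ and \cref{prop:bipartite-associted}. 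Granting all of that your layout and arithmetic are fine, but without the gadget constructions the argument is a skeleton of the proof rather than a proof.
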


Let us point out that the algorithmic part of the theorem, i.e., the statement a), follows directly from \cref{thm:lhomo-treewidth}~a), as given a graph $G$ and its feedback vertex set $S$, we can in polynomial time construct a tree decomposition  of $G$ with width $|S|+1$ (see also \cref{sec:preliminaries}). 
The proof of the lower bound follows the general direction of the hardness proof for $k \coloring$ by Lokshtanov \emph{et al.}~\cite{DBLP:conf/soda/LokshtanovMS11a}. However, as we are showing hardness for all relevant graphs $H$, the gadgets are significantly more complicated. In their construction we use some machinery developed by  Okrasa~\emph{et al.}~\cite{LhomoTreewidth,LhomoTreewidthFull}.
Furthermore, similarly to the proof of \cref{thm:lhomo-treewidth}~b), the proof of \cref{thm:main} is split into two parts: first we prove hardness for the special case if $H$ is bipartite, and then we reduce the general case to the bipartite one.

Then we turn our attention to the setting, where the parameter is the cutwidth of the instance graph.
Recall that $\ctw{G} \geq \pw{G} \geq \tw{G}$. Furthermore, given a linear layout of $G$ with width $w$, we can in polynomial time construct a tree decomposition of $G$ with width at most $w$~\cite{PathwidthCutwidth}. Thus by \cref{thm:lhomo-treewidth}~a) we know that \lhomo{H} can be solved in time $(i^*(H))^{\ctw{G}} \cdot |V(G)|^{\Oh(1)}$. On the other hand, we know that this algorithm cannot be optimal for all $H$, as $i^*(K_k)=k$, while \textsc{List $k$-Coloring}, i.e., $\lhomo{K_k}$, can be solved in deterministic time $2^{\omega \cdot \ctw{G}} \cdot |V(G)|^{\Oh(1)}$ or in randomized time $2^{\ctw{G}} \cdot |V(G)|^{\Oh(1)}$, using the algorithms of Jansen and Nederlof~\cite{DBLP:journals/tcs/JansenN19}.

We introduce another graph parameter, $mim^*(H)$, which is closely related to the size of a maximum induced matching in $H$,
and show two lower bounds, assuming, respectively, the SETH and the ETH.

\begin{restatable}{theorem}{mainctwlhomo}
\label{thm:main-ctw-list-hard}
Let $\mathcal{H}$ be the class of connected non-bi-arc graphs.
For $g \in \N$, let $\mathcal{C}_g$ be the class of subcubic bipartite graphs $G$ with girth at least $g$, such that vertices of degree $3$ in $G$ are at distance at least $g$.
\begin{enumerate}[a)]
\item For every $H \in \mathcal{H}$, there is no algorithm that solves every instance $(G,L)$ of  $\lhomo H$, where $G \in \mathcal{C}_g$, in time $(mim^*(H)-\epsilon)^{\ctw{G}} \cdot |V(G)|^{\Oh(1)}$ for any $\epsilon > 0$, unless the SETH fails,
\item There exists a constant $0<\delta<1$, such that for every $H \in \mathcal{H}$, there is no algorithm that solves every instance $(G,L)$ of $\lhomo H$, where $G \in \mathcal{C}_g$, in time $mim^*(H)^{\delta \cdot \ctw{G}} \cdot |V(G)|^{\Oh(1)}$, unless the ETH fails.
\end{enumerate}
\end{restatable}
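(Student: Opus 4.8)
The plan is to prove both parts with essentially one reduction, from \textsc{CNF-Sat} for part~a) and from a sparse version of \textsc{3-Sat} for part~b), following the blueprint of Lokshtanov, Marx, and Saurabh~\cite{DBLP:conf/soda/LokshtanovMS11a} for \textsc{List $k$-Coloring} and the cutwidth refinements in the spirit of Jansen and Nederlof~\cite{DBLP:journals/tcs/JansenN19}, but adapting it on two fronts: the target is an arbitrary connected non-bi-arc graph $H$ rather than a clique, and the produced instance graph must be of small cutwidth and must lie in $\mathcal{C}_g$. As in \cref{thm:lhomo-treewidth}~b), I would first prove the statement when $H$ is bipartite (equivalently, $H$ is irreflexive bipartite and its complement is not a circular-arc graph), and then transfer to general $H$ by exactly the reduction used there: replacing each edge of the instance by a suitably long path turns \lhomo{H} into \lhomo{H'} for the bipartite ``associated graph'' $H'$ of $H$, which raises the girth (so, after increasing $g$, one stays inside $\mathcal{C}_g$), changes cutwidth only by an additive constant, and — by the way the invariant is defined — preserves the base, $mim^*(H')=mim^*(H)$.

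So assume $H$ is bipartite with sides $X,Y$ and put $t:=mim^*(H)$ (we may assume $t\ge 2$). The combinatorial heart is extracted from the structural theory of Okrasa, Piecyk, and Rz\k{a}\.zewski~\cite{LhomoTreewidth,LhomoTreewidthFull} (and, under it, the dichotomy of Feder, Hell, and Huang~\cite{DBLP:journals/jgt/FederHH03}): a ``useful'' induced matching $x_1y_1,\dots,x_ty_t$ with $x_i\in X$, $y_i\in Y$ — ``useful'' meaning that its endpoints can be forced and correlated by small gadgets, which is precisely the refinement recorded by the $*$ in $mim^*$ as opposed to the plain maximum induced matching number. This matching supplies the basic \emph{wire}: a path of $G$ whose vertices carry, alternately, the list $\{x_1,\dots,x_t\}$ and the list $\{y_1,\dots,y_t\}$. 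Since the matching is induced, every list homomorphism makes such a wire ``monochromatic'' — it transmits one symbol of $[t]$ along its whole length — while it crosses any vertical cut of a left-to-right layout with exactly one edge. Two further primitives are needed: a \emph{filter} restricting the symbol on a wire to a prescribed subset $S\subseteq[t]$, realised simply by attaching to a wire-vertex a pendant with list $\{y_i : i\in S\}$; and a constant-size, bounded-degree \emph{junction} gadget that reads a bounded number of wires together and performs a single step of a ``serial OR'', i.e.\ conditionally toggles a Boolean auxiliary wire. Producing the junction inside $\mathcal{C}_g$ — subcubic, bipartite, of large girth — from nothing more than the hypothesis that $H$ is non-bi-arc is where I expect the bulk of the technical work, and hence the main obstacle, to lie: filters are ``rectangular'' and come for free from the matching, but the junction requires a fixed genuinely non-rectangular coupling gadget, whose existence is the core of the Feder--Hell--Huang hardness proof and which must here be made quantitatively efficient and $\mathcal{C}_g$-respecting with the tools of~\cite{LhomoTreewidth,LhomoTreewidthFull}.

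Granting these primitives, the reduction is as follows. Fix a constant $b$ (taken large for part~a), while $b=1$ suffices for part~b)) and set $\rho_b:=\lfloor b\log_2 t\rfloor/b$, so $\rho_b\to\log_2 t$ as $b\to\infty$ and already $\rho_1=\Theta(\log_2 t)$. Partition the $n$ variables into blocks of $\lfloor b\log_2 t\rfloor$ variables; as $t^b\ge 2^{\lfloor b\log_2 t\rfloor}$, each block's assignment injects into $[t]^b$, i.e.\ is carried by $b$ parallel wires. All $\approx n/\rho_b$ wires run left to right through the entire layout, each carrying the fixed value of its coordinate everywhere — so inter-column consistency is automatic. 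For each clause we insert, between two consecutive columns, a \emph{clause gadget}: an auxiliary Boolean wire, forced to $0$ at its left end, is threaded through one junction per literal — the junction for a literal over a variable in block $\beta$ taps the $b$ wires of block $\beta$ and is allowed to flip the Boolean value $0\mapsto 1$ exactly when that literal is satisfied (a predicate on $[t]^b$, expressed as a short combination of single-wire filters via a decomposition into combinatorial rectangles) — and finally forced to $1$ at its right end. All branch vertices (wire taps and the degree-$3$ vertices of junctions) are pulled pairwise apart by inserting detour paths of length $\ge g$, parities are fixed by subdivisions, and one checks that every created cycle is long; hence $G\in\mathcal{C}_g$. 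Correctness is routine: a satisfying assignment fixes the wire symbols and makes each clause gadget satisfiable, while conversely the symbols on the block-wires in any list homomorphism decode to an assignment that the clause gadgets certify satisfies every clause.

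It remains to read off the bounds. Along the left-to-right layout every cut is crossed by the $\approx n/\rho_b$ block-wires, the one auxiliary wire of the clause gadget being processed there, and its $\Oh_b(1)$ tap edges (subdivision vertices add nothing), so $\ctw{G}\le n/\rho_b+\Oh_b(1)$, while $|V(G)|$ is polynomial in $n$, $m$ and $g$. For part~a), an algorithm for \lhomo{H} running in time $(t-\eps)^{\ctw{G}}\cdot|V(G)|^{\Oh(1)}$ would solve \textsc{CNF-Sat} in time $(t-\eps)^{n/\rho_b+\Oh_b(1)}\cdot(n+m)^{\Oh(1)}=2^{(\log_2(t-\eps)/\rho_b)\,n}\cdot(n+m)^{\Oh(1)}$, and choosing $b$ large enough that $2^{\rho_b}>t-\eps$ drives the exponential base below $2$, contradicting the SETH. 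For part~b), starting instead from \textsc{3-Sat} in its sparsified form (so $m=\Oh(n)$ and ``no $2^{o(n)}$ algorithm'' is equivalent to the ETH~\cite{DBLP:journals/jcss/ImpagliazzoPZ01,DBLP:books/sp/CyganFKLMPPS15}), the same construction with $b=1$ yields $\ctw{G}\le c_1\cdot n/\log_2 t$ with an absolute constant $c_1$ (using $\lfloor\log_2 t\rfloor\ge\tfrac12\log_2 t$ for $t\ge 2$ and $n$ large); hence an algorithm running in time $mim^*(H)^{\delta\cdot\ctw{G}}\cdot|V(G)|^{\Oh(1)}$ would solve sparse \textsc{3-Sat} in time $t^{\delta c_1 n/\log_2 t}\cdot n^{\Oh(1)}=2^{c_1\delta n}\cdot n^{\Oh(1)}$, which for any $\delta<\delta_0/c_1$ — where $2^{\delta_0 n}$ is the ETH lower bound for sparse \textsc{3-Sat} — contradicts the ETH; crucially, the factor $\log_2 t$ produced by rewriting $mim^*(H)^{(\cdot)}$ as a power of $2$ cancels the $1/\log_2 t$ in the cutwidth, so $c_1$, $\delta_0$, and hence the chosen $\delta\in(0,1)$ are independent of $H$.
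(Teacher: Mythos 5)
Your outline reproduces the paper's skeleton: prove the bipartite, undecomposable case first; use an induced matching of size $mim^*(H)$ to build equality-propagating ``wires'' that cross each cut of the layout with a single edge (this is exactly why the base drops from $i^*$ to $mim^*$); group variables LMS-style so that $\ctw{G}\approx n/\log t+\Oh(1)$; check clauses by a serial OR threaded along the layout; and finish with the same SETH/ETH calculations. All of this matches \cref{lem:instance-lhom-ctw} and the proofs of \cref{thm:ctw-bipartite-undecomp} (your wires are the paths with lists $S,S'$ used to split the $x$-vertices, and your ``auxiliary Boolean wire with conditional toggles'' is the path $P_C$ of switching gadgets). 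The problem is that the one ingredient you explicitly defer — the ``junction'' that toggles the Boolean wire only when the tapped wires carry a prescribed tuple, and otherwise leaves both the Boolean wire and the tapped wires unconstrained — is precisely the content of the theorem. It is the combination of the switching gadget (\cref{def:switch-gadget}) and the assignment gadget (\cref{def:assign-gadget}), and it does \emph{not} come off the shelf from the treewidth paper: the gadgets constructed there have vertices of unbounded degree and no control on girth or on the spacing of branch vertices, so they cannot be used inside $\mathcal{C}_g$ nor in a cutwidth bound. The present paper has to rebuild them from scratch (\cref{sec:constr-of-gadgets}): $\nand{2}$-gadgets as paths, $\ork{k}$-gadgets as subcubic trees whose degree-$3$ vertices are pairwise at distance $\ge g$, distinguisher and detector gadgets, and a case analysis (induced $C_6$/$C_8$ versus a strongly incomparable triple $\{\alpha,\beta,\gamma\}$) that even forces two different ways of splitting the $q$-vertices in \cref{lem:instance-lhom-ctw}. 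Asserting that this ``can be made quantitatively efficient and $\mathcal{C}_g$-respecting with the tools of the earlier paper'' is exactly the gap; without it the reduction does not exist.

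The second concrete flaw is the transfer from bipartite targets to general $H$. Replacing every edge of the instance by a long path does not turn \lhomo{H} into \lhomo{H^*}: a path in the instance only constrains its endpoints to be joined in $H$ by a walk of that length, not by an edge, so the subdivided instance is not equivalent to the original one. The correct step — and the one the paper uses — keeps the instance graph $G$ completely unchanged and only translates lists: for a \emph{consistent} instance $(G,L)$ of \lhomo{H^*} one sets $L'(x):=\{u \colon \{u',u''\}\cap L(x)\neq\emptyset\}$ and applies \cref{prop:bipartite-associted}, which preserves membership in $\mathcal{C}_g$ and the cutwidth exactly; this in turn requires having proved the bipartite lower bound for consistent instances (\cref{cor:ctw-consist-hard}), a point your write-up omits. (Two smaller remarks: your per-literal junction with a ``decomposition into combinatorial rectangles'' is handled in the paper by enumerating, per clause and per group, the constantly many satisfying partial assignments and pinning the whole $p$-tuple with assignment gadgets — same idea, just be aware the toggle must pin \emph{all} coordinates, not a literal in isolation; and for part~b) the additive term in the cutwidth is $f(g,H)$, not an absolute constant, which is harmless but means your ``$c_1$ absolute'' claim needs the same fixed-$H$ absorption argument the paper makes.)
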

As a sanity check, we point out that $mim^*(K_k)=2$, so our lower bounds are consistent with the results of Jansen and Nederlof~\cite{DBLP:journals/tcs/JansenN19}.

Let us highlight that the lower bounds from \cref{thm:main-ctw-list-hard} hold even for very restricted instances, and this statement captures some important graphs classes.
In particular, for a fixed graph $F$, we say that $G$ is $F$-free if it does not contain $F$ as an induced subgraph.
Recently Chudnovsky~\emph{et al.}~\cite{DBLP:conf/esa/ChudnovskyHRSZ19} studied the complexity of $\homo{C_k}$ and $\lhomo{C_k}$ for $F$-free graphs.
Among other results, they proved that if $F$ has a connected component that is not a path nor a subdivided claw, then for any $k \geq 5$, the $\lhomo{C_k}$ problem is \textsf{NP}-complete and cannot be solved in subexponential time in $F$-free graphs, unless the ETH fails.
One can immediately verify that the class $\mathcal{C}_g$ from \cref{thm:main-ctw-list-hard} for $g=|V(F)|+1$ is contained
 in the class of $F$-free graphs.
Thus we obtain  the following corollary from \cref{thm:main-ctw-list-hard}, which  significantly generalizes  the result of  Chudnovsky~\emph{et al.}~\cite{DBLP:conf/esa/ChudnovskyHRSZ19}, as cycles with at least 5 vertices are not bi-arc graphs~\cite{DBLP:journals/combinatorica/FederHH99},
\begin{corollary}
Let $F$ be a fixed graph, which has a connected component that is not a path nor a subdivided claw.
Let $\mathcal{H}$ be the class of connected non-bi-arc graphs.
\begin{enumerate}[a)]
\item For every $H \in \mathcal{H}$, there is no algorithm that solves every instance $(G,L)$ of  $\lhomo H$, where $G$ is $F$-free, in time $(mim^*(H)-\epsilon)^{\ctw{G}} \cdot |V(G)|^{\Oh(1)}$ for any $\epsilon > 0$, unless the SETH fails,
\item There exists a universal constant $0<\delta<1$, such that for every $H \in \mathcal{H}$, there is no algorithm that solves every instance $(G,L)$ of $\lhomo H$, where $G$ is $F$-free, in time $mim^*(H)^{\delta \cdot \ctw{G}} \cdot |V(G)|^{\Oh(1)}$, unless the ETH fails.
\end{enumerate}
\end{corollary}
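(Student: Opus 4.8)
The plan is to deduce the corollary directly from \cref{thm:main-ctw-list-hard} by instantiating the girth parameter appropriately and verifying an inclusion of graph classes. Concretely, set $g := |V(F)|+1$, and recall that $\mathcal{C}_g$ is the class of subcubic bipartite graphs of girth at least $g$ in which every two vertices of degree $3$ are at distance at least $g$. It suffices to prove that every $G \in \mathcal{C}_g$ is $F$-free: granting this, statements a) and b) follow verbatim from \cref{thm:main-ctw-list-hard} a) and b), where the constant $\delta$ is the one supplied by \cref{thm:main-ctw-list-hard} b) — which does not depend on $g$, and hence is universal over the choice of $F$.

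To show $\mathcal{C}_g \subseteq \{F\text{-free graphs}\}$, I would first reduce to a single connected obstruction. By hypothesis $F$ has a connected component $F_0$ that is neither a path nor a subdivided claw. If some $G \in \mathcal{C}_g$ contained $F$ as an induced subgraph, then restricting the embedding to $V(F_0)$ would exhibit $F_0$ as an induced subgraph of $G$ (since $F_0$ is a whole component of $F$, the subgraph of $F$ induced on $V(F_0)$ is exactly $F_0$). So it is enough to show that no $G \in \mathcal{C}_g$ contains $F_0$ as an induced subgraph.

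Next I would use the structure of connected graphs that are neither paths nor subdivided claws: such an $F_0$ falls into (at least) one of three cases: (i) $F_0$ contains a cycle; (ii) $F_0$ is a tree with a vertex of degree at least $4$; (iii) $F_0$ is a tree with at least two vertices of degree at least $3$. (Indeed, a connected graph of maximum degree at most $2$ is a path or a cycle, and a tree with exactly one branch vertex, necessarily of degree exactly $3$, is a subdivided claw.) Each case contradicts $G \in \mathcal{C}_g$: an isomorphism onto an induced subgraph is injective, maps edges to edges (so it cannot decrease a vertex's degree), and maps paths to walks (so it cannot increase distances). In case (i), an induced — in particular, a subgraph — copy of $F_0$ yields in $G$ a cycle of length at most $|V(F_0)| \le |V(F)| = g-1 < g$, contradicting $\girth{G} \ge g$. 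In case (ii), the image of the degree-$(\ge 4)$ vertex has degree at least $4$ in $G$, contradicting that $G$ is subcubic. In case (iii), the images of the two branch vertices are distinct vertices of $G$ of degree at least $3$ (hence exactly $3$) at distance at most $\dist_{F_0}(u,v) \le |V(F_0)|-1 \le g-2 < g$, contradicting the distance condition in $\mathcal{C}_g$.

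There is essentially no obstacle here; the only points that require a line of care are the case analysis for connected graphs avoiding paths and subdivided claws, and the remark that the constant in part b) is inherited unchanged (independently of $g$) from \cref{thm:main-ctw-list-hard} b). Finally, the stated comparison with Chudnovsky~\emph{et al.}~\cite{DBLP:conf/esa/ChudnovskyHRSZ19} is immediate, since every cycle $C_k$ with $k \ge 5$ is a connected non-bi-arc graph and therefore lies in the class $\mathcal{H}$ to which \cref{thm:main-ctw-list-hard}, and hence this corollary, applies.
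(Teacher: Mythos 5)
Your proposal is correct and follows exactly the route the paper takes: the paper derives the corollary from \cref{thm:main-ctw-list-hard} by observing that $\mathcal{C}_g$ with $g=|V(F)|+1$ is contained in the class of $F$-free graphs, which it leaves as an immediate verification and you simply spell out (cycle, vertex of degree at least $4$, or two branch vertices, each contradicting membership in $\mathcal{C}_g$). Your remark that $\delta$ is inherited unchanged, independently of $g$, is also consistent with the paper's statement.
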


Next, we focus on the non-list variant of the problem, i.e., \homo{H}.
Note that here we only consider graphs $H$ that are irreflexive and non-bipartite, as otherwise the problem is polynomial-time solvable.
Furthermore, we restrict our attention to graphs $H$ that are \emph{projective cores}.
The definition of projective cores is postponed to \cref{sec:nonlist}, but let us point out that many graphs fall into this class.
We say that a property $\Pi$ is satisfied by \emph{almost all graphs}, if the probability, that a graph chosen uniformly at random from the set of all graphs with $n$ vertices, satisfies $\Pi$, tends to 1 as $n$ grows.
It is well-known that almost all graphs are non-bipartite and connected~\cite{DBLP:books/daglib/0021015}.
Hell and Ne\v{s}et\v{r}il~\cite{hell2004graphs} proved that almost all graphs are cores, while Łuczak and Ne\v{s}et\v{r}il~\cite{luczak2004note} showed that almost all graphs are projective. All these results imply that almost all graphs are connected non-bipartite projective cores, see also~\cite{OkrasaSODA}.
For this class of graphs $H$, we show the following lower bounds, answering \cref{question} in the negative.

\begin{restatable}{theorem}{mainctwhomo}
\label{thm:main-ctw-nonlist-hard}
Let $\mathcal{H}$ be the class of connected non-bipartite projective cores with at least three vertices.
\begin{enumerate}[a)]
\item  For every $H \in \mathcal{H}$, there is no algorithm that solves every instance $G$ of  $\homo H$ in time $(mim^*(H)-\epsilon)^{\ctw{G}} \cdot |V(G)|^{\Oh(1)}$ for any $\epsilon > 0$, unless the SETH fails,

\item There exists a universal constant $0<\delta<1$, such that for every $H \in \mathcal{H}$, there is no algorithm that solves every instance $G$ of $\homo H$ in time $mim^*(H)^{\delta \cdot \ctw{G}} \cdot |V(G)|^{\Oh(1)}$, unless the ETH fails.
\end{enumerate}
\end{restatable}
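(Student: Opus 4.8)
The plan is to deduce \cref{thm:main-ctw-nonlist-hard} from the list version \cref{thm:main-ctw-list-hard} by a reduction that eliminates the lists while increasing the cutwidth only by an additive constant depending on $H$. First observe that a connected non-bipartite irreflexive graph is never a bi-arc graph~\cite{DBLP:journals/combinatorica/FederHH99}, so every $H\in\mathcal{H}$ considered here also lies in the class of connected non-bi-arc graphs and $mim^*(H)$ is literally the same invariant in both statements. Hence it suffices to show: for every instance $(G,L)$ of \lhomo{H} with $G\in\mathcal{C}_g$ one can build in polynomial time an instance $G^\star$ of \homo{H} such that $G$ has a list homomorphism to $H$ respecting $L$ if and only if $G^\star\to H$, and $\ctw{G^\star}\le\ctw{G}+c_H$ for a constant $c_H=c_H(|V(H)|)$. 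Given this, an algorithm for \homo{H} running in time $(mim^*(H)-\eps)^{\ctw{G^\star}}\cdot|V(G^\star)|^{\Oh(1)}$ yields one for \lhomo{H} on $\mathcal{C}_g$-instances running in time $(mim^*(H)-\eps)^{\ctw{G}}\cdot|V(G)|^{\Oh(1)}$, since the factor $(mim^*(H))^{c_H}$ is a constant; this contradicts the SETH by \cref{thm:main-ctw-list-hard}~a). Part~b) follows identically: the universal $\delta$ is inherited and $mim^*(H)^{\delta(\ctw{G}+c_H)}=mim^*(H)^{\delta c_H}\cdot mim^*(H)^{\delta\ctw{G}}$, so the ETH-hard instances of \cref{thm:main-ctw-list-hard}~b) transfer.

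To construct $G^\star$ I would simulate the lists using the projectivity of $H$, following the list-elimination machinery of Okrasa and Rz\k{a}\.zewski~\cite{OkrasaSODA} (see also~\cite{LhomoTreewidth,LhomoTreewidthFull}). The point is that if $H$ is a projective core, then for every categorical power the only homomorphisms $H^t\to H$ are projections composed with automorphisms of $H$, and this rigidity lets one design, for each list $S\subseteq V(H)$, a gadget $J_S$ of size bounded purely in terms of $|V(H)|$, attached to the rest of the instance at a single vertex $x_S$, such that in every homomorphism $\varphi\colon J_S\to H$ we have $\varphi(x_S)\in\sigma(S)$ for some automorphism $\sigma$ of $H$, and every value of $S$ is realizable. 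One then forms $G^\star$ from $G$ by attaching a copy of $J_{L(v)}$ at each vertex $v$ (identifying $x_{L(v)}$ with $v$). Since $G$ has a list homomorphism respecting $L$ if and only if it has one respecting $\sigma\circ L$ for an automorphism $\sigma$, the desired equivalence holds provided all gadgets are forced to use the \emph{same} $\sigma$. For the cutwidth bound I would take an optimal linear layout of $G$ and insert, right after each $v$, all vertices of $J_{L(v)}$ consecutively: a cut between two blocks crosses at most $\ctw{G}$ edges of $G$, and a cut inside a block $J_{L(v)}$ crosses at most $\ctw{G}$ edges of $G$ running ``over'' $v$ plus the $\Oh_H(1)$ edges inside $J_{L(v)}$ and between $v$ and $J_{L(v)}$, so $\ctw{G^\star}\le\ctw{G}+\Oh_H(1)$.

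The main obstacle I anticipate is synchronizing the automorphism $\sigma$ across all of the local gadgets without spoiling the additive cutwidth bound. The naive fix, joining every $J_{L(v)}$ to one shared reference gadget, creates a vertex of degree $\Theta(|V(G)|)$ and makes the cutwidth unbounded. Instead I would spread the reference out: place a bounded-size ``coordinate'' gadget $W_v$ next to each $v$ in the layout, link $W_v$ to $W_{v'}$ for consecutive $v,v'$ in the ordering by a bounded-width ``agreement link'' whose only homomorphisms force $W_v$ and $W_{v'}$ to realize the same projection, and attach $J_{L(v)}$ only to $W_v$. Consecutive links overlap in only $\Oh_H(1)$ edges, so any cut still crosses at most $\ctw{G}+\Oh_H(1)$ edges, while the chain $W_{v_1}-W_{v_2}-\cdots$ propagates a single consistent automorphism along the whole instance. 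Verifying that such bounded agreement links exist — this is precisely where the projectivity of $H$, rather than mere rigidity of a core, is needed — and that inserting them does not disturb the subcubic/large-girth structure exploited in the proof of \cref{thm:main-ctw-list-hard} is the technical heart of the argument; the remaining bookkeeping is routine.
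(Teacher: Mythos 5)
Your plan is essentially the paper's proof: reduce from \lhomo{H} on instances that come with a linear layout, use the Larose--Tardif characterization of projectivity (\cref{lem:list-gadget}) to realize each list $L(v)$ by a gadget of size bounded in terms of $H$, synchronize all these gadgets through a chain of bounded-size ``reference'' structures inserted right after each vertex in the layout, and absorb everything into an additive increase of the cutwidth depending only on $H$, so that the SETH/ETH bounds of \cref{thm:main-ctw-list-hard} transfer verbatim. The one step you defer as ``the technical heart'' -- the existence of bounded agreement links forcing a common automorphism -- is exactly what has to be supplied, and your guess about where projectivity enters is misplaced. In the paper, the reference structure next to $v_i$ is simply a copy $H^{(i)}$ of $H$ itself, with the anchor vertices $y_1^{(i)},\ldots,y_{\ell_i}^{(i)}$ of the list gadget $F_{L(v_i)}$ identified with the prescribed vertices inside that copy and $y_0^{(i)}$ identified with $v_i$; consecutive copies are linked by joining each $z^{(i)}$ to all of $N_{H^{(i+1)}}(z^{(i+1)})$. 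In any homomorphism each copy maps by an automorphism (since $H$ is a core), and if corresponding vertices $z^{(i)},z^{(i+1)}$ received distinct images $s,u$, the linking edges would force $N_H(u)\subseteq N_H(s)$, contradicting \cref{obs:core-incomp}. So the agreement link needs only that $H$ is a core; projectivity is used solely to build the list gadgets $F_{L(v)}$, not to propagate the automorphism.

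Two smaller points. First, you cannot ``take an optimal linear layout of $G$'' in polynomial time; the reduction must be run relative to a layout supplied with the instance, which is why the paper records \cref{thm:instance-with-layout} (hardness of \lhomo{H} for instances given together with a layout of width $w$) and places the gadget blocks along that layout, obtaining $\ctw{\tG}\leq w+g(H)$. Second, there is no need to preserve the subcubic or large-girth structure of $\mathcal{C}_g$: that class constrains only the source \lhomo{H} instances, while the produced \homo{H} instance is unrestricted (and indeed the copies of $H$ destroy girth and degree bounds). With the agreement gadget instantiated as above and these adjustments, your argument coincides with the paper's proof, including the bookkeeping $(mim^*(H)-\eps)^{w+g(H)}=\Oh\big((mim^*(H)-\eps)^{w}\big)$ and its analogue for the ETH statement.
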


In particular, it is well known that odd cycles are projective cores~\cite{DBLP:journals/dmgt/Larose02}.
Furthermore, for any odd cycle $C_k$ it holds that  $mim^*(C_k) = \lfloor 2k/3 \rfloor$.
Thus, we obtain the following as a corollary from \cref{thm:main-ctw-nonlist-hard}. Note that here we do not treat $C$ as a fixed graph.

\begin{restatable}{corollary}{corcyclesctw}
Let $\mathcal{C}$ be the family of odd cycles.
Then for $C \in \mathcal{C}$, there is no algorithm that solves $\homo{C}$ for instances $G$ in time $2^{o(\log |V(C)| \cdot \ctw{G})} \cdot |V(G)|^{\Oh(1)}$, unless the ETH fails. \newline
In particular, there is no universal constant $c$, such that $\homo{C}$ can be solved in time $c^{\ctw{G}} \cdot |V(G)|^{\Oh(1)}$ for every $C \in \mathcal{C}$ and every instance $G$.
\end{restatable}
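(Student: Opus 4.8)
The plan is to deduce the corollary directly from \cref{thm:main-ctw-nonlist-hard}~b), combined with the two facts recalled right before the statement: for every odd $k\ge 3$ the cycle $C_k$ is a connected non-bipartite projective core with at least three vertices, hence $C_k\in\mathcal H$, and $mim^*(C_k)=\lfloor 2k/3\rfloor$. The observation driving everything is that $\log mim^*(C_k)=\Theta(\log k)=\Theta(\log|V(C_k)|)$, so the running time $mim^*(C_k)^{\delta\cdot\ctw{G}}\cdot|V(G)|^{\Oh(1)}$ forbidden by \cref{thm:main-ctw-nonlist-hard}~b) has the shape $2^{\Theta(\log|V(C_k)|\cdot\ctw{G})}\cdot|V(G)|^{\Oh(1)}$; letting $k\to\infty$ makes the constant hidden in this $\Theta$ negligible, which is exactly what the $o(\cdot)$ in the corollary captures.

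Concretely, I would argue by contradiction. Suppose an algorithm $\mathcal A$ solved every instance $G$ of $\homo{C}$, over all odd cycles $C$, in time $2^{s(C,G)}\cdot|V(G)|^{\Oh(1)}$ with $s(C,G)=o(\log|V(C)|\cdot\ctw{G})$. Let $\delta$ be the constant of \cref{thm:main-ctw-nonlist-hard}~b); applying the definition of $o(\cdot)$ with $\eps=\delta/2$ yields an $N$ such that $s(C,G)\le(\delta/2)\log|V(C)|\cdot\ctw{G}$ whenever $\log|V(C)|\cdot\ctw{G}\ge N$. Fix an odd integer $k$ large enough that $\sqrt{k}\le\lfloor 2k/3\rfloor$, so that $k^{(\delta/2)t}\le\lfloor 2k/3\rfloor^{\delta t}$ for every real $t\ge 0$. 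Then for an instance $G$ of $\homo{C_k}$ with $\log k\cdot\ctw{G}\ge N$, algorithm $\mathcal A$ runs in time $2^{(\delta/2)\log k\cdot\ctw{G}}\cdot|V(G)|^{\Oh(1)}=k^{(\delta/2)\ctw{G}}\cdot|V(G)|^{\Oh(1)}\le\lfloor 2k/3\rfloor^{\delta\cdot\ctw{G}}\cdot|V(G)|^{\Oh(1)}$, whereas for an instance with $\log k\cdot\ctw{G}<N$ we have $\ctw{G}<N/\log k$, a constant depending only on $k$, and the standard dynamic programming over a linear layout solves $\homo{C_k}$ in time $|V(C_k)|^{\ctw{G}}\cdot|V(G)|^{\Oh(1)}=|V(G)|^{\Oh(1)}$. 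In either case $\homo{C_k}$ is solved in time $mim^*(C_k)^{\delta\cdot\ctw{G}}\cdot|V(G)|^{\Oh(1)}$, contradicting \cref{thm:main-ctw-nonlist-hard}~b) applied to $H=C_k$ unless the ETH fails.

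The ``in particular'' part is then immediate: a universal constant $c$ for which $\homo{C}$ were solvable in time $c^{\ctw{G}}\cdot|V(G)|^{\Oh(1)}$ over all odd cycles $C$ would, for any fixed odd $k$ with $c\le\lfloor 2k/3\rfloor^{\delta}$ (such $k$ exists since $\lfloor 2k/3\rfloor\to\infty$), solve $\homo{C_k}$ in time $\lfloor 2k/3\rfloor^{\delta\cdot\ctw{G}}\cdot|V(G)|^{\Oh(1)}=mim^*(C_k)^{\delta\cdot\ctw{G}}\cdot|V(G)|^{\Oh(1)}$, again contradicting \cref{thm:main-ctw-nonlist-hard}~b). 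All of this is elementary arithmetic, and I do not expect a real obstacle; the only point needing a little care is that the $o(\cdot)$ hypothesis is vacuous for instances whose cutwidth is bounded in terms of $k$, which is why the bounded-cutwidth regime must be covered separately by the trivial $|V(H)|^{\ctw{G}}\cdot|V(G)|^{\Oh(1)}$ algorithm. Everything of substance is already contained in \cref{thm:main-ctw-nonlist-hard} and in the cited facts that odd cycles are projective cores and that $mim^*(C_k)=\lfloor 2k/3\rfloor$.
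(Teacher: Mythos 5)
Your proposal is correct and follows essentially the same (implicit) route the paper intends: instantiate \cref{thm:main-ctw-nonlist-hard}~b) with $H=C_k$ for a sufficiently large odd $k$, using that odd cycles are connected non-bipartite projective cores and that $mim^*(C_k)=\lfloor 2k/3\rfloor$, so $mim^*(C_k)^{\delta\cdot\ctw{G}}=2^{\Theta(\log k)\cdot\ctw{G}}$ dominates any $2^{o(\log|V(C)|\cdot\ctw{G})}$ bound once $k$ is large. Your handling of the bounded-cutwidth regime via the trivial $|V(C_k)|^{\ctw{G}}\cdot|V(G)|^{\Oh(1)}$ dynamic programming (together with the standard observation that a constant-width decomposition can be found, or the two routines dovetailed, in polynomial time) is a correct and adequate way to deal with the vacuity of the $o(\cdot)$ hypothesis on low-cutwidth instances.
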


Finally, in ~\cref{sec:algo} we have a closer look at the algorithm by Jansen and Nederlof~\cite{DBLP:journals/tcs/JansenN19}, and try to generalize it to the \lhomo{H} problem for non-complete targets $H$. We define yet another graph invariant $\gamma^*(H)$ and show the following result.

\begin{restatable}{corollary}{algolhomo}\label{thm:alglhomo}
Let $H$ be a graph with possible loops  and let $(G,L)$ be an instance of $\lhomo{H}$, where $G$ is given with a linear layout of width $k$.
Then $(G,L)$ can be solved in time $2^{\gamma^*(H) \cdot \omega \cdot k} \cdot (|V(G)| \cdot |V(H)|)^{\Oh(1)}$.
\end{restatable}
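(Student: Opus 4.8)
The plan is to generalize the two-algorithm framework of Jansen and Nederlof to arbitrary targets by passing to a suitable "power" of the target graph. The starting observation is that for \textsc{List $k$-Coloring} the efficiency of the cutwidth algorithm comes from the fact that, at a cut of width $w$, the relevant partial information is not the full coloring of the $w$ boundary endpoints but only the \emph{multiset of colors} crossing the cut together with a consistency-by-rank bookkeeping, which is computable by a rectangular matrix product whose dimensions are controlled by $2^{\Oh(w)}$; the base $2$ (resp.\ $2^\omega$) is then independent of $k$. To mimic this for \lhomo{H}, I would first define $\gamma^*(H)$ so that it measures, roughly, the logarithm (base $2$) of the largest "irreducible bipartite factor" one must track: concretely one decomposes $H$ (more precisely its associated bipartite graph, coming from the bipartite-associated construction used throughout the paper and in \cref{sec:preliminaries}) into a kind of product/join structure, and $\gamma^*(H)$ is the parameter for which the set of admissible "boundary states at a cut of width $k$" has size $2^{\gamma^*(H)\cdot k}$. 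For $H=K_k$ this recovers $\gamma^*(K_k)=\Oh(1)$, matching Jansen--Nederlof.

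Concretely the algorithm proceeds as follows. First I would reduce, via standard arguments (\cref{sec:preliminaries}), to the case where $G$ is given with a linear layout $v_1,\dots,v_n$ of width $k$, and set up the dynamic program over the prefixes $G_i = G[\{v_1,\dots,v_i\}]$. The DP table at stage $i$ is indexed by partial list-homomorphisms restricted to the set $X_i$ of endpoints of edges crossing the cut after $v_i$; the point is that $|X_i|\le k$ but the number of \emph{meaningfully distinct} partial assignments, after collapsing symmetries of $H$ and using the rank/consistency trick of Jansen--Nederlof, is at most $2^{\gamma^*(H)\cdot k}\cdot (|V(G)||V(H)|)^{\Oh(1)}$. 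Adding a vertex $v_{i+1}$ amounts to (i) extending an assignment on $X_i$ to $X_{i+1}$ respecting $L(v_{i+1})$ and the edges between $v_{i+1}$ and earlier vertices, and (ii) projecting away the endpoints that are no longer on the cut. Step (i) is precisely the place where a matrix-vector product of Zeta/Möbius-transformed tables appears, exactly as in the clique case, and invoking fast rectangular (square) matrix multiplication over the relevant dimension $2^{\gamma^*(H)\cdot k}$ gives the $\omega$ in the exponent. Correctness is a routine induction on $i$, using that a global list-homomorphism exists iff some final table entry is nonzero.

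The technical heart — and the step I expect to be the main obstacle — is establishing that the "boundary state space at a cut of width $k$" really does have size $2^{\gamma^*(H)\cdot k}$ with $\gamma^*$ a genuine graph invariant that is $\Oh(1)$ for cliques, rather than the trivial $|V(H)|^k$. In the clique case this rests on the special algebraic structure of proper colorings (the "crossing colors form a set, and only the induced partition/rank matters"); for general $H$ one needs the analogous reduction, which is where the bipartite-associated graph $H^*$ and its decomposition into irreducible factors enters. I would phrase this as a lemma: for $H$ with associated bipartite graph decomposing as a certain product of $t$ factors of total "width" $\gamma^*(H)$, the set of feasible boundary configurations across a cut of $k$ edges embeds into a set of size $2^{\gamma^*(H)\cdot k}\cdot\mathrm{poly}$. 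Granting this lemma, the remainder — the Zeta/Möbius transform of the DP tables so that the "join" operation at each step becomes pointwise multiplication, and the use of rectangular matrix multiplication to batch these products — is a direct transcription of Jansen and Nederlof's argument, and I would carry it out in full in \cref{sec:algo}, so that \cref{thm:alglhomo} follows immediately.
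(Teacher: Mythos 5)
Your proposal leaves the decisive step unproven, and the way you set it up would not prove the stated result. The statement is about the concrete invariant $\gamma^*(H)$ that the paper defines combinatorially: for a bipartite target it is the maximum, over connected undecomposable induced subgraphs $H'$ (not complements of circular-arc graphs) and over pairs of incomparable sets $S_1,S_2$ in different bipartition classes of $H'$, of $\max_{x\in S_1}|\{y\in S_2 : xy\notin E(H')\}|$, extended to general $H$ via $H^*$. You instead propose to \emph{define} $\gamma^*(H)$ as whatever exponent makes the boundary state space have size $2^{\gamma^*(H)\cdot k}$ (``the parameter for which the set of admissible boundary states \dots has size $2^{\gamma^*(H)\cdot k}$''), which is circular, and your sketch of how to realize it --- a product/join decomposition of $H^*$ into ``irreducible factors of total width $\gamma^*$'' --- corresponds to nothing in the paper and comes with no argument that such a decomposition exists or yields the required bound. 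The lemma you explicitly defer (``the boundary state space at a cut of width $k$ really does have size $2^{\gamma^*(H)\cdot k}$'') \emph{is} the content of the theorem; granting it, everything else is routine, so the proposal has a genuine gap rather than a different proof.

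For comparison, the actual mechanism is a low-rank factorization driven by counting \emph{forbidden partners}, not multisets of crossing colors or symmetries of $H$. One passes to a binary CSP whose parameter $K(I)$ is the maximum number of values of $D_u$ forbidden by a given value of $D_v$ across an edge; each edge constraint is encoded by $K$ maps $\sigma^{(i)}_{uv}$, the cut-compatibility matrix entry is written as $\prod_{uv\in E}\prod_{i=1}^{K}\bigl(x_u-\sigma^{(i)}_{uv}(y_v)\bigr)$, and expanding this product factors the matrix as $M=LR$ with inner dimension $\bigl(\prod_{u}(\deg_E(u)+1)\bigr)^{K}\le 2^{Kk}$ by AM--GM; representative sets are then row bases of $L$, computed with exponent $\omega$ (\cref{thm:algo-bcsp}, \cref{lem:reduce}). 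The link $K(I)\le\gamma(H')$ holds only after reducing to consistent instances with incomparable lists, and getting from there to arbitrary $H$ with loops requires the factor decomposition of \cref{thm:factors}, the clean-homomorphism reduction into $H^*$ for non-bipartite undecomposable factors, and the separate treatment of factors that split into a reflexive clique plus an independent set --- none of which appears in your outline beyond a passing mention of the associated bipartite graph. Without the forbidden-partner encoding (or an equivalent rank bound tied to the paper's $\gamma^*$) the claimed $2^{\gamma^*(H)\cdot\omega\cdot k}$ running time is not established.
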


Let us point out that $\gamma^*(K_k)=1$, so our result  covers the result by Jansen and Nederlof~\cite{DBLP:journals/tcs/JansenN19}.
Actually, we define our algorithm for the so-called \emph{Binary Constraint Satisfaction Problem} (BCSP), which is a further generalization of \lhomo{H} (the definition of BCSP is quite technical, so we postpone it to~\cref{sec:algo}).
This approach allows us to obtain an algorithm for another natural problem, called \emph{DP-coloring}~\cite{DBLP:journals/ejc/BernshteynKZ17,DBLP:journals/jct/DvorakP18}, see~\cref{sec:DP} for the definition.
This problem, being a special case of BCSP, was introduced by Dvo\v{r}ak and Postle as a generalization of list coloring.

We conclude the paper with comparison of the discussed parameters, i.e., $i^*(H)$, $mim^*(H)$, and $\gamma^*(H)$, and pointing out some open questions and directions for future investigations.

\newpage
\section{Notation and preliminaries}\label{sec:preliminaries}
For a positive integer $n$, we define $[n]:=\{1,\ldots,n\}$.
For a set $X$, by $2^X$ we denote the set of all subsets of $X$. Unless explicitly stated otherwise,
all logarithms are of base $2$, i.e., $\log x := \log_2x$. 

Let $G$ be a graph.
For a set $S \subsetneq V(G)$, by $G-S$ we denote the graph induced by $V(G) \setminus S$. 
For a vertex $v \in V(G)$, by $N_G(v)$ we denote the set of neighbors of $v$.
If the graph $G$ is clear from the context, we write $N(v)$ instead of $N_G(v)$.
Note that $v \in N(v)$ if and only if $v$ is a vertex with a loop.
We say that two vertices $u,v \in V(G)$ are \emph{incomparable} if $N(u) \not\subseteq N(v)$ and $N(v) \not\subseteq N(u)$.
A set $S\subseteq V(G)$ is \emph{incomparable} if all its vertices are pairwise incomparable.
Equivalently, we can say that for every distinct $u,v \in S$, there is a vertex $u' \in N(u) \setminus N(v)$.
A set $S\subseteq V(G)$ is \emph{strongly incomparable} if for every $u \in S$ there exists $u' \in N(u)$,
such that $u'$ is non-adjacent to every vertex in $S \setminus \{u\}$. Such a vertex $u'$ is called a \emph{private neighbor} of $u$.
Clearly a strongly incomparable set is in particular incomparable.

The \emph{degree} of a vertex $v$, denoted by $\deg(v)$, is the number of vertices in $N(v)$.
For $u,v \in V(G)$, by $\dist(u,v)$ we denote the length (i.e., the number of edges) of a shortest path from $u$ to $v$.
By $\girth{G}$ we denote the length of a shortest cycle in $G$.

Let $G,H$ be graphs.
A mapping $\vphi: V(G) \to V(H)$ is a \emph{homomorphism} from $G$ to $H$ if for every edge $uv\in E(G)$ it holds that $\vphi(u)\vphi(v)\in E(H)$.
By \emph{$H$-lists} we mean an assignment $L: V(G) \to 2^{V(H)}$.
For a graph $G$ with $H$-lists $L$, \emph{list homomorphism} is a homomorphism $\vphi$ from $G$ to $H$, which additionally respects lists $L$, i.e., for every $v \in V(G)$ it holds $\vphi(v) \in L(v)$.
To denote that $\vphi$ is a list homomorphism from $G$ to $H$ we will write $\vphi: (G,L) \to H$.
By $(G,L) \to H$ we denote that some list homomorphism $\vphi: (G,L) \to H$ exists.
In the \homo{H} problem the instance is a graph $G$ and we ask whether $G \to H$.
In the \lhomo{H} problem the instance is a pair $(G,L)$, where $G$ is a graph and $L$ are $H$-lists, and ask whether $(G,L) \to H$.

For a set $S \subseteq V(G)$ we define $L(S):= \bigcup_{v\in S} L(v)$.
If it does not lead to confusion, for a set $V$ such that $V(G)\subseteq V$ and $H$-lists $L: V \to 2^{V(H)}$, we will denote the instance $(G,L|_{V(G)})$ by $(G,L)$, in order to simplify the notation.

Let $H$ be a graph. A \emph{walk} $\cP$ in $H$ is a sequence $p_1,\ldots,p_\ell$ of vertices of $H$ such that $p_ip_{i+1} \in E(H)$ for $i \in [\ell-1]$. We define the \emph{length} of a walk $\cP=p_1,\ldots,p_\ell$ as $\ell-1$ and denote it by $|\cP|$.
We also write $\cP : p_1 \to p_{\ell}$ to emphasize that $\cP$ starts in $p_1$ and ends in $p_\ell$.
Let us define a relation that is crucial for building our gadgets.

\begin{definition}[Avoiding]
For walks $\cP = p_1,\ldots, p_\ell$ and $\cQ= q_1,\ldots, q_\ell$ of equal length, such that $p_1$ is in the same bipartition class as $q_1$, we say $\cP$ \emph{avoids $\cQ$} if $p_1 \neq q_1$ and for every $i \in [\ell-1]$ it holds that $p_iq_{i+1} \not\in E(H)$. 
\end{definition}

By $\overline{\cP}$ we denote walk $\cP$ reversed, i.e., if $\cP=p_1,\ldots,p_\ell$, then $\overline{\cP}=p_\ell,\ldots,p_1$. It is straightforward to observe that if $\cP$ avoids $\cQ$, then $\overline{\cQ}$ avoids $\overline{\cP}$.

\subsection{Graph parameters}

\paragraph{Treewidth and pathwidth.}
Let $G$ be a graph. A \emph{tree decomposition} of $G$ is a pair $(\cT,(X_t)_{t \in V(\cT)})$ such that:
\begin{enumerate}
\item $\cT$ is a tree,
\item $(X_t)_{t \in V(\cT)}$ is a family of subsets of $V(G)$,
\item every $v \in V(G)$ is contained in at least one $X_t$,
\item for every $uv \in E(G)$ there is $X_t$ such that $\{u,v\} \subseteq X_t$,
\item for every $v \in V(G)$ the graph induced by $\{ t \in V(\cT) \ | \ v \in X_t \}$ in $\cT$ is connected.
\end{enumerate}

We call sets $X_t$ for $t \in V(\cT)$ \emph{bags}.
The \emph{width} of a tree decomposition $(\cT,(X_t)_{t\in V(\cT)})$ is $\max_{t \in V(\cT)} |X_t| - 1$.
The minimum width over all tree decompositions of $G$ is called the \emph{treewidth} of $G$ and we denote it by $\tw{G}$.
Similarly, we define a \emph{path decomposition} $(\cP,(X_t)_{t \in V(\cT)})$ of $G$ as a tree decomposition, in which the tree $\cP$ is a path.
The minimum width over all path decompositions of $G$ is called the \emph{pathwidth} of $G$ and we denote it by $\pw{G}$.
Clearly $\tw{G} \leq \pw{G}$.

\paragraph{Feedback vertex set.}
A set $F \subseteq V(G)$, such that $G-F$ does not contain any cycle, is called a \emph{feedback vertex set} of $G$.
We denote the size of a minimum feedback vertex set in $G$ by $\fvs{G}$.

Observe that if $F$ is a feedback vertex set of $G$, we can obtain a tree decomposition of $G$ by taking the tree decomposition of $G - F$ (of width 1), and adding $F$ to every bag. This gives the following.

\begin{proposition}\label{prop:fvs-tw}
Let $G$ be a graph given along with its feedback vertex set $F$ of size $s$.
Then there exists a tree decomposition $(\cT,(X_t)_{t\in V(\cT)})$ of $G$ of width $s+1$.
Moreover, $(\cT,(X_t)_{t\in V(\cT)})$ can be constructed in polynomial time.
\end{proposition}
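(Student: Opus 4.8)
The plan is to explicitly build the tree decomposition by taking a tree decomposition of the forest $G-F$ and inserting $F$ into every bag. First I would recall that any forest has a tree decomposition of width at most $1$: for each edge $uv$ of $G-F$ create a bag $\{u,v\}$, and for each isolated vertex $w$ create a singleton bag $\{w\}$; then connect these bags into a tree by picking, for each connected component of $G-F$, a spanning tree of its ``bag graph'' (two bags adjacent if they share a vertex), and finally joining the components' trees arbitrarily into one tree $\cT'$. A standard check confirms conditions 1--5 of the definition: every vertex and every edge of $G-F$ is covered, and the bags containing a fixed vertex $v$ form a connected subtree (this is where we use that $G-F$ is acyclic — in a forest the subgraph of bags through $v$ is a path/subtree rather than a cycle). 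This construction is clearly polynomial time, since $G-F$ has at most $|V(G)|$ vertices and edges.

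Next I would set $X_t := X'_t \cup F$ for every node $t$ of $\cT := \cT'$, and verify that $(\cT, (X_t)_{t \in V(\cT)})$ is a tree decomposition of $G$. Vertex coverage and the connectivity condition for vertices of $V(G) \setminus F$ are inherited from the decomposition of $G-F$; for a vertex $f \in F$, the set of bags containing $f$ is all of $V(\cT)$, which induces the connected tree $\cT$ itself. For edge coverage: an edge with both endpoints in $V(G)\setminus F$ is covered by some $X'_t \subseteq X_t$; an edge incident to at least one vertex of $F$, say $uv$ with $u \in F$, is covered because $u$ lies in every bag and $v$ lies in at least one bag $X_t$, so $\{u,v\} \subseteq X_t$. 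Finally, $|X_t| = |X'_t| + |F| \leq 2 + s$, so the width is at most $s+1$, as claimed; and the whole construction runs in polynomial time.

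I do not expect any real obstacle here — the statement is a folklore manipulation of tree decompositions. The only point requiring a moment of care is the connectivity condition (item 5) for vertices of $G-F$, which relies on correctly threading the per-component bag-trees together; but since adding the global set $F$ to all bags does not affect which bags contain a vertex outside $F$, this reduces entirely to the (well-known) fact that a forest has a width-$1$ tree decomposition. Alternatively, one could cite this as standard (e.g.\ \cite{DBLP:books/sp/CyganFKLMPPS15}) and merely remark that inserting $F$ into every bag increases the width by exactly $|F|$.
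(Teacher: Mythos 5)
Your proof is correct and follows exactly the construction the paper uses (stated there as a brief observation before the proposition): take a width-$1$ tree decomposition of the forest $G-F$ and add $F$ to every bag, giving width $s+1$ in polynomial time. The extra detail you supply about building and threading the forest's bag-tree and verifying conditions 1--5 is a fine elaboration of the same argument, not a different route.
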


In particular this implies that for every graph $G$ we have $\tw{G} \leq \fvs{G}+1$.
Let us point out that the parameters $\pw{G}$ and $\fvs{G}$ are incomparable. Indeed, if $G$ is a complete binary tree on $n$ vertices, then $\fvs{G}=0$ and $\pw{G} = \Theta(\log n)$. On the other hand, if $G$ is a collection of $n/3$ disjoint triangles, then $\fvs{G}=n/3$ and $\pw{G}=2$.

\paragraph{Cutwidth.}

Let $\pi=(v_1,\ldots,v_n)$ be a linear ordering of vertices of $G$, we will call it a \emph{linear layout} of $G$ or a \emph{linear arrangement} of $G$.
A \emph{cut} of $\pi$ is a partition of $V(G)$ into two subsets: $\{v_1,\ldots,v_p\}$ and $\{v_{p+1},\ldots,v_n\}$, for some $p \in [n]$.
We say that an edge $v_iv_j$, where $i < j$, \emph{crosses} the cut  $(\{v_1,\ldots,v_p\},\{v_{p+1},\ldots,v_n\})$, if $i \leq p$ and $j > p$.
The \emph{width} of the linear layout $\pi$ is the maximum number of edges that cross any cut of $\pi$. 
Finally, we define the \emph{cutwidth} $\ctw{G}$ of $G$ as the minimum width over all linear layouts of $G$.

It is known that $\pw{G} \leq \ctw{G}$.
Furthermore, given a linear layout of $G$ with width $k$, we can in polynomial time construct a path decomposition of $G$ with width at most $k$~\cite{PathwidthCutwidth}.
On the other hand, for every graph $G$ it holds that $\ctw{G} \leq \pw{G} \cdot \Delta(G)$~\cite{DBLP:journals/dm/ChungS89}.
As we also have that $\ctw{G} \geq \Delta(G)/2$, we can intuitively think that $\ctw{G}$ is bounded if and only if both $\Delta(G)$ and $\pw{G}$ are bounded.

\subsection{Hard instances of $\lhomo{H}$}

In this section we briefly discuss the structure of graphs $H$, for which the \lhomo{H} problem is NP-complete.
Those graphs will be our focus in this paper.

\paragraph{Bipartite graphs $H$.}
First let us discuss the case if $H$ is bipartite.
Recall that Feder, Hell, and Huang~\cite{DBLP:journals/combinatorica/FederHH99} proved that in this case the $\lhomo{H}$ problem is polynomial-time solvable if $H$ is a complement of a circular-arc graph and NP-complete otherwise.
Moreover, they provided a characterization of this class of graphs in terms of forbidden subgraphs:
a bipartite graph $H$ is not the complement of a circular-arc graph if and only if $H$ contains an induced cycle of length at least 6 or a  structure called a \emph{special edge asteroid}.
We omit the definition of a special edge asteroid, as it is quite technical and not relevant to our paper.
Instead, we will rely on the following structural result, obtained by Okrasa \emph{et al.}~\cite{LhomoTreewidth,LhomoTreewidthFull}.

\begin{lemma}[Okrasa \emph{et al.}~\cite{LhomoTreewidth,LhomoTreewidthFull}]\label{obs:walks-between-corners}
Let $H$ be a bipartite graph, whose complement is not a circular-arc graph. Let $X$ be one of bipartition classes in $H$. Then there exists a triple $(\alpha,\beta,\gamma)$ of vertices in $X$ such that: 
\begin{enumerate}
\item there exist $\alpha',\beta' \in V(H)$, such that the edges $\alpha\alpha', \beta\beta'$ induce a matching in $H$,
\item vertices $\alpha,\beta,\gamma$ are pairwise incomparable,
\item there exist walks $\cX,\cX':\alpha \to \beta$ and $\cY,\cY':\beta \to \alpha$, such that $\cX$ avoids $\cY$ and $\cY'$ avoids $\cX'$,
\item at least one of the following holds:
\begin{enumerate}[a)]
\item $H$ contains an induced $C_6$ with consecutive vertices $w_1,\ldots,w_6$ and $\alpha=w_1,\beta=w_5,\gamma=w_3$,
\item $H$ contains an induced $C_8$ with consecutive vertices $w_1,\ldots,w_8$ and $\alpha=w_1,\beta=w_5,\gamma=w_3$,
\item the set $\{\alpha,\beta,\gamma\}$ is strongly incomparable and for any $a,b,c$, such that $\{a,b,c\} = \{\alpha,\beta,\gamma\}$, there exist walks $\cX_c:\alpha \to a \textrm{ and } \cY_c: \alpha \to b,$ and $\cZ_c: \beta \to c$, such that $\cX_c,\cY_c$ avoid $\cZ_c$ and $\cZ_c$ avoids $\cX_c,\cY_c$.
\end{enumerate}
\end{enumerate}
\end{lemma}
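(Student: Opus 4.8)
The plan is to run through the forbidden-substructure characterization of Feder, Hell, and Huang~\cite{DBLP:journals/combinatorica/FederHH99}: since the complement of the bipartite graph $H$ is not a circular-arc graph, $H$ contains either an induced cycle $C_{2\ell}$ with $\ell \geq 3$, or a special edge asteroid, and I would treat these two cases separately, in each extracting the triple $(\alpha,\beta,\gamma)$ from the forbidden substructure (rotating it, if necessary, so that $\alpha,\beta,\gamma$ land in the prescribed bipartition class $X$) and checking conditions 1--4 by hand. Conditions 1 and 2 should be the cheap part: on an induced cycle of length $\geq 6$ any two distinct vertices are incomparable (their neighbourhoods are distinct two-element sets), so condition~2 is automatic there, while condition~1 holds for any $\alpha,\beta$ at distance $\geq 2$ once $\alpha',\beta'$ are chosen among their cycle-neighbours so as to avoid the single vertex they might share; in the asteroid case both come out of the endpoints of distinct peripheral edges together with the walks that witness the asteroid. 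Condition~4 is a bookkeeping split: if the induced cycle has length $6$ or $8$ we land verbatim in case 4(a) or 4(b) with $\alpha=w_1$, $\beta=w_5$, $\gamma=w_3$; in every other situation the goal is to arrange that $\{\alpha,\beta,\gamma\}$ is in fact \emph{strongly} incomparable and invoke case 4(c).

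For an induced $C_{2\ell}$ with $2\ell \geq 10$, I would pick $\alpha,\beta,\gamma$ among the odd-indexed vertices so that each keeps a private cycle-neighbour (a routine finite check; the only mildly tight instance is $C_{10}$, where a concrete choice such as $w_1,w_5,w_7$ works). The walks demanded in conditions~3 and 4(c) would then be built from the two arcs of the cycle between the chosen vertices: between any two vertices of $X$ the two arcs have even length, so by padding one of them---repeatedly traversing a single cycle edge $w_iw_{i+1}$ forth and back---any two such walks can be made to have equal length, and the avoiding requirement $p_iq_{i+1}\notin E(H)$ can be read straight off the induced-cycle structure, since two cycle vertices are adjacent only when consecutive. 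The same short explicit adjacency checks (combined with a little wiggling, e.g.\ the walk $w_1,w_6,w_5,w_4,w_5$ in $C_6$) also settle condition~3 in the $C_6$ and $C_8$ cases.

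For a special edge asteroid I would first unpack the definition (roughly: a central edge $e_0=a_0b_0$, a cyclic family of peripheral edges $e_1,\ldots,e_{2k+1}$, and walks $W_i$ joining consecutive $e_i$'s that stay out of $N(a_0)\cup N(b_0)$) and take $\alpha,\beta,\gamma$ to be the $X$-endpoints of three suitably spaced peripheral edges. The avoiding walks for conditions~3 and 4(c) would be assembled by concatenating the $W_i$'s and their reverses, using the ``special'' parity clause to equalise lengths, and using the defining property that the $W_i$'s avoid $N(a_0)\cup N(b_0)$ to certify the avoiding relations; strong incomparability of the three chosen endpoints again follows because distinct peripheral edges communicate only through such neighbourhood-avoiding walks.

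The step I expect to be the real obstacle is condition 4(c): it is not a single avoiding pair but a family of them, one for each of the three ways of writing $\{a,b,c\}=\{\alpha,\beta,\gamma\}$, and for two of those assignments the walks $\cX_c$ and $\cZ_c$ are \emph{closed} walks (of the form $\alpha\to\alpha$ or $\beta\to\beta$), so one cannot simply recycle an arc of the cycle; producing these closed avoiding walks while keeping every length equal by parity-correct padding, and never accidentally creating a forbidden edge $p_iq_{i+1}$, is the delicate calculation. I would expect it to go through with a small pre-processed toolkit of lemmas on composing, padding, and reversing avoiding walks---of the kind already isolated by Okrasa \emph{et al.}~\cite{LhomoTreewidth,LhomoTreewidthFull}---rather than anything conceptually new. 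A secondary nuisance is purely organizational: the triple must simultaneously satisfy the plain incomparability of condition~2 \emph{and} the strong incomparability hidden inside case 4(c), which is exactly why one cannot always get away with the naive choice $w_1,w_3,w_5$ on the longer cycles.
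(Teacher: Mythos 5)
The first thing to note is that the paper does not prove this lemma at all: it is imported verbatim from Okrasa, Piecyk, and Rz\k{a}\.zewski~\cite{LhomoTreewidth,LhomoTreewidthFull}, where its proof is one of the main technical contributions and runs through a long analysis of the two forbidden substructures. So what you are proposing is to reprove that result, and your plan does follow the same broad route as the original (case split on an induced $C_{2\ell}$, $\ell\geq 3$, versus a special edge asteroid; land in 4(a)/4(b) for $C_6$/$C_8$; otherwise aim for strong incomparability and case 4(c)). As a plan it is sensible, and several of the easy verifications you describe are correct: incomparability of cycle vertices via cycle-neighbours and inducedness, the choice of $\alpha',\beta'$ for condition~1, and the observation that $w_1,w_3,w_5$ is not strongly incomparable on long cycles while a spaced choice such as $w_1,w_5,w_7$ on $C_{10}$ is.

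However, the proposal has genuine gaps exactly where the real content of the lemma lies. First, the special edge asteroid case is only gestured at: you do not define the structure precisely, and the key claims there are unsubstantiated. In particular, strong incomparability of the three chosen endpoints does not follow from the remark that ``distinct peripheral edges communicate only through neighbourhood-avoiding walks''; a private neighbour must be non-adjacent in $H$ to \emph{both} other chosen vertices, and nothing in the asteroid definition hands you that without an argument (this is where the original proof does real work). Second, even in the long-cycle case you never actually construct the walks for conditions~3 and~4(c); you correctly identify that 4(c) demands closed avoiding walks of the form $\alpha\to\alpha$ and $\beta\to\beta$, simultaneously for all six labellings, with all lengths equal, and you defer precisely that step to a ``toolkit'' you do not supply. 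Equalising lengths by back-and-forth padding is also not automatic: avoidance is a positional condition ($p_iq_{i+1}\notin E(H)$), so when one walk is padded the partner walk must be padded at matching positions and the non-adjacency of the inserted segments re-checked; in the cycle case this is plausible but still needs to be written down, and in the asteroid case it is nontrivial. As it stands, the proposal is a reasonable outline of the strategy used in the cited prior work, but it does not constitute a proof of the lemma.
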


Furthermore, Okrasa \emph{et al.}~\cite{LhomoTreewidth,LhomoTreewidthFull} observed that in order to understand the complexity of \lhomo{H}, it
is sufficient to focus on the so-called \emph{consistent instances}.

\begin{definition}[Consistent instance]\label{def:consist-inst}
Let $H$ be a bipartite graph with bipartition classes $X,Y$ and let $(G,L)$ be an instance of $\lhomo H$. We say that $(G,L)$ is \emph{consistent} if the following conditions hold:
\begin{enumerate}
\item $G$ is connected and bipartite with bipartition classes $X_G,Y_G$,
\item $L(X_G)\subseteq X, L(Y_G) \subseteq Y$,
\item for every $v\in V(G)$, the set $L(v)$ is incomparable.
\end{enumerate}
\end{definition}

Indeed, if $G$ is not connected, then we can solve the problem for each connected component of $G$ independently.
If $G$ is not bipartite, then we can immediately report a no-instance.
If some list contains two vertices $x,y$, such that $N(x) \subseteq N(y)$, then we can safely remove $x$ from the list.
Finally, note that in every homomorphism from $G$ to $H$, either $X_G$ is mapped to vertices of $X$, and $Y_G$ is mapped to vertices of $Y$,
or $X_G$ is mapped to vertices of $Y$, and $Y_G$ is mapped to vertices of $X$.
We can consider these two cases separately, reducing the problem to solving two consistent instances, without changing the asymptotic complexity of the algorithm.

\paragraph{General graphs $H$.}
Recall that for general graphs $H$, Feder, Hell, and Huang~\cite{DBLP:journals/jgt/FederHH03} showed that the \lhomo{H} problem is polynomial-time solvable if $H$ is a bi-arc graph, and NP-complete otherwise. They defined the class of bi-arc graphs in terms of some geometric representation, but for us it will be more convenient to show an equivalent definition.

For a graph $H$, the \emph{associated bipartite graph} $H^*$ is the graph with vertex set $V(H^*)=\{ v',v'' \ | \ v  \in V(H)\}$, whose edge set contains  those pairs $u'v''$, for which $uv \in E(H)$.
Note that if $H$ is bipartite, then $H^*$ consists of two disjoint copies of $H$.

Feder, Hell, and Huang~\cite{DBLP:journals/jgt/FederHH03} observed that $H$ is a bi-arc graph if and only if $H^*$ is the complement of a circular-arc graph. Furthermore, an irreflexive graph is bi-arc if and only if it is bipartite and its complement is a circular-arc graph.
Thus ``hard'' cases of $\lhomo H$ correspond to the ``hard'' cases of $\lhomo {H^*}$. 
This observation yields the following, useful proposition.

\begin{proposition}[\cite{LhomoTreewidth,LhomoTreewidthFull}] \label{prop:bipartite-associted} 
Let $H$ be a graph and let $(G,L)$ be a consistent instance of \lhomo{H^*}.
Define $L' \colon V(G) \to 2^{V(H)}$ as $L'(x) := \{u \colon \{u',u''\} \cap L(x) \neq \emptyset\}$.
Then $(G,L) \to H^*$ if and only if $(G,L') \to H$.
\end{proposition}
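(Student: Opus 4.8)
The plan is to unwind the definitions and exhibit a direct correspondence between list homomorphisms $(G,L)\to H^*$ and list homomorphisms $(G,L')\to H$, using the fact that $(G,L)$ is consistent. Recall $V(H^*)=\{v',v''\colon v\in V(H)\}$ with $u'v''\in E(H^*)$ iff $uv\in E(H)$; in particular $H^*$ is bipartite with bipartition classes $X^*=\{v'\colon v\in V(H)\}$ and $Y^*=\{v''\colon v\in V(H)\}$. Since $(G,L)$ is consistent, $G$ is connected and bipartite with classes $X_G,Y_G$, and $L(X_G)\subseteq X^*$, $L(Y_G)\subseteq Y^*$. For $x\in V(G)$, exactly one of $u',u''$ can possibly lie in $L(x)$ (depending on which side $x$ is on), so $L'(x)=\{u\colon u'\in L(x)\}$ if $x\in X_G$ and $L'(x)=\{u\colon u''\in L(x)\}$ if $x\in Y_G$; in either case there is a natural bijection between $L(x)$ and $L'(x)$ obtained by stripping the prime/double-prime.

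For the forward direction I would take a list homomorphism $\vphi\colon(G,L)\to H^*$ and define $\psi\colon V(G)\to V(H)$ by: if $x\in X_G$ then $\vphi(x)=u'$ for some $u\in V(H)$ and we set $\psi(x):=u$; if $x\in Y_G$ then $\vphi(x)=u''$ and we set $\psi(x):=u$. Clearly $\psi(x)\in L'(x)$ by construction. To check $\psi$ is a homomorphism, take an edge $xy\in E(G)$; since $G$ is bipartite with classes $X_G,Y_G$, one endpoint, say $x$, lies in $X_G$ and the other, $y$, in $Y_G$. Then $\vphi(x)=\psi(x)'\in X^*$ and $\vphi(y)=\psi(y)''\in Y^*$, and $\vphi(x)\vphi(y)\in E(H^*)$ means $\psi(x)'\psi(y)''\in E(H^*)$, which by definition of $H^*$ is exactly $\psi(x)\psi(y)\in E(H)$. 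Hence $\psi\colon(G,L')\to H$.

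For the converse, given $\psi\colon(G,L')\to H$, define $\vphi$ by $\vphi(x):=\psi(x)'$ for $x\in X_G$ and $\vphi(x):=\psi(x)''$ for $x\in Y_G$. Then $\vphi(x)\in L(x)$: indeed $\psi(x)\in L'(x)$ means, for $x\in X_G$, that $\{\psi(x)',\psi(x)''\}\cap L(x)\neq\emptyset$, and since $L(x)\subseteq X^*$ the only possibility is $\psi(x)'\in L(x)$; symmetrically for $x\in Y_G$. For an edge $xy\in E(G)$ with $x\in X_G$, $y\in Y_G$, we have $\vphi(x)\vphi(y)=\psi(x)'\psi(y)''$, which is an edge of $H^*$ precisely because $\psi(x)\psi(y)\in E(H)$. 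So $\vphi\colon(G,L)\to H^*$, completing the equivalence.

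I do not anticipate a serious obstacle here; the statement is essentially a bookkeeping lemma. The only point that requires care — and the reason the consistency hypothesis is needed — is that the definition $L'(x):=\{u\colon \{u',u''\}\cap L(x)\neq\emptyset\}$ is only well-behaved (i.e., injective on the "copy" structure, so that $\vphi$ and $\psi$ determine each other) because consistency forces every list to sit entirely inside one bipartition class of $H^*$; without that, a single vertex of $G$ could receive both a primed and a double-primed vertex of $H$ in its list and the back-and-forth translation would lose information. Once one observes that consistency pins down, for each $x$, which of the two copies is relevant, the proof is the routine verification above; one should also note in passing that $G$ being bipartite guarantees every edge has one endpoint in $X_G$ and one in $Y_G$, which is what makes the edge check go through.
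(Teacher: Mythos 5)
Your proof is correct, and it is the standard argument: the paper itself only cites this proposition from the earlier work of Okrasa, Piecyk, and Rz\k{a}\.zewski without reproving it, and the intended proof is exactly the prime/double-prime translation you carry out, with consistency guaranteeing that each list lies in a single bipartition class of $H^*$ so that the translation is a lossless bijection on lists. Your handling of both directions (including the observation that $\psi(x)\in L'(x)$ forces the correct copy $\psi(x)'$ or $\psi(x)''$ to lie in $L(x)$) matches what is needed, so there is nothing to add.
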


\subsection{Incomparable sets, decompositions, and main invariants}

In this section we introduce the main invariants, $i^*(H)$ and $mim^*(H)$. First, let us define parameters $i(H)$ and $\mim{H}$.

\begin{definition}[$i(H)$ and $mim(H)$]
Let $H$ be a bipartite graph. 
By $i(H)$ (resp. $mim(H)$) we denote the maximum size of an incomparable set (resp. strongly incomparable set) in $H$,
which is fully contained in one bipartition class.
\end{definition}

Let $S$ be a strongly incomparable set, contained in one bipartition class, and let $S'$ be the set of private neighbors of vertices of $S$.
We observe that the set $S \cup S'$ induces a matching in $H$ of size $|S|$.
On the other hand, if $M$ is an induced matching, then the endpoints of edges from $M$ contained in one bipartition class form a strongly incomparable set of size $|M|$. Thus $mim(H)$ can be equivalently defined as the size of a maximum induced matching in $H$.

Before we define $i^*(H)$ and $mim^*(H)$, we need one more definition.

\begin{figure}
\centering{\begin{tikzpicture}[every node/.style={draw,circle,fill=white,inner sep=0pt,minimum size=30pt},every loop/.style={}]
\node (dx) at (0,0) {};
\node (dy) at (2,0) {};
\node (nx) at (0,-1.5) {};
\node (ny) at (2,-1.5) {};
\node (rx) at (0,-3) {};
\node (ry) at (2,-3) {};
\draw[very thick] (dx)--(ny)--(nx)--(dy);
\draw[color=orange] (nx)--(ry)--(rx)--(ny);
\draw[color=orange] (dx)--(dy);
\draw[dashed] (-1,0.7)--++(4,0)--++(0,-1.4)--++(-4,0)--++(0,1.4);
\draw[dashed] (-1,-0.8)--++(4,0)--++(0,-1.4)--++(-4,0)--++(0,1.4);
\draw[dashed] (-1,-2.3)--++(4,0)--++(0,-1.4)--++(-4,0)--++(0,1.4);
\node[draw=none, fill=none, label=left:{$D$}] (d) at (-1,0) {};
\node[draw=none, fill=none, label=left:{$N$}] (n) at (-1,-1.5) {};
\node[draw=none, fill=none, label=left:{$R$}] (r) at (-1,-3) {};
\end{tikzpicture}}
\caption{Bipartite decomposition $(D,N,R)$. Circles denote indepenent sets. A black line denotes that there are all possible edges between sets, an orange one that there might be some edges, and the lack of a line denotes that there are no edges between sets.}\label{fig:decomp}
\end{figure}
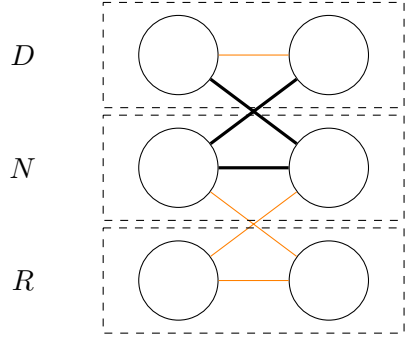

\begin{definition}[Bipartite decomposition]\label{def:bipartite-decomposition}
Let $H$ be a bipartite graph with bipartition classes $X,Y$. A partition of $V(H)$ into an ordered triple of sets $(D,N,R)$ is a \emph{bipartite decomposition} if the following conditions are satisfied (see \cref{fig:decomp})
\begin{enumerate}
\item $N$ is non-empty and separates $D$ and $R$, \label{it:bipdecomp-separator}
\item $|D\cap X| \geq 2$ or $|D \cap Y| \geq 2$, \label{it:bipdecomp-geq2}
\item $N$ induces a biclique in $H$, \label{it:bipdecomp-biclique}
\item $(D \cap X) \cup (N \cap Y)$ and $(D \cap Y) \cup (N \cap X)$ induce bicliques in $H$. \label{it:bipdecomp-complete}
\end{enumerate}
If $H$ does not admit a bipartite decomposition, then $H$ is \emph{undecomposable}.
\end{definition}

Let us point out that if $H$ is not the complement of a circular-arc graph, then it contains an induced subgraph, which is not the complement of a circular-arc graph and is undecomposable, see e.g.~\cite[Theorem~46]{LhomoTreewidthFull}. 
Now we are ready to define the following.

\begin{definition}[$i^*(H)$ and $mim^*(H)$ for bipartite $H$]\label{def:i_star}
Let $H$ be a connected bipartite graph, whose complement is not a circular-arc graph. Define
\begin{align*}
i^*(H) := \max \{ & i(H') \colon H' \text{ is an undecomposable, connected, induced}\\
& \text{ subgraph of }H, \text{ whose complement is not a circular-arc graph}\},\\
mim^*(H) := \max \{ & mim(H') \colon H' \text{ is an undecomposable, connected, induced}\\
& \text{ subgraph of }H, \text{ whose complement is not a circular-arc graph}\}.
\end{align*}
\end{definition}

It remains to extend the definitions of $i^*(H)$ and $mim^*(H)$ to general graphs $H$.
Recall that if $H$ is bipartite, then $H^*$ consists of two disjoint copies of $H$. Thus for bipartite $H$ it holds that $i^*(H^*)=i^*(H)$ and $mim^*(H^*)=mim^*(H)$.
In the other case, if $H$ is non-bipartite and additionally connected, then $H^*$ is connected.
This motivates the following extension of the definition of $i^*$ and $mim^*$ to non-bipartite $H$.
\begin{definition}[$i^*(H)$ and $mim^*(H)$] Let $H$ be a non-bi-arc graph. Define:
\begin{align*}
i^*(H):= & i^*(H^*),\\
mim^*(H):= & mim^*(H^*).
\end{align*}
\end{definition}

\newpage
\section{Parameter: the size of a minimum feedback vertex set}
\subsection{Lower bound for bipartite target graphs}
First, we will prove the bipartite version of \cref{thm:main}~b), i.e., the case that $H$ is bipartite.

\begin{theorem}\label{thm:bipartite}
Let $H$ be a connected, bipartite graph, whose complement is not a circular-arc graph.
Even if $H$ is fixed, there is no algorithm that solves every instance $(G,L)$ of  $\lhomo{H}$ in time $(i^*(H)-\epsilon)^{\fvs{G}} \cdot |V(G)|^{\Oh(1)}$ for any $\epsilon > 0$, unless the SETH fails.
\end{theorem}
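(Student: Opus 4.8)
The plan is to reduce from \textsc{CNF-Sat} following the template of Lokshtanov, Marx, and Saurabh for \textsc{List $k$-Coloring}, but replacing the ``color class'' bookkeeping by the walk-based machinery of \cref{obs:walks-between-corners}. Fix $q := i^*(H)$ and let $H'$ be an undecomposable, connected, induced subgraph of $H$ whose complement is not a circular-arc graph and with $i(H') = q$; all gadgets will use lists contained in $V(H')$, so we may as well argue inside $H'$. Let $A = \{a_1,\dots,a_q\}$ be a maximum incomparable set in one bipartition class of $H'$. Given a \textsc{CNF-Sat} instance $\Phi$ with $n$ variables and $m$ clauses, group the variables into $t := \lceil n/\log q\rceil$ blocks of size $\le \log q$; each block's assignment is encoded by one ``state vertex'' whose list is $A$, so a single vertex carries $\log q$ bits and we will need $t$ such vertices to carry all $n$ bits. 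The feedback vertex set will consist of (a small constant number of copies of) these $t$ state vertices, giving $\fvs{G} = t \cdot O(1) \approx n/\log q$, and a solver running in $(q-\epsilon)^{\fvs{G}}$ would then decide $\Phi$ in $(q-\epsilon)^{O(n/\log q)} = 2^{O(n)\cdot \log(q-\epsilon)/\log q} = (2-\epsilon')^n$ time, contradicting the SETH.

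The core of the construction is a family of \emph{gadgets} that behave like the information-transfer and clause-checking gadgets of \cite{DBLP:conf/soda/LokshtanovMS11a} but are realized using trees (so that they contribute nothing to $\fvs{G}$) attached to the state vertices. Concretely, I would build: (i) a \emph{copying/path gadget} that, using pairs of avoiding walks $\cX$ avoiding $\cY$ and $\cY'$ avoiding $\cX'$ from item~3 of \cref{obs:walks-between-corners}, forces a fresh vertex with list $A$ to take the same value as a given state vertex — the avoiding property is exactly what rules out the ``diagonal'' colorings and makes the gadget function like an equality constraint while remaining a tree; (ii) a \emph{decoding gadget} that, from a state vertex holding $a_i \in A$, extracts the truth value of each individual variable in that block (this is where the $C_6$/$C_8$/strongly-incomparable trichotomy in item~4 is used, as in the treewidth paper, to route the $\log q$ bits out one at a time); and (iii) a \emph{clause gadget}, a tree whose leaves are identified with decoded literal-vertices, that admits a list homomorphism if and only if at least one literal is satisfied — here I expect to need a NAND-type or ``forbidden pattern'' gadget assembled from the avoiding walks, analogous to the arrow/or gadgets in \cite{LhomoTreewidthFull}. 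Since all of (i)–(iii) are forests, the only vertices that lie on cycles of $G$ are the $t$ state vertices (and bounded-size local identifications), so $G - (\text{state vertices}) $ is a forest and $\fvs{G} \le t + O(1)$.

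The correctness argument has the usual two directions. If $\Phi$ is satisfiable, pick a satisfying assignment, set each state vertex to the $a_i$ encoding its block, and extend through the gadgets using the walks guaranteed by \cref{obs:walks-between-corners}; incomparability of $A$ is what guarantees each partial map can be completed. Conversely, any list homomorphism assigns each state vertex some $a_i$, hence a block assignment, hence a global assignment of $\Phi$; the copying gadgets force consistency across the multiple occurrences of a variable, and the clause gadgets certify that every clause is satisfied. I would also need to verify, as in \cite{LhomoTreewidthFull}, that all instances produced are \emph{consistent} in the sense of \cref{def:consist-inst} (connected, properly 2-colored, incomparable lists), which is routine but must be checked so that the machinery of \cref{obs:walks-between-corners} applies.

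The main obstacle, and the place where real work is needed beyond \cite{DBLP:conf/soda/LokshtanovMS11a}, is designing the clause gadget so that it is simultaneously (a) a tree, (b) correct for an \emph{arbitrary} undecomposable non-circular-arc-complement $H'$ rather than a clique, and (c) ``lossless'' in the sense that it does not spuriously reduce the number of reachable states — i.e.\ it must interact with the decoding gadget so that all $2^{\log q}$ literal-patterns except the all-false one remain realizable. For \textsc{List $k$-Coloring} this is trivial because colors are interchangeable; for general $H'$ one must chain avoiding walks through the clause tree in a way that the three cases of item~4 all support, and bound the number of gadget vertices by $|V(H)|^{O(1)}\cdot(n+m)$ so the polynomial factor in the running time is controlled. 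Managing the arity of the clause gadget (CNF clauses can be long) while keeping it a tree — presumably by a balanced binary tree of binary ``or'' gadgets, each a constant-size tree depending only on $H$ — is the key technical step; this is essentially the content borrowed and adapted from the treewidth lower bound of Okrasa, Piecyk, and Rzążewski, repackaged so that the only cyclic part of $G$ is the set of state vertices.
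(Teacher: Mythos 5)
Your high-level template (an LMS-style reduction from \textsc{CNF-Sat} whose only cycle-carrying vertices are the state vertices, with tree/path gadgets built from the walk machinery of \cref{obs:walks-between-corners}) is indeed the route the paper takes, but as written the proposal has two genuine gaps. The first is quantitative and would make the SETH contradiction fail. You encode each block of ``$\le \log q$'' variables by a single state vertex with list of size $q$; since an injective encoding exists only for blocks of at most $\lfloor \log q\rfloor$ variables, a solver running in $(q-\eps)^{\fvs{G}}\cdot |V(G)|^{\Oh(1)}$ gives \textsc{CNF-Sat} in time roughly $2^{n\log(q-\eps)/\lfloor\log q\rfloor}$, and $\log(q-\eps)/\lfloor\log q\rfloor$ is in general \emph{not} below $1$ (e.g.\ $q=3$). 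Writing the exponent as $O(n/\log q)$ hides constants that SETH does not tolerate, and an extra ``constant number of copies'' of each state vertex only worsens the exponent. This is exactly why the paper (following Lokshtanov--Marx--Saurabh) uses groups of $\lfloor p\log k\rfloor$ variables encoded by $p$ vertices each, with $p$ chosen so that $\log_k(k-\eps)\cdot\frac{p}{p-1}<1$, and then carries out the calculation in \cref{eq:reduction1}--\cref{eq:reduction4}; without this blow-up parameter your final step $(q-\epsilon)^{O(n/\log q)}=(2-\epsilon')^n$ is simply not valid.

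The second gap is that the gadgets carrying the actual logic are not constructed, and their construction is where most of the paper's work lies. The paper's reduction does not decode individual literals and feed them into a tree of binary OR gadgets; instead, for each clause it builds a path of \emph{switching gadgets} (\cref{def:switch-gadget}), one per pair (group containing a variable of the clause, satisfying assignment of that group), and ties the forced $q$-vertex back to the $p$ state vertices of that group via \emph{assignment gadgets} (\cref{def:assign-gadget}), whose existence (\cref{lem:assign-gadget}, \cref{lem:switching-gadget}) requires the whole machinery of $\ork{k}$-, $\nand{2}$-, distinguisher and detector gadgets in \cref{sec:constr-of-gadgets}. You acknowledge your clause gadget as the ``main obstacle'' but leave it open, so the heart of the proof is missing. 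Moreover, your insistence that all gadgets be forests is stronger than what is achievable: forcing an equality-type or ``detect value $v$'' constraint on a list of size $q\ge 3$ with this walk machinery needs several parallel paths out of the state vertex, so the gadget inevitably contains cycles. The paper only guarantees that $A_v-\{x\}$ is a tree (property (A5.)), and the feedback-vertex-set bound (\cref{claim:fvs}) works because every cycle of a gadget passes through its $x$-vertex, which is a state vertex; you would need to relax your claim accordingly.
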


In order to prove \cref{thm:bipartite} it is sufficient to prove the following.

\begin{theorem}\label{thm:bipartite-undecomp}
Let $H$ be a connected, bipartite, undecomposable graph, whose complement is not a circular-arc graph.
Even if $H$ is fixed, there is no algorithm that solves every instance $(G,L)$ of  $\lhomo{H}$ in time $(i(H)-\epsilon)^{\fvs{G}} \cdot |V(G)|^{\Oh(1)}$ for any $\epsilon > 0$, unless the SETH fails.
\end{theorem}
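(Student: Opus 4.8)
The goal is a SETH-based lower bound of the form $(i(H)-\epsilon)^{\fvs{G}}$, mirroring the classical reduction of Lokshtanov, Marx, and Saurabh for $k$-Coloring parameterized by feedback vertex set, but with all the combinatorial machinery of \cref{obs:walks-between-corners} in place of the trivial clique gadgets. I would start from an instance of \textsc{CNF-Sat} (or, more conveniently, $q$-\textsc{CSP-}$B$ with $B=i(H)$, as in the treewidth proof of Okrasa et al.), and group the $n$ Boolean variables into $n/\log B$ blocks, each encoding a value in an alphabet of size $B$. The reduction will build, for each block, a collection of $\Oh(1)$ ``state vertices'' whose lists are an incomparable set $A$ of size $i(H)$ inside one bipartition class of $H$; a homomorphism restricted to these vertices then encodes a value in $[B]$. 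The feedback vertex set $S$ of the constructed graph $G$ will consist precisely of these $\Theta(n/\log B)$ state vertices (plus $\Oh(1)$ auxiliary vertices), so that $\fvs{G} \le n/\log B + \Oh(1)$; an algorithm running in $(i(H)-\epsilon)^{\fvs{G}} = (B-\epsilon)^{n/\log B + o(n)} = 2^{(1-\epsilon') n}$ would then break SETH.

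\textbf{Key steps, in order.} (1) \emph{Assignment gadget.} Using incomparability of $A$, realize a structure on $\Oh(1)$ vertices, all but one of which lie in $S$, whose list homomorphisms to $H$ are in bijection with $A$ (hence with $[B]$); here the ``avoiding walks'' $\cX,\cX',\cY,\cY'$ from \cref{obs:walks-between-corners}, together with the fact that $A$ sits in an undecomposable non-complement-of-circular-arc graph, provide the needed ``propagation paths'' that let one vertex's value be copied along a path while being forced to differ from a forbidden value. (2) \emph{Path/copy gadget.} Build subdivided paths (trees, in fact) of controlled length connecting the block-vertices to the clause-checking gadgets; these must be forests after deleting $S$, so the gadget graph outside $S$ is a disjoint union of trees — this is where the $\fvs{G}$ bound comes from, and it is the structural constraint that distinguishes this proof from the pathwidth one. (3) \emph{Clause gadget.} For each clause of the CNF formula, attach an ``OR-gadget'' that is satisfiable iff at least one incident block-vertex carries a value consistent with the literal; this is the standard ``not-all-equal''/ $\ork{}$ construction lifted to $H$ via the triple $(\alpha,\beta,\gamma)$ and its avoiding walks, exactly as engineered in \cite{LhomoTreewidthFull}. (4) \emph{Correctness and size bookkeeping:} verify that $G$ can be 2-colored (it must be bipartite and consistent so that lists respect bipartition classes), that $|S| = n/\log i(H) + \Oh(1)$, that $G-S$ is a forest, and that $(G,L) \to H$ iff the formula is satisfiable; conclude by the usual SETH calculation, letting the block size grow so that $\log i(H)$ divides things and the $+\Oh(1)$ washes out.

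\textbf{Main obstacle.} The hard part is building the assignment and clause gadgets so that (a) they correctly enforce the intended semantics for an \emph{arbitrary} undecomposable, connected, non-complement-of-circular-arc bipartite $H$ — not just for $K_{k,k}$-like targets — and simultaneously (b) the entire gadget, after removing the $\Oh(1)$ designated vertices per block, is a forest. Requirement (b) is genuinely restrictive: in the pathwidth lower bound one can afford long ``brick walls'' of bounded width, but here every cycle must pass through $S$, so all the ``synchronization'' between copies of a block's value and between the literals of a clause must be funneled through the few vertices in $S$. Reconciling this with the combinatorial gadgetry of \cite{LhomoTreewidthFull} — which was designed for path decompositions and may natively create cycles disjoint from the chosen separator — is where most of the work lies; I expect to need to redesign the gadgets as ``caterpillar''-like trees hanging off the $S$-vertices, using the avoiding-walk properties to propagate and compare values along tree paths rather than along grid-like structures. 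The reduction to the general (non-bipartite) case, and from \cref{thm:bipartite-undecomp} to \cref{thm:bipartite} via the decomposition theorem that lets one replace $H$ by its undecomposable witness subgraph $H'$ with $i(H')=i^*(H)$, should then be routine, following the template already set in \cref{thm:lhomo-treewidth}~b).
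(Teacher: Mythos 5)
Your high-level plan is the same as the paper's: reduce from \textsc{CNF-Sat}, encode groups of Boolean variables by vertices whose lists form a maximum incomparable set $S$ of size $i(H)$ in one bipartition class, let exactly these vertices form the feedback vertex set, hang tree-shaped ``value-forcing'' gadgets off them, string clause-checking (OR/switching) gadgets along paths whose endpoints are pinned to $\alpha$ and $\beta$ via $\alpha',\beta'$, and finish with the standard $(k-\eps)^{t\cdot p}=2^{(1-\delta)n}$ calculation (your remark about letting the block size grow corresponds to the paper's choice of the constant $p$ and $t=\lceil n/\lfloor p\log k\rfloor\rceil$, which is indeed needed to absorb the rounding loss). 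So the skeleton of the reduction is right.

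However, the proposal has a genuine gap exactly where you flag ``most of the work'': it never establishes that the two gadgets the reduction relies on actually exist for \emph{every} connected, undecomposable, bipartite $H$ whose complement is not a circular-arc graph. Concretely, the reduction needs (i) an assignment gadget $A_v$ as in \cref{def:assign-gadget} --- special vertices $x$ (list $S$) and $y$ (list $\{\alpha,\beta,\gamma\}$) such that $\vphi(y)=\gamma$ forces $\vphi(x)=v$, all other combinations with $y\in\{\alpha,\beta\}$ remain realizable, and, crucially for the $\fvs{}$ bound, $A_v-\{x\}$ is a tree --- and (ii) a switching gadget as in \cref{def:switch-gadget}, which must be a \emph{path} with endpoints listed $\{\alpha,\beta\}$ and an internal $q$-vertex forced to $\gamma$ precisely when the endpoints switch from $\alpha$ to $\beta$. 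Your sketch says these should be obtainable as ``caterpillar-like trees'' from the avoiding walks of \cref{obs:walks-between-corners}, but asserting this is not a proof: the existence of such gadgets with the stated exact semantics (properties (A2.)--(A4.) and (S2.)--(S4.), not merely ``some propagation'') is the technical heart of the theorem, and it is what the paper supplies in \cref{sec:constr-of-gadgets} via a chain of intermediate constructions --- the walk-composition \cref{lem:gadget-of-walks}, the $\nand{2}$- and $\ork{k}$-gadgets of \cref{lem:relationgadgets} (built as a path and a tree of maximum degree~$3$, respectively), the distinguisher of \cref{lem:distinguisher}, the detector of \cref{lem:detector-gadget}, and finally \cref{lem:assign-gadget} and \cref{lem:switching-gadget}; the case analysis there (induced $C_6$/$C_8$ versus the strongly incomparable triple with its six families of mutually avoiding walks) is unavoidable and does not follow from incomparability of $S$ alone. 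Without these constructions, the claim that every cycle of $G$ passes through the $x$-vertices, and hence the bound $\fvs{G}\le t\cdot p$, as well as the forcing argument in the correctness proof (\cref{claim:equiv}), are unsupported, so the SETH lower bound does not yet follow from your write-up.
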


Let us show that \cref{thm:bipartite} and \cref{thm:bipartite-undecomp} are equivalent. 

%
%
%

\medskip
\noindent\textbf{(\cref{thm:bipartite} $\rightarrow$  \cref{thm:bipartite-undecomp})} Assume the SETH and suppose that \cref{thm:bipartite} holds and \cref{thm:bipartite-undecomp} fails.
Then there exists a connected, bipartite, udecomposable graph $H$, whose complement is not a circular-arc graph and an algorithm that solves $\lhomo H$ for every instance $(G,L)$ in time $(i(H)-\epsilon)^{\fvs{G}} \cdot |V(G)|^{\Oh(1)}$ for some $\epsilon > 0$.
The properties of $H$ imply that $i^*(H)=i(H)$.
Thus $\lhomo H$ can be solved for every instance $(G,L)$ in time $(i^*(H)-\epsilon)^{\fvs{G}} \cdot |V(G)|^{\Oh(1)}$.
Observe that $H$ satisfies the assumptions of \cref{thm:bipartite}. Therefore, by \cref{thm:bipartite}, we get a contradiction with the SETH.

\medskip
\noindent\textbf{(\cref{thm:bipartite-undecomp} $\rightarrow$ \cref{thm:bipartite})} Assume the SETH and suppose that \cref{thm:bipartite-undecomp} holds and \cref{thm:bipartite} fails.
Then there exist a connected, bipartite graph $H$, whose complement is not a circular-arc graph, and an algorithm that solves $\lhomo H$ for every instance $(G,L)$ in time $(i^*(H)-\epsilon)^{\fvs{G}} \cdot |V(G)|^{\Oh(1)}$ for some $\epsilon > 0$. Let $H'$ be an induced subgraph of $H$ such that $H'$ is connected, undecomposable, is not a complement of a circular-arc graph, and $i(H')=i^*(H)$. Observe that any instance $(G,L)$ of $\lhomo {H'}$ can be seen as an instance of $\lhomo H$ such that only vertices of $H'$ appear on lists $L$. Thus we can solve any instance $(G,L)$ of $\lhomo {H'}$ in time $(i^*(H)-\epsilon)^{\fvs{G}} \cdot |V(G)|^{\Oh(1)}=(i(H')-\epsilon)^{\fvs{G}} \cdot |V(G)|^{\Oh(1)}$, which by \cref{thm:bipartite-undecomp} contradicts the SETH.

\paragraph{Gadgets needed for hardness reduction.} From now on we assume that $H$ is a bipartite graph, whose complement is not a circular-arc graph and $(\alpha,\beta,\gamma)$ is the triple given by \cref{obs:walks-between-corners}. In order to prove \cref{thm:bipartite-undecomp} we will need two gadgets. The first one is a graph called an \emph{assignment gadget} and has two special vertices. Its main goal is to ensure that a certain coloring of one special vertex forces a certain coloring of the other special vertex.

\begin{restatable}[Assignment gadget]{definition}{defAssign}\label{def:assign-gadget}
Let $S$ be an  incomparable set in $H$ contained in the same bipartition class as $\alpha,\beta,\gamma$ and let $v \in S$. An \emph{assignment gadget} is a graph $A_v$ with $H$-lists $L$ and with special vertices $x,y$, such that:
\begin{enumerate}[({A}1.)]
\item $L(x)=S$ and $L(y)=\{\alpha,\beta,\gamma\}$,

\item for every $u \in S$ and for every $a \in \{\alpha,\beta\}$ there exists a list homomorphism $\vphi: (A_v,L) \to H$ such that $\vphi(x)=u$ and $\vphi(y)=a$,

\item there exists a list homomorphism $\vphi: (A_v,L) \to H$ such that $\vphi(x)=v$ and $\vphi(y)=\gamma$,

\item for every list homomorphism $\vphi: (A_v,L) \to H$ it holds that if $\vphi(y)=\gamma$, then $\vphi(x)=v$,

\item $A_v - \{x\}$ is a tree,

\item $\deg(x)=(|S|-1)^2$ and $\deg(y)=|S|-1$,

\item the degree of every vertex of $A_v$, possibly except $x$ and $y$, is at most $3$.
\end{enumerate}
\end{restatable}

The second gadget is called a \emph{switching gadget}. It is a path $T$ with a special internal vertex $q$, whose list is $\{\alpha,\beta,\gamma\}$, and endvertices with the same list $\{\alpha,\beta\}$.
Coloring both endvertices of $T$ with the same color, i.e., coloring both with $\alpha$ or both with $\beta$, allows us to color $q$ with one of $\alpha,\beta$, but ``switching sides'' from $\alpha$ to $\beta$ forces coloring $q$ with $\gamma$.

\begin{restatable}[Switching gadget]{definition}{defSwitch}\label{def:switch-gadget}
A \emph{switching gadget} is a path $T$ of even length with $H$-lists $L$, endvertices $p,r$, called respectively the \emph{input} and the \emph{output} vertex, and one special internal vertex $q$, called a \emph{$q$-vertex}, in the same bipartition class as $p,r$, such that:
\begin{enumerate}[(S1.)]
\item $L(p)=L(r)=\{\alpha,\beta\}$ and $L(q)=\{\alpha,\beta,\gamma\}$,
\item for every $a \in \{\alpha,\beta\}$ there exists a list homomorphism $\vphi: (T,L) \to H$, such that $\vphi(p)=\vphi(r)=a$ and $\vphi(q) \neq \gamma$,
\item there exists a list homomorphism $\vphi: (T,L) \to H$, such that $\vphi(p)=\alpha$, $\vphi(r)=\beta$, and $\vphi(q)=\gamma$,
\item for every list homomorphism $\vphi: (T,L) \to H$, if $\vphi(p)=\alpha$ and $\vphi(r)=\beta$, then $\vphi(q)=\gamma$.
\end{enumerate}
\end{restatable}
Note that in a switching gadget we do not care about homomorphisms that map $p$ to $\beta$ and $r$ to $\alpha$.

Later, when discussing assignment and switching gadgets, we will use the notions of $x$-, $y$-, $p$-, $q$-, and $r$-vertices to refer to the appropriate vertices introduced in the definitions of the gadgets.

The following two lemmas show that both, the assignment gadget and the switching gadget, even with some additional restrictions, can be constructed. Since their proofs are quite technical, they are postponed to \cref{sec:constr-of-gadgets}.

\begin{restatable}[Construction of the assignment gadget]{lemma}{lemAssignGadget}\label{lem:assign-gadget}
Let $H$ be an undecomposable, connected, bipartite graph, whose complement is not a circular-arc graph. Let $(\alpha,\beta,\gamma)$ be the triple from \cref{obs:walks-between-corners}. Let $S$ be an  incomparable set in $H$ contained in the same bipartition class as $\alpha,\beta,\gamma$, such that $k:=|S| \geq 2$. Let $g \in \N$. Then for every $v \in S$ there exists an assignment gadget $A_v$ such that $\girth{A_v} \geq g$ and for any distinct vertices $a,b$ in $A_v$ of degree at least $3$ it holds that $\dist(a,b) \geq g$, $\dist(a,x)\geq g$, $\dist(a,y)\geq g$, and $\dist(x,y)\geq g$.
\end{restatable}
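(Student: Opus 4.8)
The plan is to build $A_v$ from smaller "one-way" walk gadgets, following the general strategy of Okrasa et al., and then take a girth-increasing subdivision at the very end. First I would record the basic walk primitive: property (3) of \cref{obs:walks-between-corners} gives avoiding walk pairs $\cX$ avoids $\cY$ and $\cY'$ avoids $\cX'$ between $\alpha$ and $\beta$; by concatenating such walks with their reverses one obtains, for any two vertices $a,b$ in $\{\alpha,\beta,\gamma\}$, a "distinguisher path" — a path $P_{a,b}$ with endpoints having lists $\{a,b\}$ (or $\{\alpha,\beta,\gamma\}$) that admits the homomorphisms we want but forbids the "crossing" combination. Concretely, the incomparability of $\alpha,\beta,\gamma$ (property (2)) together with the avoiding-walk machinery is exactly what lets a path force: "if one endpoint is colored $\gamma$, the other is colored $v$." I would first assemble, for each $u\in S\setminus\{v\}$, a path $Q_u$ with one endpoint of list $S$ and the other of list $\{\alpha,\beta,\gamma\}$ such that coloring the $S$-end with $u$ is compatible with coloring the other end with $\alpha$ or $\beta$, coloring it with $v$ is additionally compatible with $\gamma$, and coloring the other end with $\gamma$ forces the $S$-end to be $v$. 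This uses the incomparability of $S$ (so for $u\neq v$ there is a neighbor of $u$ not adjacent to $v$, blocking the "$u$ with $\gamma$" combination) fed through avoiding walks.

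Next I would glue these pieces together at the two special vertices $x$ (list $S$) and $y$ (list $\{\alpha,\beta,\gamma\}$). The degree constraints (A6) — $\deg(x)=(|S|-1)^2$ and $\deg(y)=|S|-1$ — dictate the combinatorial shape: attach to $y$ exactly $|S|-1$ "arms," and inside each arm have a small binary tree that fans out so that $x$ receives $|S|-1$ edges from each arm, for $(|S|-1)^2$ total. Within each arm I would route the $Q_u$ paths (for the various $u$) and some auxiliary $\{\alpha,\beta\}$-listed "filler" paths so that the local constraints compose correctly: (A2) follows because each avoiding-walk path individually admits the "$u$ with $\alpha$" and "$u$ with $\beta$" extensions, and independence of the arms lets us realize them simultaneously; (A3) is the analogous statement with $v$ and $\gamma$; and (A4) — the key forcing property — follows because if $y$ is colored $\gamma$, then in every arm the constraint propagates down to force $x$'s local value to be $v$, and since all arms agree we get $\vphi(x)=v$. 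Property (A5), that $A_v-\{x\}$ is a tree, is guaranteed by construction: all cycles must pass through $x$, since we only ever identify distinct gadget-paths at $x$ or at $y$, and the $y$-side is arranged as a tree of arms. Property (A7), that every vertex except $x,y$ has degree at most $3$, is arranged by making all internal branching binary.

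Finally, to obtain the girth and distance conditions I would apply a uniform subdivision: replace every edge by a path of length $2\ell$ for a suitably large $\ell$ depending on $g$ (keeping parities so that the bipartition classes of $\alpha,\beta,\gamma$ are preserved and the walks still exist, using that avoiding-walk pairs can be lengthened by appending back-and-forth steps — see the remark after the Avoiding definition and standard padding arguments). Subdividing does not affect which list homomorphisms to $H$ exist in the relevant sense as long as we pad the walks consistently, so (A1)–(A7) survive (the degrees of $x,y$ are unchanged, new vertices have degree $2$), while $\girth{A_v}$, all pairwise distances between high-degree vertices, and the distances $\dist(a,x),\dist(a,y),\dist(x,y)$ all scale up past $g$. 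The main obstacle I anticipate is the bookkeeping in the gluing step: one has to verify that composing many avoiding-walk constraints across a branching tree does not accidentally create an unintended homomorphism (soundness of (A4)) nor destroy an intended one (completeness of (A2)–(A3)), and that the interaction of the four cases (a)/(b)/(c) inside property (4) of \cref{obs:walks-between-corners} — induced $C_6$, induced $C_8$, or the strongly-incomparable walk configuration — is handled uniformly; this case analysis, together with making the degree counts come out exactly, is where the real work lies, and is presumably why the authors defer it to \cref{sec:constr-of-gadgets}.
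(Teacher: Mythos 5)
There is a genuine gap at the heart of your plan: the building block you rely on does not exist as described. You propose, for each $u\in S\setminus\{v\}$, a single \emph{path} $Q_u$ with endpoint lists $S$ and $\{\alpha,\beta,\gamma\}$ that forbids the pair $(u,\gamma)$ while allowing $(u,\alpha)$, $(u,\beta)$ and $(s,\gamma)$ for $s\neq u$, and you justify it by ``incomparability of $S$ fed through avoiding walks.'' But \cref{obs:walks-between-corners} only supplies avoiding walks among $\alpha,\beta,\gamma$, and the walk-to-path machinery (\cref{lem:gadget-of-walks}) cannot produce such a path in one shot: to forbid exactly $(u,\gamma)$ you would need a partition of the walk family into $A$ avoiding $B$ with $S(A)\cap S(B)=\emptyset$, yet every $s\neq u$ would have to start walks in both $A$ (to reach $\alpha,\beta$) and $B$ (to reach $\gamma$). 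What the available machinery does give is only the \emph{pairwise} distinguisher $D_{u/w}$ (\cref{lem:distinguisher}), which forbids $(u,\beta)$, guarantees $(w,\beta)$, and for third vertices only guarantees \emph{some} value in $\{\alpha,\beta\}$, not a chosen one. Detecting ``$x=u$'' against all $k-1$ other values therefore requires aggregating $k-1$ distinguishers sharing the input with an $\ork{k}$-gadget (itself assembled from $\nand{2}$- and $\ork{3}$-gadgets via a $C_6$/$C_8$/strongly-incomparable case analysis, \cref{lem:relationgadgets}), and then composing with a path that blocks $(\beta,\gamma)$; this detector (\cref{lem:detector-gadget}) is a tree, not a path. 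Your binary ``fan-out'' trees are introduced only to hit the degree count $(|S|-1)^2$ of \cref{def:assign-gadget}, not as logical OR gadgets, and your claim that within each arm ``the constraint propagates down to force $x$'s local value to be $v$'' has no mechanism behind it — with only pairwise distinguishing available, a single arm without OR-aggregation cannot force this. The degree bounds (A6) are a consequence of the distinguisher-plus-OR structure, not an independent target one can engineer around.

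A secondary problem is the closing step: uniformly replacing every edge by a path of length $2\ell$ is not sound, because the internal vertices of the walk-based gadgets carry specific lists, and stretching an edge requires exhibiting longer walks that preserve all the avoiding relations — which is not automatic and would amount to re-proving the gadget lemmas. The paper instead builds the girth and distance requirements into each constituent gadget from the start (distinguishers and $\nand{2}$-paths of length at least $g$, $P_u$ of length at least $g$, degree-$3$ vertices of the $\ork{k}$-gadget far apart), and pads only at vertices whose lists are $\{\alpha,\beta\}$ (or $S$, $S'$) using the induced matching $\alpha\alpha',\beta\beta'$, where appending an even alternating path provably preserves the forcing properties. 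Your parenthetical about padding walks is in the right spirit, but as stated the uniform subdivision would in general destroy properties such as (A4) and (D5).
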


\begin{restatable}[Construction of the switching gadget]{lemma}{lemSwitchGadget}\label{lem:switching-gadget}
Let $H$ be an undecomposable, connected, bipartite graph, whose complement is not a circular-arc graph. Let $(\alpha,\beta,\gamma)$ be the triple from \cref{obs:walks-between-corners} and let $g \in \N$. Then there exists a switching gadget $T$ with special vertices $p,q,r$ such that $\dist(p,q)  \geq \frac{g}{2}$ and $\dist(r,q)  \geq \frac{g}{2}$.
\end{restatable}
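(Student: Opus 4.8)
The plan is to realize the switching gadget $T$ as a path obtained by gluing two ``half-gadgets'' at the vertex $q$: a path $T_\ell$ from $p$ to $q$ and a path $T_r$ from $q$ to $r$, set $T:=T_\ell\cup T_r$ with the two copies of $q$ identified (so $q$ becomes an internal vertex of degree $2$ and $p,r$ are the endpoints), and equip everything with $H$-lists so that $L(p)=L(r)=\{\alpha,\beta\}$ and $L(q)=\{\alpha,\beta,\gamma\}$. The design goal for $T_\ell$ is that every list homomorphism $\vphi$ has $(\vphi(p),\vphi(q))\in\{(\alpha,\alpha),(\beta,\beta),(\alpha,\gamma)\}$, with all three pairs realized; symmetrically, the goal for $T_r$ is that every list homomorphism has $(\vphi(q),\vphi(r))\in\{(\alpha,\alpha),(\beta,\beta),(\gamma,\beta)\}$, with all three realized. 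Since a list homomorphism of $T$ is exactly a pair of list homomorphisms of $T_\ell$ and $T_r$ agreeing on $\vphi(q)$, conditions (S1)--(S4) then follow by inspection: (S2) for $a\in\{\alpha,\beta\}$ uses the pair $(a,a)$ on each half (and $a\ne\gamma$); (S3) uses $(\alpha,\gamma)$ on $T_\ell$ and $(\gamma,\beta)$ on $T_r$; and (S4) holds because $\vphi(p)=\alpha$ forces $\vphi(q)\in\{\alpha,\gamma\}$ on $T_\ell$, while $\vphi(r)=\beta$ forces $\vphi(q)\in\{\beta,\gamma\}$ on $T_r$, so $\vphi(q)=\gamma$.

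The half-gadgets are built from walks by the following standard mechanism: on a path $v_1,\dots,v_m$ set $L(v_i):=\{(\cW_1)_i,\dots,(\cW_k)_i\}$ for walks $\cW_1,\dots,\cW_k$ of common length $m-1$ such that each of $\cW_1,\dots,\cW_j$ avoids each of $\cW_{j+1},\dots,\cW_k$; then any list homomorphism $\vphi$ with $\vphi(v_1)\in\{(\cW_1)_1,\dots,(\cW_j)_1\}$ keeps $\vphi(v_i)$ inside $\{(\cW_1)_i,\dots,(\cW_j)_i\}$ for all $i$ (an easy induction from the definition of ``avoids''), so that $\vphi(v_m)$ lands among $(\cW_1)_m,\dots,(\cW_j)_m$. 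For $T_\ell$ I would feed this with the walks of \cref{obs:walks-between-corners}: in case (4c), the triple $\cX_c\colon\alpha\to\alpha$, $\cY_c\colon\alpha\to\gamma$, $\cZ_c\colon\beta\to\beta$ obtained from condition (4c) for the choice $(a,b,c)=(\alpha,\gamma,\beta)$ --- with $\cX_c,\cY_c$ avoiding $\cZ_c$ and $\cZ_c$ avoiding $\cX_c,\cY_c$ --- gives $T_\ell$ at once via $L(v_i):=\{(\cX_c)_i,(\cY_c)_i,(\cZ_c)_i\}$: following $\cX_c,\cY_c,\cZ_c$ realizes $(\alpha,\alpha),(\alpha,\gamma),(\beta,\beta)$, starting at $\alpha$ the coloring cannot reach the $\cZ_c$-thread (so $(\alpha,\beta)$ is impossible) and starting at $\beta$ it cannot leave it. In cases (4a)/(4b) one instead routes through the induced $C_6$, resp.\ $C_8$ (whose reflection fixing $w_3=\gamma$ interchanges $\alpha$ and $\beta$) to extract analogous avoiding walks. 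For $T_r$ one argues in the same spirit, but since condition (4) is not literally invariant under interchanging $\alpha$ and $\beta$, one also uses the matching edge $\beta\beta'$ (equivalently $\alpha\alpha'$) from condition (1) to supply a $\beta\to\beta$ (equivalently $\alpha\to\alpha$) thread, combining it with a suitable application of condition (4) and a reversal to obtain threads $\alpha\to\alpha$, $\beta\to\beta$, $\gamma\to\beta$ with the $\alpha\to\alpha$ thread avoiding the other two.

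Once the half-gadgets are in place, I would pad to meet the distance requirements: in a path the distance between two vertices is the length of the subpath joining them, so it suffices to make $T_\ell$ and $T_r$ long; every walk used can be lengthened by two steps while keeping its endpoints (insert a back-and-forth $u,u'$ at an endpoint $u$ through a neighbour $u'$), and choosing $u'$ to be a private neighbour of $u$ introduces no new adjacency relevant to the ``avoids'' conditions, so all the avoidance relations above survive. Lengthening both $T_\ell$ and $T_r$ to length at least $g/2$ then yields $\dist(p,q),\dist(r,q)\ge g/2$, and since $p,q,r$ lie in one bipartition class, $T$ automatically has even length.

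I expect the main obstacle to be the construction of the two half-gadgets with \emph{exactly} the prescribed input/output behaviour: simultaneously \emph{forbidding} the pair $(\alpha,\beta)$ (a forcing statement, which is precisely what ``avoids'' buys us) while \emph{permitting} $(\alpha,\gamma)$ for $T_\ell$, resp.\ $(\gamma,\beta)$ for $T_r$, and doing this uniformly across the three sub-cases (4a), (4b), (4c) of \cref{obs:walks-between-corners} while coping with the asymmetry in $\alpha,\beta$ that surfaces for $T_r$. A secondary, routine-but-fiddly point is to verify that the padding needed for the distance bounds does not disturb the finely tuned avoidance relations in any of the cases.
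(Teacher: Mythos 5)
Your overall architecture is the same as the paper's: a path glued from pieces built via $\PP(\cdot)$ on avoiding walks, with $q$ at a junction, and endpoint padding by alternating $\{\alpha,\beta\},\{\alpha',\beta'\}$ lists (which is exactly how the paper enforces $\dist(p,q),\dist(r,q)\ge g/2$, using that $\alpha\alpha',\beta\beta'$ induce a matching). The left half $T_\ell$ built from $\cX_{\beta}\colon\alpha\to\alpha$, $\cY_{\beta}\colon\alpha\to\gamma$, $\cZ_{\beta}\colon\beta\to\beta$ is precisely the paper's first piece and is fine. The problem is the right half, which you yourself flag as the main obstacle: your proposed way of coping with the $\alpha/\beta$-asymmetry does not work as stated.

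Concretely, you want threads $\alpha\to\alpha$, $\beta\to\beta$, $\gamma\to\beta$ for $T_r$, taking $\alpha\to\alpha$ and $\gamma\to\beta$ from \cref{obs:walks-between-corners} (reversals of the $c=\gamma$ walks) and manufacturing the $\beta\to\beta$ thread as the alternating walk $\beta,\beta',\beta,\beta',\ldots,\beta$ on the matching edge. But ``avoids'' is a condition along the \emph{entire} length of the walks: you would need every internal vertex of the $\alpha\to\alpha$ walk to be non-adjacent to $\beta$ or $\beta'$ at the appropriate parity (and conversely for the other direction of avoidance). Nothing in \cref{obs:walks-between-corners} gives any control over which neighbours of $\beta$ or $\beta'$ the walks $\oX_{\gamma},\oY_{\gamma},\oZ_{\gamma}$ pass through, so the avoidance relations you assert between these walks and the $\beta\beta'$-thread are unjustified and fail in general. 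Since forbidding the transition $\alpha\mapsto\beta$ across $T_r$ is exactly the half of property (S4.) that this avoidance was supposed to deliver, the gap sits at the crucial point of the proof. The paper resolves the asymmetry differently: it does not build $T_r$ from a single walk family, but as the composition of $\PP(\{\oX_{\alpha},\oY_{\alpha},\oZ_{\alpha}\})$ (reversals of the condition-(4) walks for $c=\alpha$, which force $\alpha\mapsto\beta$ and $\{\beta,\gamma\}\mapsto\alpha$) with $\PP(\{\cX',\cY'\})$ built from the condition-(3) walks $\cX'\colon\alpha\to\beta$, $\cY'\colon\beta\to\alpha$ with $\cY'$ avoiding $\cX'$ (which forbids $\beta\mapsto\beta$); the composite forbids $\alpha\mapsto\beta$ from $q$ to $r$ while realizing all the pairs needed for (S2.) and (S3.). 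A secondary, smaller issue: in cases (4a)/(4b) your plan to ``extract analogous avoiding walks'' from the $C_6$/$C_8$ by reflection is not checked (short candidate walks inside the cycle do violate some of the avoidance conditions), whereas the paper simply exhibits an explicit path of length $4$ with lists $\{w_1,w_5\},\{w_2,w_4\},\{w_1,w_3,w_5\},\{w_2,w_4\},\{w_1,w_5\}$ that is verified directly.
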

\paragraph{Reduction.}Suppose that we can construct both, the assignment gadget and the switching gadget. Let us show that this is sufficent to prove \cref{thm:bipartite-undecomp}. The proof is an extension of the construction of Lokshtanov, Marx, and Saurabh for the special case if $H$ is a complete graph, i.e., the $\textsc{List}~k \coloring$ problem~\cite{DBLP:conf/soda/LokshtanovMS11a}.

\begin{proof}[Proof of \cref{thm:bipartite-undecomp}]
Let $\phi$ be an instance of \textsc{CNF-Sat} with $n$ variables and $m$ clauses. Let $\eps>0$ and $k=i(H)$. Let $S$ be a maximum incomparable set contained in one bipartition class of $H$, i.e., $|S|=k$. Let $\alpha, \beta, \gamma$ be the vertices of $H$, in the same bipartition class as $S$, given by \cref{obs:walks-between-corners}. Let $\alpha',\beta'$ be the vertices such that edges $\alpha\alpha',\beta\beta'$ induce a matching in $H$, they exist by \cref{obs:walks-between-corners}.
Observe that $k \ge 3$, since vertices $\alpha,\beta,\gamma$ are pairwise incomparable. Moreover, we define $\lambda:=\log_k(k-\eps)$. Observe that $\lambda<1$. We choose an integer $p$ sufficiently large so that $\lambda \frac{p}{p-1} <1$ and define $t:=\Big{\lceil} \frac{n}{\lfloor \log k^p \rfloor} \Big{\rceil}=\Big{\lceil} \frac{n}{\lfloor p \cdot \log k \rfloor} \Big{\rceil}$.

We will construct a graph $G$ with $H$-lists $L$ such that:
\begin{enumerate}
\item there exists a list homomorphism $\vphi: (G,L) \to H$ if and only if $\phi$ is satisfiable,
\item the size of a minimum feedback vertex set in $G$ is at most $t \cdot p$,
\item $|V(G)|=(n+m)^{\Oh(1)}$.
\end{enumerate}

We partition the variables of $\phi$ into $t$ sets $F_1,\ldots,F_t$ called \emph{groups}, such that $|F_i| \leq \lfloor \log k^p \rfloor$. For each $i \in [t]$ we introduce $p$ vertices $x_1^i,\ldots,x_p^i$ and for every $s \in [p]$ we set $L(x_s^i):=S$. Each coloring of these vertices will be interpreted as a truth assignment of variables in $F_i$. Note that there are at most $2^{\lfloor \log k^p \rfloor} \leq k^p$ possible truth assigments of variables in $F_i$ and there are $k^p$ possible colorings of $x_1^i,\ldots,x_p^i$, respecting lists $L$.
Thus we can define an injective mapping that assigns a distinct coloring of vertices $x_1^i,\ldots,x_p^i$ to each truth assignment of the variables in $F_i$, note that some colorings may remain unassigned.

For every clause $C$ of $\phi$ we introduce a path $P_C$ constructed as follows.
Consider a group $F_i$  that contains at least one variable from $C$, and a truth assignment of $F_i$ that satisfies $C$.
Recall that this assignment corresponds to a coloring $f$ of vertices $x_1^i,\ldots,x_p^i$.
We introduce a switching gadget $T_C^{i,f}$, whose $q$-vertex is denoted by $q_C^{i,f}$.
We fix an arbitrary ordering of all switching gadgets introduced for the clause $C$. For every switching gadget but the last one, we identify its output vertex with the input vertex of the succesor. We add vertices $x_C$ with $L(x_C)=\{\alpha'\}$ and $y_C$ with $L(y_C)=\{\beta'\}$. We add an edge between $x_C$ and the input of the first switching gadget, and between $y_C$ and the output of the last switching gadget. This completes the construction of $P_C$.

Now consider a switching gadget $T_C^{i,f}$ introduced in the previous step. Recall that $C$ is a clause of $\phi$,
and $f$ is a coloring of $x_1^i,\ldots,x_p^i$ corresponding to a truth assignment of variables in $F_i$, which satisfies $C$.
Let us define $v_s:=f(x_s^i)$ for $s \in [p]$.
For every $s \in [p]$, we call \cref{lem:assign-gadget} to construct the assignment gadget $A_{v_s}$. Here we do not care about the girth of this gadget, so $g$ can be chosen arbitrarily.
We identify the $x$-vertex of $A_{v_s}$ with $x_s^i$ and the $y$-vertex with $q_C^{i,f}$.
This completes the construction of $(G,L)$ (see \cref{fig:reduction}), note that the construction is performed in time $(n+m)^{\Oh(1)}$.

\begin{figure}
\centering{\begin{tikzpicture}[every node/.style={draw,circle,fill=white,inner sep=0pt,minimum size=8pt},every loop/.style={}]
\node[label=left:\footnotesize{$x_C$}] (xc) at (0,0) {};
\foreach \k in {1,3,5,8,10,12}
{
\node (a\k) at (\k,0) {};
}
\node[label=right:\footnotesize{$y_C$}] (yc) at (13,0) {};
\draw (xc)--(a1);
\draw (yc)--(a12);

\foreach \k in {1,3,8,10}
{
\draw (a\k)--++(0.3,0)--++(0,0.3)--++(1.4,0)--++(0,-0.6)--++(-1.4,0)--++(0,0.3);
}
\foreach \i in {3,5,10,12}
{
\draw (a\i)--++(-0.3,0);
}
\foreach \j in {2,4,9,11}
{
\node (q\j) at (\j,0.5) {};
\draw (q\j)--++(0,-0.2);
}
\foreach \k in {0,1,2}
{
\node (b\k) at (0.5+5*\k,3.5) {};
}
\foreach \k in {0,1,2}
{
\node (c\k) at (2.5+5*\k,3.5) {};
}
\foreach \k in {0,1,2}
{
\node (d\k) at (1+5*\k,3.5) {};
\node[draw=none,fill=none,label=above:\footnotesize{$x_1^1$}] (e) at (0.5,3.5) {};
\node[draw=none,fill=none,label=above:\footnotesize{$x_2^1$}] (e) at (1,3.5) {};
\node[draw=none,fill=none,label=above:\footnotesize{$x_p^1$}] (e) at (2.5,3.5) {};
\node[draw=none,fill=none,label=above:\footnotesize{$x_1^i$}] (e) at (5.5,3.5) {};
\node[draw=none,fill=none,label=above:\footnotesize{$x_2^i$}] (e) at (6,3.5) {};
\node[draw=none,fill=none,label=above:\footnotesize{$x_p^i$}] (e) at (7.5,3.5) {};
\node[draw=none,fill=none,label=above:\footnotesize{$x_1^t$}] (e) at (10.5,3.5) {};
\node[draw=none,fill=none,label=above:\footnotesize{$x_2^t$}] (e) at (11,3.5) {};
\node[draw=none,fill=none,label=above:\footnotesize{$x_p^t$}] (e) at (12.5,3.5) {};
\draw[fill=black] (1.5+5*\k,3.5) circle (0.02);
\draw[fill=black] (1.75+5*\k,3.5) circle (0.02);
\draw[fill=black] (2+5*\k,3.5) circle (0.02);
\draw[fill=black] (1.5+5*\k,1.8) circle (0.02);
\draw[fill=black] (1.75+5*\k,1.8) circle (0.02);
\draw[fill=black] (2+5*\k,1.8) circle (0.02);
\draw (d\k)--++(0,-1)--++(0.2,0)--++(0,-1.3)--++(-0.4,0)--++(0,1.3)--++(0.2,0);
\draw (c\k)--++(0,-1)--++(0.2,0)--++(0,-1.3)--++(-0.4,0)--++(0,1.3)--++(0.2,0);
\draw (b\k)--++(0,-1)--++(0.2,0)--++(0,-1.3)--++(-0.4,0)--++(0,1.3)--++(0.2,0);
}

\draw (3.3,2.5)--++(0.2,0)--++(0,-1.3)--++(-0.4,0)--++(0,1.3)--++(0.2,0);
\draw (3.8,2.5)--++(0.2,0)--++(0,-1.3)--++(-0.4,0)--++(0,1.3)--++(0.2,0);
\draw (4.7,2.5)--++(0.2,0)--++(0,-1.3)--++(-0.4,0)--++(0,1.3)--++(0.2,0);
\draw (b0)--(3.3,2.5);
\draw (d0)--(3.8,2.5);
\draw (c0)--(4.7,2.5);
\draw (q4)--(3.3,1.2);
\draw (q4)--(3.8,1.2);
\draw (q4)--(4.7,1.2);
\draw (q2)--(0.5,1.2);
\draw (q2)--(1,1.2);
\draw (q2)--(2.5,1.2);
\draw (q9)--(5.5,1.2);
\draw (q9)--(6,1.2);
\draw (q9)--(7.5,1.2);
\draw (q11)--(10.5,1.2);
\draw (q11)--(11,1.2);
\draw (q11)--(12.5,1.2);
\foreach \i in {3.8,4,4.2,8.8,9,9.2}
{
\draw[fill=black] (\i,3.5) circle (0.02);
}
\draw (a5)--++(0.5,0);
\draw (a8)--++(-0.5,0);
\draw[fill=black] (6.2,0) circle (0.02);
\draw[fill=black] (6.5,0) circle (0.02);
\draw[fill=black] (6.7,0) circle (0.02);
\node[draw=none,fill=none,label=below:\footnotesize{$T_C^{1,f_1}$}] (t) at (2,0.5) {};
\node[draw=none,fill=none,label=below:\footnotesize{$T_C^{1,f_2}$}] (t) at (4,0.5) {};
\node[draw=none,fill=none,label=below:\footnotesize{$T_C^{i,f_j}$}] (t) at (9,0.5) {};
\node[draw=none,fill=none,label=below:\footnotesize{$T_C^{t,f_\ell}$}] (t) at (11,0.5) {};
\node[draw=none,fill=none,label=left:\footnotesize{$A_{f_1(x_1^1)}$}] (A) at (0.3,1.8) {};
\node[draw=none,fill=none,label=right:\footnotesize{$A_{f_j(x_p^i)}$}] (B) at (7.7,1.8) {};
\node[draw=none,fill=none,label=left:\footnotesize{$A_{f_\ell(x_1^t)}$}] (C) at (10.3,1.8) {};
\node[draw=none,fill=none,label=right:\footnotesize{$A_{f_\ell(x_p^t)}$}] (D) at (12.7,1.8) {};

\end{tikzpicture}}
\caption{The path $P_C$ for a clause $C$ and vertices $x_s^i$ for $i \in [t], s \in [p]$.}\label{fig:reduction}
\end{figure}
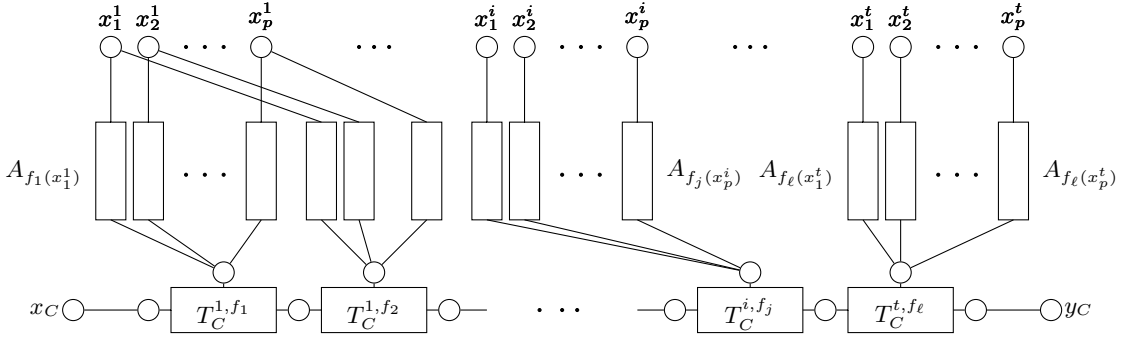
Let us show that $(G,L)$ satisfies the desired properties.

\setcounter{theorem}{18}
\begin{claim}\label{claim:equiv}
$\phi$ is satisfiable if and only if $(G,L) \to H$.
\end{claim}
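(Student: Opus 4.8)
The plan is to prove the two directions separately, in each case tracking how a truth assignment to the variables of $\phi$ translates into a list homomorphism $(G,L)\to H$ along the paths $P_C$, and conversely. For the forward direction, suppose $\psi$ is a satisfying assignment of $\phi$. For each group $F_i$, the restriction $\psi|_{F_i}$ corresponds (via the fixed injective mapping) to a coloring $f^i$ of the vertices $x_1^i,\dots,x_p^i$; color these vertices accordingly. I then need to extend this to all the gadgets. Consider a clause $C$ and one of its switching gadgets $T_C^{i,f}$, together with the assignment gadgets $A_{v_1},\dots,A_{v_p}$ attached to it (where $v_s=f(x_s^i)$). There are two cases. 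If $f=f^i$, i.e. the coloring we actually chose agrees with the one that built this gadget, then by (A3) each $A_{v_s}$ admits a list homomorphism sending its $x$-vertex to $v_s$ and its $y$-vertex ($=q_C^{i,f}$) to $\gamma$; by (S3) the switching gadget $T_C^{i,f}$ then admits a homomorphism with $p\mapsto\alpha$, $r\mapsto\beta$, $q\mapsto\gamma$ — so this gadget "switches sides". Otherwise $f\ne f^i$: since $f$ corresponds to a truth assignment satisfying $C$ and $\psi$ satisfies $C$, but these are (potentially) different assignments, I argue instead that at least one switching gadget on $P_C$ must do the switching. Concretely: walk along $P_C$; the first switching gadget has input colored $\alpha'$ (forced, since $L(x_C)=\{\alpha'\}$, hence its input vertex, adjacent to $x_C$, must be colored $\alpha$ as $\alpha\alpha'\in E(H)$ and $\alpha\in L$), the last has output adjacent to $y_C$ with $L(y_C)=\{\beta'\}$, forcing the output to $\beta$. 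So somewhere along the chain the color changes from $\alpha$ to $\beta$; pick the gadget $T_C^{i,f}$ where this happens. For that gadget, use (A3)/(S3) as above to switch. For every other switching gadget on $P_C$ (and on all other clause-paths), the input and output both have the same color $a\in\{\alpha,\beta\}$, so by (S2) there is a homomorphism with $p=r=a$, $q\ne\gamma$, and correspondingly by (A2) each attached $A_{v_s}$ has a homomorphism with $x$-vertex $\mapsto v_s$ and $y$-vertex $\mapsto a$ — note this is exactly why (A2) requires extendability for \emph{every} $u\in S$ and \emph{both} $a\in\{\alpha,\beta\}$, so the already-fixed color $v_s=f(x_s^i)$ of $x_s^i$ causes no conflict even when $f\ne f^i$. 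Since $A_v-\{x\}$ is a tree and the $x$-vertices are the only shared vertices, these partial homomorphisms glue consistently. The key subtlety to get right here is that the color of $x_s^i$ is fixed once, globally, by $f^i$, while many gadgets $A_{v_s}$ with $v_s=f(x_s^i)$ for various $f$ are attached to it — so I must ensure: (i) the gadget that does the switching on $P_C$ is one built from a satisfying assignment, and (ii) for that gadget the attached assignment gadgets have $v_s$ consistent with $f^i$. Point (ii) forces me to choose the switching gadget carefully: I should switch on a gadget $T_C^{i,f}$ with $f=f^i$ for a group $F_i$ containing a variable of $C$ that $\psi$ sets so as to satisfy $C$; such a group and such an $f$ exist precisely because $\psi$ satisfies $C$, and this $T_C^{i,f}$ was indeed created in the construction.

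For the reverse direction, suppose $\vphi\colon(G,L)\to H$ is a list homomorphism. Define a truth assignment $\psi$ as follows: for each group $F_i$, look at the coloring $f^i$ of $x_1^i,\dots,x_p^i$ induced by $\vphi$. If $f^i$ is in the image of the injective mapping, let $\psi|_{F_i}$ be the corresponding truth assignment; otherwise set $\psi|_{F_i}$ arbitrarily. I claim $\psi$ satisfies $\phi$. Fix a clause $C$. As observed, the input of the first switching gadget of $P_C$ is forced to $\alpha$ and the output of the last to $\beta$, so by a telescoping/pigeonhole argument along $P_C$ there is some switching gadget $T_C^{i,f}$ on $P_C$ whose input is colored $\alpha$ and output colored $\beta$ (here I use that consecutive gadgets share an endpoint, and that each endpoint has list $\{\alpha,\beta\}$, so the sequence of boundary colors is a sequence in $\{\alpha,\beta\}$ starting at $\alpha$ and ending at $\beta$). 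By property (S4), $\vphi(q_C^{i,f})=\gamma$. Now for each $s\in[p]$, the assignment gadget $A_{v_s}$ (with $v_s=f(x_s^i)$) has its $y$-vertex identified with $q_C^{i,f}$, so $\vphi$ restricted to $A_{v_s}$ sends the $y$-vertex to $\gamma$; by property (A4) this forces $\vphi(x_s^i)=v_s=f(x_s^i)$. Hence $f^i=f$ on $x_1^i,\dots,x_p^i$, and since $f$ is in the image of the injective mapping (it was used to build a gadget), $\psi|_{F_i}$ equals the truth assignment that $f$ encodes — and that assignment, by construction of $T_C^{i,f}$, satisfies $C$. Therefore $\psi$ satisfies $C$, and as $C$ was arbitrary, $\psi$ satisfies $\phi$.

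I expect the main obstacle to be the bookkeeping in the forward direction: making sure that a single global coloring of the $x_s^i$'s simultaneously (a) is extendable across \emph{every} "non-switching" gadget via (A2)+(S2) regardless of which $f$ built it, and (b) is extendable across the one chosen "switching" gadget per clause via (A3)+(S3), which crucially requires that chosen gadget to be built from the \emph{same} assignment $f^i$ we picked for that group. The properties (A2)–(A4) and (S2)–(S4) are tailored exactly for this, so the argument is really a matter of invoking them in the right order and checking the gluing is consistent (which is immediate since distinct gadgets meet only at the shared $x$-vertices and at the chained input/output vertices, and $A_v-\{x\}$ and the switching paths are trees). The other parts — that $\{x_s^i : i\in[t], s\in[p]\}$ is a feedback vertex set of size $tp$ (because deleting all $x$-vertices leaves a disjoint union of trees: each $P_C$ with its attached $A_{v_s}-\{x\}$'s becomes a tree), and that $|V(G)|=(n+m)^{\Oh(1)}$ — are routine and can be handled after the claim; within the claim itself I only need the equivalence.
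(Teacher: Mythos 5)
Your proposal is correct and follows essentially the same route as the paper: the backward direction uses the forced colorings at $x_C,y_C$ plus (S4) and (A4) to recover a satisfying assignment, and the forward direction colors the $x_s^i$ according to the satisfying assignment, switches exactly on the gadget $T_C^{i,f_i}$ built from the matching group/assignment, and extends via (S2)/(S3) and (A2)/(A3). The only slight imprecision is that (S2) sends the $q$-vertex to \emph{some} color in $\{\alpha,\beta\}$ rather than to $a$ itself, but since (A2) allows the $y$-vertex to take either of $\alpha,\beta$ this does not affect the argument.
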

\begin{claimproof}
First assume that there exists a list homomorphism $\vphi :(G,L) \to H$.
For each $i \in [t]$ consider the coloring $\vphi$ restricted to vertices $x_1^i,\ldots,x_p^i$.
If this coloring does not correspond to any truth assignment of the variables in $F_i$, we set all these variables to false.
Otherwise, we set the values of variables in $F_i$ according to the truth assignment corresponding to $\vphi|_{\{x_1^i,\ldots,x_p^i\}}$.
Let us prove that the obtained truth assignment satisfies $\phi$.

Consider any clause $C$ of $\phi$. Since $\vphi$ is a list homomorphism, we know that $\vphi(x_C)=\alpha'$ and $\vphi(y_C)=\beta'$.
Thus $\vphi$ maps the input vertex of the first switching gadget on $P_C$, i.e., the unique vertex adjacent to $x_C$, to $\alpha$.
Indeed, the list of this vertex is $\{\alpha,\beta\}$, and $\beta$ is non-adjacent to $\alpha'$ in $H$.
Similarly, the output vertex of the last switching gadget on $P_C$, i.e., the only vertex that is adjacent to $y_C$, must be mapped to $\beta$.
That implies that on $P_C$ there is at least one switching gadget $T_C^{i,f}$, whose input is mapped to $\alpha$ and the output is mapped to $\beta$.
By the property (S4.) in \cref{def:switch-gadget} (of a switching gadget), it holds that $\vphi(q_C^{i,f})=\gamma$.
Recall that in the construction of $(G,L)$ we added $T_C^{i,f}$ for a truth assignment of variables in $F_i$, which satisfies $C$,
and $f$ is a coloring of $x_1^i,\ldots,x_p^i$ corresponding to that assignment.
Moreover, the $q$-vertex of $T_C^{i,f}$, i.e., $q_C^{i,f}$, was identified with $y$-vertices of assignment gadgets $A_{v_s}$, where $v_s=f(x_s^i)$ for $s \in [p]$. On the other hand, the $x$-vertex of $A_{v_s}$ is $x_s^i$.
By the definition of an assignment gadget (property (A4.) in \cref{def:assign-gadget}), if the $y$-vertex of $A_{v_s}$ is mapped to $\gamma$, then the $x$-vertex must be mapped to $v_s$.
Thus for every $s \in [p]$ it holds that $\vphi(x_s^i)=v_s=f(x_s^i)$.
Since the values of the variables in $F_i$ were assigned according to the homomorphism $\vphi$ and $f$ corresponds to an assignment that satisfies $C$, the clause $C$ is satisfied.

Now assume that there exists a satisfying assignment $w$ of $\phi$.
Recall that for every $i \in [t]$, the assignment $w$ restricted to $F_i$ corresponds to some coloring $f_i$ of the vertices $x_1^i,\ldots,x_p^i$.
So we can define $\vphi(x_s^i):=f_i(x_s^i)$ for every $s \in [p]$ and $i \in [t]$.
Now for every clause $C$ we choose one group $F_i$ that contains a variable satisfying $\phi$ in the assignment $w$,
it exists since every clause is satisfied by $w$.
Since $w$ restricted to $F_i$ satisfies $C$, we observe that there is a switching gadget $T_C^{i,f_i}$ introduced for the triple $(C,i,f_i)$ and the $q$-vertex of this gadget is $q_C^{i,f_i}$.
We set $\vphi(q_C^{i,f_i}):=\gamma$ and extend $\vphi$ on $P_C$ in a way that:
\begin{enumerate}[a)]
\item $\vphi(x_C):=\alpha'$,
\item all input and output vertices of switching gadgets between $x_C$ and $T_C^{i,f_i}$ on $P_C$ are mapped to $\alpha$,
\item the input vertex of $T_C^{i,f_i}$ is mapped to $\alpha$ and the output vertex is mapped to $\beta$ (recall that $q_C^{i,f_i}$ is already mapped to $\gamma$)
\item inputs and outputs of all remaining switching gadgets on $P_C$ are mapped to $\beta$,
\item $\vphi(y_C):=\beta'$.
\end{enumerate}
We extend $\vphi$ to all remaining vertices of switching gadgets  mentioned in b) and d). We do it in a way, that every $q$-vertex is mapped to $\alpha$ or $\beta$ (it is possible by property (S2.) in \cref{def:switch-gadget}).
Next, we can extend $\vphi$ to remaining vertices of $T_C^{i,f_i}$. Note that this is possible by property (S3.) in \cref{def:switch-gadget}.

It only remains to extend $\vphi$ to the vertices from assignment gadgets.
Observe that if for some assignment gadget $A_{v_s}$ for $s \in [p]$ its $y$-vertex is mapped to one of $\{\alpha,\beta\}$ and $x$-vertex is mapped to $u \in S$, then we can always extend $\vphi$ to all remaining vertices of $A_{v_s}$ by property (A2.) in \cref{def:assign-gadget}.
So assume that the $y$-vertex of $A_{v_s}$ is mapped to $\gamma$.
Recall that the only $q$-vertex on $P_C$ that was mapped to $\gamma$ is $q_C^{i,f_i}$, where $f_i$ corresponds to the assignment $w$ restricted to $F_i$.
Thus if the $y$-vertex of the assignment gadget $A_{v_s}$ is mapped to $\gamma$, then that $y$-vertex is exactly $q_C^{i,f_i}$ and by the construction of $(G,L)$ it holds that $v_s=f_i(x_s^i)$.
Moreover, for every $s \in [p]$, the $y$-vertex of the assignment gadget $A_{f_i(x_s^i)}$ is $q_C^{i,f_i}$, again it follows from the construction of $(G,L)$. Thus all assignment gadgets whose $y$-vertex is mapped by $\vphi$ to $\gamma$ are exactly $A_{f_i(x_s^i)}$ for $s \in [p]$. Since $x$-vertices of these gadgets, i.e., $x_1^i,\ldots,x_p^i$, are already mapped according to $f_i$, we can also extend $\vphi$ to all vertices of the gadgets by the property (A3.) in \cref{def:assign-gadget}.
\end{claimproof}

\begin{claim}\label{claim:fvs}
The set $\bigcup_{i=1}^t \{x^i_1,\ldots,x^i_p\}$ is a feedback vertex set in $G$.
\end{claim}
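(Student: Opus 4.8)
The plan is to show that deleting the set $\bigcup_{i=1}^t \{x^i_1,\ldots,x^i_p\}$ from $G$ leaves an acyclic graph, by carefully tracking where cycles could possibly live in the construction. The key observation is that the only vertices shared between different ``pieces'' of the construction are precisely the $x$-vertices $x^i_s$ (identified with $x$-vertices of assignment gadgets) and the $q$-vertices $q_C^{i,f}$ (identified with $y$-vertices of assignment gadgets). So after removing all $x^i_s$, the remaining graph decomposes into parts that interact only through the $q$-vertices.

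First I would recall the structural facts we are allowed to use: by property (A5.) of \cref{def:assign-gadget}, each assignment gadget $A_v$ minus its $x$-vertex is a tree; and each switching gadget $T$ is a path (\cref{def:switch-gadget}). Now fix a clause $C$ and consider the subgraph of $G-\bigcup_i\{x^i_1,\ldots,x^i_p\}$ induced by the vertices of $P_C$ together with the attached assignment gadgets. Removing the $x$-vertices turns every attached assignment gadget into a tree, and each such tree is attached to the rest of $P_C$ only at a single vertex, namely its $y$-vertex $q_C^{i,f}$ (together with $x_C$ and $y_C$ being degree-one pendants). The ``backbone'' of $P_C$ after deleting the $x^i_s$ is still just a path (the concatenation of the switching-gadget paths, with $x_C,y_C$ pendant). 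Attaching trees to distinct single vertices of a path, and pendant vertices, keeps the graph a tree. Hence each such ``clause component'' is a tree.

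Next I would argue that distinct clause components are vertex-disjoint after the deletion: the only vertices that a priori could be shared between $P_C$ and $P_{C'}$ for $C \neq C'$ are (i) the vertices $x^i_s$, which have been removed, or (ii) vertices of assignment gadgets $A_{v_s}$ — but each assignment gadget is created freshly for a particular triple $(C,i,f)$ and its $y$-vertex is $q_C^{i,f}$, which carries the clause index $C$ in its name, so no gadget is shared between two clauses. Therefore $G-\bigcup_i\{x^i_1,\ldots,x^i_p\}$ is a disjoint union of trees (one per clause), hence a forest, which is exactly the claim. I would also note for good measure that $|\bigcup_i\{x^i_1,\ldots,x^i_p\}| = t\cdot p$, matching the bound promised in the reduction.

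The main obstacle — really the only subtle point — is being careful about which vertices are identified with which in the construction, so as to be sure that removing the $x^i_s$ genuinely severs all the gadgets from one another and from each other within a single $P_C$. Concretely one must check that within a fixed $P_C$ the various assignment gadgets $A_{v_s}$ (for different $s$, and for different switching gadgets $T_C^{i,f}$ on $P_C$) are pairwise vertex-disjoint except where they meet the backbone at a $q$-vertex, and that two assignment gadgets sharing the same $q$-vertex (the $p$ gadgets $A_{f(x^i_1)},\ldots,A_{f(x^i_p)}$ hanging off $q_C^{i,f}$) still only meet at that single vertex, so they form a tree glued at one point. Since an assignment gadget only touches the outside world through its $x$-vertex and its $y$-vertex, and the $x$-vertices are deleted, this is immediate once stated carefully; I would phrase it as: every vertex of $G$ that is not some $x^i_s$ belongs to at most one switching gadget and at most one assignment gadget, and these intersect the rest of $G$ only at $y$-/$q$-vertices, from which acyclicity of $G-\bigcup_i\{x^i_1,\ldots,x^i_p\}$ follows.
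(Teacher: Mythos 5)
Your proof is correct and follows essentially the same route as the paper's: both rest on the facts that each $P_C$ is a path, each assignment gadget meets the rest of $G$ only at its $x$-vertex and its $y$-vertex (a $q$-vertex of $P_C$), and $A_v-\{x\}$ is a tree by (A5.), so removing $\bigcup_{i=1}^t\{x^i_1,\ldots,x^i_p\}$ leaves a forest — equivalently, every cycle of $G$ passes through some $x^i_s$, which is how the paper phrases it. (Only your closing summary ``every vertex \ldots belongs to at most one switching gadget and at most one assignment gadget'' is slightly imprecise, since a $q$-vertex is the $y$-vertex of $p$ gadgets and shared input/output vertices lie on two switching gadgets, but your earlier case analysis already handles this, so it is harmless.)
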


\begin{claimproof}
First observe that $G$ consists of: pairwise disjoint paths $P_C$,  the independent set $\bigcup_{i=1}^t \{x^i_1,\ldots,x^i_p\}$, and assignment gadgets $A_v$.
Moreover, every assignment gadget $A_v$ has exactly one common vertex with exactly one path $P_C$,
and $x$-vertices of all assignment gadgets are contained in $\bigcup_{i=1}^t \{x^i_1,\ldots,x^i_p\}$.
By the definition of an assignment gadget, every cycle entirely contained in $A_v$ contains the $x$-vertex of $A_v$ (property (A5.) in \cref{def:assign-gadget}). Thus, every cycle in $G$ contains a vertex from $\bigcup_{i=1}^t \{x^i_1,\ldots,x^i_p\}$.
\end{claimproof}

Suppose that the instance $(G,L)$ of $\lhomo H$ can be solved in time $(k-\eps)^{\fvs{G}} \cdot |V(G)|^{\Oh(1)}$.
By \cref{claim:fvs} there is a feedback vertex set in $G$ of size $t \cdot p$ and thus $\fvs{G} \leq t \cdot p$. Moreover, $|V(G)|=(n+m)^{\Oh(1)}$.
Hence, $(k-\eps)^{\fvs{G}} \cdot |V(G)|^{\Oh(1)}\leq (k-\eps)^{t \cdot p} \cdot (n+m)^{\Oh(1)}$. By \cref{claim:equiv} solving the instance $\phi$ of \textsc{CNF-Sat} is equivalent to solving the instance $(G,L)$ of $\lhomo H$ and thus \textsc{CNF-Sat} can be solved in time:
\begin{equation}\label{eq:reduction1}
(k-\eps)^{t \cdot p}\cdot (n+m)^{\Oh(1)}=k^{\log_k(k-\eps) \cdot p \cdot t}\cdot (n+m)^{\Oh(1)}=k^{\lambda \cdot p \cdot t}\cdot (n+m)^{\Oh(1)}=k^{\lambda \cdot p \cdot \big{\lceil} \frac{n}{\lfloor p \cdot \log k \rfloor} \big{\rceil}}\cdot (n+m)^{\Oh(1)}.
\end{equation}

Let us analyze the exponent more carefully:

\begin{equation}\label{eq:reduction2}
\lambda \cdot p \cdot \Big{\lceil} \frac{n}{\lfloor p \cdot \log k \rfloor} \Big{\rceil} \le \lambda \cdot p \cdot \Bigg{(} \frac{n}{\lfloor p \cdot \log k \rfloor} + 1 \Bigg{)} \le \lambda \cdot p \cdot \Bigg{(} \frac{n}{ p \cdot \log k - 1} + 1 \Bigg{)} \le \lambda \cdot p \cdot \Bigg{(} \frac{n}{ (p-1) \cdot \log k} + 1 \Bigg{)}.
\end{equation}

By the choice of $p$ it holds that:

\begin{equation}\label{eq:reduction3}
\lambda \cdot p \cdot \Bigg{(} \frac{n}{ (p-1) \cdot \log k} + 1 \Bigg{)} = \lambda \cdot \frac{p}{p-1} \cdot \frac{n}{ \log k} + \lambda \cdot p  \le \delta' \cdot \frac{n}{\log k} + \lambda \cdot p,
\end{equation}

where $\delta'<1$. Recall that $p$ and $\lambda$ do not depend on $n$ and $m$. Thus the instance $\phi$ of \textsc{CNF-Sat} can be solved in time:

\begin{equation}\label{eq:reduction4}
k^{\delta' \cdot \frac{n}{\log k} + \lambda \cdot p} \cdot (n+m)^{\Oh(1)}=
k^{\delta' \cdot \frac{n}{\log k}} \cdot (n+m)^{\Oh(1)}= 2^{\delta' \cdot n} \cdot (n+m)^{\Oh(1)}=(2-\delta)^n \cdot (n+m)^{\Oh(1)}
\end{equation}
for some $\delta>0$, which contradicts the SETH.
\end{proof}

\setcounter{theorem}{22}
Let us point out that the pathwidth of the graph constructed in the proof of \cref{thm:bipartite-undecomp} is bounded by $t \cdot p + f(H)$, for some function $f$ of $H$ (see also \cite{DBLP:conf/soda/LokshtanovMS11a}).
Furthermore, we note that the constructed instance satisfies conditions 1. and 2. of \cref{def:consist-inst}.
Moreover, in any instance of $\lhomo {H}$, if a list $L(v)$ contains vertices $x,y$ such that $N(x)\subseteq N(y)$, then we can safely remove $x$ from the list. Thus we actually proved the following.

\begin{remark}\label{cor:consist-hard}
\cref{thm:bipartite} and  \cref{thm:bipartite-undecomp} hold, even if we assume that the instance $(G,L)$ is consistent.
\end{remark}

\subsection{Lower bound for general target graphs}
In this section we extend our results from \cref{thm:bipartite} to the general case, i.e., we do not assume that the graph $H$ is bipartite. In particular, we allow loops in $H$. Recall that by $H^*$ we denote a graph whose vertex set is $V(H^*)=\{v',v'' \ | \ v \in V(H)\}$ and there is an edge $v'u''$ in $H^*$ if and only if $vu\in E(H)$.

\begin{cthm}{2~b)}
Let $H$ be a connected non-bi-arc graph.
Even if $H$ is fixed, there is no algorithm that solves every instance $(G,L)$ of  $\lhomo{H}$ in time $(i^*(H)-\epsilon)^{\fvs{G}} \cdot |V(G)|^{\Oh(1)}$ for any $\epsilon > 0$, unless the SETH fails.
\end{cthm}

\begin{proof}
For contradiction, suppose that there exists a connected non-bi-arc graph $H$, a constant $\eps>0$, and an algorithm $A$ that solves $\lhomo H$ for every instance $(G,L)$ in time $(i^*(H)-\eps)^{\fvs{G}} \cdot |V(G)|^{\Oh(1)}$. We can assume that $H$ is non-bipartite, otherwise by \cref{thm:bipartite} we get a contradiction with the SETH. 

Consider a consistent instance $(G,L)$ of $\lhomo {H^*}$. Let $(G,L')$ be an instance of $\lhomo H$ obtained as in \cref{prop:bipartite-associted}, clearly it can be constructed in polynomial time. Recall that if $H$ is a non-bipartite, connected, non-bi-arc graph, then $H^*$ is connected bipartite graph, whose complement is not a circular-arc graph and thus $H^*$ satifies the assumptions of \cref{thm:bipartite}. Moreover, $i^*(H)=i^*(H^*)$.

We can use the algorithm $A$ to solve the instance $(G,L')$ of $\lhomo H$ in time $(i^*(H)-\eps)^{\fvs{G}} \cdot |V(G)|^{\Oh(1)}$, which, by \cref{prop:bipartite-associted}, is equivalent to solving the instance $(G,L)$ of $\lhomo {H^*}$ in time $(i^*(H^*)-\eps)^{\fvs{G}} \cdot |V(G)|^{\Oh(1)}$. By \cref{thm:bipartite} and \cref{cor:consist-hard} it is a contradiction with the SETH.
\end{proof}

%

\newpage
\section{Parameter: cutwidth}
\subsection{Lower bounds for \lhomo{H}, bipartite target graphs}
Similarly as for feedback vertex set let us first prove \cref{thm:main-ctw-list-hard} in bipartite case.
Recall that for a positive integer $g$, by $\Cg$ we denote the class of all graphs $G$ that are bipartite, with maximum degree 3, girth at least $g$, and all vertices of degree $3$ in $G$  are pairwise distance at least $g$.

\begin{theorem}\label{thm:ctw-bipartite}
Let $\mathcal{H}$ be the class of connected, bipartite graphs, whose complement is not a circular-arc graph, and let $g \in \N$.
\begin{enumerate}[a)]
\item For every $H \in \mathcal{H}$, there is no algorithm that solves every instance $(G,L)$ of  $\lhomo H$, where $G \in \mathcal{C}_g$, in time $(mim^*(H)-\epsilon)^{\ctw{G}} \cdot |V(G)|^{\Oh(1)}$ for any $\epsilon > 0$, unless the SETH fails.
\item There exists a universal constant $0<\delta<1$, such that for every $H \in \mathcal{H}$, there is no algorithm that solves every instance $(G,L)$ of $\lhomo H$, where $G \in \mathcal{C}_g$, in time $mim^*(H)^{\delta \cdot \ctw{G}} \cdot |V(G)|^{\Oh(1)}$, unless the ETH fails.
\end{enumerate}
\end{theorem}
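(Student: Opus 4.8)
The plan is to follow the same two-step scheme as in the feedback vertex set proof: first reduce \cref{thm:ctw-bipartite} (both parts a) and b)) to the case that $H$ is additionally \emph{undecomposable}, where the base of the exponent becomes $mim(H)$ instead of $mim^*(H)$; then build a hardness reduction reusing the assignment and switching gadgets of \cref{lem:assign-gadget} and \cref{lem:switching-gadget}, but arranged so that the hard parameter is the cutwidth rather than $\fvs{G}$. The reduction to the undecomposable case is verbatim the argument relating \cref{thm:bipartite} and \cref{thm:bipartite-undecomp}: if $H'\subseteq H$ is a connected, undecomposable induced subgraph whose complement is not a circular-arc graph and with $mim(H')=mim^*(H)$, then every instance $(G,L)$ of $\lhomo{H'}$ with $G\in\mathcal{C}_g$ is also such an instance of $\lhomo H$, and conversely, for $H$ itself undecomposable one has $mim^*(H)=mim(H)$ since an induced matching of an induced subgraph is an induced matching of $H$. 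So from now on $H$ is undecomposable; set $m:=mim(H)$, which satisfies $m\geq 2$ because the complement of $H$ is not a circular-arc graph (in particular $H$ is not $2K_2$-free), and let $(\alpha,\beta,\gamma)$ be the triple of \cref{obs:walks-between-corners} for the bipartition class $X$ chosen so that $X$ contains a maximum strongly incomparable set $S=\{v_1,\dots,v_m\}$; let $w_1,\dots,w_m$ be private neighbours of the $v_i$, so $\{v_iw_i : i\in[m]\}$ is an induced matching of $H$.

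Next I would set up the reduction. For part a) I start from a \textsc{CNF-Sat} instance $\phi$ with $n$ variables and $m'$ clauses, fix $\eps>0$, put $\lambda:=\log_m(m-\eps)<1$, pick $p$ large with $\lambda\tfrac{p}{p-1}<1$, and let $t:=\lceil n/\lfloor p\log m\rfloor\rceil$; for part b) I start from a $3$-\textsc{Sat} instance and take $p=1$, so $t:=\lceil n/\lfloor\log m\rfloor\rceil$. Then I mimic the construction in the proof of \cref{thm:bipartite-undecomp}, with one change forced by the requirement $G\in\mathcal{C}_g$: the high-degree ``hub'' vertices $x^i_s$ are replaced by long induced paths --- the \emph{spines} --- whose internal vertices have lists alternating between $S$ and $\{w_1,\dots,w_m\}$. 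Because $\{v_iw_i\}$ is an induced matching, a list homomorphism is forced to be constant in the index $i$ along a spine, so a spine faithfully carries one value of $S$, exactly as $x^i_s$ did; group $F_i$ is encoded by its $p$ spines. Every place where the old construction attaches a gadget to $x^i_s$ now attaches, through a short subcubic ``copy'' path built from the matching $\{v_iw_i\}$ (lists alternating $S,\{w_i\},S,\dots$), to a fresh degree-$3$ vertex of the appropriate spine. Since spines can be made arbitrarily long, these attachment points, and all degree-$3$ vertices of the gadgets (which by \cref{lem:assign-gadget} and \cref{lem:switching-gadget} are already pairwise far apart), can be spaced at distance $\geq g$; the switching-gadget chains $P_C$ and the assignment gadgets $A_{v_s}$ are used exactly as before, so the equivalence ``$(G,L)\to H$ iff $\phi$ is satisfiable'' is word-for-word \cref{claim:equiv}.

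The heart of the argument --- and the step I expect to be the main obstacle --- is producing a linear layout of $G$ of width $tp+o(tp)+\Oh_H(1)$. The idea is to run the $tp$ spines as parallel ``tracks'' and process the clauses one after another along disjoint horizontal blocks of the layout, so that any cut crosses exactly one edge per track (each spine is laid out monotonically, hence only a single spine-edge straddles a cut lying between two consecutive attachment points) plus only the $\Oh(p)+\Oh_H(1)$ edges of the one clause gadget currently being processed; the $\Oh_H(1)$ term absorbs the internal cutwidth of a single assignment/switching gadget, which is bounded in terms of $H$ and $g$ alone. Getting the coefficient of the leading term to be essentially $1$ --- rather than a constant $>1$, which would weaken the base from $mim(H)$ to a power of it --- while keeping $G$ subcubic with girth $\geq g$ (cycles within one $A_v$ are long by \cref{lem:assign-gadget}, cycles through several gadgets are long because gadgets and connecting paths are long) and with well-separated branch vertices is exactly where the ``copy along a path'' encoding via the induced matching is essential: it lets each track be a literal path, costing precisely one edge per cut. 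For $p=1$ (part b)) this yields the sharper bound $\ctw{G}\leq c\cdot n/\log m+\Oh_H(1)$ for a universal constant $c<2$ (using $\log m/\lfloor\log m\rfloor<2$ and that a $3$-clause touches $\Oh(1)$ groups).

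Finally I would close the two bounds. For a): assuming an $(mim(H)-\eps)^{\ctw{G}}\cdot|V(G)|^{\Oh(1)}$ algorithm and plugging in $\ctw{G}\leq tp(1+o(1))+\Oh_H(1)$ and $|V(G)|=(n+m')^{\Oh(1)}$, the same chain of estimates as in the inequalities following \cref{claim:fvs} (choosing $p$ large so that $\lambda\tfrac{p}{p-1}<1$, and $n$ large so the $o(1)$ is harmless) gives a $(2-\delta)^n(n+m')^{\Oh(1)}$ algorithm for \textsc{CNF-Sat}, contradicting the SETH. For b): from $\ctw{G}\leq c\cdot n/\log m+\Oh_H(1)$ we get $mim(H)^{\delta\ctw{G}}\leq 2^{c\delta n}\cdot\Oh_H(1)$, and since a $3$-\textsc{Sat} instance has $\Oh(n^3)$ clauses (so $|V(G)|$ is polynomial in $n$), an assumed $mim^*(H)^{\delta\ctw{G}}\cdot|V(G)|^{\Oh(1)}$ algorithm yields a $2^{c\delta n}\cdot n^{\Oh_H(1)}$ algorithm for $3$-\textsc{Sat}; choosing a universal $\delta$ below the reciprocal of $c$ times the constant witnessing the ETH makes this contradict the ETH, and $\delta$ does not depend on $H$, as required.
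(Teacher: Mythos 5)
Your overall strategy is the same as the paper's: reduce to undecomposable $H$ (so $mim^*(H)=mim(H)$), reuse the feedback-vertex-set construction of \cref{thm:bipartite-undecomp} with a strongly incomparable set $S$ of size $mim(H)$, replace each high-degree vertex $x^i_s$ by a long path whose lists alternate between $S$ and the set of private neighbours (the induced matching forces the value to propagate), and lay the construction out clause by clause with the $t\cdot p$ spines as parallel tracks, giving $\ctw{G}\leq t\cdot p+\Oh_{g,H}(1)$. However, there is a genuine gap: you only split the $x$-vertices. The other family of high-degree vertices is untouched: each $q$-vertex $q_C^{i,f}$ of a switching gadget is identified with the $y$-vertices of the $p$ assignment gadgets $A_{v_1},\ldots,A_{v_p}$, and by property (A6.) of \cref{def:assign-gadget} each such $y$-vertex has degree $|S|-1=k-1$, so $q_C^{i,f}$ has degree $p(k-1)+2$ in your graph. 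Hence $G$ is not subcubic (except in the degenerate case $p=1$, $k=2$), so $G\notin\Cg$ and the statement is not established; this also affects your part b) with $p=1$, where the degree is $k+1>3$ as soon as $k\geq 3$.

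Repairing this is not a routine application of your ``copy along an induced matching'' trick. The list of a $q$-vertex is $\{\alpha,\beta,\gamma\}$, and \cref{obs:walks-between-corners} only guarantees that this set is incomparable: in the induced $C_6$/$C_8$ cases it need not be strongly incomparable, so there may be no induced matching saturating $\{\alpha,\beta,\gamma\}$ along which all three values can be copied. This is exactly why the paper's proof of \cref{lem:instance-lhom-ctw} handles the splitting of $q$-vertices as a separate step with two cases: if $\{\alpha,\beta,\gamma\}$ is strongly incomparable, the value is copied along paths as you do for $S$; otherwise a special path $Q$ is built inside the induced $C_6$ or $C_8$ with the weaker property that mapping $q$ to $\gamma$ forces all split copies $q_j$ to $\gamma$, while mapping $q$ to $\alpha$ or $\beta$ allows all $q_j$ to be mapped to $\beta$. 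One must then recheck the equivalence with $\phi$ under this weaker propagation (it suffices thanks to properties (A2.)--(A4.) of \cref{def:assign-gadget} and (S2.)--(S4.) of \cref{def:switch-gadget}); so your claim that the correctness is ``word-for-word'' \cref{claim:equiv} also needs this additional argument. A minor further point: in part b) your choice of $\delta$ needs a strict margin (with $c$ close to $2$, take $\delta$ strictly below $\delta'/c$ and absorb the additive $\Oh_{g,H}(1)$ in the exponent); the paper instead keeps $p$ as a free parameter and sets $\delta=\delta'/2$.
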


Similarly to the case of \cref{thm:bipartite}, it is sufficient to show the following.

\begin{theorem}\label{thm:ctw-bipartite-undecomp}
Let $\mathcal{H}'$ be the class of connected, undecomposable, bipartite graphs, whose complement is not a circular-arc graph, and let $g \in \N$.
\begin{enumerate}[a)]
\item For every $H \in \mathcal{H}'$, there is no algorithm that solves every instance $(G,L)$ of  $\lhomo H$, where $G \in \mathcal{C}_g$, in time $(mim^*(H)-\epsilon)^{\ctw{G}} \cdot |V(G)|^{\Oh(1)}$ for any $\epsilon > 0$, unless the SETH fails.
\item There exists a universal constant $0<\delta<1$, such that for every $H \in \mathcal{H}'$, there is no algorithm that solves every instance $(G,L)$ of $\lhomo H$, where $G \in \mathcal{C}_g$, in time $mim^*(H)^{\delta \cdot \ctw{G}} \cdot |V(G)|^{\Oh(1)}$, unless the ETH fails.
\end{enumerate}
\end{theorem}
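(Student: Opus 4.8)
The plan is to mirror the feedback-vertex-set lower bound: reduce from \textsc{CNF-Sat} for the SETH statement and from sparsified \textsc{3-Sat} for the ETH statement, producing an instance $(G,L)$ with $G\in\mathcal{C}_g$ whose list homomorphisms to $H$ encode satisfying assignments. Since $H$ is undecomposable, $mim^*(H)=mim(H)$; write $\mu:=mim(H)$ and fix a maximum induced matching $a_1b_1,\dots,a_\mu b_\mu$ of $H$ with each $a_i$ in bipartition class $X$ and $b_i\in Y$. The observation playing the role that the incomparable set $S$ played before is that, because the matching is induced, a single edge $uv$ with $L(u)=\{a_1,\dots,a_\mu\}$ and $L(v)=\{b_1,\dots,b_\mu\}$ has exactly $\mu$ list homomorphisms to $H$ (one per index), and none of these $\mu$ ``states'' is removable by domination. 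Thus an edge of $G$ can carry an element of $\{1,\dots,\mu\}$, a cut of $\ctw{G}$ edges carries $\mu^{\ctw{G}}$ states, and the target is an instance with $\ctw{G}\approx n/\log\mu$ — the extra $\log\mu$ over the feedback-vertex-set bound is exactly why the base drops from $i^*(H)$ to $mim^*(H)$.

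First I would set up the encoding as in \cref{thm:bipartite-undecomp}: given $\phi$ with $n$ variables, $m$ clauses, and $\epsilon>0$, choose $p$ large, split the variables into $t=\lceil n/\lfloor p\log\mu\rfloor\rceil$ groups of size $\le\lfloor p\log\mu\rfloor$, and fix an injection from the truth assignments of each group into $\{1,\dots,\mu\}^p$. Then $G$ is built from $tp$ vertex-disjoint \emph{wires} — long subdivided paths whose vertices carry the lists $\{a_1,\dots,a_\mu\}$ or $\{b_1,\dots,b_\mu\}$ according to the bipartition, so that each wire carries a constant index — drawn as $tp$ parallel horizontal tracks, with the $p$ wires of a group holding its assignment. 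Going through the clauses left to right, each clause $C$ receives a \emph{verification gadget} that taps, through $\Oh_H(1)$ degree-$3$ branch vertices, the wires of the groups occurring in $C$, that leaves all wires unchanged, and that is list-homomorphic to $H$ iff some tapped group holds a value whose assignment satisfies $C$; the disjunction over groups is realised exactly as in \cref{thm:bipartite-undecomp} and in the \textsc{CNF-Sat} reduction of Lokshtanov, Marx, and Saurabh, by chaining per-group membership tests along one path whose ends are pinned to $\alpha',\beta'$, so the path closes only if one test ``switches'' via the triple $(\alpha,\beta,\gamma)$ and the avoiding walks of \cref{obs:walks-between-corners}. Finally I would subdivide everything enough so that $\girth{G}\ge g$ and all degree-$3$ vertices and internal short cycles of the gadgets are pairwise at distance $\ge g$; subdivision vertices only recopy indices, so $G\in\mathcal{C}_g$, $|V(G)|=(n+m)^{\Oh(1)}$, $(G,L)\to H$ iff $\phi$ is satisfiable, and — since every vertical cut crosses the $tp$ wire edges plus only $\Oh_H(1)$ gadget edges — $\ctw{G}\le tp+\Oh_H(1)$.

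Given correctness (checked gadget by gadget as in \cref{claim:equiv}), the remaining bookkeeping is the one from \cref{thm:bipartite-undecomp}. For a): an algorithm running in time $(\mu-\epsilon)^{\ctw{G}}\cdot|V(G)|^{\Oh(1)}$ would solve \textsc{CNF-Sat} in time $\mu^{\lambda\cdot tp}\cdot(n+m)^{\Oh(1)}$ with $\lambda=\log_\mu(\mu-\epsilon)<1$, and the choice of $p$ bounds this exponent by $\delta'\cdot n/\log\mu+\Oh_H(1)$ with $\delta'<1$, giving a $(2-\delta)^n\cdot(n+m)^{\Oh(1)}$ algorithm for \textsc{CNF-Sat} and contradicting the SETH. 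For b) I would run the same reduction with $p=1$ from a \textsc{3-Sat} instance with $m=\Oh(n)$ clauses (sparsification lemma), so $\ctw{G}\le t+\Oh_H(1)\le 2n/\log\mu+\Oh_H(1)$; then an algorithm running in time $mim^*(H)^{\delta\cdot\ctw{G}}\cdot|V(G)|^{\Oh(1)}$ would solve \textsc{3-Sat} in time $2^{2\delta n}\cdot n^{\Oh(1)}$, contradicting the ETH for every fixed $\delta$ below a universal threshold (roughly half the ETH exponent of \textsc{3-Sat}), independent of $H$. Finally, \cref{thm:ctw-bipartite} follows from \cref{thm:ctw-bipartite-undecomp} exactly as \cref{thm:bipartite} followed from \cref{thm:bipartite-undecomp}: pass to a connected undecomposable induced subgraph $H'$ of $H$, not a complement of a circular-arc graph, with $mim(H')=mim^*(H)$.

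The hard part will be constructing the verification gadget, and in particular the per-group membership test, under three requirements that pull against each other. It must be constructible from structure available in \emph{every} undecomposable bipartite $H$ whose complement is not circular-arc — that is, only from the induced matching and the avoiding walks of \cref{obs:walks-between-corners}. It must be \emph{thin}: whereas the feedback-vertex-set reduction can afford one switching gadget per satisfying assignment of each relevant group (harmless for the feedback vertex set, but it would make a clause gadget add $\Theta_H(\mu^p)$ to a cut), here a wire's value must be read through a single degree-$3$ branch vertex and tested \emph{locally}, so the data layer keeps exactly $\mu$ live states per edge and clause-checking costs nothing in the exponent. And it must be subcubic and, after subdivision, compatible with girth $\ge g$ and the distance-$\ge g$ separation of branch vertices. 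Re-engineering the assignment and switching gadgets of \cref{thm:bipartite-undecomp} so that ``data'' is indexed by an induced matching of size $mim(H)$ rather than by an incomparable set of size $i(H)$, while keeping the construction sparse and high-girth, is where essentially all of the difficulty lies, and I would expect its proof to be deferred to a separate section of gadget constructions.
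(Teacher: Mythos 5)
Your high-level architecture and all of the bookkeeping coincide with the paper's: the instance is built from $t\cdot p$ ``wires'' each carrying one of $\mu=mim(H)$ states via lists indexed by an induced matching, the clause checks are chained along paths pinned at $\alpha',\beta'$ using the switching mechanism from \cref{obs:walks-between-corners}, the cutwidth is $t\cdot p+\Oh_{H,g}(1)$, and the SETH/ETH arithmetic is the one from \cref{thm:bipartite-undecomp} (your $p=1$ plus sparsification variant for part~b) also works, and in fact sparsification is unnecessary since $m$ only enters the polynomial factor; the paper instead keeps $p$ large and takes $\delta=\delta'/2$). The problem is that the one object your reduction actually hinges on --- the ``verification gadget''/``per-group membership test'' that reads a wire's value through a single (or $\Oh_H(1)$) degree-$3$ tap and locally certifies that the tapped group holds a satisfying assignment, constructible for \emph{every} undecomposable bipartite non-circular-arc-complement $H$ --- is never constructed. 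You explicitly defer it and call it the place where essentially all the difficulty lies; as written, the proof is therefore incomplete exactly at its crux, and nothing in the sketch indicates how such a gadget would be extracted from \cref{obs:walks-between-corners} alone.

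Moreover, the premise that forces you into designing this new gadget is mistaken. You argue the FVS-style gadgets must be abandoned because one switching gadget per satisfying assignment ``would make a clause gadget add $\Theta_H(\mu^p)$ to a cut''; it does not. The paper (\cref{lem:instance-lhom-ctw}) reuses the construction of \cref{thm:bipartite-undecomp} verbatim, simply taking $S$ to be a \emph{strongly} incomparable set of size $mim(H)$ (it is in particular incomparable, so \cref{lem:assign-gadget} and \cref{lem:switching-gadget} apply unchanged, including the one-switching-gadget-per-satisfying-assignment clause paths). In the linear layout each assignment gadget and each clause path is placed locally, so any single cut meets only $\Oh_{H,g}(1)$ of their edges; the only contribution that scales is the $t\cdot p$ wire pass-throughs. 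The genuinely new work is degree reduction: the $x$-vertices (degree $d(x)(k-1)^2$) and $q$-vertices (degree $p(k-1)$) are split into chains of degree-$\leq 3$ copies joined by value-propagating paths with alternating lists $S,S'$ (private neighbors) resp.\ $\{\alpha,\beta,\gamma\},\{\overline{\alpha},\overline{\beta},\overline{\gamma}\}$, or ad-hoc paths in the induced $C_6$/$C_8$ cases --- and it is precisely this propagation step that requires strong incomparability and makes $mim(H)$ rather than $i(H)$ the base. Your ``wire'' is exactly such a chain; what is missing from your write-up is the realization that the read-out can remain the heavy FVS assignment gadget, tapping the wire at $(k-1)^2$ spaced-out degree-$3$ vertices, so no new ``thin'' local membership test is needed at all. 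If you fill your deferred gadget section this way, you recover the paper's proof; as it stands, the proposal has a genuine gap.
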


It can be shown that \cref{thm:ctw-bipartite}~a) is equivalent to \cref{thm:ctw-bipartite-undecomp}~a) and that \cref{thm:ctw-bipartite}~b) is equivalent to \cref{thm:ctw-bipartite-undecomp}~b). Since the proofs of the equivalence are analogous, let us show only one of them.

\medskip
\noindent\textbf{(\cref{thm:ctw-bipartite}~a) $\rightarrow$  \cref{thm:ctw-bipartite-undecomp}~a))} Assume the SETH and suppose that \cref{thm:ctw-bipartite}~a) holds and \cref{thm:ctw-bipartite-undecomp}~a) fails. Then there exist a connected, bipartite, udecomposable graph $H$, whose complement is not a circular-arc graph, $g \in \N$, $\eps>0$, and an algorithm $A$ that solves every instance $(G,L)$ of  $\lhomo H$  such that $G\in \Cg$ in time $(mim(H)-\epsilon)^{\ctw{G}} \cdot |V(G)|^{\Oh(1)}$. The properties of $H$ imply that $mim^*(H)=mim(H)$. Thus every instance $(G,L)$ of $\lhomo H$ such that $G\in \Cg$ can be solved in time $(mim^*(H)-\epsilon)^{\ctw{G}} \cdot |V(G)|^{\Oh(1)}$. By \cref{thm:ctw-bipartite}~a) we get a contradiction with the SETH.

\medskip
\noindent\textbf{(\cref{thm:ctw-bipartite-undecomp}~a) $\rightarrow$ \cref{thm:ctw-bipartite}~a))} Assume the SETH and suppose that \cref{thm:ctw-bipartite-undecomp}~a) holds and \cref{thm:ctw-bipartite}~a) fails. Then there exist a connected, bipartite graph $H$, whose complement is not a circular-arc graph, $g \in \N$, $\eps>0$, and an algorithm $A$ that solves every instance $(G,L)$ of $\lhomo H$ such that $G\in \Cg$ in time $(mim^*(H)-\epsilon)^{\ctw{G}} \cdot |V(G)|^{\Oh(1)}$. Let $H'$ be an induced subgraph of $H$ such that $H'$ is connected, undecomposable, is not a complement of a circular-arc graph, and $mim(H')=mim^*(H)$. Observe that any instance $(G,L)$ of $\lhomo {H'}$ can be seen as an instance of $\lhomo H$ such that only vertices of $H'$ appear on lists $L$. Thus we can use the algorithm $A$ to solve any instance $(G,L)$ of $\lhomo {H'}$ with $G\in \Cg$ in time $(mim^*(H)-\epsilon)^{\ctw{G}} \cdot |V(G)|^{\Oh(1)}=(mim(H')-\epsilon)^{\ctw{G}} \cdot |V(G)|^{\Oh(1)}$, which, by \cref{thm:ctw-bipartite-undecomp}~a), contradicts the SETH.

\medskip

Now we will show how to modify the reduction from \cref{thm:bipartite-undecomp} to prove \cref{thm:ctw-bipartite-undecomp}.
To get an intuition about what needs to be done, recall that in order to obtain a bound on the cutwidth, we need to bound the pathwidth and the maximum degree. Also, as we already observed, the pathwidth of the instance constructed in \cref{thm:bipartite-undecomp} is upper-bounded by the correct value, so we need to take care of vertices of large degree.

\begin{lemma}\label{lem:instance-lhom-ctw}
Let $g \in \N$ and let $H$ be a connected, bipartite, undecomposable graph, whose complement is not a circular-arc graph.
Let $k:=mim(H)$.
Let $\phi$ be an instance of \textsc{CNF-Sat} with $n$ variables and $m$ clauses.
Let $p$ be a positive integer and let $t:=\Big{\lceil} \frac{n}{\lfloor p \cdot \log k \rfloor} \Big{\rceil}$.
Then there exists an instance $(\tG,\tL)$ of $\lhomo{H}$ which satisfies the following properties.
\begin{enumerate}[(1.)]
\item $(\tG,\tL) \to H$ if and only if $\phi$ is satisfiable,

\item $\ctw{\tG} \leq t \cdot p + f(g,H)$, where $f$ is some function of $g$ and $H$,

\item $\tG \in \Cg$,

\item $|V(\tG)|=(n+m)
^{\Oh(1)}$.
\end{enumerate}
Moreover, $(\tG,\tL)$ can be constructed in polynomial time in $(n+m)$.
\end{lemma}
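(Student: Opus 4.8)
The plan is to adapt the reduction behind \cref{thm:bipartite-undecomp}, with two essential changes. First, instead of a merely incomparable set of size $i(H)$, I would use a \emph{strongly} incomparable set $S$ contained in one bipartition class of $H$ with $|S|=k=mim(H)$, together with its set $S'$ of private neighbours; recall that $S\cup S'$ induces a perfect matching $\{uu' \colon u\in S\}$ in $H$. This matching lets us \emph{propagate} a chosen value of $S$ along an arbitrarily long subcubic structure: in any list homomorphism, if a vertex with list $S$ is coloured $u$ then every neighbour of it with list $S'$ is coloured $u'$ (as $u'$ is the only element of $S'$ adjacent to $u$), and conversely; hence any tree whose vertices carry lists $S$ and $S'$ alternately along edges forces a single value of $S$ on all its $S$-vertices. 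In \cref{thm:bipartite-undecomp} the feedback vertex set was exactly the collection of single vertices $x^i_s$; here I would instead replace each $x^i_s$ by a long \emph{wire} $W^i_s$ — a path with lists $S,S',S,S',\dots$ carrying a value of $S$ — and run the $t\cdot p$ wires in parallel through the whole graph.

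Next I would rebuild the per-clause gadget. Keep the partition of the variables into $t=\lceil n/\lfloor p\log k\rfloor\rceil$ groups $F_1,\dots,F_t$, and keep the encoding of a truth assignment of $F_i$ by a colouring of the $p$ wires $W^i_1,\dots,W^i_p$, exactly as in \cref{thm:bipartite-undecomp}. For each clause $C$ I build a subcubic gadget $P_C$ realising an OR, over all pairs $(i,f)$ such that $F_i$ meets $C$ and the assignment $f$ of $F_i$ satisfies $C$, of an AND, over $s\in[p]$, of the test ``$W^i_s$ carries $f(x^i_s)$''. The OR uses the $\alpha'$--$\beta'$ switching chain of \cref{thm:bipartite-undecomp}: $P_C$ runs from a vertex with list $\{\alpha'\}$ to a vertex with list $\{\beta'\}$ through the $(i,f)$-blocks in series, and the colour of this through-line can switch from the $\alpha$-side to the $\beta$-side only inside some $(i,f)$-block, via a $\gamma$-coloured vertex, using the switching gadget of \cref{lem:switching-gadget}; a block permits the switch only if its conjunction of tests holds. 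Inside a block the $p$ tests are threaded sequentially — the block carries a running ``conjunction wire'' and reads one $W^i_s$ at a time through a distance-controlled copy of the assignment gadget of \cref{lem:assign-gadget} glued to a branch vertex of $W^i_s$ — so that only $\Oh_{g,H}(1)$ edges of the block are ever relevant at one cut. Finally I make everything fit $\Cg$: every vertex of an assignment or switching gadget of degree larger than $3$ (the vertices $x$, $y$, $q$) is expanded into a subcubic tree propagating its value (for list $S$ using the matching $S\cup S'$; for list $\{\alpha,\beta,\gamma\}$ using an analogous small propagation gadget, built through the case distinction in \cref{obs:walks-between-corners}); long padding walks in $H$ are inserted between any two vertices of degree $3$ and along every cycle so that $\girth{\tG}\ge g$ and degree-$3$ vertices are pairwise at distance $\ge g$; and parities are fixed using the bipartiteness of $H$ so that $\tG$ is bipartite. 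Everything has size $(n+m)^{\Oh(1)}$ and is constructed in polynomial time.

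It then remains to verify the four properties. Property (1.) is the analogue of \cref{claim:equiv}: given a list homomorphism, in each $P_C$ the through-line must perform an $\alpha$-to-$\beta$ switch inside some block $(i,f)$; by property (S4.) of \cref{def:switch-gadget}, property (A4.) of \cref{def:assign-gadget}, and the value propagation through the expanded vertices, this forces all of $W^i_1,\dots,W^i_p$ to carry $f(x^i_1),\dots,f(x^i_p)$, so the truth assignment read off from the wires of $F_i$ satisfies $C$; conversely a satisfying assignment extends to a homomorphism using (S2.),(S3.) and (A2.),(A3.), exactly as in the second half of \cref{claim:equiv}. For property (2.) I take the linear layout that places the $t\cdot p$ wires side by side and sweeps the gadgets $P_C$ one after another; any cut then crosses the $t\cdot p$ wire-edges plus only the $\Oh_{g,H}(1)$ edges internal to the single block (and the single wire-read) currently being traversed, giving $\ctw{\tG}\le t\cdot p+f(g,H)$. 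Properties (3.) and (4.) are immediate from the construction.

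The main obstacle is the design of the clause gadget. It must be simultaneously (i) semantically correct — an OR of ANDs which, when ``selected'', forces $\gamma$ on the relevant $q$-vertices and hence the matching wire values; (ii) subcubic with all degree-$3$ vertices pairwise at distance $\ge g$ and of girth $\ge g$; and, most delicately, (iii) of ``local'' cut contribution depending only on $g$ and $H$ — in particular independent of $p$ and of how many groups the clause $C$ meets. Point (iii) is what forces the sequential, conjunction-threaded reading of the $p$ wires in place of the parallel identification used in \cref{thm:bipartite-undecomp}, and point (ii) is what forces replacing the heavy vertices of the assignment and switching gadgets by value-propagating subcubic trees, which is precisely where the induced-matching structure of $S$, and a small auxiliary propagation gadget for the $\{\alpha,\beta,\gamma\}$-lists, are genuinely needed; checking that these expansions introduce no unwanted homomorphisms, no short cycles, and no clustered high-degree vertices is the technical heart of the proof.
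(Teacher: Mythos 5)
Your proposal is essentially the paper's own proof: the paper likewise reuses the feedback-vertex-set reduction with a strongly incomparable set $S$ of size $mim(H)$ and its private neighbours $S'$, turns the heavy $x^i_s$- and $q$-vertices into subcubic propagation chains (paths with lists $S,S',S,\ldots$ for the $x$-side, and for the $q$-side a chain handled by the same case distinction you invoke from \cref{obs:walks-between-corners}, i.e.\ strongly incomparable $\{\alpha,\beta,\gamma\}$ versus the induced $C_6$/$C_8$), pads for girth and distance, and uses exactly your layout (the $t\cdot p$ wires in parallel, clause gadgets swept sequentially) to get $\ctw{\tG}\le t\cdot p+f(g,H)$, with correctness via (S2.)--(S4.) and (A2.)--(A4.). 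The only imprecision is that in the non-strongly-incomparable case one cannot propagate the exact value of a $\{\alpha,\beta,\gamma\}$-vertex; the paper instead uses a one-way gadget (new $q$-vertices with lists $\{\beta,\gamma\}$) that propagates only the colour $\gamma$ and collapses $\alpha,\beta$ to $\beta$, which is all the semantics require and is compatible with the case distinction you describe.
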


\begin{proof}
Let $S$ be a strongly incomparable set in $H$ of size $k=mim(H)$, contained in one bipartition class.
Let $S'$ be a set such that $S\cup S'$ induces a matching of size $k$ in $H$, and let $(\alpha,\beta,\gamma)$ be the triple given by \cref{obs:walks-between-corners}, such that $\alpha,\beta,\gamma$ are in the same bipartition class as $S$.
We repeat the construction of the instance $(G,L)$ of $\lhomo{H}$, such that $(G,L)\to H$ if and only if $\phi$ is satisfiable, from the proof of \cref{thm:bipartite-undecomp}.
This is possible since $S$ is in particular incomparable.
Furthermore, in the construction of $(G,L)$ we did not use the fact that $S$ was maximum, we only needed that $|S|\geq 2$, which is the case as $\{\alpha,\beta\}$ is strongly incomparable.
Although in the proof of \cref{thm:bipartite-undecomp} we did not care about girth of the gadgets and distances between vertices of degree $3$, now in the construction of $(G,L)$, while introducing gadgets from \cref{lem:assign-gadget} and \cref{lem:switching-gadget}, we introduce such assignment gadgets in which:
\begin{itemize}
\item the girth is at least $g$,
\item the distance of vertices of degree at least $3$ is at least $g$,
\item the distance between the $x$-vertex and the $y$-vertex is at least $g$,
\item every vertex of degree at least $3$ is at distance at least $g$ from the $x$-vertex and the $y$-vertex.
\end{itemize}
Similarly, we introduce  such switching gadgets with special vertices $p,q,r$, in which $\dist(p,q)\geq\frac{g}{2}$ and $\dist(r,q)\geq\frac{g}{2}$.
We are going to modify the instance $(G,L)$ into the instance $(\tG,\tL)$ with the properties listed in the statement of the lemma.

However, before we do that, let us fix an arbitrary ordering of clauses $C_1,\ldots,C_m$ in $\phi$, which implies the ordering of paths $P_C$ in $G$. 
Then we can fix an ordering of all $q$-vertices in $G$, so that a $q$-vertex $q_1$ precedes a $q$-vertex $q_2$, if:
\begin{itemize}
\item $q_1$ belongs to the path $P_{C_i}$ and $q_2$ belongs to the path $P_{C_j}$, such that $i < j$, or
\item $q_1$ and $q_2$ belong to the same path $P_C$, and $q_1$ precedes $q_2$ on $P_C$ (the order of the vertices of each path $P_C$ is such that $x_C$ is the first vertex and $y_C$ is the last vertex). 
\end{itemize}

Finally, let us fix an ordering of the assignment gadgets in $G$ as follows. Recall that every $q$-vertex $q_C^{i,f}$ is a $y$-vertex of $p$ assignment gadgets whose $x$-vertices are, respectively, $x^i_1,\ldots,x^i_p$. We fix an ordering of the assignment gadgets so that the assignment gadget $A_1$ precedes the assignment gadget $A_2$ if:

\begin{itemize}
\item the $y$-vertex of $A_1$ precedes the $y$-vertex of $A_2$ in the fixed order of the $q$-vertices, or

\item $A_1$ and $A_2$ have the same $y$-vertex $q_C^{i,f}$ and $x$-vertices of $A_1$ and $A_2$ are, respectively, $x^i_j$ and $x^i_s$, with $j<s$.
\end{itemize}

Now we are ready to modify the instance $(G,L)$. It turns out that we only need to take care of $q$-vertices and $x$-vertices, as their large degree forces large cutwidth. The construction of $(\tG,\tL)$ will be thus performed in two steps.

\paragraph{Step 1. Splitting $q$-vertices.}
Recall that every $q$-vertex of a switching gadget is a $y$-vertex of $p$ assignment gadgets and the degree of each $y$-vertex in the assignment gadget is $k-1$. For every $q$-vertex $q$, in order to reduce its degree, we will split $q$ into $p\cdot (k-1)$ vertices $q_1,\ldots,q_{p\cdot(k-1)}$. In this step, the construction depends on the structure of $H$. Let us consider two cases.

\smallskip
\textbf{Case I. The set $\{\alpha,\beta,\gamma\}$ is strongly incomparable.}  Let $\oalpha,\obeta,\ogamma$ be vertices such that edges $\alpha\oalpha,\beta\obeta,\gamma\ogamma$ induce a matching in $H$. We split every $q$-vertex $q$ from a path $P_C$ into $p\cdot (k-1)$ vertices $q_1,\ldots,q_{p \cdot (k-1)}$, each for every neighbor of $q$ inside assignment gadgets. After this operation we remove $q$ from the graph. For every $j\in [p\cdot(k-1)-1]$ we introduce a path $Q_j$ of even length which is at least $g$, with lists of consecutive vertices $\{\alpha,\beta,\gamma\},\{\oalpha,\obeta,\ogamma\},\ldots,\{\alpha,\beta,\gamma\}$, and we identify its endvertices with $q_j$ and $q_{j+1}$. In the same way, we introduce paths $Q_{0}$ and $Q_{p\cdot(k-1)}$ and we identify endvertices of $Q_{0}$ with $q_{1}$ and the vertex preceding $q$ on $P_C$, and we identify endvertices of $Q_{p\cdot(k-1)}$ with $q_{p\cdot(k-1)}$ and the vertex following $q$ on $P_C$ (see \cref{fig:spliting-q}). Finally, let us fix an ordering $a_1,a_2,\ldots,a_{p\cdot (k-1)}$ of neighbors of $q$ in assignment gadgets such that for $j \in [p-1]$ vertices of the assignment gadget with the $x$-vertex $x_j^i$ precede vertices of the assignment gadget with the $x$-vertex $x_{j+1}^i$. The order of the neighbors from the same assignment gadget is arbitrary. For every $j \in [p\cdot(k-1)]$ we add an edge between $q_{j}$ and $a_j$ (see \cref{fig:spliting-q}). This completes the step of splitting $q$-vertices in this case.

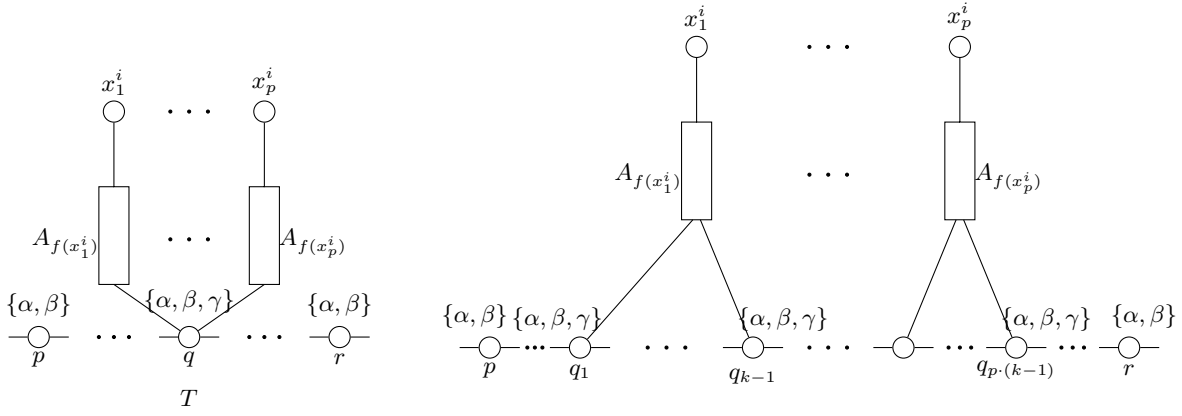
\begin{figure}
\centering{\begin{tikzpicture}[every node/.style={draw,circle,fill=white,inner sep=0pt,minimum size=8pt},every loop/.style={}]

\node[label=below:\footnotesize{$q$}] (q2) at (2,0.5) {};
\node[draw=none,fill=none,label=above:\footnotesize{$\{\alpha,\beta,\gamma\}$}] (q2l) at (2,0.2) {};
\node[label=below:\footnotesize{$p$}] (p) at (0,0.5) {};
\node[draw=none,fill=none,label=above:\footnotesize{$\{\alpha,\beta\}$}] (pl) at (0,0.3) {};
\node[label=below:\footnotesize{$r$}] (r) at (4,0.5) {};
\node[draw=none,fill=none,label=above:\footnotesize{$\{\alpha,\beta\}$}] (rl) at (4,0.3) {};
\draw (p)--++(0.4,0);
\draw (p)--++(-0.4,0);
\draw (r)--++(0.4,0);
\draw (r)--++(-0.4,0);
\draw (q2)--++(0.4,0);
\draw (q2)--++(-0.4,0);
\foreach \k in {0.8,1,1.2,2.8,3,3.2}
{
\draw[fill=black] (\k,0.5) circle (0.02);
}

\node (b0) at (1,3.5) {};
\node (c0) at (3,3.5) {};

\node[draw=none,fill=none,label=above:\footnotesize{$x_1^i$}] (e) at (1,3.5) {};

\node[draw=none,fill=none,label=above:\footnotesize{$x_p^i$}] (e) at (3,3.5) {};

\draw[fill=black] (2.25,3.5) circle (0.02);
\draw[fill=black] (1.75,3.5) circle (0.02);
\draw[fill=black] (2,3.5) circle (0.02);
\draw[fill=black] (2.25,1.8) circle (0.02);
\draw[fill=black] (1.75,1.8) circle (0.02);
\draw[fill=black] (2,1.8) circle (0.02);

\draw (c0)--++(0,-1)--++(0.2,0)--++(0,-1.3)--++(-0.4,0)--++(0,1.3)--++(0.2,0);
\draw (b0)--++(0,-1)--++(0.2,0)--++(0,-1.3)--++(-0.4,0)--++(0,1.3)--++(0.2,0);

\draw (q2)--(1,1.2);
\draw (q2)--(3,1.2);

\node[draw=none,fill=none,label=below:\footnotesize{$T$}] (t) at (2,0) {};
\node[draw=none,fill=none,label=left:\footnotesize{$A_{f(x_1^i)}$}] (af1) at (1,1.75) {};
\node[draw=none,fill=none,label=right:\footnotesize{$A_{f(x_p^i)}$}] (afp) at (3,1.75) {};

\end{tikzpicture} \qquad
\begin{tikzpicture}[every node/.style={draw,circle,fill=white,inner sep=0pt,minimum size=8pt},every loop/.style={}]

\node[label=below:\footnotesize{$p$}] (p) at (-2.5,0.5) {};
\node[draw=none,fill=none,label=above:\footnotesize{$\{\alpha,\beta\}$}] (pa) at (-2.7,0.3) {};
\node[label=below:\footnotesize{$r$}] (r) at (6,0.5) {};
\node[draw=none,fill=none,label=above:\footnotesize{$\{\alpha,\beta\}$}] (ra) at (6.2,0.3) {};
\node (q0) at (-1.3,0.5) {};
\node[draw=none,fill=none,label=below:\footnotesize{$q_1$}] (q0p) at (-1.3,0.5) {};
\node[draw=none,fill=none,label=above:\footnotesize{$\{\alpha,\beta,\gamma\}$}] (q0a) at (-1.6,0.1) {};
\node (q1) at (1,0.5) {};
\node[draw=none,fill=none,label=below:\footnotesize{$q_{k-1}$}] (q1p) at (1,0.6) {};
\node[draw=none,fill=none,label=above:\footnotesize{$\{\alpha,\beta,\gamma\}$}] (q1a) at (1.4,0.1) {};
\node (q2) at (3,0.5) {};
\node (q3) at (4.5,0.5) {};
\node[draw=none, fill=none,label=below:\footnotesize{$q_{p\cdot (k-1)}$}] (q3p) at (4.5,0.9) {};
\node[draw=none,fill=none,label=above:\footnotesize{$\{\alpha,\beta,\gamma\}$}] (q3a) at (4.9,0.1) {};
\draw (p)--++(0.4,0);
\draw (p)--++(-0.4,0);
\draw (r)--++(0.4,0);
\draw (r)--++(-0.4,0);
\draw (q2)--++(0.4,0);
\draw (q2)--++(-0.4,0);
\draw (q0)--++(0.4,0);
\draw (q0)--++(-0.4,0);
\draw (q1)--++(0.4,0);
\draw (q1)--++(-0.4,0);
\draw (q3)--++(0.4,0);
\draw (q3)--++(-0.4,0);
\foreach \k in {-2,-1.9,-1.8, -0.4,-0.15,0.1, 1.75,2,2.25, 3.6,3.75,3.9, 5.1,5.25,5.4}
{
\draw[fill=black] (\k,0.5) circle (0.02);
}

\node (b0) at (0.25,4.5) {};
\node (c0) at (3.75,4.5) {};

\node[draw=none,fill=none,label=above:\footnotesize{$x_1^i$}] (e) at (0.25,4.5) {};

\node[draw=none,fill=none,label=above:\footnotesize{$x_p^i$}] (e) at (3.75,4.5) {};

\draw[fill=black] (1.75,4.5) circle (0.02);
\draw[fill=black] (2,4.5) circle (0.02);
\draw[fill=black] (2.25,4.5) circle (0.02);
\draw[fill=black] (1.75,2.8) circle (0.02);
\draw[fill=black] (2.25,2.8) circle (0.02);
\draw[fill=black] (2,2.8) circle (0.02);
\draw (c0)--++(0,-1)--++(0.2,0)--++(0,-1.3)--++(-0.4,0)--++(0,1.3)--++(0.2,0);
\draw (b0)--++(0,-1)--++(0.2,0)--++(0,-1.3)--++(-0.4,0)--++(0,1.3)--++(0.2,0);

\draw (q0)--(0.2,2.2);
\draw (q1)--(0.3,2.2);
\draw (q2)--(3.7,2.2);
\draw (q3)--(3.8,2.2);

\node[draw=none,fill=none,label=left:\footnotesize{$A_{f(x_1^i)}$}] (af1) at (0.25,2.75) {};
\node[draw=none,fill=none,label=right:\footnotesize{$A_{f(x_p^i)}$}] (afp) at (3.75,2.75) {};
\end{tikzpicture}
}
\caption{The switching gadget $T$ and the group of vertices $x_s^i$ for $s \in [p]$ before the step of splitting $q$-vertices (left) and after the step in the case that $\{\alpha,\beta,\gamma\}$ is a strongly incomparable set (right).}\label{fig:spliting-q}
\end{figure}

\begin{figure}
\centering{
\begin{subfigure}{.99\textwidth}
\centering{
\begin{tikzpicture}[every node/.style={draw,circle,fill=white,inner sep=0pt,minimum size=8pt},every loop/.style={}]
\node (q) at (-3.5,0) {};
\node[draw=none,fill=none,label=below:\footnotesize{$q$}] (qn) at (-3.5,0) {};
\node[draw=none,fill=none,label=above:\footnotesize{$\{w_1,w_3,w_5\}$}] (ql) at (-3.7,-0.5) {};
\node (q') at (-2,0) {};
\node[draw=none,fill=none,label=above:\footnotesize{$\{w_2,w_6\}$}] (q'l) at (-2,-0.3) {};
\node (q'') at (-0.5,0) {};
\node[draw=none,fill=none,label=above:\footnotesize{$\{w_3,w_5\}$}] (q''l) at (-0.5,-0.3) {};

\node[fill=gray,opacity=0.6] (q1) at (3,0) {};
\node[draw=none,fill=none,label=below:\footnotesize{$q_{1}$}] (q1n) at (3,0) {};
\node[draw=none,fill=none,label=above:\footnotesize{$\{w_3,w_5\}$}] (q1l) at (3,-0.3) {};
\node[fill=gray,opacity=0.6] (q2) at (6,0) {};
\node[draw=none,fill=none,label=below:\footnotesize{$q_{2}$}] (q2n) at (6,0) {};
\node[draw=none,fill=none,label=above:\footnotesize{$\{w_3,w_5\}$}] (q2l) at (6,-0.3) {};
\node (q2') at (7.5,0) {};
\node[draw=none,fill=none,label=above:\footnotesize{$\{w_2,w_6\}$}] (q2'l) at (7.5,-0.3) {};
\node (q3') at (10,0) {};
\node[draw=none,fill=none,label=above:\footnotesize{$\{w_2,w_6\}$}] (q3'l) at (10,-0.3) {};
\node[fill=gray,opacity=0.6] (q3) at (11.5,0) {};
\node[draw=none,fill=none,label=below:\footnotesize{$q_{p\cdot (k-1)}$}] (q3n) at (11.5,0.4) {};
\node[draw=none,fill=none,label=above:\footnotesize{$\{w_3,w_5\}$}] (q3l) at (11.5,-0.3) {};
\draw (q)--(q')--(q'')--++(1,0)--++(0,0.5)--++(1.5,0)--++(0,-1)--++(-1.5,0)--++(0,0.5);
\draw (2,0)--(q1);
\draw (q1)--++(0.7,0)--++(0,0.5)--++(1.5,0)--++(0,-1)--++(-1.5,0)--++(0,0.5);
\draw (5.2,0)--(q2);
\draw (q2)--(q2');
\node[draw=none, fill=none] (Q0) at (1.25,0) {$Q_1$};
\node[draw=none, fill=none] (Q1) at (4.45,0) {$Q_2$};
\draw (q3)--(q3');
\draw (q2')--++(0.4,0);
\draw (q3')--++(-0.4,0);
\draw[fill=black] (8.6,0) circle (0.02);
\draw[fill=black] (8.75,0) circle (0.02);
\draw[fill=black] (8.9,0) circle (0.02);
\end{tikzpicture}

\begin{tikzpicture}[every node/.style={draw,circle,fill=white,inner sep=0pt,minimum size=8pt},every loop/.style={}]
\node (q) at (-6,0) {};
\node[draw=none,fill=none,label=below:\footnotesize{$q$}] (qn) at (-6,0) {};
\node[draw=none,fill=none,label=above:\footnotesize{$\{w_1,w_3,w_5\}$}] (ql) at (-6.3,-0.6) {};
\node (q') at (-4.5,0) {};
\node[draw=none,fill=none,label=above:\footnotesize{$\{w_2,w_6,w_8\}$}] (q'l) at (-4.5,-0.6) {};
\node (q0) at (-3,0) {};
\node[draw=none,fill=none,label=above:\footnotesize{$\{w_3,w_7\}$}] (q0l) at (-2.9,-0.3) {};
\node (q0') at (-1.5,0) {};
\node[draw=none,fill=none,label=above:\footnotesize{$\{w_2,w_6\}$}] (q0'l) at (-1.5,-0.3) {};
\node (q0'') at (0,0) {};
\node[draw=none,fill=none,label=above:\footnotesize{$\{w_3,w_5\}$}] (q0''l) at (0,-0.3) {};
\node[fill=gray,opacity=0.6] (q1) at (3,0) {};
\node[draw=none,fill=none,label=below:\footnotesize{$q_{1}$}] (q1n) at (3,0) {};
\node[draw=none,fill=none,label=above:\footnotesize{$\{w_3,w_5\}$}] (q1l) at (3,-0.3) {};

\node[fill=gray,opacity=0.6] (q2) at (6,0) {};
\node[draw=none,fill=none,label=below:\footnotesize{$q_{2}$}] (q2n) at (6,0) {};
\node[draw=none,fill=none,label=above:\footnotesize{$\{w_3,w_5\}$}] (q2l) at (6,-0.3) {};
\node[fill=gray,opacity=0.6] (q3) at (9,0) {};
\node[draw=none,fill=none,label=below:\footnotesize{$q_{p\cdot (k-1)}$}] (q3n) at (9,0.4) {};
\node[draw=none,fill=none,label=above:\footnotesize{$\{w_3,w_5\}$}] (q3l) at (9,-0.3) {};
\draw (q)--(q')--(q0)--(q0')--(q0'');
\draw (q2)--++(0.4,0);
\draw (q3)--++(-0.4,0);
\draw (q0'')--++(0.7,0)--++(0,0.5)--++(1.5,0)--++(0,-1)--++(-1.5,0)--++(0,0.5);
\draw (q1)--++(0.7,0)--++(0,0.5)--++(1.5,0)--++(0,-1)--++(-1.5,0)--++(0,0.5);
\draw (2.2,0)--(q1);
\draw (5.2,0)--(q2);
\node[draw=none,fill=none] (Q1) at (1.45,0) {$Q_1$};
\node[draw=none,fill=none] (Q2) at (4.45,0) {$Q_2$};

\draw[fill=black] (7.4,0) circle (0.02);
\draw[fill=black] (7.55,0) circle (0.02);
\draw[fill=black] (7.7,0) circle (0.02);
\end{tikzpicture}}
\caption{The construction of the path $Q$ in case that $H$ contains an induced $C_6$ with consecutive vertices $w_1,\ldots,w_6$ and $\alpha=w_1,\beta=w_5,\gamma=w_3$ (above) and in case that $H$ contains an induced $C_8$ with consecutive vertices $w_1,\ldots,w_8$ and $\alpha=w_1,\beta=w_5,\gamma=w_3$ (below).
}
\end{subfigure}

\begin{subfigure}{.99\textwidth}
\centering
\begin{tikzpicture}[every node/.style={draw,circle,fill=white,inner sep=0pt,minimum size=8pt},every loop/.style={}]
\node (q1) at (3,0) {};
\node[draw=none,fill=none,label=above:\footnotesize{$\{w_3,w_5\}$}] (q1l) at (3,-0.3) {};
\node (q1') at (4.5,0) {};
\node[draw=none,fill=none,label=above:\footnotesize{$\{w_2,w_6\}$}] (q1'l) at (4.5,-0.3) {};
\node (q2) at (6,0) {};
\node[draw=none,fill=none,label=above:\footnotesize{$\{w_3,w_5\}$}] (q2l) at (6,-0.3) {};
\node (q2') at (7.5,0) {};
\node[draw=none,fill=none,label=above:\footnotesize{$\{w_2,w_6\}$}] (q2'l) at (7.5,-0.3) {};
\node (q3') at (10,0) {};
\node[draw=none,fill=none,label=above:\footnotesize{$\{w_2,w_6\}$}] (q3'l) at (10,-0.3) {};
\node (q3) at (11.5,0) {};
\node[draw=none,fill=none,label=above:\footnotesize{$\{w_3,w_5\}$}] (q3l) at (11.5,-0.3) {};
\draw (q1)--(q1')--(q2)--(q2');
\draw (q3)--(q3');
\draw (q2')--++(0.4,0);
\draw (q3')--++(-0.4,0);
\draw[fill=black] (8.6,0) circle (0.02);
\draw[fill=black] (8.75,0) circle (0.02);
\draw[fill=black] (8.9,0) circle (0.02);
\draw (0,0.5)--++(1.5,0)--++(0,-1)--++(-1.5,0)--++(0,1);
\node[draw=none,fill=none] (Qj) at (0.7,0) {$Q_j$};
\node[draw=none,fill=none] (eq) at (2,0) {$=$};
\end{tikzpicture}
\caption{The construction of each subpath $Q_j$, for $j=0,\ldots p\cdot (k-1)$, used in the construction of $Q$.
The length of each $Q_j$ is at least $g$.
}
\end{subfigure}
}
\caption{The construction of the path $Q$ from the step of splitting $q$-vertices, case 2. The new $q$-vertices are marked gray.
}\label{fig:spliting-q2}
\end{figure}

\begin{figure}
\centering{\begin{tikzpicture}[every node/.style={draw,circle,fill=white,inner sep=0pt,minimum size=8pt},every loop/.style={}]

\node[label=below:\footnotesize{$q$}] (q2) at (2,0.5) {};
\node[draw=none,fill=none,label=above:\footnotesize{$\{\alpha,\beta,\gamma\}$}] (q2l) at (2,0.2) {};
\node[label=below:\footnotesize{$p$}] (p) at (0,0.5) {};
\node[draw=none,fill=none,label=above:\footnotesize{$\{\alpha,\beta\}$}] (pl) at (0,0.3) {};
\node[label=below:\footnotesize{$r$}] (r) at (4,0.5) {};
\node[draw=none,fill=none,label=above:\footnotesize{$\{\alpha,\beta\}$}] (rl) at (4,0.3) {};
\draw (p)--++(0.4,0);
\draw (p)--++(-0.4,0);
\draw (r)--++(0.4,0);
\draw (r)--++(-0.4,0);
\draw (q2)--++(0.4,0);
\draw (q2)--++(-0.4,0);
\foreach \k in {0.8,1,1.2,2.8,3,3.2}
{
\draw[fill=black] (\k,0.5) circle (0.02);
}

\node (b0) at (1,3.5) {};
\node (c0) at (3,3.5) {};

\node[draw=none,fill=none,label=above:\footnotesize{$x_1^i$}] (e) at (1,3.5) {};

\node[draw=none,fill=none,label=above:\footnotesize{$x_p^i$}] (e) at (3,3.5) {};

\draw[fill=black] (2.25,3.5) circle (0.02);
\draw[fill=black] (1.75,3.5) circle (0.02);
\draw[fill=black] (2,3.5) circle (0.02);
\draw[fill=black] (2.25,1.8) circle (0.02);
\draw[fill=black] (1.75,1.8) circle (0.02);
\draw[fill=black] (2,1.8) circle (0.02);

\draw (c0)--++(0,-1)--++(0.2,0)--++(0,-1.3)--++(-0.4,0)--++(0,1.3)--++(0.2,0);
\draw (b0)--++(0,-1)--++(0.2,0)--++(0,-1.3)--++(-0.4,0)--++(0,1.3)--++(0.2,0);

\draw (q2)--(1,1.2);
\draw (q2)--(3,1.2);

\node[draw=none,fill=none,label=below:\footnotesize{$T$}] (t) at (2,0) {};
\node[draw=none,fill=none,label=left:\footnotesize{$A_{f(x_1^i)}$}] (af1) at (1,1.75) {};
\node[draw=none,fill=none,label=right:\footnotesize{$A_{f(x_p^i)}$}] (afp) at (3,1.75) {};

\end{tikzpicture} \qquad
\begin{tikzpicture}[every node/.style={draw,circle,fill=white,inner sep=0pt,minimum size=8pt},every loop/.style={}]

\node[label=below:\footnotesize{$p$}] (p) at (-3,-1) {};
\node[draw=none,fill=none,label=above:\footnotesize{$\{\alpha,\beta\}$}] (pa) at (-3.2,-1.2) {};
\node[label=below:\footnotesize{$r$}] (r) at (0,-1) {};
\node[draw=none,fill=none,label=above:\footnotesize{$\{\alpha,\beta\}$}] (ra) at (0.5,-1.4) {};
\node[label=below:\footnotesize{$q$}] (q) at (-1.5,-1) {};
\node (q0) at (-1.3,0.5) {};
\node[draw=none,fill=none,label=right:\footnotesize{$q_1$}] (q0p) at (-1.3,0.5) {};
\node[draw=none,fill=none,label=above:\footnotesize{$\{\beta,\gamma\}$}] (q0a) at (-1.6,0.1) {};
\node (q1) at (1,0.5) {};
\node[draw=none,fill=none,label=right:\footnotesize{$q_{k-1}$}] (q1p) at (1,0.5) {};
\node[draw=none,fill=none,label=above:\footnotesize{$\{\beta,\gamma\}$}] (q1a) at (1.4,0.1) {};
\node (q2) at (3,0.5) {};
\node (q3) at (4.5,0.5) {};
\node[draw=none, fill=none,label=right:\footnotesize{$q_{p\cdot (k-1)}$}] (q3p) at (4.5,0.5) {};
\node[draw=none,fill=none,label=above:\footnotesize{$\{\beta,\gamma\}$}] (q3a) at (4.9,0.1) {};
\node[draw=none,fill=none] (Q) at (1.5,-0.25) {$Q$};

\draw (p)--++(0.4,0);
\draw (p)--++(-0.4,0);
\draw (r)--++(0.4,0);
\draw (r)--++(-0.4,0);
\draw (q)--++(0.4,0);
\draw (q)--++(-0.4,0);
\draw (q2)--++(0,-0.4);
\draw (q0)--++(0,-0.4);
\draw (q1)--++(0,-0.4);
\draw (q3)--++(0,-0.4);
\draw (q)--++(0,0.4)--++(-0.4,0)--++(0,0.7)--++(7,0)--++(0,-0.7)--++(-6.6,0);

\foreach \k in {-0.4,-0.15,0.1, 1.75,2,2.25, 3.6,3.75,3.9}
{
\draw[fill=black] (\k,0.3) circle (0.02);
}

\foreach \k in {-2.35,-2.25,-2.15, -0.85,-0.75,-0.65}
{
\draw[fill=black] (\k,-1) circle (0.02);
}

\node (b0) at (0.25,4.5) {};
\node (c0) at (3.75,4.5) {};

\node[draw=none,fill=none,label=above:\footnotesize{$x_1^i$}] (e) at (0.25,4.5) {};

\node[draw=none,fill=none,label=above:\footnotesize{$x_p^i$}] (e) at (3.75,4.5) {};

\draw[fill=black] (1.75,4.5) circle (0.02);
\draw[fill=black] (2,4.5) circle (0.02);
\draw[fill=black] (2.25,4.5) circle (0.02);
\draw[fill=black] (1.75,2.8) circle (0.02);
\draw[fill=black] (2.25,2.8) circle (0.02);
\draw[fill=black] (2,2.8) circle (0.02);
\draw (c0)--++(0,-1)--++(0.2,0)--++(0,-1.3)--++(-0.4,0)--++(0,1.3)--++(0.2,0);
\draw (b0)--++(0,-1)--++(0.2,0)--++(0,-1.3)--++(-0.4,0)--++(0,1.3)--++(0.2,0);

\draw (q0)--(0.2,2.2);
\draw (q1)--(0.3,2.2);
\draw (q2)--(3.7,2.2);
\draw (q3)--(3.8,2.2);
\node[draw=none,fill=none,label=left:\footnotesize{$A_{f(x_1^i)}$}] (af1) at (0.25,2.75) {};
\node[draw=none,fill=none,label=right:\footnotesize{$A_{f(x_p^i)}$}] (afp) at (3.75,2.75) {};
\end{tikzpicture}
}
\caption{The switching gadget $T$ and the group of vertices $x_s^i$ for $s \in [p]$ before the step of splitting $q$-vertices (left) and after introducing the path $Q$ in the case that $\{\alpha,\beta,\gamma\}$ is not strongly incomparable (right).}\label{fig:spliting-q3}
\end{figure}
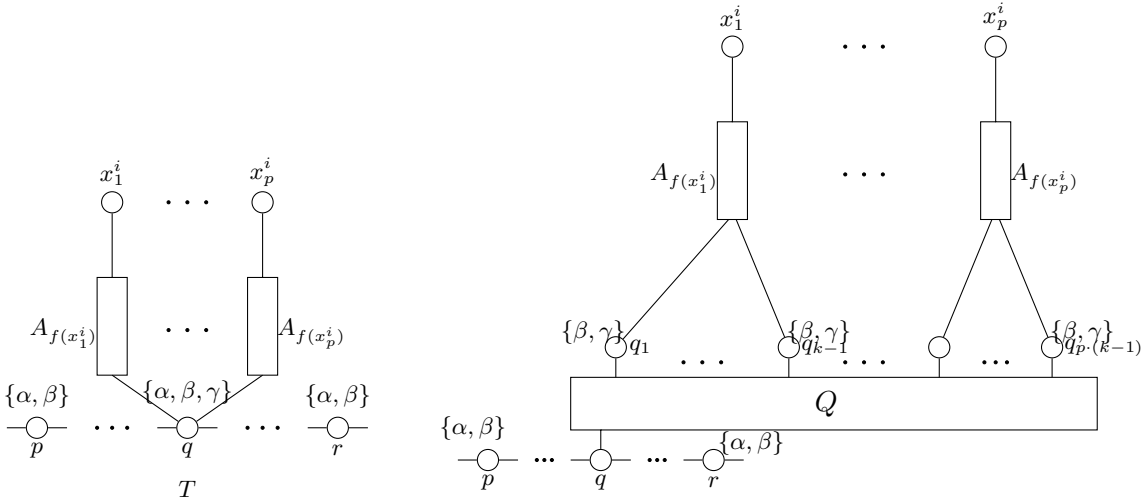

\smallskip
\textbf{Case II: The set $\{\alpha, \beta,\gamma\}$ is not strongly incomparable.} By \cref{obs:walks-between-corners} this means that $H$ contains an induced $C_6$ with consecutive vertices $w_1,\ldots,w_6$ and $\alpha=w_1$, $\beta=w_5$, $\gamma=w_3$, or an induced $C_8$ with consecutive vertices $w_1,\ldots,w_8$ and $\alpha=w_1$, $\beta=w_5$, $\gamma=w_3$.

In this case we leave each $q$-vertex $q$ in the graph, but we introduce a path $Q$ with $H$-lists $L$, with $q$ as one of endvertices, $p \cdot (k-1)$ special vertices $q_{j}$ for $j \in [p\cdot (k-1)]$, with list $L(q_{j})=\{\beta,\gamma\}$ and such that:
\begin{itemize}
\item for every list homomorphism $\vphi: (Q,L) \to H$, if $q$ is mapped to $\gamma$, then for every $j \in [p\cdot (k-1)]$ the vertex $q_{j}$ is mapped to $\gamma$.
\item there exists a list homomorphism $\vphi: (Q,L) \to H$ such that $\vphi(q)=\gamma$ and $\vphi(q_j)=\gamma$ for every $j \in [p\cdot (k-1)]$.
\item for every $c \in \{\alpha,\beta\}$ there exists a list homomorphism $\vphi: (Q,L) \to H$ such that $q$ is mapped to $c$ and for every $j \in [p\cdot (k-1)]$ the vertex $q_{j}$ is mapped to $\beta$,
\item for every distinct $j,s \in [p\cdot(k-1)]$, the distance between the vertices $q_{j}$ and $q_{s}$ is at least $g$. Moreover, the distance between $q_{j}$ and $q$ is at least $g$.
\end{itemize}
The construction of the path $Q$ is shown on \cref{fig:spliting-q2}. Again, for each neighbor $a_j$ of $q$ (the neighbors of $q_C^{i,f}$ are ordered as in the previous case) we add an edge $q_{j} a_j$ and remove the edge $q a_j$ (see \cref{fig:spliting-q3}). 

\smallskip
This completes the Step 1. In both cases we will refer to the newly introduced vertices $q_j$ as $q$-vertices.

\paragraph{Step 2. Splitting $x$-vertices.}  The only vertices that might still have degree larger than $3$ are vertices from  $\{x^i_j \ | \ i \in [t],j\in[p] \}$. More precisely, the degree of the $x$-vertex in an assignment gadget is $(k-1)^2$, and thus the degree of an $x$-vertex $x$ is $d=d(x)\cdot (k-1)^2$, where $d(x)$ is the number of the assignment gadgets, whose $x$-vertex is $x$. We split the vertex $x$ into $d$ vertices $x_1,\ldots,x_d$, each with list $S$. We remove $x$ from the graph. For every $s \in [d-1]$ we introduce a path $X_s$ of even length at least $g$, lists of consecutive vertices $S, S',\ldots, S$, and we identify its endvertices with $x_s$ and $x_{s+1}$, respectively.
We fix an ordering $b_1,\ldots,b_d$ of neighbors of $x$, such that if $b_i$ and $b_j$ belong, respectively, to assignment gadgets $A_i$ and $A_j$, and $A_i$ precedes $A_j$ in the fixed order of the assignment gadgets, then $b_i$ precedes $b_j$. The order of the neighbors from the same assignment gadget is arbitrary. For every $s \in [d]$ we add an edge $b_s x_s$. We will refer to the new vertices $x_j$ introduced in this step  also as $x$-vertices. This completes the construction of $(\tG,\tL)$.

Now let us verify that $(\widetilde{G},\widetilde{L})$ satisfies desired properties. First, let us show the property (1.).

\begin{claim}\label{claim:equiv-ctwproof}
$(\widetilde{G},\widetilde{L})\to H$ if and only if $\phi$ is satisfiable.
\end{claim}
\begin{claimproof}
Recall that $(G,L) \to H$ if and only if $\phi$ is satisfiable. Thus it is sufficient to show that $(\tG,\tL) \to H$ if and only if $(G,L) \to H$. So suppose first that there exists a list homomorphism $\vphi: (G,L) \to H$. We consider two cases, depending on which case in Step 1. was applied.

If the first case in Step 1. was applied, we define $\tphi: (\tG,\tL) \to H$ as follows.
For every vertex $v$ of $G$ that is not a $q$-vertex or an $x$-vertex (note that these vertices are also vertices of $\tG$),
we set $\tphi(v)=\vphi(v)$.
For every $x$-vertex $x_j$ that was introduced in Step 2. for some $x$-vertex $x$ in $G$, we set $\tphi(x_j):=\vphi(x)$ and we extend $\tphi$ on the path $X_j$: we map odd vertices from $X_j$ to $\vphi(x)$ and even vertices to the private neighbor of $\vphi(x)$ from $S'$.
Similarly, we extend $\tphi$ on the new $q$-vertices: for every $q$-vertex $q_j$ that was introduced for a $q$-vertex $q$ in $G$, we set $\tphi(q_{j}):=\vphi(q)$ and we extend $\tphi$ to the other vertices of $Q_j$. This completes the definition of $\tphi$ in this case.

If the second case in Step 1. was applied, we define $\tphi(v):=\vphi(v)$ for every vertex $v$ of $G$ that is a vertex of a path $P_C$, again, these vertices are also vertices of $\tG$.
Then for every $q$-vertex $q$, if $\vphi(q)\in\{\alpha,\beta\}$, we extend $\tphi$ to the vertices of the path $Q$ introduced for $q$ in Step 1., so that every $q$-vertex $q_j$ on $Q$ is mapped to $\beta$.
Otherwise, if $q$ is mapped to $\gamma$, then we extend $\tphi$ to $Q$ so that every $q$-vertex $q_j$ is mapped to $\gamma$.
Then we define $\tphi(x_j):=\vphi(x)$ for every vertex $x_j$ that was introduced for an $x$-vertex $x$ in Step 2. and we extend $\tphi$ to the vertices of the paths $X_j$.
Finally, we extend $\tphi$ to the remaining vertices of the assignment gadgets. Note that this is possible by the way of defining $\tphi$ on the $x$-vertices and $q$-vertices, and by the properties (A2.) and (A3.) from \cref{def:assign-gadget}. This completes the definition of $\tphi$. It is straightforward to verify that in both cases $\tphi$ is a list homomorphism from $(\tG,\tL)$ to $H$.

Suppose now that there exists a list homomorphism $\tphi: (\tG,\tL) \to H$. Again we consider two cases.

If the first case in Step 1. was applied, we construct a list homomorphism $\vphi: (G,L) \to H$ as follows. For every $v$ from $G$ that is not a $q$-vertex, or an $x$-vertex, we set $\vphi(v):=\tphi(v)$. Observe now that for every $x$-vertex $x$ from the graph $G$, all the new $x$-vertices $x_j$ that were introduced for $x$ must be mapped by $\tphi$ to the same vertex of $H$.
That follows from the construction of the paths $X_j$, as lists of their vertices are $S$ and $S'$, and for every $s \in S$ there is exactly one $s'\in S'$ adjacent to $s$.
Similarly, the new $q$-vertices $q_j$ that were introduced for the same $q$-vertex $q$ must be mapped by $\tphi$ to the same vertex of $H$. Thus for every $x$-vertex $x$ in $G$ we set $\vphi(x):=\tphi(x_1)$, where $x_1$ is the first $x$-vertex introduced for $x$ in the Step 2., and for every $q$-vertex $q$ we set $\vphi(q):=\tphi(q_1)$, where $q_1$ is the $q$-vertex introduced for $q$ in the Step 1.
This completes the definition of $\vphi$ in this case.

If the second case in Step 1. was applied, we define $\vphi(v):=\tphi(v)$ for every vertex $v$ of a path $P_C$.
Then for every $x$-vertex $x$ from the graph $G$ we set $\vphi(x):=\tphi(x_1)$, where $x_1$ is the first $x$-vertex introduced for $x$ in Step 2. Finally, we extend $\vphi$ to all remaining vertices of the assignment gadgets. Note that in this case all $q$-vertices $q_j$ introduced for the same $q$-vertex $q$ are mapped by $\tphi$ to the same vertex. Moreover, if $\tphi(q)=\gamma$, then $\tphi(q_j)=\gamma$ for every $q_j$, and if $\tphi(q)\in \{\alpha,\beta\}$, then $\tphi(q_j)=\beta$ for every $q_j$, and $\vphi(x)=\tphi(x_j)$ for every $x_j$ introduced for $x$. Therefore, extending $\vphi$ to the remaining vertices of the assignment gadgets is possible. It is straightforward to verify that in both cases $\vphi$ is a list homomorphism from $(G,L)$ to $H$.
\end{claimproof}

Now let us verify the property (2.).

\begin{claim}\label{claim:ctw}
The cutwidth of $\tG$ is at most $t\cdot p + f(g,H)$, where $f$ is some function of $g$ and $H$.
\end{claim}
\begin{claimproof}
We construct a linear layout of $\tG$ as follows. First we order the original paths $P_C$ (those from graph $G$) according to the fixed order of clauses. Then we order vertices of those paths in a way that vertices from $P_{C_j}$ precede vertices of $P_{C_{j+1}}$, and vertices on each path $P_C$ are ordered in a natural way (the vertex $x_C$ is the first one and the vertex $y_C$ is the last one). Then, depending on the case applied in Step 1., we either replace each $q$-vertex $q$ with vertices $q_j$ and vertices of paths $Q_j$ in the following order: $Q_0,q_1,Q_1,\ldots, q_{p\cdot (k-1)},Q_{p\cdot (k-1)}$ (if Case 1. was applied), or we insert the vertices from the path $Q$ just after $q$, in the natural order with $q$ being the first one.

It only remains to place the new $x$-vertices (i.e., vertices introduced in Step 2.), and the vertices from the assignment gadgets. Consider an arbitrary assignment gadget, whose $x$-vertex before Step 2. was $x$ and the $y$-vertex before Step 1. was $q$.
Observe that in $\tG$ the vertices adjacent to the vertices of the gadget are $x_\ell,\ldots,x_{\ell'}$ and $q_s,\ldots,q_{s'}$, for some $\ell$ and $s$, and where $\ell'=\ell+(k-1)^2-1$ and $s'=s+k-2$. The vertices $q_s,\ldots,q_{s'}$ are consecutive $q$-vertices introduced for $q$ in Step 1. and the vertices $x_\ell,\ldots,x_{\ell'}$ are consecutive $x$-vertices introduced for $x$ in Step 2. We insert the vertices from the gadget, $x$-vertices $x_\ell,\ldots,x_{\ell'}$ and the paths $X_\ell,\ldots,X_{\ell'}$ (also introduced in Step 2. for $x$), just before the group of $q$-vertices $q_s,\ldots,q_{s'}$. The order of the vertices from the gadget is abitrary. The $x$-vertices are ordered $x_\ell,\ldots,x_{\ell'}$ and the vertices from the paths $X_\ell,\ldots,X_{\ell'}$ are in the order of appearing on the path, with $x_j$ being the first one from $X_j$. This completes the construction of the linear layout of $\tG$.

Consider an arbitrary cut of the layout. The edges that can possibly cross this cut are:
\begin{itemize}
\item at most one edge from a path $P_C$,
\item at most one edge from a path $Q$,
\item edges from at most one assignment gadget (together with $(k-1)^2$ edges from the new $x$-vertices and $k-1$ edges from the new $q$-vertices to that gadget)
\item for each $i \in [t], j \in [p]$ at most one edge from the path $X_\ell$ that was introduced for the $x$-vertex $x^i_j$ . 
\end{itemize}
Thus for each cut the number of edges crossing this cut is at most a constant depending on $g$ and $H$, let us denote it by $f(g,H)$, and at most $t \cdot p$ edges from the paths $X_\ell$.
\end{claimproof}

Now let us verify that $\tG \in \Cg$. Note that the fact that $\tG$ is bipartite follows directly from the construction. It remains to show the remaining conditions of the class $\Cg$.

\begin{claim}\label{claim:degree}
The maximum degree of $\tG$ is $3$.
\end{claim}
\begin{claimproof}
Observe that the only vertices of $G$ that might have degree larger than $3$ are $q$-vertices and vertices $x^i_j$ for $i \in [t],j\in[p]$ -- that follows from the definitions of the gadgets and the construction of $G$. In the first step we reduce the degrees of $q$-vertices by splitting them into vertices of degree at most $3$. In the second step we repeat this for vertices $x^i_j$ for $i \in [t],j\in[p]$. Thus the maximum degree of $\tG$ is at most $3$.
\end{claimproof}

\begin{claim}\label{claim:dist}
Vertices of degree $3$ in $\tG$ are pairwise at distance at least $g$.
\end{claim}
\begin{claimproof}
First observe that in all introduced gadgets the distance between the vertices of degree at least $3$ is at least $g$, and the applied steps did not change it. Moreover, the $q$-vertices in $\tG$ are at distance at least $g$, and the $x$-vertices are at distance at least $g$. Finally, internal vertices of degree $3$ from distinct gadgets are at distance at least $g$, since they are at distance at least $g$ from the special vertices of those gadgets and only special vertices of the gadgets are adjacent to any vertices outside the gadgets.
\end{claimproof}

\begin{claim}\label{claim:girth}
The girth of $\tG$ is at least $g$.
\end{claim}
\begin{claimproof}
Recall that  a switching gadget is a path, so it cannot contain a cycle.
Furthermore, every assignment gadget that we introduced in the construction of $\tG$ has girth at least $g$.
Every other cycle in $\tG$, which is not fully contained in an assignment gadget, must contain at least two distinct $x$-vertices, or at least two distinct $q$-vertices, or at least one $q$-vertex and at least one $x$-vertex.
All $x$-vertices and $q$-vertices are pairwise at distance at least $g$, so the claim follows. 
\end{claimproof}

By previous observations and \cref{claim:degree}, \cref{claim:dist}, and \cref{claim:girth}, the property (3.) is satisfied. The property (4.) follows directly from the construction of $(\tG,\tL)$. Obviously, $(\tG,\tL)$ can be constructed in polynomial time.
\end{proof}

Now we will use the construction introduced in \cref{lem:instance-lhom-ctw} to prove \cref{thm:ctw-bipartite-undecomp}. The proof is split into two parts -- first we will prove the statement a), and then the statement b).

\begin{proof}[Proof of \cref{thm:ctw-bipartite-undecomp}~a)]
Let $\eps>0$, let $H$ be an undecomposable, bipartite graph, whose complement is not a circular-arc graph, and let $g \in \N$. Suppose now that $\lhomo H$ can be solved in time $(mim(H)-\eps)^{\ctw{G}} \cdot |V(G)|^{\Oh(1)}$ for every instance $(G,L)$ such that $G\in \Cg$. Let $\phi$ be an instance of \textsc{CNF-Sat} with $n$ variables and $m$ clauses. Let $k:=mim(H)$, observe that since $H$ is not a complement of a circular-arc graph, by \cref{obs:walks-between-corners}~1., it holds that $k\geq 2$. Let $\lambda:=\log_k(k-\eps)$, note that $\lambda<1$, and let $p$ be sufficently large so that $\lambda \cdot \frac{p}{p-1}< 1$. Let $t:=\Big{\lceil} \frac{n}{\lfloor p \cdot \log k \rfloor} \Big{\rceil}$.

We call \cref{lem:instance-lhom-ctw} for $H$, $g$, $\phi$, and $p$ to construct in polynomial time the instance $(\tG,\tL)$ of $\lhomo H$. By \cref{lem:instance-lhom-ctw} (1.) solving the instance $(\tG,\tL)$ of $\lhomo H$ is equivalent to solving the instance $\phi$ of \textsc{CNF-Sat}. Moreover, by \cref{lem:instance-lhom-ctw} (4.) it holds that $|V(\tG)|=(n+m)^{\Oh(1)}$. Thus we can solve the instance $\phi$ in time:
\begin{align*}
& (k-\eps)^{\ctw{\tG}} \cdot |V(\tG)|^{\Oh(1)}=(k-\eps)^{\ctw{\tG}} \cdot (n+m)^{\Oh(1)} \leq (k-\eps)^{t\cdot p + f(g,H)} \cdot (n+m)^{\Oh(1)} = \\
& (k-\eps)^{t \cdot p} \cdot (k-\eps)^{f(g,H)} \cdot (n+m)^{\Oh(1)} =(k-\eps)^{t \cdot p} \cdot (n+m)^{\Oh(1)},
\end{align*}

where the first inequality follows from \cref{lem:instance-lhom-ctw} (2.) and the last equality follows from the fact that $g$ and $|H|$ are constant.
Similarly to equations \cref{eq:reduction1}, \cref{eq:reduction2}, \cref{eq:reduction3}, and \cref{eq:reduction4} we can deduce that the above implies that \textsc{CNF-Sat} can be solved in time $(2-\delta)^n\cdot (n+m)^{\Oh(1)}$ for some $\delta>0$, which contradicts the SETH.
\end{proof}

\begin{proof}[Proof of \cref{thm:ctw-bipartite-undecomp}~b)]
Assume the ETH and let $0<\delta'<1$ be such that 3-\textsc{Sat} cannot be solved in time $2^{\delta'\cdot n}\cdot n^{\Oh(1)}$ for every instance $\phi$ with $n$ variables and $m$ clauses.
Define $\delta:=\frac{\delta'}{2}$. Suppose that there is an undecomposable, bipartite graph $H$, whose complement is not a circular-arc graph, and an algorithm $A$ that solves $\lhomo{H}$ for every instance $(G,L)$ with $G \in \Cg$ in time $mim(H)^{\delta\cdot\ctw{G}}\cdot |V(G)|^{\Oh(1)}$. Let $\phi$ be an instance of 3-\textsc{Sat} with $n$ variables and $m$ clauses.
Let $k:=mim(H)$, again note that $k\geq 2$. Define $\eps'>0$ such that $\eps'<k^{\delta'}-k^{\delta}$ and $\eps:=k-k^{\delta'}+\eps'$. Observe that $\lambda:=\log_k(k-\eps)<\delta'$. We choose $p$ sufficently large so that $\lambda\cdot\frac{p}{p-1}\leq \delta'$ and define $t:=\Big{\lceil} \frac{n}{\lfloor p \cdot \log k \rfloor} \Big{\rceil}$.
As $3$-\textsc{Sat} is a special case of \textsc{CNF-Sat}, we can call \cref{lem:instance-lhom-ctw} for $H$, $g$, $p$, and $\phi$ to obtain an instance $(\tG,\tL)$ of $\lhomo{H}$.
Moreover, for any instance of $3$-\textsc{Sat} with $n$ variables and $m$ clauses, we can assume that $m=n^{\Oh(1)}$. By \cref{lem:instance-lhom-ctw}, solving the instance $(\tG,\tL)$ is equivalent to solving the instance $\phi$ of $3$-\textsc{Sat}. Thus $\phi$ can be solved in time: 
\[
k^{\delta\cdot\ctw{\tG}}\cdot |V(\tG)|^{\Oh(1)}=k^{\delta\cdot\ctw{\tG}}\cdot n^{\Oh(1)}<(k^{\delta'}-\eps')^{\ctw{\tG}}\cdot n^{\Oh(1)}=(k-\eps)^{\ctw{\tG}}\cdot n^{\Oh(1)}. 
\]

By \cref{lem:instance-lhom-ctw}~(2.) the running time is at most $(k-\eps)^{t\cdot p}\cdot n^{\Oh(1)}$. We can provide similar computations as in equations \cref{eq:reduction1}, \cref{eq:reduction2} and \cref{eq:reduction3}; recall that $\lambda\cdot\frac{p}{p-1}\leq \delta'$ so \cref{eq:reduction3} applies. So we can solve any instance of $3$-\textsc{Sat} with $n$ variables in time:

\[
k^{\delta' \cdot \frac{n}{\log k} + \lambda \cdot p} \cdot n^{\Oh(1)}=k^{\delta' \cdot \frac{n}{\log k}} \cdot k^{\lambda \cdot p} \cdot n^{\Oh(1)}=k^{\delta' \cdot \frac{n}{\log k}} \cdot n^{\Oh(1)}= 2^{\delta' \cdot n} \cdot n^{\Oh(1)},
\]

which, by the choice of $\delta'$, contradicts the ETH.
\end{proof}

Observe that the instance $(\tG,\tL)$ constructed in \cref{lem:instance-lhom-ctw} satisfies conditions 1. and 2. of \cref{def:consist-inst}. Similarly as in \cref{cor:consist-hard}, we conclude the following.

\begin{remark}\label{cor:ctw-consist-hard}
\cref{thm:ctw-bipartite} and \cref{thm:ctw-bipartite-undecomp} hold, even if we assume that the instance $(G,L)$ is consistent.
\end{remark}

\subsection{Lower bounds for \lhomo{H}, general target graphs}
Now, similarly as in the proof of \cref{thm:main}~b), we will show that \cref{thm:ctw-bipartite} implies \cref{thm:main-ctw-list-hard}.

\mainctwlhomo*

\begin{proof}
First, assume the SETH and suppose that \cref{thm:main-ctw-list-hard}~a) fails.
Then there exists a connected, non-bi-arc graph $H$, a constant $\eps>0$, $g\in \N$, and an algorithm $A$ that solves $\lhomo H$ for every instance $(G,L)$, such that $G\in \Cg$, in time $(mim^*(H)-\eps)^{\ctw{G}} \cdot |V(G)|^{\Oh(1)}$. 
We can assume that $H$ is non-bipartite, as otherwise the result follows immediately from \cref{thm:ctw-bipartite}.
Consider a consistent instance $(G,L)$ of $\lhomo {H^*}$ such that $G\in \Cg$. Let $(G,L')$ be an instance of $\lhomo H$ obtained as in \cref{prop:bipartite-associted}, clearly it can be constructed in polynomial time. Recall that if $H$ is a non-bipartite, connected, non-bi-arc graph, then $H^*$ is connected, bipartite graph, whose complement is not a circular-arc graph and thus $H^*$ satifies the assumptions of \cref{thm:ctw-bipartite}. Moreover, $mim^*(H)=mim^*(H^*)$.

We can use $A$ to solve the instance $(G,L')$ of $\lhomo H$ in time $(mim^*(H)-\eps)^{\ctw{G}} \cdot |V(G)|^{\Oh(1)}$, which by \cref{prop:bipartite-associted} is equivalent to solving the instance $(G,L)$ of $\lhomo {H^*}$ in time $(mim^*(H^*)-\eps)^{\ctw{G}} \cdot |V(G)|^{\Oh(1)}$, which, by \cref{thm:ctw-bipartite} and \cref{cor:ctw-consist-hard}, contradicts the SETH. That completes the proof of the statement~a).
\medskip

Now assume the ETH and let $\delta$ be the constant from \cref{thm:ctw-bipartite}~b). Suppose there exists a connected non-bi-arc graph $H$ and an algorithm $A$ that solves $\lhomo{H}$ for any instance $(G,L)$, such that $G \in \Cg$, in time $(mim^*(H))^{\delta\cdot \ctw{G}}\cdot |V(G)|^{\Oh(1)}$. Again, we can assume that $H$ is non-bipartite.
 Let $(G,L)$ be a consistent instance of $\lhomo{H^*}$ such that $G\in \Cg$ and let $(G,L')$ be the instance of $\lhomo{H}$ given by  \cref{prop:bipartite-associted}. As $mim^*(H)=mim^*(H^*)$, we can use the algorithm $A$ to solve the instance $(G,L')$ of $\lhomo{H}$, and thus the instance $(G,L)$ of $\lhomo{H^*}$ in time $mim^*(H^*)^{\delta\cdot \ctw{G}}\cdot |V(G)|^{\Oh(1)}$.
By \cref{thm:ctw-bipartite} and \cref{cor:ctw-consist-hard}, this contradicts the ETH. That completes the proof of the statement~b).
\end{proof}

Finally, let us point that that in the proofs of \cref{thm:ctw-bipartite-undecomp} and \cref{thm:main-ctw-list-hard}, we did not claim that the linear layout we constructed is an optimal one, we only cared that its width is upper-bounded by a correct value.
Thus the proof actually yields the following, slightly stronger statement.

\begin{theorem}\label{thm:instance-with-layout}
Let $\mathcal{H}$ be the class of connected non-bi-arc graphs.
For $g \in \N$, let $\mathcal{C}_g$ be the class of subcubic bipartite graphs $G$ with girth at least $g$, such that vertices of degree $3$ in $G$ are at distance at least $g$.
\begin{enumerate}[a)]
\item For every $H \in \mathcal{H}$, there is no algorithm that solves every instance $(G,L)$ of  $\lhomo H$, where $G \in \mathcal{C}_g$, given with a linear layout of width $w$, in time $(mim^*(H)-\epsilon)^{w} \cdot |V(G)|^{\Oh(1)}$ for any $\epsilon > 0$, unless the SETH fails.
\item There exists a constant $0<\delta<1$, such that for every $H \in \mathcal{H}$, there is no algorithm that solves every instance $(G,L)$ of $\lhomo H$, where $G \in \mathcal{C}_g$, given with a linear layout of width $w$, in time $mim^*(H)^{\delta \cdot w} \cdot |V(G)|^{\Oh(1)}$, unless the ETH fails.
\end{enumerate}
\end{theorem}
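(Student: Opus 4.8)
The plan is to observe that the hardness proofs already \emph{produce} the required linear layout, so almost nothing new has to be done beyond bookkeeping. First I would strengthen \cref{lem:instance-lhom-ctw} to say that the instance $(\tG,\tL)$ is returned \emph{together with} a linear layout of $\tG$ of width at most $t\cdot p+f(g,H)$. This layout is exactly the object built in the proof of \cref{claim:ctw}: it orders the vertices of the paths $P_C$ according to the fixed order of clauses, inserts the split $q$-vertices together with the connecting paths $Q_j$ (or the single path $Q$, depending on the case in Step~1) in the prescribed positions, inserts the split $x$-vertices and the paths $X_\ell$ just before the corresponding block of $q$-vertices, and then bounds the number of edges crossing any cut by a constant $f(g,H)$ depending only on $g$ and $H$, plus at most $t\cdot p$ edges coming from the paths $X_\ell$. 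Crucially, that construction never claims the layout is optimal, so it directly gives a \emph{witnessing} layout of the stated width, which we simply output along with $(\tG,\tL)$.

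Next I would rerun the reductions behind \cref{thm:ctw-bipartite-undecomp}~a), \cref{thm:ctw-bipartite-undecomp}~b) and \cref{thm:main-ctw-list-hard} verbatim, with the single change that the hypothetical algorithm is now fed this explicit layout rather than an optimal one. For part a): assuming the SETH, suppose some $H\in\mathcal{H}$ admits an algorithm solving every instance $(G,L)$ with $G\in\mathcal{C}_g$ \emph{given with a linear layout of width $w$} in time $(mim^*(H)-\eps)^{w}\cdot|V(G)|^{\Oh(1)}$. The reduction to the bipartite case (via $H^*$ and \cref{prop:bipartite-associted}) and then to the undecomposable case both leave the host graph $G$ \emph{unchanged} — they modify only the target and the lists — so any supplied layout of $G$ is carried along unchanged. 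Then the strengthened \cref{lem:instance-lhom-ctw} produces $(\tG,\tL)$ together with a layout of width $w=t\cdot p+f(g,H)$; applying the assumed algorithm to this pair decides satisfiability of $\phi$ in time $(k-\eps)^{t\cdot p+f(g,H)}\cdot(n+m)^{\Oh(1)}=(k-\eps)^{t\cdot p}\cdot(n+m)^{\Oh(1)}$ since $f(g,H)$ is a constant. The chain of estimates \cref{eq:reduction1}--\cref{eq:reduction4} then yields a $(2-\delta)^{n}\cdot(n+m)^{\Oh(1)}$ algorithm for \textsc{CNF-Sat}, contradicting the SETH. Part b) is identical, invoking instead the ETH-flavoured computation from the proof of \cref{thm:ctw-bipartite-undecomp}~b) with the universal constant $\delta$ from \cref{thm:ctw-bipartite}~b).

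There is no genuinely hard step here; the one point to check carefully is that \emph{every} step in the equivalence chain \cref{thm:main-ctw-list-hard}$\leftrightarrow$\cref{thm:ctw-bipartite}$\leftrightarrow$\cref{thm:ctw-bipartite-undecomp}, together with \cref{prop:bipartite-associted} and the reduction to consistent instances behind \cref{cor:ctw-consist-hard}, acts on the target $H$ (or on the lists) only and therefore preserves $G$ together with any given linear layout. Once this is noted, the theorem follows because in the original proofs only an \emph{upper} bound on $\ctw{\tG}$ is ever used — never its exact value — so replacing $\ctw{G}$ by the width $w$ of a supplied (possibly suboptimal) layout does not affect any estimate.
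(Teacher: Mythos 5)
Your proposal is correct and matches the paper's own argument: the paper proves \cref{thm:instance-with-layout} exactly by remarking that the layout constructed in the proof of \cref{claim:ctw} witnesses the width bound, that only an upper bound on the width is ever used in the SETH/ETH calculations, and that the target/list-only reductions (undecomposable subgraph, \cref{prop:bipartite-associted}, consistency) leave $G$ and hence any supplied layout untouched. Your write-up just makes this one-paragraph observation explicit, so there is nothing to correct.
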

\subsection{Lower bounds for \homo{H}} \label{sec:nonlist}
In this section we extend \cref{thm:main-ctw-list-hard} to the non-list case, i.e., we prove \cref{thm:main-ctw-nonlist-hard}.

\mainctwhomo*

First, let us define the graph class mentioned in the statement.
Recall that by the result of Hell and Ne\v{s}et\v{r}il~\cite{DBLP:journals/jct/HellN90}, the \homo{H} problem is NP-hard if $H$ nonbipartite and has no loops. In particular, this implies that $H$ has at least three vertices.
We say that a graph $H$ is a \emph{core} if every homomorphism $\vphi: H \to H$ is an automorphism, i.e., is injective and surjective. It is well-known that in order to understand the complexity of \homo{H}, it is sufficient to consider the case if $H$ is a core~\cite{DBLP:journals/dm/HellN92,hell2004graphs}.
The following observation is straightforward.
\begin{observation}\label{obs:core-incomp}
Let $H$ be a core. Then every two distinct vertices of $H$ are incomparable.
\end{observation}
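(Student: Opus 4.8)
The plan is to argue by contradiction directly from the definition of a core. Suppose $H$ is a core but has two distinct vertices $u \neq v$ that are comparable; without loss of generality $N(u) \subseteq N(v)$. (If $|V(H)| = 1$ there is nothing to prove, so assume $|V(H)| \geq 2$.) The key step is to exhibit a self-homomorphism of $H$ that is not injective, which will contradict the core property.

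First I would define $\vphi \colon V(H) \to V(H)$ by setting $\vphi(u) := v$ and $\vphi(w) := w$ for every $w \in V(H) \setminus \{u\}$. Then I would check that $\vphi$ is a homomorphism by examining an arbitrary edge $xy \in E(H)$ case by case: if $u \notin \{x,y\}$ then $\vphi(x)\vphi(y) = xy \in E(H)$; if exactly one endpoint equals $u$, say $x = u$, then from $uy \in E(H)$ we get $y \in N(u) \subseteq N(v)$, so $\vphi(x)\vphi(y) = vy \in E(H)$; and if $x = y = u$, i.e.\ there is a loop at $u$, then $u \in N(u) \subseteq N(v)$ gives $uv \in E(H)$, hence also $v \in N(u) \subseteq N(v)$, so $vv \in E(H)$ and thus $\vphi(u)\vphi(u) \in E(H)$.

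Finally, since $\vphi(u) = v = \vphi(v)$ while $u \neq v$, the map $\vphi$ is not injective, so it is not an automorphism of $H$ --- contradicting the assumption that $H$ is a core. Hence any two distinct vertices of a core are incomparable.

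I do not expect a genuine obstacle here; the argument is short. The one spot requiring care is the case of a loop at $u$ (the paper explicitly permits loops, encoding them via $v \in N(v)$): there one must observe that $N(u) \subseteq N(v)$ forces a loop at $v$ as well, which is precisely what makes $\vphi$ preserve that edge.
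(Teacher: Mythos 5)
Your proof is correct and is precisely the standard retraction argument the paper has in mind when it calls the observation ``straightforward'' (the paper itself omits any proof): map the dominated vertex onto the dominating one, keep everything else fixed, and obtain a non-injective endomorphism contradicting the core property. Your careful handling of a loop at $u$ (deducing a loop at $v$ from $N(u)\subseteq N(v)$) is exactly the point that needs checking in the setting with loops, and you got it right.
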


The class considered in \cref{thm:main-ctw-nonlist-hard} are the so-called \emph{projective graphs}.
The definition of these graphs is technical, so we will skip it (see e.g. Hell, Ne\v{s}et\v{r}il~\cite[Section 2.7]{hell2004graphs}).
The following equivalent characterization is much more relevant to our paper, see also \cite{OkrasaSODA}.

\begin{theorem}[Larose, Tardif~\cite{larose2001strongly}]\label{lem:list-gadget}
Let $H$ be graph with at least three vertices. The following are equivalent:
\begin{enumerate}
\item $H$ is projective.
\item For every  $L\subseteq V(H)$ there exist a tuple $(x_1,\ldots,x_\ell)$ of vertices in $H$ and a graph $F_L$ with a tuple of its vertices $(y_0,y_1,\ldots,y_{\ell})$ such that $L = \{ \vphi(y_0)  \ | \ \vphi : F_L \to H, \text{such that}\ \vphi(y_1)=x_1, \ldots , \vphi(y_{\ell})=x_\ell \}$.
\end{enumerate}
\end{theorem}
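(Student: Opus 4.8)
The plan is to read condition~(2) as a statement about \emph{polymorphisms}. Let $\mathbb{H}^{c}$ be the relational structure on $V(H)$ whose relations are $E(H)$ together with all singletons $\{v\}$, $v\in V(H)$. After identifying vertices that are forced to coincide, a graph $F_L$ with distinguished vertices $y_0,y_1,\dots,y_\ell$ and prescribed values $\varphi(y_j)=x_j$ is exactly a primitive positive formula over $\mathbb{H}^{c}$ with free variable $y_0$ (edges of $F_L$ are atoms $E(\cdot,\cdot)$, pinnings are atoms ``$\cdot = x_j$'', non-distinguished vertices are existentially quantified). Thus~(2) says precisely that every subset of $V(H)$ is pp-definable over $\mathbb{H}^{c}$. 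On the algebraic side, the polymorphisms of $\mathbb{H}^{c}$ are exactly the idempotent polymorphisms of $H$ (the singletons force $f(v,\dots,v)=v$), and I will use the classical fact that for a graph $H$, ``$H$ is projective'' is equivalent to ``$H$ is a core and every idempotent polymorphism of $H$ is a projection'': the case $n=1$ of projectivity forces $H$ to be a core, while conversely for a core $H$ and any $f\colon H^{\otimes n}\to H$ the diagonal $x\mapsto f(x,\dots,x)$ is an endomorphism, hence an automorphism $\sigma$, and $\sigma^{-1}\circ f$ is an idempotent polymorphism, hence a projection, so $f=\sigma\circ\pi_i$.

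For (1)$\Rightarrow$(2): if $H$ is projective then $\operatorname{Pol}(\mathbb{H}^{c})$ is the clone of projections, so by the Galois correspondence between primitive positive definability and polymorphisms for finite relational structures (Bodnarchuk--Kaluznin--Kotov--Romov, Geiger) every relation over $V(H)$ --- in particular every subset $L$ --- is pp-definable over $\mathbb{H}^{c}$. Translating the pp-definition of $L$ back into a graph via the dictionary above yields the desired gadget $F_L$, the tuple $(x_1,\dots,x_\ell)$, and the distinguished vertices $(y_0,\dots,y_\ell)$; the empty set is handled separately by a gadget admitting no valid homomorphism (e.g.\ two pinned vertices $y_1\mapsto u$, $y_2\mapsto w$ with $u\ne w$ joined by an edge, using $|V(H)|\ge 2$). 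This direction is a translation exercise.

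For (2)$\Rightarrow$(1): the useful half is elementary. If $f\colon H^{\otimes n}\to H$ is an idempotent polymorphism and $F_L$ is a gadget for $L$, then for any homomorphisms $\varphi_1,\dots,\varphi_n\colon F_L\to H$ respecting the pinning, the map $w\mapsto f(\varphi_1(w),\dots,\varphi_n(w))$ is again a homomorphism $F_L\to H$ and, by idempotency, respects the pinning; evaluating at $y_0$ shows $f(a_1,\dots,a_n)\in L$ whenever $a_1,\dots,a_n\in L$. Since this holds for all $L$, every idempotent polymorphism of $H$ is \emph{conservative} (it always returns one of its arguments). With $n=1$ this gives that the only idempotent endomorphism of $H$ is the identity, so $H$ has no proper retraction and is a core; a core with at least three vertices is connected, irreflexive and non-bipartite. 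It then remains to upgrade ``conservative'' to ``projection'' for every idempotent polymorphism, and the equivalence recalled in the first paragraph finishes the proof.

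I expect this last step to be the main obstacle, and it is where one must genuinely use the graph structure rather than conservativity alone: in a general algebra a conservative operation need not be a projection (a majority operation is conservative), and even restricting a conservative polymorphism to a single edge of $H$ need not force a projection. One argues instead that a conservative non-projection idempotent polymorphism cannot coexist with $H$ being a core on at least three vertices --- following the combinatorial analysis of conservative polymorphisms of graph cores due to Larose and Tardif, which is tightly linked to the NP-hardness of \homo{H} for connected non-bipartite irreflexive $H$ (Hell, Ne\v{s}et\v{r}il). Granting this rigidity statement, $\operatorname{Pol}(\mathbb{H}^{c})$ consists only of projections, so $H$ is a core all of whose idempotent polymorphisms are projections, hence $H$ is projective.
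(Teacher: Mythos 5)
Your reduction of condition (2) to the statement ``every idempotent polymorphism of $H$ is conservative'' is correct, and the translation between gadgets with pinned vertices and primitive positive definability over $H$ with constants is the standard dictionary; the direction (1)$\Rightarrow$(2) via the Pol--Inv Galois correspondence is fine (note that the paper itself does not prove this theorem but imports it from Larose and Tardif, so there is no internal proof to compare against). The problem is that your (2)$\Rightarrow$(1) stops exactly where the theorem begins. After the routine step you are left with: \emph{a graph on at least three vertices all of whose idempotent polymorphisms are conservative has only projections as idempotent polymorphisms}. You explicitly ``grant this rigidity statement'' by appealing to ``the combinatorial analysis of conservative polymorphisms of graph cores due to Larose and Tardif'' --- but that analysis is precisely the content of the result you are asked to prove, so the argument is circular as written. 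The gap is not cosmetic: on a two-element domain every idempotent operation is automatically conservative, yet $K_2$ is not projective (e.g.\ $x\oplus y\oplus z$ is an idempotent non-projection polymorphism), so conservativity alone never yields projectivity; the whole weight of the theorem lies in the deferred step, where the hypothesis $|V(H)|\geq 3$ and the graph structure must actually be used. A vague link to the Hell--Ne\v{s}et\v{r}il hardness theorem does not supply this, since NP-hardness of \homo{H} says nothing about the idempotent clone consisting only of projections.

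A secondary flaw: your claim that the case $n=1$ shows $H$ is a core is not valid. A unary idempotent polymorphism (in the sense $f(x)=x$ on the diagonal) is the identity for \emph{every} graph, so this observation cannot exclude proper retractions; a retraction $r$ satisfies $r\circ r=r$, not $r=\mathrm{id}$, and is not touched by conservativity of idempotent operations. That $H$ must be a core (equivalently, that condition (2) forces it) again requires an argument, which in Larose--Tardif comes out of the same structural analysis you have skipped. So the proposal correctly packages the easy translations but does not prove the theorem.
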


Now we are ready to prove \cref{thm:main-ctw-nonlist-hard}.

\begin{proof}[Proof of \cref{thm:main-ctw-nonlist-hard}]
Let $H$ be a non-bipartite projective core without loops. Note that $H$ contains an induced odd cycle $C_s$ for $s\geq 3$ and thus $H^*$ contains an induced cycle of length $2s\geq 6$, so $H^*$ is not a complement of a circular-arc graph, and therefore $H$ is not a bi-arc graph.

We will reduce from \lhomo{H}, whose hardness was proven in \cref{thm:main-ctw-list-hard}. Actually, we will use the stronger version stated in \cref{thm:instance-with-layout}.
Let $(G,L)$ be an instance of $\lhomo H$, and let $\pi=(v_1,\ldots,v_{|V(G)|})$ be a linear layout of $G$ of width $w$.
Consider an instance $\tG$ of $\homo H$ constructed as follows.
For every $v_i \in V(G)$ we call \cref{lem:list-gadget} to obtain the tuple $(x_1^{(i)},\ldots,x_{\ell_i}^{(i)})$ of vertices in $H$ and a graph $F_{L(v_i)}$ with special vertices $y_0^{(i)},\ldots,y_{\ell_i}^{(i)}$.
For every $v_i$ we introduce a copy $H^{(i)}$ of the graph $H$ and identify vertices $y_1^{(i)},\ldots, y_{\ell_i}^{(i)}$, respectively with $x_1^{(i)},\ldots,x_{\ell_i}^{(i)}$ in the copy $H^{(i)}$.
Moreover, we identify $y_0^{(i)}$ with $v_i$.
Finally, for every $i \in [|V(G)|-1]$ we add edges between the copies $H^{(i)}$ and $H^{(i+1)}$ as follows.
For every vertex $z^{(i)}$ in $H^{(i)}$ and its corresponding copy $z^{(i+1)}$ in $H^{(i+1)}$ we add all edges between $z^{(i)}$ and $N_{H^{(i+1)}}(z^{(i+1)})$. This completes the construction of $\tG$.

Let us show the following properties of $\tG$:
\begin{enumerate}[(1)]
\item $(G,L) \to H$ if and only if $\tG \to H$,
\item $\ctw{\tG} \leq w + g(H)$, where $g$ is some function of $H$,
\item $|V(\tG)|=|V(G)|^{\Oh(1)}$,
\item $\tG$ can be constructed in polynomial time.
\end{enumerate}

To see property (1), consider a list homomorphism $\vphi: (G,L) \to H$. We define $\tphi: \tG \to H$ for every $v_i \in V(G)$ as $\tphi(v_i):=\vphi(v_i)$. Since $\vphi$ is a list homomorphism, for every $v_i \in V(G)$ it holds $\vphi(v_i)\in L(v_i)$. Observe that by \cref{lem:list-gadget} we can extend $\tphi$ to the rest vertices of $F_{L(v_i)}$ so that $\tphi(y_1^{(i)})=x_1^{(i)}, \ldots, \tphi(y_{\ell_i}^{(i)})=x_{\ell_i}^{(i)}$. Then we can extend $\tphi$ to every copy $H^{(i)}$ as the identity function. Observe that the edges between distinct copies of $H$ are preserved by $\tphi$ and thus $\tphi$ is a homomorphism from $\tG$ to $H$. 

So suppose now that there exists a homomorphism $\tphi: \tG \to H$. We define $\vphi:=\tphi|_{V(G)}$. Recall that since $H$ is a core, the mapping $\tphi$ restricted to each copy of $H$ is an automorphism.
By \cref{lem:list-gadget}, in order to show that $\vphi$ respects lists $L$, it is sufficent to show that every copy of $H$ is colored by $\tphi$ according to the same automorphism.
Consider copies $H^{(i)}$ and $H^{(i+1)}$ for some $i \in [|V(G)|-1]$.
Let $z^{(i)}$ be an arbitrary vertex of  $H^{(i)}$ and let $z^{(i+1)}$ be its corresponding vertex in $H^{(i+1)}$,
and suppose that for $s:= \tphi(z^{(i)})$ and $u:= \tphi(z^{(i+1)})$ we have $s \neq u$. 
Since $H$ is a core, $\tphi$ is an automorphism on $H^{(i+1)}$ and thus the image of $N_{H^{(i+1)}}(z^{(i+1)})$ is precisely the set $N_H(u)$.
Moreover, as $z^{(i)}$ is adjacent to every vertex in $N_{H^{(i+1)}}(z^{(i+1)})$, we observe that all vertices of the image of $N_{H^{(i+1)}}(z^{(i+1)})$ must be adjacent to $s$ in $H$.
This means $N_H(u) \subseteq N_H(s)$, which, by \cref{obs:core-incomp}, contradicts that $H$ is a core.

Now let us show the property (2). We modify the linear layout $\pi$ of $G$ to obtain a linear layout $\tpi$ of $\tG$ as follows.
For every $v_i \in V(G)$ we insert the vertices of $F_{L(v_i)}$ and the copy $H^{(i)}$ just after $v$ (the order of these vertices is arbitrary). Consider an arbitrary cut of $\tpi$. The edges crossing the cut might be:
\begin{itemize}
\item at most $w$ edges from $G$,
\item edges from at most one gadget $F_{L(v)}$ and at most two copies of $H$ (including edges between copies and between the gadget and a copy). 
\end{itemize}
Recall that the size of $F_{L(v)}$ for $v \in V(G)$ depends only on $H$ and there might be at most $2^{|H|}$ different lists $L(v)$.
Thus there is some function of $H$ bounding the size of each gadget $F_{L(v)}$.
Thus we conclude that $\ctw{\tG} \leq w + g(H)$, where $g$ is some function of $H$.

Properties (3) and (4) follow directly from the construction of $\tG$.

Now suppose that $\homo H$ can be solved for every instance $G'$ in time $(mim^*(H)-\eps)^{\ctw{G'}}\cdot |V(G')|^{\Oh(1)}$ for some $\eps>0$.
Then, for an instance $(G,L)$ of $\lhomo H$ with a linear layout $\pi$ of width $w$, we can construct in polynomial time the instance $\tG$ of $\homo H$ as above. We solve the instance $\tG$ in time:
\begin{align*}
& (mim^*(H)-\eps)^{\ctw{\tG}}\cdot |V(\tG)|^{\Oh(1)} \leq (mim^*(H)-\eps)^{w+g(H)}\cdot |V(G)|^{\Oh(1)}= \\
& (mim^*(H)-\eps)^{w}\cdot (mim^*(H)-\eps)^{g(H)}\cdot |V(G)|^{\Oh(1)}=(mim^*(H)-\eps)^{w}\cdot |V(G)|^{\Oh(1)},
\end{align*}
which is equivalent to solving the instance $(G,L)$ with a linear layout $\pi$ of width $w$ in time $(mim^*(H)-\eps)^{w}\cdot |V(G)|^{\Oh(1)}$. By \cref{thm:instance-with-layout}~a) it contradicts the SETH, so the statement~a) follows.
\medskip

Now let $\delta$ be the constant from \cref{thm:main-ctw-list-hard} (note that $\delta$ does not depend on $H$) and suppose that $\homo H$ can be solved for every instance $G'$ in time $(mim^*(H))^{\delta \cdot \ctw{G'}}\cdot |V(G')|^{\Oh(1)}$.
Again, in order to solve an instance $(G,L)$ of $\lhomo H$ with a linear layout $\pi$ of width $w$, we can solve the constructed instance $\tG$ of $\homo H$ in time:
\begin{align*}
& (mim^*(H))^{\delta \cdot \ctw{\tG}}\cdot |V(\tG)|^{\Oh(1)} \leq (mim^*(H))^{\delta \cdot (w+g(H))}\cdot |V(G)|^{\Oh(1)}= \\
& (mim^*(H))^{\delta\cdot w}\cdot (mim^*(H))^{\delta\cdot g(H)}\cdot |V(G)|^{\Oh(1)}=(mim^*(H))^{\delta\cdot w}\cdot |V(G)|^{\Oh(1)},
\end{align*}

which is equivalent to solving the instance $(G,L)$ with $\pi$ in time $(mim^*(H))^{\delta\cdot w}\cdot |V(G)|^{\Oh(1)}$. By \cref{thm:instance-with-layout}~b) it contradicts the ETH, so the statement~b) follows.
\end{proof}

%

\newpage
\section{Construction of the gadgets}\label{sec:constr-of-gadgets}
In this section we show how to construct the gadgets that we used in our hardness reductions. First, let us show how we use walks for building gadgets. For a set $\DD = \{\cD_i\}_{i=1}^k$ of walks of equal length $\ell \geq 1$, let $\PP(\DD)$ be the path with $\ell+1$ vertices $p_1,\ldots,p_{\ell+1}$, equipped with $H$-lists, such that the list of $p_i$ is the set of $i$-th vertices of walks in $\DD$.
The vertex $p_1$ will be called the \emph{input vertex} and $p_{\ell+1}$ will be called the \emph{output vertex}.

\begin{lemma}[\cite{LhomoTreewidth,LhomoTreewidthFull}]\label{lem:gadget-of-walks}
Let $\DD = \{\cD_i\}_{i=1}^k$ be a set of walks $\cD_i:s_i\to t_i$ of equal length $\ell \geq 1$.
Let $A,B$ be a partition of $\DD$ into two non-empty sets. Moreover, for $C \in \{A,B\}$, define $S(C) = \{s_i \colon \cD_i \in C\}$ and $T(C) = \{t_i \colon \cD_i \in C\}$. Suppose that $S(A) \cap S(B) = \emptyset$ and $T(A) \cap T(B) = \emptyset$, and every walk in $A$ avoids every walk in $B$. Then $\PP(\DD)$ with the input vertex $x$, the output vertex $y$, and lists $L$, has the following properties:
\begin{enumerate}[(a)]
\item $L(x) = S(A) \cup S(B)$ and $L(y) = T(A) \cup T(B)$,
\item for every $i \in [k]$ there is a list homomorphism $f_i \colon \PP(\DD) \to H$, such that $f_i(x) = s_i$ and $f_i(y)=t_i$,
\item for every list homomorphism $f \colon \PP(\DD) \to H$, if $f(x)\in S(A)$, then $f(y) \notin T(B)$.
\end{enumerate}
Furthermore, if every walk in $B$ avoids every walk in $A$, we additionally have 
\begin{enumerate}[(d)]
\item for every list homomorphism $f \colon \PP(\DD) \to H$, if $f(x)\in S(B)$, then $f(y) \notin T(A)$.
\end{enumerate}
\end{lemma}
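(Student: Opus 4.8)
\textbf{Proof plan for \cref{lem:gadget-of-walks}.}

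The plan is to verify each of the four listed properties directly from the definitions of $\PP(\DD)$ and of walk-avoidance, reading off the behaviour of list homomorphisms on the path $p_1,\ldots,p_{\ell+1}$ vertex by vertex. Property (a) is immediate: by construction the list $L(p_i)$ is precisely the set of $i$-th vertices appearing in the walks of $\DD$, so $L(p_1)=\{s_i\}_i=S(A)\cup S(B)$ and $L(p_{\ell+1})=\{t_i\}_i=T(A)\cup T(B)$. For property (b), given $i\in[k]$ I would simply define $f_i(p_j)$ to be the $j$-th vertex of the walk $\cD_i$; since $\cD_i$ is a walk in $H$, consecutive images are adjacent, so $f_i$ is a homomorphism, and each image lies in the appropriate list by definition of $\PP(\DD)$. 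This also shows $f_i(p_1)=s_i$ and $f_i(p_{\ell+1})=t_i$.

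The substantive part is property (c) (and, symmetrically, (d)). Suppose $f\colon \PP(\DD)\to H$ is a list homomorphism with $f(p_1)=s_i$ for some $\cD_i\in A$, and suppose for contradiction that $f(p_{\ell+1})=t_j$ with $\cD_j\in B$. I would track the index set of walks ``still consistent with $f$'' as we move along the path. More precisely: since $S(A)\cap S(B)=\emptyset$, from $f(p_1)=s_i\in S(A)$ we know $f(p_1)\notin S(B)$. The key inductive claim is that for every $m\in[\ell+1]$, $f(p_m)$ is not the $m$-th vertex of any walk in $B$. The base case $m=1$ holds because $f(p_1)\notin S(B)$. For the inductive step, assume $f(p_m)$ is not the $m$-th vertex of any walk in $B$; we want to conclude the same for $p_{m+1}$. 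Here is where avoidance enters, but there is a subtlety: avoidance of $\cD\in A$ over $\cD'\in B$ says the $m$-th vertex of $\cD$ is non-adjacent to the $(m+1)$-th vertex of $\cD'$ — so I actually need to track the walks of $A$ that $f$ agrees with, not just exclude $B$. So I would instead prove the paired invariant that for every $m$, $f(p_m)$ is the $m$-th vertex of some walk in $A$ \emph{and} is not the $m$-th vertex of any walk in $B$; the first half gives, via avoidance, that $f(p_{m+1})$ is non-adjacent to the $(m+1)$-th vertex of every walk in $B$, hence (since $f(p_m)f(p_{m+1})\in E(H)$) $f(p_{m+1})$ cannot equal any such vertex; and the membership of $f(p_{m+1})$ in $L(p_{m+1})$ together with this exclusion forces $f(p_{m+1})$ to be the $(m+1)$-th vertex of a walk in $A$. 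Taking $m=\ell+1$ contradicts $f(p_{\ell+1})=t_j\in T(B)$ (here $T(A)\cap T(B)=\emptyset$ is used to make the ``not in $B$'' conclusion bite). Property (d) follows by the symmetric argument with the roles of $A$ and $B$ swapped, which is legitimate exactly under the extra hypothesis that every walk in $B$ avoids every walk in $A$.

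The main obstacle is getting the induction invariant right: the naive ``$f(p_m)\notin$ ($m$-th vertices of $B$)'' is not self-propagating, because avoidance is a statement relating one position of an $A$-walk to the next position of a $B$-walk, so one must simultaneously certify that $f(p_m)$ lies on some $A$-walk. Once the correct two-part invariant is identified, each step is a one-line check using (i) that $f$ is edge-preserving, (ii) the definition of avoidance, and (iii) that $f(p_m)\in L(p_m)$ equals the union of $m$-th vertices of $A$-walks and $B$-walks with these two families disjoint at the endpoints. I would also remark that the disjointness hypotheses $S(A)\cap S(B)=\emptyset$ and $T(A)\cap T(B)=\emptyset$ are exactly what is needed at $m=1$ and $m=\ell+1$ respectively, while at intermediate positions the lists of $A$- and $B$-walks may overlap freely — the argument never needs disjointness there.
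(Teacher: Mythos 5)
Your proof is correct: the two-part invariant (at every position $m$, the image $f(p_m)$ is the $m$-th vertex of some walk in $A$ and of no walk in $B$), propagated along the path via edge-preservation and the definition of avoidance, is exactly the standard argument for \cref{lem:gadget-of-walks}, which this paper imports from the cited works without reproving. Only your parenthetical about needing $T(A)\cap T(B)=\emptyset$ at the last step is superfluous — the second conjunct of your invariant at $m=\ell+1$ already yields $f(y)\notin T(B)$ directly — but this does not affect correctness.
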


Let $\PP_1,\PP_2$ be gadgets defined as above, such that lists of the output vertex of $\PP_1$ and the input vertex of $\PP_2$ are equal.
We define a \emph{composition} of $\PP_1$ and $\PP_2$ as the graph $\PP$ obtained by identifying the output vertex of $\PP_1$ and the input vertex of $\PP_2$. The input and output of $\PP$ are respectively the input of $\PP_1$ and the output of $\PP_2$.
\medskip

Let $H$ be a graph and let $R\subseteq V(H)^k$ be a $k$-ary relation on $V(H)$. We define an $R$-gadget as a graph $F$ with $H$-lists $L$ and $k$ special vertices $x_1,\ldots,x_k$, called \emph{interface} vertices, such that:
\[
R=\{(f(x_1),\ldots,f(x_k)) \  | \ \ f: (F,L) \to H \}.
\] 

Let $H$ be a bipartite graph, whose complement is not a circular-arc graph and let $(\alpha,\beta,\gamma)$ be the triple of vertices in $H$ given by \cref{obs:walks-between-corners}. We will consider two special $k$-ary relations: $\ork k = \{\alpha,\beta\}^k \setminus \{\alpha\}^k$ and $\nand k = \{\alpha,\beta\}^k \setminus \{\beta\}^k$. The intuition behind the names is that we think of $\alpha$ as false and of $\beta$ as true. We will use the $\ork k$- and the $\nand 2$-gadgets to construct the assignment gadget.

The existence of the $\ork k$- and $\nand 2$-gadgets was proved in \cite{LhomoTreewidth,LhomoTreewidthFull}. However, we will show a sligthly different proof to obtain certain structure of the gadgets.

\begin{lemma}[\cite{LhomoTreewidth,LhomoTreewidthFull}]\label{lem:relationgadgets}
Let $H$ be a bipartite graph, whose complement is not a circular-arc graph, let $(\alpha,\beta,\gamma)$ be defined as in \cref{obs:walks-between-corners}, and let $g \in \N$. For every $k \geq 2$ there exists an $\ork{k}$-gadget, and an $\nand{2}$-gadget such that:
\begin{enumerate}[(1.)]
\item The $\nand{2}$-gadget is a path of length at least $g$ with endvertices as interface vertices.
\item The $\ork{k}$-gadget is a tree and every interface vertex of the $\ork{k}$-gadget is a leaf.
\item The maximum degree of the $\ork k$-gadget is at most $3$.
\item For any distinct vertices $a,b$ in the $\ork k$-gadget such that $\deg(a)=\deg(b)=3$ it holds $\dist(a,b)\geq g$.
\end{enumerate}
\end{lemma}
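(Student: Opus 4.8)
The plan is to build all the required gadgets explicitly from walks, using \cref{obs:walks-between-corners} and \cref{lem:gadget-of-walks}, and then to subdivide edges to fix up girth/distance conditions. Recall that $\alpha,\beta,\gamma$ lie in one bipartition class $X$ and, by item 3 of \cref{obs:walks-between-corners}, there are walks $\cX,\cX':\alpha\to\beta$ and $\cY,\cY':\beta\to\alpha$ with $\cX$ avoiding $\cY$ and $\cY'$ avoiding $\cX'$. Since all four walks have the same parity of length, by repeating a closed walk at $\alpha$ (or at $\beta$) we may assume they all have the same length $\ell$; moreover by adding a fixed even closed walk we may take $\ell$ as large as we want, which will be how we meet the ``$\geq g$'' requirements later.

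First I would construct the $\nand{2}$-gadget. Take the set $\DD=\{\cX,\overline{\cY}\}$ of two walks of common length $\ell$: $\cX:\alpha\to\beta$ and $\overline{\cY}:\alpha\to\beta$. Hmm — this has the same start; instead take $\DD=\{\cX,\cY'\}$, so $\cX:\alpha\to\beta$ and $\cY':\beta\to\alpha$, with $S(A)=\{\alpha\}$, $S(B)=\{\beta\}$, $T(A)=\{\beta\}$, $T(B)=\{\alpha\}$. Since $\cX$ avoids $\cY$ we get one direction of the avoidance; to also get the reverse avoidance needed for property (d) we should instead pair walks that avoid each other in both directions — and here is where one must be slightly careful about which of the four walks to combine, possibly composing $\PP(\{\cX,\cY'\})$ with $\PP(\{\overline{\cX'},\overline{\cY}\})$ (using that $\cP$ avoids $\cQ$ implies $\overline{\cQ}$ avoids $\overline{\cP}$). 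The composition $\PP$ then is a path from an input vertex with list $\{\alpha,\beta\}$ to an output with list $\{\alpha,\beta\}$ realizing exactly the relation $\{\alpha,\beta\}^2\setminus\{\beta\}^2$: by \cref{lem:gadget-of-walks}(b) both $(\alpha,\alpha)$ and $(\alpha,\beta)$ and $(\beta,\alpha)$ are attainable, while (c)/(d) forbid $(\beta,\beta)$. A path automatically has maximum degree $2$ and arbitrarily long length, so conditions 1 is immediate. For the $\ork k$-gadget, I would iterate: build a ``binary OR tree'' by combining $\ork 2$-gadgets — an $\ork 2$-gadget is again a short gadget from two-walk families $\{\cX,\cY'\}$-type constructions realizing $\{\alpha,\beta\}^2\setminus\{\alpha\}^2$ — arranging $k-1$ copies in a binary tree on the $k$ interface leaves, identifying the output of one with an input of the next. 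The result is a tree; every interface vertex is a leaf; internal ``combiner'' vertices have degree $3$; and by choosing the walks long enough (inflating $\ell$) the branch/combiner vertices are pairwise at distance $\geq g$.

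The remaining work, and the only genuinely fiddly part, is simultaneously controlling girth and the pairwise distances between all degree-$3$ vertices. The $\PP(\DD)$-construction yields paths, so cycles only ever appear when we identify vertices during compositions or when building the OR-tree. I would handle this by: (i) choosing the common walk length $\ell$ of every elementary two-walk gadget to be at least $g$, so each ``edge'' of the abstract tree is realized by a path of length $\geq g$; (ii) noting that internal vertices of these realizing paths have degree $2$, so the only degree-$3$ vertices are the $k-2$ combiner vertices of the OR-tree and the (at most $O(k)$-many) branch vertices, all of which sit at the endpoints of paths of length $\geq g$, hence pairwise at distance $\geq g$; (iii) for the $\nand 2$-gadget there is nothing to do since it is a single path. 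One should double-check that no short cycle is created when three paths meet at a combiner vertex — but since the three incident paths have pairwise distinct interiors and length $\geq g$, any cycle through such a vertex has length $\geq 2g$. The main obstacle I anticipate is purely bookkeeping: verifying that the avoidance hypotheses of \cref{lem:gadget-of-walks} (the conditions $S(A)\cap S(B)=\emptyset$, $T(A)\cap T(B)=\emptyset$, and mutual avoidance) survive every composition and every inflation of walk length, and that after all the compositions the realized relation is \emph{exactly} $\ork k$ (resp.\ $\nand 2$) and not something larger — this requires tracking, at each composed vertex, that the ``true/false'' labels $\{\alpha,\beta\}$ propagate without introducing spurious homomorphisms, which is exactly the content of properties (c) and (d).
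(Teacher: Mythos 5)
There is a genuine gap, and it is in the core of the construction rather than in the girth/distance bookkeeping (which is fine in spirit and matches the paper). Your plan to obtain the $\ork{k}$-gadget as a ``binary OR tree'' of $\ork{2}$-gadgets over the Boolean lists $\{\alpha,\beta\}$ cannot work. Identifying interface vertices of $\ork 2$-gadgets only imposes a conjunction of binary clauses along tree edges, with the internal meeting vertices existentially quantified; the projection of such a formula onto the $k$ leaves is not $\ork k$. Concretely, the all-$\alpha$ assignment on the leaves is still extendable (map every internal vertex to $\beta$), so the gadget is not even sound; and no rewiring helps, since $\ork 3$ does not admit the Boolean majority polymorphism while every relation pp-definable from binary constraints on $\{\alpha,\beta\}$ of 2-clause type does. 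This is precisely why the paper does not build $\ork k$ from $\ork 2$'s: it constructs $\ork 3$ \emph{directly}, using the third vertex $\gamma$ — a central vertex with the three-element list $\{\alpha,\beta,\gamma\}$ where three $R_c$-paths meet, built either from the explicit induced $C_6$/$C_8$ lists or from the mutually avoiding walk triples of \cref{obs:walks-between-corners}(4c) — and then chains $\ork{k-1}$, $\nand 2$, $\ork 3$ (the $\nand 2$ link provides the needed ``negation'' between the auxiliary variables, which a tree of ORs cannot). Your proposal never uses $\gamma$ in the OR construction at all, so this essential ingredient is missing. (Also, $\ork 2$ itself is obtained in the paper by restricting a list of an $\ork 3$-gadget; a two-walk gadget $\PP(\{\cX,\cY\})$ only guarantees that $(\alpha,\alpha)$ is excluded and $(\alpha,\beta),(\beta,\alpha)$ are realizable — exact realizability of $(\beta,\beta)$ is not guaranteed, so it need not be an $\ork 2$-gadget.)

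The $\nand 2$ part has a similar problem. You only invoke the walks $\cX,\cX':\alpha\to\beta$ and $\cY,\cY':\beta\to\alpha$ from \cref{obs:walks-between-corners}(3), but the only avoidances available are ``$\cX$ avoids $\cY$'' and ``$\cY'$ avoids $\cX'$'' (plus their reversals); the pairing $\{\cX,\cY'\}$ you propose has no guaranteed avoidance, so \cref{lem:gadget-of-walks} does not apply to it. Moreover, even with the legitimate pairs, the compositions you can form from these four walks forbid transitions of the form ``$\alpha$ at one end forces non-$\alpha$ at the other,'' i.e., they yield implication-like relations; a short check shows the composite you sketch excludes $(\alpha,\beta)$ rather than $(\beta,\beta)$, whereas a $\nand 2$-gadget must realize exactly $\{(\alpha,\alpha),(\alpha,\beta),(\beta,\alpha)\}$. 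In the paper, $\nand 2$ again needs either the induced $C_6$/$C_8$ case or the two mutually avoiding triples $\cX_\gamma,\cY_\gamma,\cZ_\gamma$ and $\cX_\alpha,\cY_\alpha,\cZ_\alpha$, composed through a middle vertex with list $\{\alpha,\beta,\gamma\}$; once more $\gamma$ and the two-sided avoidance of item (4) are indispensable, and the lengthening trick (appending $\{\alpha,\beta\},\{\alpha',\beta'\},\ldots$ paths using the matching $\alpha\alpha',\beta\beta'$) is only the finishing touch.
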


\begin{proof}
First, note that an $\ork 2$-gadget with interface vertices $x_1,x_2$ can be obtained from an $\ork 3$-gadget with interface vertices $x_1,x_2,x_3$ by removing $\beta$ from the list of $x_3$. Moreover, observe that if we are able to construct an $\ork {k-1}$-, an $\ork 3$-, and a $\nand 2$-gadgets for some $k>3$, then we can obtain an $\ork k$-gadget. Indeed, consider an $\ork {k-1}$-gadget with interface vertices $x_1,x_2,\ldots,x_{k-2},y$, a $\nand 2$-gadget with interface vertices $y',z'$, and an $\ork 3$-gadget with interface vertices $z,x_{k-1},x_k$. We obtain the $\ork k$-gadget by identifying vertices: $y$ with $y'$ and $z$ with $z'$. The interface vertices of the constructed $\ork 3$-gadget are $x_1,\ldots,x_k$ (see \cref{fig:ork-from-ors-and-nand}).

Moreover, if the $\ork {k-1}$-gadget and the $\ork 3$-gadget are both trees and the $\nand 2$-gadget is a path, then the $\ork k$-gadget is a tree, since we only join two trees with a path (see \cref{fig:ork-from-ors-and-nand}). The interface vertices $x_1,\ldots,x_k$ of the $\ork {k}$-gadget are leaves as they were in the $\ork {k-1}$- and $\ork 3$-gadgets. The maximum degree of the $\ork{k}$ is at most $3$ since the only vertices whose degree increased by $1$ are $y,z$ which were leaves and thus in the $\ork k$-gadget their degree is $2$. Finally, if (4.) is satisfied for both, the $\ork 3$-gadget and the $\ork{k-1}$-gadget, and the $\nand 2$-gadget is a path of length at least $g$, then for any distinct vertices $a,b$ in the $\ork k$-gadget with degree $3$ it holds $\dist(a,b)\geq g$. So it is sufficent to construct a $\nand 2$- and an $\ork 3$-gadget.

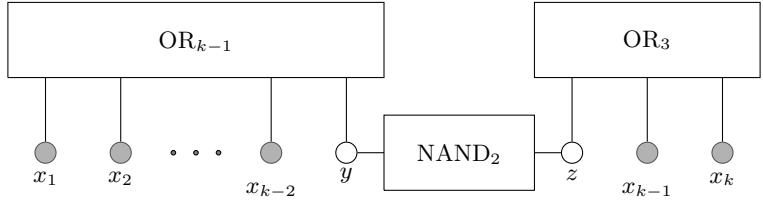
\begin{figure}
\centering{\begin{tikzpicture}[every node/.style={draw,circle,fill=white,inner sep=0pt,minimum size=8pt},every loop/.style={}]
\node[fill=gray,opacity=0.6,label=below:\footnotesize{$x_1$}] (x1) at (0,0) {};
\node[fill=gray,opacity=0.6,label=below:\footnotesize{$x_2$}] (x2) at (1,0) {};
\node[fill=gray,opacity=0.6,label=below:\footnotesize{$x_{k-2}$}] (x3) at (3,0) {};
\node[label=below:\footnotesize{$y$}] (y) at (4,0) {};
\node[label=below:\footnotesize{$z$}] (z) at (7,0) {};
\node[fill=gray,opacity=0.6,label=below:\footnotesize{$x_{k-1}$}] (x4) at (8,0) {};
\node[fill=gray,opacity=0.6,label=below:\footnotesize{$x_k$}] (x5) at (9,0) {};
\node[draw=none,fill=none,label=above:\footnotesize{$\nand 2$}] (a) at (5.5,-0.75) {};
\node[draw=none,fill=none,label=above:\footnotesize{$\ork {k-1}$}] (a) at (2,0.8) {};
\node[draw=none,fill=none,label=above:\footnotesize{$\ork 3$}] (a) at (8,1) {};
\draw (x1)++(-0.5,1)--++(5,0)--++(0,1)--++(-5,0)--++(0,-1);
\draw (z)++(-0.5,1)--++(3,0)--++(0,1)--++(-3,0)--++(0,-1);
\draw (y)++(0.5,-0.5)--++(2,0)--++(0,1)--++(-2,0)--++(0,-1);
\draw (x1)--++(0,1);
\draw (x2)--++(0,1);
\draw (x3)--++(0,1);
\draw (x4)--++(0,1);
\draw (x5)--++(0,1);
\draw (y)--++(0,1);
\draw (z)--++(0,1);
\draw (y)--++(0.5,0);
\draw (z)--++(-0.5,0);
\draw[fill=gray] (x2)++(0.7,0) circle (0.03);
\draw[fill=gray] (x2)++(1,0) circle (0.03);
\draw[fill=gray] (x2)++(1.3,0) circle (0.03);
\end{tikzpicture}}
\caption{Construction of the $\ork k$-gadget with interface vertices $x_1,x_2,\ldots, x_{k-2},x_{k-1},x_k$.}\label{fig:ork-from-ors-and-nand}
\end{figure}

\paragraph{Construction of a $\nand 2$-gadget.} First, we show how to construct the $\nand 2$-gadget. Observe that if we are able to construct the $\nand 2$-gadget as a path of length $\ell<g$, then we can easily obtain the $\nand 2$-gadget of length at least $g$. Indeed, we can introduce a path with $H$-lists $L$, of even length $\ell'>g-\ell$, with consecutive vertices $p_1,\ldots,p_\ell$ and lists $L(p_i)=\{\alpha,\beta\}$ for odd $i$ and $L(p_i)=\{\alpha',\beta'\}$ for even $i$, where $\alpha',\beta'$ are taken from \cref{obs:walks-between-corners}. We identify one endvertex of the $\nand 2$-gadget with one endvertex of the path to obtain another $\nand 2$-gadget of length at least $g$. Since edges $\alpha\alpha'$, $\beta\beta'$ induce a matching in $H$, in any list homomorphism both endvertices must be mapped to the same vertex, either $\alpha$ or $\beta$, and thus the properties of a $\nand{2}$-gadget are preserved by that operation.

Now let show how to construct the $\nand{2}$-gadget. If there is an induced $C_6$ in $H$ with consecutive vertices $w_1,\ldots,w_6$ such that $\alpha=w_1, \beta=w_5, \gamma=w_3$, then the $\nand 2$-gadget is a path of length $4$ with lists of consecutive vertices: $\{w_1,w_5\},\{w_2,w_6\},\{w_1,w_3\},\{w_2,w_4\},\{w_1,w_5\}$.
Similarly, if there is an induced $C_8$ in $H$ with consecutive vertices $w_1,\ldots,w_8$ such that $\alpha=w_1, \beta=w_5, \gamma=w_3$, then the $\nand 2$-gadget is a path of length $4$ with lists of consecutive vertices: $\{w_1,w_5\}$, $\{w_2,w_4,w_8\}$, $\{w_1,w_3,w_7\}$, $\{w_2,w_6\}$, $\{w_1,w_5\}$.
It is straightforward to verify that in both cases the constructed graph is indeed a $\nand 2$-gadget.
So now assume that $H$ does not contain an induced $C_6$ or an induced $C_8$ with $\alpha=w_1$, $\beta=w_5, \gamma=w_3$. Then by \cref{obs:walks-between-corners} we obtain walks: 
\begin{itemize}
\item $\cX_{\gamma}: \alpha \to \alpha, \cY_{\gamma}: \alpha \to \beta, \cZ_{\gamma}: \beta \to \gamma$ such that $\cX_{\gamma}, \cY_{\gamma}$ avoid $\cZ_{\gamma}$ and $\cZ_{\gamma}$ avoids $\cX_{\gamma},\cY_{\gamma}$,
\item $\cX_{\alpha}: \alpha \to \beta, \cY_{\alpha}: \alpha \to \gamma, \cZ_{\alpha}: \beta \to \alpha$ such that $\cX_{\alpha},\cY_{\alpha}$ avoid $\cZ_{\alpha}$ and $\cZ_{\alpha}$ avoids $\cX_{\alpha},\cY_{\alpha}$.
\end{itemize}

As the $\nand{2}$-gadget we take the composition of $\PP(\{\cX_{\gamma},\cY_{\gamma},\cZ_{\gamma}\})$ and $\PP(\{\oX_{\alpha},\oY_{\alpha},\oZ_{\alpha}\})$ (see \cref{fig:nand}). Note that the constructed graph is a path. It follows from properties of the walks and \cref{lem:gadget-of-walks} that it is indeed a $\nand 2$-gadget.

\begin{figure}
\centering{\begin{tikzpicture}[every node/.style={draw,circle,fill=white,inner sep=0pt,minimum size=8pt},every loop/.style={}]
\node[fill=gray,opacity=0.6,label=above:\footnotesize{$\{\alpha,\beta\}$}] (x1) at (0,0) {};
\node[label=above:\footnotesize{$\{\alpha,\beta,\gamma\}$}] (x2) at (5,0) {};
\node[fill=gray,opacity=0.6,label=above:\footnotesize{$\{\alpha,\beta\}$}] (x3) at (10,0) {};

\draw (x1)--++(1,1)--++(3,0)--(x2);
\draw (x2)--++(1,1)--++(3,0)--(x3);
\draw (x1)--++(1,-1)--++(3,0)--(x2);
\draw (x2)--++(1,-1)--++(3,0)--(x3);

\node[color=white,label=left:\footnotesize{$\alpha$}] (x11) at (1.5,0.5) {};
\node[color=white,label=left:\footnotesize{$\beta$}] (x12) at (1.5,-0.5) {};
\node[color=white,label=right:\footnotesize{$\alpha$}] (x13) at (3.5,0.5) {};
\node[color=white,label=right:\footnotesize{$\beta$}] (x14) at (3.5,0) {};
\node[color=white,label=right:\footnotesize{$\gamma$}] (x15) at (3.5,-0.5) {};

\node[color=white,label=right:\footnotesize{$\beta$}] (x24) at (8.5,0.5) {};
\node[color=white,label=right:\footnotesize{$\alpha$}] (x25) at (8.5,-0.5) {};
\node[color=white,label=left:\footnotesize{$\alpha$}] (x21) at (6.5,0.5) {};
\node[color=white,label=left:\footnotesize{$\beta$}] (x22) at (6.5,0) {};
\node[color=white,label=left:\footnotesize{$\gamma$}] (x23) at (6.5,-0.5) {};

\draw[thick, color=blue] (x11)--(x13);
\draw[thick, color=blue] (x11)--(x14);
\draw[thick, color=blue] (x12)--(x15);
\draw[thick, color=blue] (x21)--(x24);
\draw[thick, color=blue] (x22)--(x25);
\draw[thick, color=blue] (x23)--(x25);

\node[color=white,label=below:\footnotesize{$\PP(\{\cX_{\gamma},\cY_{\gamma},\cZ_{\gamma}\})$}] (a) at (2.5,2.7) {};
\node[color=white,label=below:\footnotesize{$\PP(\{\oX_{\alpha},\oY_{\alpha},\oZ_{\alpha}\})$}] (a) at (7.5,2.7) {};

\end{tikzpicture}}
\caption{The $\nand 2$ gadget as the composition of $\PP(\{\cX_{\gamma},\cY_{\gamma},\cZ_{\gamma}\})$ and $\PP(\{\oX_{\alpha},\oY_{\alpha},\oZ_{\alpha}\})$. Blue lines inside the gadgets denote possible mappings of the input and the output vertex. Interface vertices are marked gray.}\label{fig:nand}
\end{figure}

\paragraph{Construction of an $\ork 3$-gadget.} Now it remains to construct the $\ork 3$-gadget. In order to do that, for every $c \in \{\alpha,\beta,\gamma\}$ we construct an $R_c$-gadget, where $R_c=\big(\{\alpha,\beta\} \times \{\alpha,\beta,\gamma\}\big)\setminus\{(\alpha,c)\}$. Moreover, the $R_c$-gadget will be a path with endvertices $x,y$ such that $L(x)=\{\alpha,\beta\}$ and $L(y)=\{\alpha,\beta,\gamma\}$ as interface vertices. Observe that if we identify $y$-vertices of the $R_{\alpha}$-, $R_{\beta}$-, and $R_{\gamma}$-gadget, then we obtain the $\ork 3$-gadget with $x$-vertices of the gadgets as interface vertices. Furthermore, in the $\ork{3}$-gadget there is only one vertex of degree $3$ and all other vertices are of degree smaller than $3$.

\begin{figure}
\centering{\begin{tikzpicture}[every node/.style={draw,circle,fill=white,inner sep=0pt,minimum size=8pt},every loop/.style={}]

\node[fill=black!20,label=right:\footnotesize{$\{w_1,w_5\}$}] (a) at (0,0) {};
\node[label=right:\footnotesize{$\{w_4,w_6\}$}] (b) at (0,1) {};
\node[label=right:\footnotesize{$\{w_1,w_3\}$}] (c) at (0,2) {};
\node[label=right:\footnotesize{$\{w_2,w_4\}$}] (d) at (0,3) {};
\node[fill=black!20,label=right:\footnotesize{$\{w_1,w_5\}$}] (e) at (2,0) {};
\node[label=right:\footnotesize{$\{w_4,w_6\}$}] (f) at (2,1) {};
\node[label=right:\footnotesize{$\{w_1,w_3,w_5\}$}] (g) at (2,4) {};
\node[label=right:\footnotesize{$\{w_2,w_4\}$}] (h) at (4,3) {};
\node[label=right:\footnotesize{$\{w_3,w_5\}$}] (i) at (4,2) {};
\node[label=right:\footnotesize{$\{w_4,w_6\}$}] (j) at (4,1) {};
\node[fill=black!20,label=right:\footnotesize{$\{w_1,w_5\}$}] (k) at (4,0) {};
\draw (a) -- (b) -- (c) -- (d) -- (g) -- (f) -- (e);
\draw (g) -- (h) -- (i) -- (j) -- (k);
\end{tikzpicture}\hskip 2cm
\begin{tikzpicture}[every node/.style={draw,circle,fill=white,inner sep=0pt,minimum size=8pt},every loop/.style={}]

\node[fill=black!20,label=right:\footnotesize{$\{w_1,w_5\}$}] (a) at (4,0) {};
\node[label=right:\footnotesize{$\{w_6,w_8\}$}] (b) at (4,1) {};
\node[label=right:\footnotesize{$\{w_5,w_7\}$}] (c) at (4,2) {};
\node[label=right:\footnotesize{$\{w_4,w_6,w_8\}$}] (d) at (4,3) {};
\node[fill=black!20,label=left:\footnotesize{$\{w_1,w_5\}$}] (e) at (2,0) {};
\node[label=left:\footnotesize{$\{w_4,w_8\}$}] (f) at (2,1) {};
\node[label=left:\footnotesize{$\{w_1,w_3\}$}] (g) at (2,2) {};
\node[label=left:\footnotesize{$\{w_2,w_4\}$}] (h) at (2,3) {};
\node[label=right:\footnotesize{$\{w_1,w_3,w_5\}$}] (i) at (2,6) {};
\node[fill=black!20,label=left:\footnotesize{$\{w_1,w_5\}$}] (j) at (0,0) {};
\node[label=left:\footnotesize{$\{w_4,w_8\}$}] (k) at (0,1) {};
\node[label=left:\footnotesize{$\{w_5,w_7\}$}] (l) at (0,2) {};
\node[label=left:\footnotesize{$\{w_4,w_6\}$}] (m) at (0,3) {};
\node[label=left:\footnotesize{$\{w_3,w_5\}$}] (n) at (0,4) {};
\node[label=left:\footnotesize{$\{w_2,w_4\}$}] (o) at (0,5) {};
\draw (a) -- (b) -- (c) -- (d) -- (i) -- (h) -- (g) -- (f) -- (e);
\draw (i) -- (o) -- (n) -- (m) -- (l) -- (k) -- (j);
\end{tikzpicture}}
\caption{The $\ork 3$-gadget in case that $H$ contains an induced $C_6$ (left) and an induced $C_8$ (right). Recall that the consecutive vertices of those cycles are denoted by $(w_1,w_2,\ldots)$. The sets next to vertices indicate lists. Interface vertices are marked gray. The figure is taken from \cite{LhomoTreewidth,LhomoTreewidthFull}.}\label{fig:cycle-ors}
\end{figure}
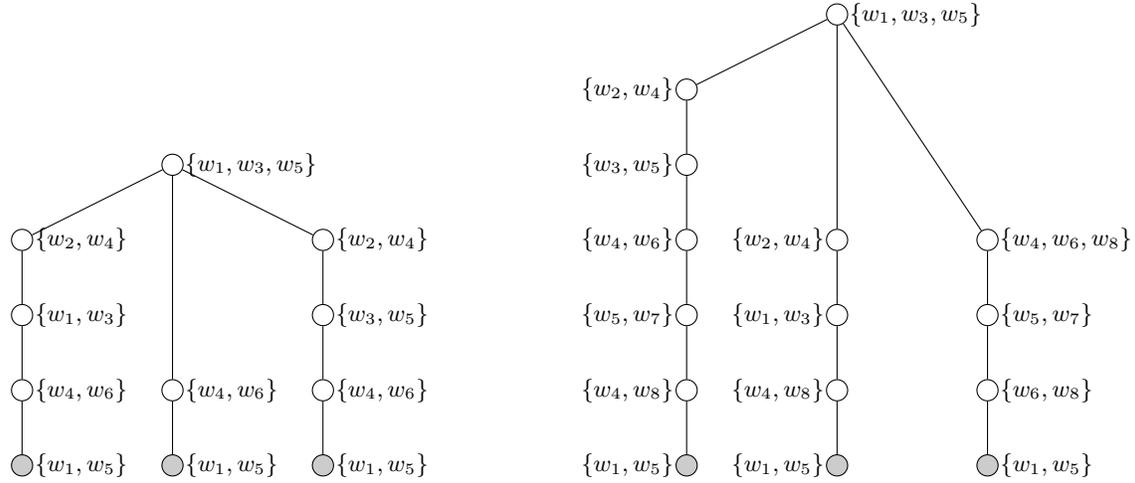

So we only need to construct the $R_c$-gadget for every $c\in\{\alpha,\beta,\gamma\}$. If there is an induced $C_6$ in $H$ with consecutive vertices $w_1,\ldots,w_6$ such that $\alpha=w_1,\beta=w_5,\gamma=w_3$, then for example the $R_{\beta}$-gadget is a path of length $4$ with consecutive lists $\{w_1,w_5\},\{w_2,w_4\},\{w_3,w_5\},\{w_2,w_6\},\{w_1,w_3,w_5\}$. The construction of the other $R_c$-gadgets and the case of $C_8$ is similar (see \cref{fig:cycle-ors} where $y$-vertices of the gadgets are already identified). Now let $\{a,b,c\}=\{\alpha,\beta,\gamma\}$ and assume that $H$ does not contain an induced $C_6$ or an induced $C_8$ with $\alpha=w_1$, $\beta=w_5$ and $\gamma=w_3$. Then by \cref{obs:walks-between-corners} there exist the following walks:
\begin{itemize}
\item $\cX: \alpha \to \beta, \cY: \beta \to \alpha$, such that $\cX$ avoids $\cY$,
\item $\cX_a: \alpha \to b, \cY_a: \alpha \to c, \cZ_a: \beta \to a$, such that $\cX_a, \cY_a$ avoid $\cZ_a$ and $\cZ_a$ avoids $\cX_a, \cY_a$,
\item $\cX_c: \alpha \to a, \cY_c: \alpha \to b, \cZ_c: \beta \to c$, such that $\cX_c,\cY_c$ avoid $\cZ_c$ and $\cZ_c$ avoids $\cX_c,\cY_c$.
\end{itemize}

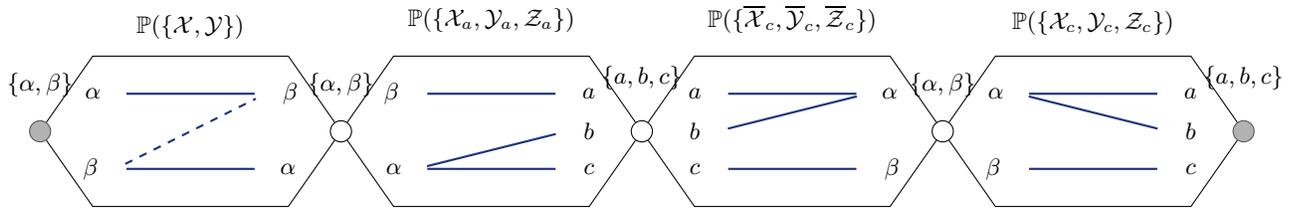
\begin{figure}
\centering{\begin{tikzpicture}[every node/.style={draw,circle,fill=white,inner sep=0pt,minimum size=8pt},every loop/.style={}]
\node[fill=gray,opacity=0.6,label=above:\footnotesize{$\{\alpha,\beta\}$}] (x1) at (0,0) {};
\node[label=above:\footnotesize{$\{\alpha,\beta\}$}] (x2) at (4,0) {};
\node[label=above:\footnotesize{$\{a,b,c\}$}] (x3) at (8,0) {};
\node[label=above:\footnotesize{$\{\alpha,\beta\}$}] (x4) at (12,0) {};
\node[fill=gray,opacity=0.6,label=above:\footnotesize{$\{a,b,c\}$}] (x5) at (16,0) {};

\node[color=white,label=left:\footnotesize{$\alpha$}] (x11) at (1,0.5) {};
\node[color=white,label=left:\footnotesize{$\beta$}] (x12) at (1,-0.5) {};
\node[color=white,label=right:\footnotesize{$\alpha$}] (x14) at (3,-0.5) {};
\node[color=white,label=right:\footnotesize{$\beta$}] (x13) at (3,0.5) {};

\node[color=white,label=left:\footnotesize{$\alpha$}] (x22) at (5,-0.5) {};
\node[color=white,label=left:\footnotesize{$\beta$}] (x21) at (5,0.5) {};
\node[color=white,label=right:\footnotesize{$a$}] (x23) at (7,0.5) {};
\node[color=white,label=right:\footnotesize{$b$}] (x24) at (7,0) {};
\node[color=white,label=right:\footnotesize{$c$}] (x25) at (7,-0.5) {};

\node[color=white,label=left:\footnotesize{$a$}] (x31) at (9,0.5) {};
\node[color=white,label=left:\footnotesize{$b$}] (x32) at (9,0) {};
\node[color=white,label=left:\footnotesize{$c$}] (x33) at (9,-0.5) {};
\node[color=white,label=right:\footnotesize{$\alpha$}] (x34) at (11,0.5) {};
\node[color=white,label=right:\footnotesize{$\beta$}] (x35) at (11,-0.5) {};

\node[color=white,label=left:\footnotesize{$\alpha$}] (x41) at (13,0.5) {};
\node[color=white,label=left:\footnotesize{$\beta$}] (x42) at (13,-0.5) {};
\node[color=white,label=right:\footnotesize{$a$}] (x43) at (15,0.5) {};
\node[color=white,label=right:\footnotesize{$b$}] (x44) at (15,0) {};
\node[color=white,label=right:\footnotesize{$c$}] (x45) at (15,-0.5) {};

\draw (x1)--++(0.7,1)--++(2.6,0)--(x2);
\draw (x2)--++(0.7,1)--++(2.6,0)--(x3);
\draw (x3)--++(0.7,1)--++(2.6,0)--(x4);
\draw (x4)--++(0.7,1)--++(2.6,0)--(x5);
\draw (x1)--++(0.7,-1)--++(2.6,0)--(x2);
\draw (x2)--++(0.7,-1)--++(2.6,0)--(x3);
\draw (x3)--++(0.7,-1)--++(2.6,0)--(x4);
\draw (x4)--++(0.7,-1)--++(2.6,0)--(x5);

\draw[thick, color=blue] (x11)--(x13);
\draw[thick, color=blue] (x12)--(x14);
\draw[thick, dashed, color=blue] (x12)--(x13);
\draw[thick, color=blue] (x21)--(x23);
\draw[thick, color=blue] (x22)--(x24);
\draw[thick, color=blue] (x22)--(x25);
\draw[thick, color=blue] (x31)--(x34);
\draw[thick, color=blue] (x32)--(x34);
\draw[thick, color=blue] (x33)--(x35);
\draw[thick, color=blue] (x41)--(x43);
\draw[thick, color=blue] (x41)--(x44);
\draw[thick, color=blue] (x42)--(x45);

\node[draw=none,fill=none,label=above:\footnotesize{$\PP(\{\cX,\cY\})$}] (a1) at (2,0.5) {};
\node[draw=none,fill=none,label=above:\footnotesize{$\PP(\{\cX_a,\cY_a,\cZ_a\})$}] (a2) at (6,0.2) {};
\node[draw=none,fill=none,label=above:\footnotesize{$\PP(\{\oX_c,\oY_c,\oZ_c\})$}] (a3) at (10,0.2) {};
\node[draw=none,fill=none,label=above:\footnotesize{$\PP(\{\cX_c,\cY_c,\cZ_c\})$}] (a4) at (14,0.2) {};

\end{tikzpicture}}
\caption{Construction of the $R_c$-gadget. Blue lines inside the gadgets denote possible mappings of the input and the output vertex. The dashed line indicates a mapping that might exist but not necessarily.}\label{fig:chooser}
\end{figure}

Now we set the $R_c$-gadget as the composition of the gadgets: $\PP(\{\cX,\cY\})$, $\PP(\{\cX_a,\cY_a,\cZ_a\})$, $\PP(\{\oX_c,\oY_c,\oZ_c\})$, and $\PP(\{\cX_c,\cY_c,\cZ_c\})$ (see \cref{fig:chooser}). It follows from the definition of those walks and \cref{lem:gadget-of-walks} that the constructed graph is indeed an $R_c$-gadget. This completes the proof.
\end{proof}

The next gadget we will use to construct the assignment gadget is called \emph{distiguisher} and was constructed in \cite{LhomoTreewidth,LhomoTreewidthFull}. It is a graph $D_{a/b}$, for distinct vertices $a$, $b$ in $H$, with $H$-lists $L$ and two special vertices $x$, $y$. The lists of the special vertices are $L(x)=S$, for an incomparable set $S$ such that $a,b \in S$, and $L(y)= \{\alpha,\beta\}$. In $D_{a/b}$ we can distingiush $a$ and $b$ by mapping $y$ to $\beta$, i.e., if $x$ is mapped to $a$, then $y$ must be mapped to $\alpha$ and it is possible to map $x$ to $b$ and $y$ to $\beta$.

\begin{lemma}[Construction of the distinguisher gadget \cite{LhomoTreewidth,LhomoTreewidthFull}]\label{lem:distinguisher}
Let $H$ be a connected, bipatrite, undecomposable graph whose complement is not a circular-arc graph. Let $\alpha,\beta$ be defined as in \cref{obs:walks-between-corners}. Let $S$ be an incomparable set in $H$, such that $|S| \ge 2$ and $\{\alpha,\beta\} \cup S$ is contained in one bipartition class of $H$.
Let $a,b$ be distinct vertices of $S$.
Then there exists a \emph{distinguisher} gadget which is a path $D_{a/b}$ with two endvertices $x$ (called \emph{input}) and $y$ (called \emph{output}),
and $H$-lists $L$ such that:
\begin{enumerate}[(D1.)]
\item $L(x)=S$ and $L(y)=\{\alpha,\beta\}$,
\item there is a list homomorphism $\vphi_a: (D_{a/b},L) \to H$, such that $\vphi_a(x)=a$ and $\vphi_a(y)=\alpha$,
\item there is a list homomorphism $\vphi_b: (D_{a/b},L) \to H$, such that $\vphi_b(x)=b$ and $\vphi_b(y)=\beta$,
\item for any $c \in S \setminus \{a,b\}$ there is $\vphi_c: (D_{a/b},L) \to H$, such that $\vphi_c(x)=c$ and $\vphi_c(y) \in \{\alpha,\beta\}$,
\item there is no list homomorphism $\vphi: (D_{a/b},L) \to H$, such that $\vphi(x)=a$ and $\vphi(y)=\beta$.
\end{enumerate} 
\end{lemma}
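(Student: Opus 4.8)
The plan is to realise $D_{a/b}$ as a path gadget of the form $\PP(\DD)$ for a carefully chosen family $\DD=\{\cD_u\}_{u\in S}$ of walks of a common length, and then read off (D1)--(D5) from \cref{lem:gadget-of-walks}. Concretely, I aim to build, for each $u\in S$, a walk $\cD_u$ of one common even length $\ell$ with $\cD_a\colon a\to\alpha$, $\cD_b\colon b\to\beta$, and $\cD_u\colon u\to\alpha$ for every $u\in S\setminus\{b\}$, arranged so that every $\cD_u$ with $u\neq b$ \emph{avoids} $\cD_b$. Setting $A=\{\cD_u\colon u\in S\setminus\{b\}\}$ and $B=\{\cD_b\}$, we get $S(A)=S\setminus\{b\}$, $S(B)=\{b\}$, $T(A)=\{\alpha\}$, $T(B)=\{\beta\}$, so the hypotheses of \cref{lem:gadget-of-walks} are met; the resulting path $\PP(\DD)$ with input $x$ and output $y$ then satisfies (D1) by part~(a) (namely $L(x)=S$ and, using $|S|\geq 2$, $L(y)=\{\alpha,\beta\}$), (D2)--(D4) by applying part~(b) to $\cD_a$, $\cD_b$ and each $\cD_c$, and (D5) by part~(c): if $f(x)=a\in S(A)$ then $f(y)\notin T(B)=\{\beta\}$, hence $f(y)=\alpha$. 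Note that $\PP(\DD)$ is a path by construction, so the ``path'' requirement is for free.

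It remains to construct the walks. First I would use incomparability of $S$: since $a,b\in S$ there is $a^*\in N(a)\setminus N(b)$, which provides the first ``separating'' step of $\cD_a$. Then I would use connectivity of $H$ to route, for each $u\in S$, a walk from $u$ to $\alpha$ (through $a^*$ when $u=a$) and a walk from $b$ to $\beta$; inserting back-and-forth moves along single edges (``padding'') one can give all of them the common length $\ell$, chosen large enough that the ``moving'' phases can be placed so as not to interfere. For the ``stationary'' phases I would exploit that $\alpha$ and $\beta$ lie in the same bipartition class (hence are non-adjacent) together with the matching $\alpha\alpha',\beta\beta'$ from \cref{obs:walks-between-corners}: if $\cD_b$ oscillates on $\{\beta,\beta'\}$ while each other walk oscillates on $\{\alpha,\alpha'\}$, then, as the two walks always sit in the same bipartition class at a given position, the avoiding condition is preserved because $\alpha'$ is non-adjacent to $\beta$ and $\alpha$ is non-adjacent to $\beta'$. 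The genuinely distinguishing transition phases would be built from the avoiding walks provided by \cref{obs:walks-between-corners}, via its case split: for an induced $C_6$ or $C_8$ with $\alpha=w_1$, $\beta=w_5$, $\gamma=w_3$ the needed walks can be written out explicitly along the cycle, and in the strongly incomparable case one uses the triples of mutually avoiding walks $\cX_c,\cY_c,\cZ_c$, while the walks $\cX$ avoids $\cY$ and $\cY'$ avoids $\cX'$ handle the passage between $\alpha$ and $\beta$.

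The main obstacle is exactly this last step: producing all the walks $\cD_u$ \emph{simultaneously}, with one fixed length $\ell$, all avoiding the single walk $\cD_b$, under the sole assumption that $S$ is incomparable (not strongly incomparable). Thus no neighbour of $a$ is private with respect to all of $S$, so one cannot separate every element of $S$ from $\beta$ in one step; the construction must separate $a$ from $b$ locally using $a^*$, route the remaining vertices of $S$ generically to $\alpha$, and then verify the avoiding relation position by position, keeping track of which bipartition class each walk occupies. Once the walks are in place, every stated property of $D_{a/b}$, including the exact endpoint lists and the path structure, follows directly from \cref{lem:gadget-of-walks}.
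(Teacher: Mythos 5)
Your reduction of the lemma to \cref{lem:gadget-of-walks} is sound as far as it goes: if for some common even length you had walks $\cD_b\colon b\to\beta$ and $\cD_u\colon u\to\alpha$ for every $u\in S\setminus\{b\}$ such that each $\cD_u$ avoids $\cD_b$, then taking $A=\{\cD_u\colon u\neq b\}$, $B=\{\cD_b\}$ and $D_{a/b}=\PP(\DD)$ would indeed give (D1)--(D5) exactly as you say. This skeleton also matches how the cited works proceed (note that the present paper does not reprove \cref{lem:distinguisher}; it imports it from the earlier papers). The problem is that everything you defer to the ``remaining'' step is the actual content of the lemma, and the tools you propose for it do not suffice. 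Avoidance is a per-position non-adjacency constraint: even in the simplest schedule, where $\cD_b$ oscillates on an edge $bb'$ while $\cD_u$ travels from $u$ to $\alpha$, you need every vertex of $\cD_u$ lying in the class of $S$ to be non-adjacent to $b'$ and every vertex in the other class to be non-adjacent to $b$, i.e.\ you must route $\cD_u$ inside a restricted subgraph of $H$, and during the phase in which $\cD_b$ actually moves from $b$ to $\beta$ the constraints are even more delicate. Mere connectivity of $H$ gives no such walk, and the single vertex $a^\ast\in N(a)\setminus N(b)$ supplied by incomparability only handles one step of one walk ($a$ versus $b$), not the full avoidance condition for all $u\in S\setminus\{b\}$ along the entire length.

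Concretely, your sketch never uses the hypothesis that $H$ is undecomposable, and that hypothesis is essential: the existence of avoiding walks with prescribed endpoints for an \emph{arbitrary} incomparable pair (as opposed to the special triple $(\alpha,\beta,\gamma)$ of \cref{obs:walks-between-corners}) is precisely what fails when $H$ admits a bipartite decomposition in the sense of \cref{def:bipartite-decomposition}, and establishing it for undecomposable $H$ is the substantial structural argument carried out in the cited papers. The padding trick on $\{\alpha,\alpha'\}$ versus $\{\beta,\beta'\}$ and the walks $\cX,\cX',\cY,\cY',\cX_c,\cY_c,\cZ_c$ of \cref{obs:walks-between-corners} handle only the stationary phases and the $\alpha\leftrightarrow\beta$ traffic; they say nothing about getting from a general $u\in S$ to $\alpha$ while avoiding $\cD_b$. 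So as written the proposal has a genuine gap: the simultaneous construction of the family $\{\cD_u\}$, which you yourself flag as the main obstacle, is not an implementation detail but the theorem, and it cannot be closed with incomparability, connectivity and padding alone.
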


Now we introduce a \emph{detector} gadget which will be useful in the construction of an assignment gadget. It is a graph $\tF_u$ with $H$-lists $L$ and two special vertices $x_u$, $c_u$. It will be used to detect if $x_u$ is colored with $u$ as then $c_u$ must be colored with $\beta$, and for every other coloring of $x_u$, the vertex $c_u$ can be colored with $\alpha$.

\begin{definition}[Detector gadget]\label{def:detector-gadget}
Let $H$ be a bipartite graph, whose complement is not a circular-arc graph. Let $\alpha,\beta$ be defined as in \cref{obs:walks-between-corners}. Let $S$ be a set of vertices in $H$ contained in one bipartition class, let $k:=|S|\geq2$, and let $u \in S$. A detector gagdet is a graph $\tF_u$ with special vertices $x_u$ and $c_u$, with $H$-lists $L$, such that:
\begin{enumerate}[($\widetilde{F}$1.)]
\item $L(x_u)=S$ and $L(c_u)=\{\alpha,\beta\}$,
\item for every $s \in S$ there exists a list homomorphism $\vphi: (\tF_u,L) \to H$ such that $\vphi(x_u)=s$ and $\vphi(c_u)=\beta$, 
\item for every $s \in S\setminus\{u\}$ there exists a list homomorphism $\vphi: (\tF_u,L) \to H$ such that $\vphi(x_u)=s$ and $\vphi(c_u)=\alpha$,
\item for every list homomorphism $\vphi: (\tF_u,L) \to H$, if $\vphi(x_u)=u$, then $\vphi(c_u)=\beta$, 
\item $\tF_u-\{x_u\}$ is a tree,
\item $\deg(x_u)=k-1$ and $\deg(c_u)=1$,
\item the degree of every vertex in $\tF_u $, possibly except $x_u$, is at most $3$.
\end{enumerate}
\end{definition}

The gadget was first constructed in \cite{LhomoTreewidthFull} (the first step in the proof of Lemma~4), but we repeat the construction to obtain a certain structure of the gadget.

\begin{lemma}[Construction of the detector gadget~\cite{LhomoTreewidthFull}]\label{lem:detector-gadget}
Let $H$ be a connected, bipatrite, undecomposable graph whose complement is not a circular-arc graph. Let $\alpha,\beta$ be defined as in \cref{obs:walks-between-corners}. Let $S$ be an incomparable set in $H$, such that $|S| \ge 2$ and $\{\alpha,\beta\} \cup S$ is contained in one bipartition class of $H$. Let $g \in \N$ and let $u \in S$. Then there exists a detector gadget $\tF_u$, such that $\girth{\tF_u}\geq g$ and for any distinct $a,b$ in $\tF_u$ of degree at least $3$ it holds that $\dist(a,b) \geq g$ and $\dist(a,x_u)\geq g$.
\end{lemma}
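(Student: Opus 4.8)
The plan is to build $\tF_u$ from the gadgets already available in the excerpt, namely the distinguisher gadgets $D_{a/b}$ from \cref{lem:distinguisher} and the $\nand 2$- and $\ork k$-gadgets from \cref{lem:relationgadgets}, together with a fresh ``hub'' vertex $x_u$ that carries the list $S$. The key idea is that $c_u$ should be forced to $\beta$ exactly when $x_u$ is mapped to $u$. First I would, for every $v \in S \setminus \{u\}$, take a distinguisher gadget $D_{u/v}$ and identify its input vertex with $x_u$; by (D1.) its output vertex $z_v$ has list $\{\alpha,\beta\}$, and by (D2.)--(D5.) we have $z_v = \alpha$ whenever $x_u = u$, while for every other value of $x_u$ we can make $z_v \in \{\alpha,\beta\}$ freely. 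Thus, after this step, mapping $x_u$ to $u$ forces all the $z_v$ to $\alpha$ simultaneously, and mapping $x_u$ to any $v \neq u$ allows us to set $z_v = \beta$ (and all other $z_w$ to some value in $\{\alpha,\beta\}$).

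Next I would aggregate the outputs $\{z_v : v \in S\setminus\{u\}\}$ using a chain of $\nand 2$-gadgets (or, equivalently, one $\nand{k-1}$-type construction built from $\nand 2$'s as in the proof of \cref{lem:relationgadgets}) to produce a single output vertex $c_u$ with list $\{\alpha,\beta\}$, arranging the logic so that $c_u = \beta$ is forced precisely when \emph{all} $z_v = \alpha$. Concretely: treat $\alpha$ as \textsc{false} and $\beta$ as \textsc{true}; we want $c_u$ to be the NOR of the $z_v$'s, i.e. $c_u = \beta$ iff every $z_v = \alpha$. A NOR over $k-1$ inputs can be assembled from a constant-size collection of $\nand 2$ and $\ork 2$ gadgets per input (first OR all the $z_v$'s together with an $\ork{k-1}$-gadget, then negate with a $\nand 2$-gadget applied to the OR-output together with a vertex pinned to $\beta$). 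Since the $\nand 2$-gadget is a path and the $\ork{k-1}$-gadget is a tree of maximum degree $3$ whose interface vertices are leaves, identifying interface leaves only ever raises a leaf's degree to $2$, so the whole construction keeps maximum degree $3$ away from $x_u$; moreover $c_u$ sits as a leaf of the final $\nand 2$-path so $\deg(c_u)=1$, and $\deg(x_u) = k-1$ because exactly the $k-1$ distinguisher paths attach there. Deleting $x_u$ disconnects the $k-1$ distinguisher paths, and each of those, together with the tree-like aggregation network hanging off their far ends, is a tree, so $\tF_u - \{x_u\}$ is a forest — but since $H$ is connected we can (if needed) connect the pieces through extra pendant paths with appropriate lists, or simply note that the union of these trees sharing no vertex other than descendants forms a tree once we route them through a single $\ork{k-1}$ root; I would choose the wiring so that $\tF_u - \{x_u\}$ is literally a tree, verifying ($\widetilde F$5.).

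To get the girth and distance guarantees of \cref{lem:detector-gadget}, I would invoke the versions of the sub-gadgets that come with such guarantees: \cref{lem:relationgadgets} gives $\nand 2$-gadgets that are paths of length at least $g$ and $\ork k$-gadgets in which any two degree-$3$ vertices are at distance at least $g$, and I would similarly pad every distinguisher path $D_{u/v}$ (it is a path by \cref{lem:distinguisher}, and padding with the standard $\{\alpha,\beta\}/\{\alpha',\beta'\}$ alternating path as in the $\nand 2$ construction preserves its properties) so that $x_u$ is far from any branch vertex. Because the only vertices of degree $\geq 3$ outside $x_u$ live inside $\ork{k-1}$-type subgadgets or at the junctions where interface leaves were identified, and because $x_u$ is only adjacent to the $k-1$ long distinguisher paths, choosing the padding lengths to exceed $g$ makes all the required distances at least $g$; and since $\tF_u - \{x_u\}$ is a tree, the only cycles in $\tF_u$ pass through $x_u$, hence have length at least $2g$ by the padding, giving $\girth{\tF_u}\geq g$.

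The routine part is checking the homomorphism properties ($\widetilde F$2.)--($\widetilde F$4.): property ($\widetilde F$2.) follows because for any $s \in S$ we can set $x_u = s$, then push each $D_{u/v}$ to a homomorphism with $z_v = \alpha$ (possible by (D2.) when $s=u$, and by (D2.)/(D4.) for $s\neq u$ — we can always realize $z_v = \alpha$), and then the NOR-network lets $c_u = \beta$; property ($\widetilde F$3.) follows by taking $s = v_0 \neq u$, setting $z_{v_0} = \beta$ via (D3.), which forces the OR-output to $\beta$ and hence $c_u = \alpha$ through the final $\nand 2$; property ($\widetilde F$4.) is the crucial direction and follows from (D5.), which forbids $z_v = \beta$ when $x_u = u$, so all $z_v = \alpha$, the OR-output is $\alpha$, and the terminating $\nand 2$ with one input pinned to $\beta$ forces $c_u = \beta$. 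The main obstacle I anticipate is purely bookkeeping: making the wiring of the $\ork{k-1}$/$\nand 2$ network simultaneously (a) compute the right Boolean function, (b) keep $\tF_u - \{x_u\}$ a tree rather than merely a forest, (c) keep maximum degree $3$ off $x_u$, and (d) keep all degree-$3$ vertices mutually $g$-far and $g$-far from $x_u$ — all four constraints interact, and the cleanest route is probably to fix one spanning structure (a ``caterpillar'' of $\nand 2$-paths with the $D_{u/v}$'s and a single $\ork{k-1}$ hanging off it) and verify the four conditions against that explicit shape, exactly as the analogous $\ork k$-construction was verified in \cref{lem:relationgadgets}.
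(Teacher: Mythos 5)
Your overall skeleton is the same as the paper's: attach one distinguisher path $D_{u/w}$ (padded to length at least $g$) for each $w\in S\setminus\{u\}$, identify all their inputs into $x_u$, aggregate the outputs with the relation gadgets of \cref{lem:relationgadgets}, and argue girth/distance via the padding and the fact that every cycle passes through $x_u$. But your aggregation logic is wrong, and this is a genuine gap. You insist that $c_u$ compute the \emph{NOR} of the outputs $z_v$, i.e.\ $c_u=\beta$ \emph{iff} all $z_v=\alpha$. The ``only if'' half of this is both unnecessary and harmful: property ($\widetilde{F}$2.) of \cref{def:detector-gadget} demands, for \emph{every} $s\in S$ (including $s\neq u$), a homomorphism with $x_u=s$ and $c_u=\beta$, and under your iff-constraint this forces all $z_v=\alpha$ simultaneously. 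The distinguisher properties (D2.)--(D5.) of \cref{lem:distinguisher} do not give you that: for $D_{u/s}$ with input mapped to $s$, only $y=\beta$ is guaranteed by (D3.), and (D4.) merely promises \emph{some} value in $\{\alpha,\beta\}$ for the other gadgets, never that $\alpha$ is realizable. So your claim ``we can always realize $z_v=\alpha$ \dots by (D2.)/(D4.) for $s\neq u$'' is unsupported, and with it ($\widetilde{F}$2.) collapses. A secondary problem is that your ``OR-output then negate'' wiring treats $\ork{k-1}$ and $\nand{2}$ as gates with designated outputs, whereas \cref{lem:relationgadgets} only provides them as \emph{relation} gadgets (constraints on interface vertices), so the NOR would have to be re-encoded as extra $\nand{2}$ constraints between $c_u$ and each $z_v$ --- which is exactly the harmful direction.

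The paper's construction avoids all of this by using a single $\ork{k}$-gadget: identify $k-1$ of its interface leaves with the distinguisher outputs and declare the remaining interface leaf to be $c_u$. The relation $\ork{k}=\{\alpha,\beta\}^k\setminus\{\alpha\}^k$ gives only the implication you actually need: if $x_u=u$ then (D5.) forces every $z_v=\alpha$, so $c_u$ must be $\beta$ (($\widetilde{F}$4.)); conversely, setting $c_u:=\beta$ always satisfies the relation regardless of the $z_v$'s, so ($\widetilde{F}$2.) holds for every $s$, and ($\widetilde{F}$3.) follows by putting $z_s=\beta$ via (D3.) and $c_u=\alpha$. This single-gadget wiring also settles your ``bookkeeping'' worries for free: the $\ork{k}$-gadget is a tree of maximum degree $3$ with leaf interface vertices whose degree-$3$ vertices are pairwise $g$-far, so $\tF_u-\{x_u\}$ is literally a tree, $\deg(x_u)=k-1$, $\deg(c_u)=1$, and the girth/distance bounds follow from the distinguisher padding exactly as you sketched.
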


\begin{proof}
For every $w \in S\setminus\{u\}$ we call \cref{lem:distinguisher} for $S,u,w$ to obtain a distinguisher gadget $D_{u/w}$ with input $x_{u,w}$ and output $y_{u,w}$. We can assume that every distinguisher $D_{u/w}$ is a path of length at least $g$. Otherwise, if there is any distinguisher $D_{u/w}$ that is a path of length $\ell<g$, then similarly as in the proof of \cref{lem:relationgadgets} (construction of the $\nand{2}$-gadget) we can append to the vertex $y_{u,w}$ a path of even length $\ell'>g-\ell$ with lists of consecutive vertices $\{\alpha,\beta\},\{\alpha',\beta'\},\ldots,\{\alpha,\beta\}$, where $\alpha',\beta'$ are taken from \cref{obs:walks-between-corners}, and the properties of the distinguisher gadget are preserved. We identify all input vertices into one vertex $x_u$. Then we call \cref{lem:relationgadgets} to construct an $\ork k$ gadget and identify $k-1$ of its $k$ interface vertices with the output vertices of distinguishers. Let us call the remaining $k$-th interface vertex of the $\ork k$ gadget $c_u$ and the constructed graph $\tF_u$ (see \cref{fig:tilde-Fu-gadget}).
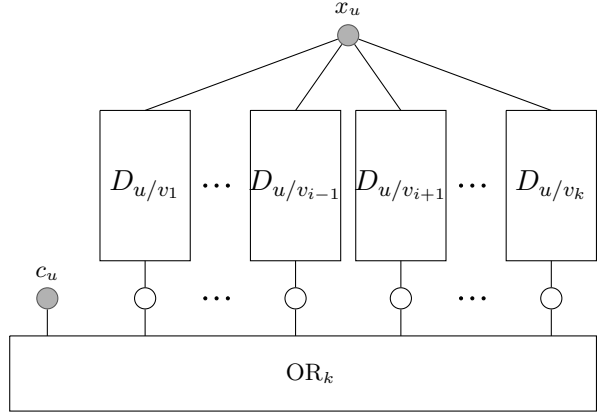
\begin{figure}
\centering{\begin{tikzpicture}[every node/.style={draw,circle,fill=white,inner sep=0pt,minimum size=8pt},every loop/.style={}]
\node[fill=gray,opacity=0.6,label=above:\footnotesize{$x_u$}] (x1) at (0,0) {};
\foreach \k in {-2.7,-0.7,0.7,2.7}
{
\draw (x1)--(\k,-1);
\draw (\k-0.6,-1)--++(1.2,0)--++(0,-2)--++(-1.2,0)--++(0,2);
\node (x) at (\k,-3.5) {};
\draw (x)--++(0,0.5);
\draw (x)--++(0,-0.5);
}
\draw (-4.5,-4)--(3.3,-4)--++(0,-1)--(-4.5,-5)--++(0,1);
\node[fill=gray,opacity=0.6,label=above:\footnotesize{$c_u$}] (c) at (-4,-3.5) {};
\draw (c)--++(0,-0.5);

\foreach \i in {-1.9,-1.75,-1.6}
{
\draw[fill=black] (\i,-2) circle (0.02);
\draw[fill=black] (\i+3.4,-2) circle (0.02);
\draw[fill=black] (\i,-3.5) circle (0.02);
\draw[fill=black] (\i+3.4,-3.5) circle (0.02);
}
\node[draw=none,fill=none,label=above:\footnotesize{$\ork k$}] (a) at (-0.5,-5) {};
\node[draw=none,fill=none] (a) at (-2.7,-2) {$D_{u/v_1}$};
\node[draw=none,fill=none] (a) at (-0.7,-2) {$D_{u/v_{i-1}}$};
\node[draw=none,fill=none] (a) at (0.7,-2) {$D_{u/v_{i+1}}$};
\node[draw=none,fill=none] (a) at (2.7,-2) {$D_{u/v_k}$};
\end{tikzpicture}}
\caption{The graph $\widetilde{F}_u$ for $S=\{v_1,\ldots,v_k\}$ and $u=v_i$.}\label{fig:tilde-Fu-gadget}
\end{figure}

Now let us verify, that the constructed graph is indeed a detector gadget. Property ($\tF$1.) is obviously satisfied by the construction of $\tF_u$. To show ($\widetilde{F}$2.), consider a list homomorphism $\vphi$ such that $\vphi(x_u)=s$ and $\vphi(y_{u,w}) \in \{\alpha,\beta\}$ for every $w \in S\setminus\{u\}$. It exists by the definition of a distinguisher (properties (D2.), (D3.), and (D4.)). We set $\vphi(c_u):=\beta$ and then we can extend $\vphi$ to all vertices of the $\ork k$-gadget. Property ($\widetilde{F}$3.) follows from the fact that there exists a list homomorphism $\vphi: (D_{u/s},L) \to H$ such that $\vphi(y_{u,s})=\beta$ (property (D3.)). By (D4.) we can extend $\vphi$ to $D_{u/w}$ for every $w \in S\setminus\{u,s\}$ so that $\vphi(y_{u,w})\in \{\alpha,\beta\}$. We can set $\vphi(c_u)=\alpha$ and extend $\vphi$ to remaining vertices of the $\ork k$-gadget. To show ($\widetilde{F}$4.), consider a list homomorphism $\vphi: (\tF_u,L) \to H$, such that $\vphi(x_u)=u$. Then by (D5.) it holds that $\vphi(y_{u,w})=\alpha$ for every $w \in S\setminus\{u\}$. Since at least one of interface vertices of the $\ork k$-gadget must be mapped to $\beta$, we conclude that $\vphi(c_u)=\beta$. Property ($\widetilde{F}$5.) follows from the fact that each distinguisher gadget is a path and all distinguishers share one common vertex, which is $x_u$. Moreover, the $\ork k$-gadget is a tree and every distinguisher has one common vertex (endvertex) with the $\ork k$-gadget. Thus if we remove $x_u$ from $\tF_u$, then we obtain a tree (the $\ork k$-gadget) and pairwise disjoint paths added to that tree by identifying single vertices, so $\tF_u - \{x_u\}$ is a tree (see \cref{fig:tilde-Fu-gadget}). Properties ($\widetilde{F}$6.) and ($\widetilde{F}$7.) follow directly from the construction of $\tF_u$. First we join $k-1$ paths by identifying their endvertices into one vertex $x_u$ and thus $\deg(x_u)=k-1$. Then we introduce an $\ork k$-gadget whose maximum degree is at most $3$ and we identify its interface vertices, whose degrees are all $1$, with the other endvertices of the paths, so the degree of each vertex, possibly except $x_u$, is at most $3$. Moreover, we do not modify the interface vertex $c_u$ of the $\ork k$-gadget, so its degree remains $1$. Thus $\tF_u$ is indeed a detector gadget. So it remains to show that the girth of $\tF_u$ is at least $g$, the vertices of degree at least $3$ are pairwise at distance at least $g$ and at distance at least $g$ from $x_u$ (note that we consider $x_u$ separately, since for $|S|\leq 3$ the degree of $x_u$ is less than $3$). First observe that the only vertices of degree at least $3$ might be $x_u$ and some vertices of the $\ork{k}$-gadget, and in the $\ork{k}$-gadget any two distinct vertices of degree at least $3$ are at distance at least $g$. Moreover, $x_u$ is at distance at least $g$ from any vertex of the $\ork{k}$-gadget since every distinguisher gadget has length at least $g$. Therefore, vertices of degree at least $3$ in $\tF_u$ are at distance at least $g$ and at distance at least $g$ from $x_u$. Finally, $\girth{\tF_u}\geq g$ since every cycle in $\tF_u$ contains $x_u$, whose every adjacent edge belongs to one of $k-1$ induced paths, each of length at least $g$. 
\end{proof}

Now we are ready to show how to construct the assignment gadget, but first let us remind the definition.

\defAssign*

\lemAssignGadget*

\begin{proof}
The construction of the assignment gadget $A_v$ will be done in three steps.

\paragraph{Step I.} The first step is the construction of a path $P_u$ with $H$-lists $L$, and endvertices $c,y$ with $L(c)=\{\alpha,\beta\}$, $L(y)=\{\alpha,\beta,\gamma\}$, satisfying the following properties:
\begin{enumerate}[(P1.)]
\item for every $a \in \{\alpha,\beta\}$ and for every $b\in \{\alpha,\beta\}$ there exists a list homomorphism $\vphi: (P_u,L)\to H$ such that $\vphi(c)=a$ and $\vphi(y)=b$,
\item there exists a list homomorphism $\vphi: (P_u,L)\to H$ such that $\vphi(c)=\alpha$ and $\vphi(y)=\gamma$,
\item there is no list homomorphism $\vphi: (P_u,L)\to H$ such that $\vphi(c)=\beta$ and $\vphi(y)=\gamma$.
\item the length of $P_u$ is at least $g$.
\end{enumerate}
First observe that we can assume that the last property holds. Otherwise, as in the previous constructions we can append a path of appropriate length with consecutive lists $\{\alpha,\beta\},\{\alpha',\beta'\},\ldots,\{\alpha,\beta\}$ to the path $P_u$ by identifying one endvertex with $c$. Since $\alpha\alpha'$ and $\beta\beta'$ induce a matching in $H$, adding such a path preserves the other desired properties of the path $P_u$.

So let us show how to construct $P_u$, which satisfies properties (P1.), (P2.), and (P3.). First, consider the case that $H$ contains $C_6$ as an induced subgraph with consecutive vertices $w_1,\ldots,w_6$ and $\alpha=w_1$, $\beta=w_5$, $\gamma=w_3$. Then we set as $P_u$ a path of length $2$ with lists of consecutive vertices $\{w_1,w_5\},\{w_2,w_6\},\{w_1,w_3,w_5\}$. If $H$ contains an induced $C_8$ with consecutive vertices $w_1,\ldots,w_8$ and $\alpha=w_1$, $\beta=w_5$, $\gamma=w_3$, then we set as $P_u$ a path of length $4$ with lists of consecutive vertices $\{w_1,w_5\},\{w_2,w_6\},\{w_1,w_3,w_7\}, \{w_2,w_4,w_6,w_8\},\{w_1,w_3,w_5\}$. It is straightforward to verify that in both cases $P_u$ satisfies the required properties.
If $H$ does not contain $C_6$ or $C_8$ as an induced subgraph with $\alpha=w_1$, $\beta=w_5$, and $\gamma=w_3$, then by \cref{obs:walks-between-corners}, the following walks exist in $H$:
\begin{itemize}
\item $\cX_{\alpha}: \alpha \to \beta, \cY_{\alpha}: \alpha \to \gamma, \cZ_{\alpha}: \beta \to \alpha$, such that $\cX_{\alpha},\cY_{\alpha}$ avoid $\cZ_{\alpha}$ and $\cZ_{\alpha}$ avoids $\cX_{\alpha},\cY_{\alpha}$,
\item $\cX_{\gamma}: \alpha \to \alpha, \cY_{\gamma}: \alpha \to \beta, \cZ_{\gamma}: \beta \to \gamma$, such that $\cX_{\gamma},\cY_{\gamma}$ avoid $\cZ_{\gamma}$ and $\cZ_{\gamma}$ avoids $\cX_{\gamma},\cY_{\gamma}$.
\end{itemize}

As the path $P_u$ we take the composition of gadgets: $\PP(\{\cX_{\alpha},\cY_{\alpha},\cZ_{\alpha}\}), \PP(\{\overline{\cX_{\gamma}},\overline{\cY_{\gamma}},\overline{\cZ_{\gamma}}\}), \PP(\{\cX_{\gamma},\cY_{\gamma},\cZ_{\gamma}\})$ (see \cref{fig:path}).

\paragraph{Step II.} The next step is the construction of a graph $F_u$. For every $u \in S\setminus \{v\}$ we call \cref{lem:detector-gadget} for $H,S,u,g$ to construct a detector gadget $\tF_u$. Then, for every $u \in S\setminus \{v\}$, we join the path $P_u$ with the graph $\tF_u$ by identifying vertices $c$ and $c_u$. Let us call the constructed graph $F_u$ and the vertex $y$ from the path $P_u$ let us call $y_u$.

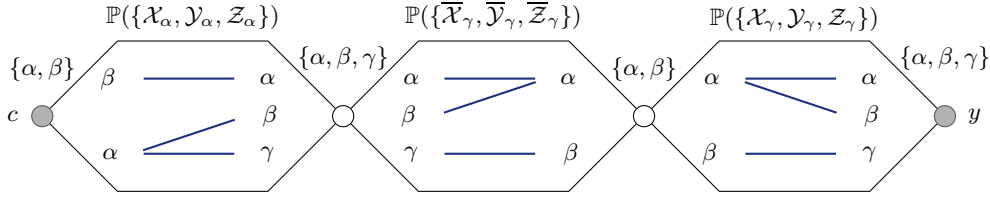
\begin{figure}
\centering{\begin{tikzpicture}[every node/.style={draw,circle,fill=white,inner sep=0pt,minimum size=8pt},every loop/.style={}]
\node[fill=gray,opacity=0.6,label=above:\footnotesize{$\{\alpha,\beta\}$}] (x1) at (0,0) {};
\node[label=above:\footnotesize{$\{\alpha,\beta,\gamma\}$}] (x2) at (4,0) {};
\node[label=above:\footnotesize{$\{\alpha,\beta\}$}] (x3) at (8,0) {};
\node[fill=gray,opacity=0.6,label=above:\footnotesize{$\{\alpha,\beta,\gamma\}$}] (x4) at (12,0) {};

\draw (x1)--++(1,1)--++(2,0)--(x2);
\draw (x2)--++(1,1)--++(2,0)--(x3);
\draw (x3)--++(1,1)--++(2,0)--(x4);
\draw (x1)--++(1,-1)--++(2,0)--(x2);
\draw (x2)--++(1,-1)--++(2,0)--(x3);
\draw (x3)--++(1,-1)--++(2,0)--(x4);

\node[color=white,label=left:\footnotesize{$\alpha$}] (x12) at (1.2,-0.5) {};
\node[color=white,label=left:\footnotesize{$\beta$}] (x11) at (1.2,0.5) {};
\node[color=white,label=right:\footnotesize{$\alpha$}] (x13) at (2.7,0.5) {};
\node[color=white,label=right:\footnotesize{$\beta$}] (x14) at (2.7,0) {};
\node[color=white,label=right:\footnotesize{$\gamma$}] (x15) at (2.7,-0.5) {};

\node[color=white,label=right:\footnotesize{$\alpha$}] (x24) at (6.7,0.5) {};
\node[color=white,label=right:\footnotesize{$\beta$}] (x25) at (6.7,-0.5) {};
\node[color=white,label=left:\footnotesize{$\alpha$}] (x21) at (5.2,0.5) {};
\node[color=white,label=left:\footnotesize{$\beta$}] (x22) at (5.2,0) {};
\node[color=white,label=left:\footnotesize{$\gamma$}] (x23) at (5.2,-0.5) {};

\node[color=white,label=left:\footnotesize{$\beta$}] (x32) at (9.2,-0.5) {};
\node[color=white,label=left:\footnotesize{$\alpha$}] (x31) at (9.2,0.5) {};
\node[color=white,label=right:\footnotesize{$\alpha$}] (x33) at (10.7,0.5) {};
\node[color=white,label=right:\footnotesize{$\beta$}] (x34) at (10.7,0) {};
\node[color=white,label=right:\footnotesize{$\gamma$}] (x35) at (10.7,-0.5) {};

\draw[thick, color=blue] (x11)--(x13);
\draw[thick, color=blue] (x12)--(x14);
\draw[thick, color=blue] (x12)--(x15);
\draw[thick, color=blue] (x21)--(x24);
\draw[thick, color=blue] (x22)--(x24);
\draw[thick, color=blue] (x23)--(x25);
\draw[thick, color=blue] (x31)--(x33);
\draw[thick, color=blue] (x31)--(x34);
\draw[thick, color=blue] (x32)--(x35);

\node[draw=none,fill=none,label=above:\footnotesize{$\PP(\{\cX_{\alpha},\cY_{\alpha},\cZ_{\alpha}\})$}] (a) at (2,0) {};
\node[draw=none,fill=none,label=above:\footnotesize{$\PP(\{\oX_{\gamma},\oY_{\gamma},\oZ_{\gamma}\})$}] (b) at (6,0) {};
\node[draw=none,fill=none,label=above:\footnotesize{$\PP(\{\cX_{\gamma},\cY_{\gamma},\cZ_{\gamma}\})$}] (c) at (10,0) {};
\node[draw=none,fill=none,label=left:\footnotesize{$c$}] (d) at (-0.1,0) {};
\node[draw=none,fill=none,label=right:\footnotesize{$y$}] (e) at (12.1,0) {};

\end{tikzpicture}}
\caption{The path $P_u$. Blue lines inside each gadget denote possible mappings of the input and the output vertex of the gadget.}\label{fig:path}
\end{figure}

Observe that the graph $F_u$ satisfies the following properties:
\begin{enumerate}[(F1.)]
\item $L(x_u)=S$ and $L(y_u)=\{\alpha,\beta,\gamma\}$.
\item For every $s \in S$ and every $a \in \{\alpha,\beta\}$ there exists a list homomorphism $\vphi: (F_u,L) \to H$ such that $\vphi(x_u)=s$ and $\vphi(y_u)=a$.
\item For every $s \in S\setminus\{u\}$ there exists a list homomorphism $\vphi: (F_u,L) \to H$ such that $\vphi(x_u)=s$ and $\vphi(y_u)=\gamma$. 
\item There is no list homomorphism $\vphi: (F_u,L) \to H$ such that $\vphi(x_u)=u$ and $\vphi(y_u)=\gamma$. 
\item $F_u - \{x_u\}$ is a tree. 
\item $\deg(x_u)=k-1$ and $\deg(y_u)=1$.
\item The degree of every vertex except $x_u$ in $F_u$ is at most $3$.
\item Vertices of degree at least $3$ are at distance at least $g$.
\item $\girth{F_u}\geq g$.
\item $y_u$ and $x_u$ are at distance at least $g$ from each other and from any vertex of degree at least $3$.
\end{enumerate}

Property (F1.) follows directly from the construction of $F_u$. Property (F2.) follows from property ($\widetilde{F}$2.) of $\tF_u$ and property (P1.) of $P_u$. Property (F3.) follows from properties ($\widetilde{F}$3.) and (P2.). To show (F4.) consider a list homomorphism $\vphi: (F_u,L) \to H$ such that $\vphi(x_u)=u$. By the property ($\tF$4.) it holds that $\vphi(c_u)=\beta$, which by the property (P3.) implies that $\vphi(y_u)\neq \gamma$. Property (F5.) follows directly from ($\widetilde{F}$5.), since we only joined $\tF_u$ with a path by identyfing their single vertices. Properties (F6.) and (F7.) follow from properties ($\widetilde{F}$6.), ($\widetilde{F}$7.), and the fact that we added a path to the graph $\tF_u$ by identifying an endvertex with a vertex of degree $1$. Properties (F8.) and (F9.) follow from the fact that by \cref{lem:detector-gadget}, every $\tF_u$ has girth at least $g$ and the vertices of degree at least $3$ are at distance at least $g$. Property (F10.) follows from property (P4.) of $P_u$ and the fact that we constructed $\tF_u$ such that $x_u$ is at distance at least $g$ from any vertex of degree at least $3$.

\paragraph{Step III.} Finally, we are able to construct an assignment gadget $A_v$. For every $u \in S\setminus\{v\}$ we introduce the gadget $F_u$ with special vertices $x_u,y_u$ and we identify all $x_u$'s into a single vertex $x$ and all $y_u$'s into a single vertex $y$. That completes the construction of $A_v$. It only remains to show that $A_v$ is indeed an assignment gadget with girth at least $g$ and vertices of degree at least $3$ pairwise at distance at least $g$. Properties (A1.), (A2.), and (A3.) from the \cref{def:assign-gadget} follow directly from, respectively, properties (F1.), (F2.) and (F3.) of $F_u$. To show (A4.) consider a list homomorphism $\vphi: (A_v,L) \to H$ such that $\vphi(y)=\gamma$. By the property (F4.), for every $u \in S\setminus\{v\}$, it holds that $\vphi(x)\neq u$, so the only possible mapping of $x$ is $\vphi(x)=v$. Property (A5.) follows from the property (F5.) of $F_u$ and the fact that we joined $F_u$'s by identifying $x_u$'s into one vertex $x$ and $y_u$'s into one vertex $y$. Properties (A6.) and (A7.) follow from properties (F6.) and (F7.) of $F_u$ since we joined $k-1$ graphs $F_u$ (one for every $u \in S \setminus \{v\}$) by identifying vertices $x_u$, each of degree $k-1$, into one vertex $x$ of degree $(k-1)^2$ and identifying vertices $y_u$, each of degree $1$, into one vertex $y$ of degree $k-1$. Degrees of the other vertices did not change. Finally, let us show that $\girth{A_v} \geq g$, the vertices of degree at least $3$ are pairwise at distance at least $g$, at distance at least $g$ from $x$, $y$, and $\dist(x,y)\geq g$. By property (F9.) we have $\girth{F_u}\geq g$ and by property (F10.) the vertex $y_u$ is at distance at least $g$ from $x_u$. Thus by identifying $x_u$'s and $y_u$'s we obtain a graph with girth at least $g$. Moreover, by the property (F8.), in every $F_u$  any two vertices of degree at least $3$ are at distance at least $g$ and by the property (F10.), we have $\dist(x_u,y_u)\geq g$ and $x_u$, $y_u$ are at distance at least $g$ from any vertex of degree at least $3$. Therefore, for any distinct $a,b$ in $A_v$ of degree at least $3$, it holds that $\dist(a,b)\geq g$, $\dist(a,x)\geq g$, and $\dist(a,y)\geq g$. That completes the proof.
\end{proof}

Now we will show the construction of a switching gadget. Let us remind its definition first.

\defSwitch*

Finally, let us proceed to the proof.

\lemSwitchGadget*

\begin{proof}
First, consider the case that there is an induced $C_6$ with consecutive vertices $w_1,\ldots,w_6$ or an induced $C_8$ with consecutive vertices $w_1,\ldots,w_8$ in $H$ such that $\alpha=w_1$, $\beta=w_5$, and $\gamma=w_3$. Define $T$ as a path of length 4 with consecutive vertices $x_1,x_2,x_3,x_4,x_5$ and lists $\{w_1,w_5\}$, $\{w_2,w_4\}$, $\{w_1,w_3,w_5\}$, $\{w_2,w_4\}$, $\{w_1,w_5\}$.
It is straightforward to verify that in both cases $T$ with $p:=x_1,q:=x_3$, and $r:=x_5$ is a switching gadget.

So now we assume that $H$ does not contain an induced $C_6$ or an induced $C_8$ with $\alpha=w_1$, $\beta=w_5$, and $\gamma=w_3$. Then by \cref{obs:walks-between-corners} there exist the following walks: 
\begin{itemize}
\item $\cX': \alpha \to \beta, \cY':\beta \to \alpha$ such that $\cY'$ avoids $\cX'$,
\item $\cX_{\alpha}: \alpha \to \beta, \cY_{\alpha}: \alpha \to \gamma$ and $\cZ_{\alpha}: \beta \to \alpha$, such that $\cX_{\alpha},\cY_{\alpha}$ avoid $\cZ_{\alpha}$ and $\cZ_{\alpha}$ avoids $\cX_{\alpha},\cY_{\alpha}$,
\item $\cX_{\beta}: \alpha \to \alpha, \cY_{\beta}: \alpha \to \gamma$ and $\cZ_{\beta}: \beta \to \beta$, such that $\cX_{\beta},\cY_{\beta}$ avoid $\cZ_{\beta}$ and $\cZ_{\beta}$ avoids $\cX_{\beta},\cY_{\beta}$.
\end{itemize}

\begin{figure}
\centering{\begin{tikzpicture}[every node/.style={draw,circle,fill=white,inner sep=0pt,minimum size=8pt},every loop/.style={}]

\node[fill=gray, opacity=0.6, label=above:\footnotesize{$\{\alpha,\beta\}$}] (x1) at (0,0) {};
\node[fill=gray, opacity=0.6, label=above:\footnotesize{$\{\alpha,\beta,\gamma\}$}] (x2) at (4,0) {};
\node[label=above:\footnotesize{$\{\alpha,\beta\}$}] (x3) at (8,0) {};
\node[fill=gray, opacity=0.6, label=above:\footnotesize{$\{\alpha,\beta\}$}] (x4) at (12,0) {};
\node[draw=none, fill=none, label=below:\footnotesize{$p$}] (x) at (0,-0.1) {};
\node[draw=none, fill=none, label=below:\footnotesize{$q$}] (x) at (4,-0.1) {};
\node[draw=none, fill=none, label=below:\footnotesize{$r$}] (x) at (12,-0.1) {};

\draw (x1)--++(1,1)--++(2,0)--(x2);
\draw (x2)--++(1,1)--++(2,0)--(x3);
\draw (x3)--++(1,1)--++(2,0)--(x4);
\draw (x1)--++(1,-1)--++(2,0)--(x2);
\draw (x2)--++(1,-1)--++(2,0)--(x3);
\draw (x3)--++(1,-1)--++(2,0)--(x4);

\node[color=white,label=left:\footnotesize{$\beta$}] (x12) at (1.2,-0.5) {};
\node[color=white,label=left:\footnotesize{$\alpha$}] (x11) at (1.2,0.5) {};
\node[color=white,label=right:\footnotesize{$\alpha$}] (x13) at (2.7,0.5) {};
\node[color=white,label=right:\footnotesize{$\gamma$}] (x14) at (2.7,0) {};
\node[color=white,label=right:\footnotesize{$\beta$}] (x15) at (2.7,-0.5) {};

\node[color=white,label=right:\footnotesize{$\beta$}] (x24) at (6.7,0.5) {};
\node[color=white,label=right:\footnotesize{$\alpha$}] (x25) at (6.7,-0.5) {};
\node[color=white,label=left:\footnotesize{$\alpha$}] (x21) at (5.2,0.5) {};
\node[color=white,label=left:\footnotesize{$\gamma$}] (x22) at (5.2,0) {};
\node[color=white,label=left:\footnotesize{$\beta$}] (x23) at (5.2,-0.5) {};

\node[color=white,label=left:\footnotesize{$\alpha$}] (x32) at (9.2,-0.5) {};
\node[color=white,label=left:\footnotesize{$\beta$}] (x31) at (9.2,0.5) {};
\node[color=white,label=right:\footnotesize{$\alpha$}] (x33) at (10.7,0.5) {};
\node[color=white,label=right:\footnotesize{$\beta$}] (x34) at (10.7,-0.5) {};

\draw[thick, color=blue] (x11)--(x13);
\draw[thick, color=blue] (x11)--(x14);
\draw[thick, color=blue] (x12)--(x15);
\draw[thick, color=blue] (x21)--(x24);
\draw[thick, color=blue] (x22)--(x25);
\draw[thick, color=blue] (x23)--(x25);
\draw[thick, color=blue] (x31)--(x33);
\draw[thick, dashed, color=blue] (x32)--(x33);
\draw[thick, color=blue] (x32)--(x34);

\node[draw=none,fill=none,label=above:\footnotesize{$\PP(\{\cX_{\beta},\cY_{\beta},\cZ_{\beta}\})$}] (a) at (2,0.1) {};
\node[draw=none,fill=none,label=above:\footnotesize{$\PP(\{\oX_{\alpha},\oY_{\alpha},\oZ_{\alpha}\})$}] (b) at (6,0.1) {};
\node[draw=none,fill=none,label=above:\footnotesize{$\PP(\{\cX',\cY'\})$}] (c) at (10,0.4) {};

\end{tikzpicture}}
\caption{The switching gadget $T$. Blue lines inside each gadget denote possible mappings of the input and the output vertex of the gadget. The dashed line denotes a mapping that might exists but not necessarily.}\label{fig:switching-gadget}
\end{figure}
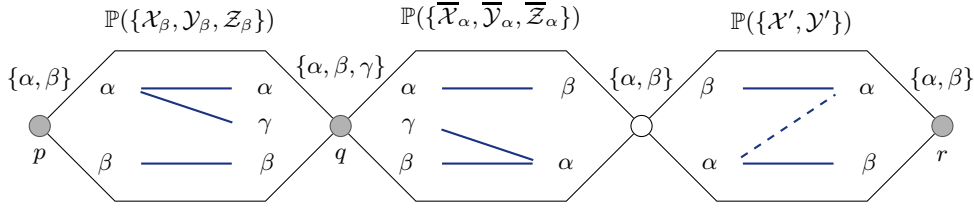

We set as $T$ the composition of gadgets: $\PP(\{\cX_{\beta},\cY_{\beta}, \cZ_{\beta})$, $\PP(\{\oX_{\alpha},\oY_{\alpha},\oZ_{\alpha}\})$, and $\PP(\{\cX',\cY'\})$ (see \cref{fig:switching-gadget}). Let $p_1,\ldots,p_{\ell}$ be consecutive vertices of $T$ and let $p:=p_1,q:=p_s,r:=p_{\ell}$, where $s$ is the number of vertices of $\cX_\beta$. By the definition, $L(p)=L(r)=\{\alpha,\beta\}$ and $L(q)=\{\alpha,\beta,\gamma\}$. Remaining properties of the switching gadget are satisfied by the properties of used walks (see \cref{fig:switching-gadget}).

Finally, observe that we can always assume that in every switching gadget it holds that $\dist(p,q)\geq\frac{g}{2}$ and $\dist(r,q)\geq\frac{g}{2}$. Otherwise, as in the proofs of \cref{lem:relationgadgets} and \cref{lem:assign-gadget} we can introduce paths of appropriate even lengths with lists of consecutive vertices $\{\alpha,\beta\},\{\alpha',\beta'\},\ldots\{\alpha,\beta\}$, where $\alpha',\beta'$ are taken from \cref{obs:walks-between-corners}. We append each path to one of endvertices of the switching gadget. Since the edges $\alpha\alpha'$, $\beta\beta'$ induce a matching in $H$, adding those paths to the gadget preserves its properties. That completes the proof.
\end{proof}

\newpage
\section{Algorithm for $\lhomo H$ parameterized by cutwidth} \label{sec:algo}
In this section we will show how to generalize the algorithm for  $k \coloring$ by Jansen and Nederlof~\cite{DBLP:journals/tcs/JansenN19},
so that it works for all target graphs $H$.
Actually, we will present a more general result, i.e., the algorithm that solves the so-called Binary Constraint Satisfiaction Problem.

Let us start with some definitions.
In the Binary Constraint Satisfaction Problem (BCSP) we are given a triple $I=(V,D,C)$ such that $V$ is the set of variables, $D$ is family of domains $(D_v)_{v \in V}$ and $C$ is the set of binary constraints.
Each constraint $c$ is a tuple $(u,v,S_c)$, where $u,v \in V$ are distinct variables and $S_c \subseteq D_{u} \times D_{v}$.
We will identify the constraints $(u,v,S_c)$ and $(v,u,S'_c)$, where $S'_c = \{ (b,a) ~|~ (a,b) \in S_c\}$, and treat them  as the same constraint.
Thus we can assume that for every pair $u,v$ there is at most one constraint $c \in C$, which contains variables $u,v$.
We have to decide whether there exists a mapping $w$ that assigns a value from $D_v$ to every $v \in V$ so that for each constraint $c=(u,v,S_c) \in C$ it holds that $(w(u),w(v)) \in S_c$.
We will always assume that each domain $D_v$ is a finite subset of $\N_+$, and by $D_{max}$ we will denote the size of a maximum set $D_v \in D$.
 
For an instance $I=(V,D,C)$ of BCSP a \emph{primal graph} $P$ is a graph with vertex set $V$ and there is an edge $uv$ in $P$ if there exists a constraint $c \in C$ containing both variables $u$ and $v$. We also define the \emph{instance graph} $G(I)$ as a graph with vertex set $\{ (v,d) \ | \ v \in V, \ d \in D_v \}$ and there is an edge $(v,d_v)(u,d_u)$ if there is a constraint $c=(u,v,S_c)$ with $(d_u,d_v) \in S_c$. 
Observe that solving an instance $I$ of BCSP is equivalent to finding a copy of the primal graph $P$ in $G(I)$, such that each $u \in V(P)=V$ is chosen from the set $\{(u,d_u) ~|~d_u \in D_u\}$.

Note that an instance $(G,L)$ of $\lhomo{H}$ can be seen as an instance $I=(V,D,C)$ of BCSP, where the set of variables $V=V(G)$, the domain $D_v=L(v)$ for every $v\in V(G)$ and the set of constraints $C$ is the set of all triples $(u,v,S_c)$ such that $uv \in E(G)$ and $(a,b)\in S_c$ if and only if $a\in L(u)$, $b\in L(v)$, and $ab\in E(H)$. The primal graph of this instance is exactly $G$.

For an instance $I=(V,D,C)$ of BCSP we define:
\[
K(I):=\max_{uv \in E(P)} \  \max_{a \in D_v} | \{ b \in D_u \ | \ \exists_{(u,v,S_c)\in C} \ (b,a) \notin S_c\} |.
\]

We aim to show the following theorem, recall that $\omega  < 2.373$ is the matrix multiplication exponent.
\begin{restatable}{theorem}{thmAlgoBcsp}\label{thm:algo-bcsp}
Let $I=(V,D,C)$ be an instance of BCSP and let $P$ be its primal graph, given with the linear ordering $\pi=(v_1,\ldots,v_n)$ of $V$ of width $k$. Then the instance $I$ can be solved in time $2^{\omega \cdot K(I) \cdot k}\cdot (D_{max} \cdot n)^{\Oh(1)}$.
\end{restatable}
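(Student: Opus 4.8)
The plan is to mimic the Jansen–Nederlof approach for $k$-\textsc{Coloring}: do a left-to-right dynamic programming along the linear layout $\pi$, but instead of storing a full table indexed by all partial assignments to the "active" vertices (the ones incident to a crossing edge), store a \emph{compressed} representation of the set of extendable partial assignments, using the rank of an appropriate matrix. Concretely, I would process vertices $v_1, v_2, \ldots, v_n$ one at a time. For a prefix $V_j = \{v_1,\ldots,v_j\}$, let $B_j \subseteq V_j$ be the \emph{boundary}: those vertices of $V_j$ that have a neighbour in $P$ among $v_{j+1},\ldots,v_n$. A partial solution on $V_j$ is an assignment $w$ with $w(v_i) \in D_{v_i}$ satisfying every constraint with both endpoints in $V_j$; its \emph{signature} is its restriction to $B_j$. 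The DP maintains the set $\mathcal{S}_j$ of signatures of partial solutions. The naive size of $\mathcal S_j$ is $D_{\max}^{|B_j|}$, which is too large since $|B_j|$ can be as big as $k$ (and $D_{\max}$ can be large); the point is to replace $\mathcal S_j$ by something of size $2^{O(K(I)\cdot k)}$.

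**The compression.** The key observation is the one the paper sets up with $K(I)$: for a constraint $c=(u,v,S_c)$ and a fixed value $a \in D_v$, the set of "forbidden" values $\{b \in D_u : (b,a)\notin S_c\}$ has size at most $K(I)$. So each value $a$ of a boundary vertex $v$ "kills" at most $K(I)$ values at each neighbour across the cut. The trick, exactly as in Jansen–Nederlof, is to encode a value $a \in D_v$ of a boundary vertex not by $a$ itself but by the \emph{characteristic vectors of the forbidden sets} it induces on its (at most, after we introduce the layout, a bounded number of) across-the-cut neighbours — but since we advance one vertex at a time, each boundary vertex has at most one "pending" edge at a time in the standard trick. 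More precisely: when we move the cut past $v_{j+1}$, the newly relevant interaction is between $v_{j+1}$ and its already-placed neighbours in $B_j$; consistency of a signature with a choice at $v_{j+1}$ depends, for each such neighbour $u$, only on whether $w(u)$ lies in the forbidden set determined by $w(v_{j+1})$. Represent the relevant "state" of $u$ by a $\{0,1\}$-vector of length $K(I)$ recording, in the fixed ordering of $D_u$ restricted to the $\le K(I)$ values that could ever be forbidden by a value of $v_{j+1}$, which of them equals $w(u)$ — and since only $\le K(I)$ values matter per pending edge and each boundary vertex carries $\le 1$ pending edge at the moment of its processing (by advancing one vertex, as in their argument), the total information is $K(I)$ bits per boundary vertex, i.e. $2^{K(I)\cdot|B_j|} \le 2^{K(I)\cdot k}$ possible compressed signatures. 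This is where I must be careful: I'd follow their bookkeeping precisely so that "forbidden sets seen so far" never accumulates more than $K(I)$ bits per boundary vertex; the subtlety is that a single vertex may be a boundary vertex for many steps and have several across-cut neighbours over its lifetime, and one has to argue these are handled one edge at a time.

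**Matrix-rank speedup.** Having the $2^{O(K(I) k)}$-size compressed table is not yet enough for the $\omega$ in the exponent — the $\omega$ comes from using fast matrix multiplication (equivalently, Gaussian elimination / rank computation) to keep the DP table's effective dimension bounded. Following Jansen–Nederlof, represent $\mathcal S_j$ by the row space of a $0/1$ matrix $M_j$ over $\mathbb F_2$ (rows = partial solutions, columns = compressed signatures), and note that the transition "extend across $v_{j+1}$" is a linear map on these row spaces given by multiplying $M_j$ by a fixed compatibility matrix $A$ whose rows/columns are indexed by compressed signatures; the new table is $M_{j+1}$ with row space contained in $M_j A$. We never need to store $M_j$ explicitly, only a basis of its row space, of size at most $2^{K(I)k} \times 2^{K(I)k}$, and each transition costs one multiplication of such matrices plus a basis recomputation, i.e. $O\!\big((2^{K(I) k})^\omega\big) = 2^{\omega K(I) k}$ time, repeated $n$ times, with $\mathrm{poly}(D_{\max},n)$ overhead for building $A$ and decoding. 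At the end, $I$ is satisfiable iff $\mathcal S_n \ne \emptyset$, i.e. the final row space is nonzero. I would also remark that $\lhomo H$ is the special case where $K(I) = \gamma^*(H)$ for a suitable invariant $\gamma^*$, recovering \cref{thm:alglhomo}, and that $K(I)=1$ when every $D_v$ equals $[k]$ and all constraints are the "not-equal" relation, i.e. for $k$-\textsc{Coloring}, giving $\gamma^*(K_k)=1$ and recovering Jansen–Nederlof.

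**Main obstacle.** The routine part is the rank/matrix-multiplication wrapper — that is essentially verbatim from Jansen–Nederlof once the table dimension is pinned down. The real work, and the place I expect the argument to be delicate, is \emph{defining the compressed signature so that its length is provably $\le K(I)$ bits per boundary vertex} while still being a correct sufficient statistic for all future transitions. This requires exactly the right invariant — that advancing the cut by one vertex exposes, per boundary vertex, only one new constraint whose satisfaction depends on at most $K(I)$ "candidate forbidden" domain values — and a clean inductive statement (à la their Lemma) that the compressed table after step $j$ represents precisely the set of signatures of partial solutions on $V_j$. I would isolate this as a lemma, prove it by induction on $j$, and only then plug it into the linear-algebraic running-time bound.
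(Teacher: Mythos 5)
Your high-level plan (prefix DP along the layout, boundary signatures, compression, fast matrix multiplication) is the same skeleton as the paper's, but the component you yourself flag as the ``main obstacle'' is exactly the part that is missing, and the explicit encoding you sketch does not work. You propose to compress the value $w(u)$ of a boundary vertex $u$ into $K(I)$ bits recording which of the ``at most $K(I)$ candidate forbidden values'' it equals. However, for a crossing edge $uv$ the forbidden subset of $D_u$ of size at most $K(I)$ is determined by the \emph{value of the right endpoint} $v$, which has not been chosen yet; different future values $a\in D_v$ forbid different $K(I)$-subsets of $D_u$, and their union can be all of $D_u$. Hence no static $K(I)$-bit summary of $w(u)$, fixed at the time $u$ enters the boundary, can be a sufficient statistic for all future constraints involving $u$, and the claim that ``each boundary vertex carries at most one pending edge at the moment of its processing'' does not repair this, since $u$ may remain on the boundary for many steps with many right neighbours. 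A secondary issue is the $\mathbb{F}_2$ row-space bookkeeping: nonemptiness of the set of partial solutions is not certified by a nonzero row space modulo $2$ in the way you use it (an even number of witnesses can cancel), and the transition ``multiply by a compatibility matrix $A$ indexed by compressed signatures'' presupposes the very signature set whose existence is in question.

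The paper closes this gap algebraically rather than combinatorially. For the crossing edges $E$ between the processed side $X$ and the future side $Y$ it considers the integer-valued compatibility matrix
\begin{equation*}
M[\x,\y]\;=\;\prod_{uv\in E}\ \prod_{i=1}^{K}\bigl(x_u-\sigma^{(i)}_{uv}(y_v)\bigr),
\end{equation*}
where the $\sigma^{(i)}_{uv}$ enumerate, for each value of $v$, its at most $K=K(I)$ forbidden partners in $D_u$; $M[\x,\y]\neq 0$ iff $(\x,\y)$ satisfies all crossing constraints. Expanding the product groups terms by the degree in which each variable $x_u$ occurs, which yields a factorization $M=L\cdot R$ whose inner dimension is $\bigl(\prod_{u\in X}(\deg_E(u)+1)\bigr)^{K}\le 2^{K\cdot k}$ by the AM--GM inequality (this is where the cutwidth bound enters, replacing your per-vertex bit count). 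Representative sets of extendable boundary assignments, of size at most $2^{K\cdot k}$, are then extracted by computing a row basis of $L[S,\bullet]$ via Gaussian elimination, which is where the exponent $\omega$ comes from; the linear-combination argument over $\mathbb{Q}$ shows that a basis row with $M[\x^{(i)},\y]\neq 0$ must exist whenever some $M[\x,\y]\neq 0$, so no parity cancellation issue arises. Your proposal as written does not contain this rank factorization (or any substitute for it), so the claimed table size $2^{\Oh(K(I)\cdot k)}$ and hence the running time are not established.
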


\subsection{General setting and auxiliary results}
Let $I=(V,D,C)$ be an instance of BCSP. Let $P$ be its primal graph and let $K:=K(I)$. For every edge $uv \in E(P)$ we define $K$ mappings $\sigma_{uv}^{(i)}: D_v \to D_u\cup\{0\}$, for $i \in [K]$, such that for every $a \in D_v$ it holds that $\big(\sigma_{uv}^{(i)}(a),a\big) \notin S_c$, where $c=(u,v,S_c)\in C$. 

Observe that by the definition of $K(I)$ it is possible that every forbidden pair of values of $u,v$ appears in at least one mapping. Moreover, if for any $uv$ and a value $a \in D_v$ there is no forbidden value for $u$, then we can always set $\sigma^{(i)}_{uv}(a):=0$. Note that for every $uv \in E(P)$, if we assign to $u,v$, respectively the values $x_u \in D_u$ ,$y_v \in D_v$, the constraint $c=(u,v,S_c)\in C$ is satisfied if and only if for every $i \in [K]$ it holds that $x_u \neq \sigma_{uv}^{(i)}(y_v)$.

Let us fix disjoint sets of variables $X,Y \subseteq V$. We define $\X$ to be the the set of all tuples $\x=(x_u)_{u \in X}$, where $x_u \in D_u$ for every $u\in X$. Similarly, $\Y$ is the set of all tuples $\y=(y_v)_{v \in Y}$, where $y_v \in D_v$ for every $v\in Y$.
Let $E\subseteq E(P)$ be the set of edges with one endpoint in $X$ and the other in $Y$. We will treat these edges as directed from $X$ to $Y$, i.e., whenever we write $uv\in E$, we mean that $u \in X$ and $v \in Y$.
Let us define a matrix $M$, whose rows are indexed by tuples $\x \in \X$, columns are indexed by tuples $\y \in \Y$, and the values are:

\begin{equation}
M[\x,\y]:= \prod_{uv \in E} \prod_{i=1}^{K} \big(x_u - \sigma_{uv}^{(i)}(y_v)\big)= \prod_{i=1}^{K} \prod_{uv \in E} \big(x_u - \sigma_{uv}^{(i)}(y_v)\big).\label{eq:matrixM-def}
\end{equation}

Observe that $M[\x,\y] \neq 0$ if and only if assigning the value $x_u$ to every $u \in X$ and $y_v$ to every $v \in Y$ satisfies every constraint that contains one variable from $X$ and one variable from $Y$. We will call such a pair of tuples $(\x,\y)$ a \emph{good} pair.

Fix $i \in [K]$ and let us analyze the factors that appear in equation \cref{eq:matrixM-def}.
To simplify the notation, for a vertex $u$ and a set of edges $Z$, by $\deg_Z(u)$ we denote the number of edges in $Z$ that contain $u$.

\begin{equation}\label{eq:matrixM-factors-i}
\begin{split}
\prod_{uv \in E} (x_u - \sigma_{uv}^{(i)}(y_v))= \sum_{Z \subseteq E}  \Big(\prod_{u \in X} (x_u)^{\deg_{Z}(u)} \Big) \cdot \Big(\prod_{uv \in E \setminus Z} \big(-\sigma_{uv}^{(i)}(y_v)\big) \Big) = \\
= \sum_{(d_u \in \{0,\ldots,\deg_E(u)\})_{u\in X}} \Bigg(\prod_{u \in X} x_u^{d_u} \Bigg)\cdot \Bigg(\sum_{\substack{Z \subseteq E \\ \forall_{u \in X} \deg_{Z}(u)=d_u}} \prod_{uv \in E \setminus Z} \big(-\sigma_{uv}^{(i)}(y_v)\big) \Bigg).
\end{split}
\end{equation}

To simplify the notation, by $\DD$ we will denote the set of all tuples $\dd=(d_u)_{u\in X}$, such that for every $u \in X$ it holds that $d_u \in \{0,\ldots,\deg_E(u)\}$. For $\dd \in \DD$, $\x \in \X$, $\y \in \Y$, and $i \in K$, let us denote:

\begin{equation}\label{eq:matrixM-def-fg}
\begin{split}
 f(\dd, \x) & :=\prod_{u \in X} x_u^{d_u}, \\
 g(i,\dd, \y) &:=\sum_{\substack{Z \subseteq E  \\ \forall_{u \in X} \deg_Z(u)=d_u}} \prod_{uv \in E \setminus Z} \big(-\sigma_{uv}^{(i)}(y_v)\big).
 \end{split}
\end{equation}

For $\od \in \DD^K$, where $\od=(\dd^{(1)},\ldots,\dd^{(K)})$, let us define the matrices $L$ and $R$:
\begin{equation}\label{eq:def-L-R}
\begin{split}
& L[\x,\od]:=\prod_{i=1}^{K} f(\dd^{(i)},\x), \\
& R[\od, \y]:=\prod_{i=1}^{K} g(i,\dd^{(i)},\y).
\end{split}
\end{equation}
The rows of $L$ are indexed by all tuples $\x \in \X$ and the columns are indexed by all tuples $\od \in \DD^K$, and the rows of $R$ are indexed by all tuples $\od \in \DD^K$ and columns by all tuples $\y \in \Y$. Note that the number of columns of $L$ (and thus the number of rows of $R$) is equal to $\big|\DD^K\big|=|\DD|^K=\Big(\prod_{u \in X} (\deg_E(u)+1)\Big)^{K}$.

By applying \cref{eq:matrixM-factors-i}, \cref{eq:matrixM-def-fg}, and \cref{eq:def-L-R} to \cref{eq:matrixM-def}, we can write $M[\x,\y]$ as:
\begin{equation}
\begin{split}
M[\x,\y] & = \prod_{i=1}^{K} \sum_{\dd^{(i)} \in \DD} f(\dd^{(i)},\x) \cdot g(i,\dd^{(i)},\y) = \\
& = \Bigg( \sum_{\dd^{(1)}\in \DD} \big(f(\dd^{(1)},\x) \cdot  g(1,\dd^{(1)},\y) \big) \Bigg) \cdot \ldots \cdot \Bigg( \sum_{\dd^{( K )}\in \DD} f(\dd^{(K)},\x) \cdot  g(K,\dd^{(K)},\y)\Bigg) = \\
& = \sum_{\dd^{(1)}\in \DD}   \ldots \sum_{\dd^{( K )}\in \DD} \Bigg(\prod_{i=1}^{K} f(\dd^{(i)},\x) \cdot g(i,\dd^{(i)},\y)\Bigg) = \\
& = \sum_{\dd^{(1)}\in \DD}   \ldots \sum_{\dd^{( K )}\in \DD} \Bigg(\prod_{i=1}^{K} f(\dd^{(i)},\x)\Bigg) \cdot \Bigg(\prod_{i=1}^K g(i,\dd^{(i)},\y)\Bigg) = \\
& = \sum_{\od\in \DD^K}  \Bigg(\prod_{i=1}^{K} f(\dd^{(i)},\x)\Bigg) \cdot \Bigg(\prod_{i=1}^K g(i,\dd^{(i)},\y)\Bigg) = \sum_{\od\in \DD^K} L[\x,\od]\cdot R[\od,\y].
\end{split}
\end{equation}
Observe that now $M$ is expressed as the product of matrices $L$ and $R$ defined in \cref{eq:def-L-R}.

Now we extend the definition of representing sets, introduced by Jansen and Nederlof~\cite{DBLP:journals/tcs/JansenN19}.

\begin{definition}
Let $S$ be a subset of $\X$ and let $S'\subseteq S$. We say that $S'$ is an \emph{($X$-$Y$)-representative} of S (or \emph{($X$-$Y$)-represents} $S$) if for every tuple $\y \in \Y$ it holds that:

There exists $\x \in S$ such that $(\x,\y)$ is good if and only if there exists $\x' \in S'$ such that $(\x',\y)$ is good.
\end{definition}

In the following lemma we show that for a given $S \subseteq \X$ we can compute a small set $S' \subseteq S$, which ($X$-$Y$)-represents $S$.

\begin{lemma}\label{lem:reduce}
There is an algorithm that for a set $S \subseteq \X$ outputs in time 
\[ \Bigg(\prod_{u \in X} (\deg_E(u)+1)\Bigg)^{K\cdot (\omega-1) } \cdot |S| \cdot \Big(K \cdot \big(|X|+|Y|\big)\Big)^{\Oh(1)}\]
 a set $S' \subseteq S$ such that $|S'|\leq \big(\prod_{u \in X} (\deg_E(u)+1)\big)^{K}$ and $S'$ is an ($X$-$Y$)-representative of $S$.
\end{lemma}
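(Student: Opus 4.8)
The plan is to take $S'$ to be an inclusion-maximal subset of $S$ such that the row vectors $\{L[\x,\cdot] \colon \x \in S'\}$ of the matrix $L$ from \cref{eq:def-L-R} are linearly independent over $\mathbb{Q}$; in other words, $S'$ is a row basis of the submatrix of $L$ consisting of the rows indexed by $S$. Since these vectors live in $\mathbb{Q}^{c}$ with $c := \big(\prod_{u \in X}(\deg_E(u)+1)\big)^{K}$ (the number of columns of $L$), we get $S' \subseteq S$ and $|S'| \le c$ for free, and the two things left to prove are that $S'$ is an ($X$-$Y$)-representative of $S$ and that it can be found within the stated time bound.

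For the representative property I would fix $\y \in \Y$ and invoke the factorization $M = L \cdot R$ derived just before the statement, together with the fact noted after \cref{eq:matrixM-def} that $M[\x,\y] \neq 0$ precisely when $(\x,\y)$ is good. One direction is trivial because $S' \subseteq S$. For the other, suppose $(\x,\y)$ is good for some $\x \in S$. By maximality of $S'$ there are scalars $\lambda_{\x'}$ with $L[\x,\cdot] = \sum_{\x' \in S'} \lambda_{\x'} L[\x',\cdot]$; right-multiplying by $R$ and reading off the $\y$-column gives $M[\x,\y] = \sum_{\x' \in S'} \lambda_{\x'} M[\x',\y]$, and since the left-hand side is nonzero, $M[\x',\y] \neq 0$ — i.e.\ $(\x',\y)$ is good — for some $\x' \in S'$. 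Hence $S'$ ($X$-$Y$)-represents $S$.

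To compute $S'$ I would never form $M$: only $L$ is needed, and by \cref{eq:def-L-R} and \cref{eq:matrixM-def-fg} each entry $L[\x,\od] = \prod_{i \in [K]} \prod_{u \in X} x_u^{d_u^{(i)}}$ is evaluated in polynomial time. First I would relabel each domain $D_v$ by a bijection onto $\{1,\dots,|D_v|\}$, rewriting the constraints and the maps $\sigma_{uv}^{(i)}$ accordingly; this changes nothing, since $x_u - \sigma_{uv}^{(i)}(y_v)$ still vanishes exactly when the two corresponding domain values coincide and $0$ is still not a label, but now every entry of $L$ is an integer with $\Oh\big(K \cdot |X| \cdot |Y| \cdot \log D_{max}\big)$ bits, so arithmetic contributes only a polynomial factor. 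Computing a row basis of the $|S| \times c$ matrix $L$ restricted to the rows in $S$ is then done by the block fast-matrix-multiplication variant of Gaussian elimination (as used by Jansen and Nederlof~\cite{DBLP:journals/tcs/JansenN19}): processing the rows in chunks of $\Oh(c)$ and updating Schur complements via fast matrix multiplication uses $\widetilde{\Oh}(|S| \cdot c^{\omega-1})$ field operations. As $c^{\omega-1} = \big(\prod_{u \in X}(\deg_E(u)+1)\big)^{K(\omega-1)}$, this yields exactly the claimed running time $\big(\prod_{u \in X}(\deg_E(u)+1)\big)^{K(\omega-1)} \cdot |S| \cdot \big(K \cdot (|X|+|Y|)\big)^{\Oh(1)}$.

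The linear-algebra argument in the middle paragraph is the heart of the statement but is genuinely short; the step I expect to demand the most care is the complexity bookkeeping: obtaining the exponent $\omega - 1$ (rather than the naive $2$) requires citing the block fast-matrix-multiplication row-reduction subroutine with its precise cost and $\widetilde{\Oh}$ log-factors, and one has to confirm, via the relabelling above, that the integers occurring in $L$ have bit-length polynomial in $K$, $|X|$, $|Y|$, and $\log D_{max}$ so that the bit-complexity is absorbed into the polynomial factor.
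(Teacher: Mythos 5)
Your proposal is correct and follows essentially the same route as the paper: take $S'$ to be the indices of a row basis of $L[S,\bullet]$, bound $|S'|$ by the number of columns $\big(\prod_{u\in X}(\deg_E(u)+1)\big)^{K}$, and use the factorization $M=L\cdot R$ to transfer a nonzero entry $M[\x,\y]$ to some basis row, exactly as in the paper's argument. The only additions are the bit-length/relabelling remark and the explicit fast-matrix-multiplication row-reduction citation, which the paper handles by directly invoking known $\Oh(|S|\cdot|\DD|^{K(\omega-1)})$ rank/basis algorithms.
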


\begin{proof}
We compute $L[S,\bullet]$, i.e., the submatrix of $L$ consisting of the rows of the matrix $L$, whose indices belong to the set $S$. The submatrix $L[S,\bullet]$ can be computed in time $|S|\cdot |\DD|^{K} \cdot \big(K \cdot (|X|+|Y|)\big)^{\Oh(1)}$. We find a row basis of $L[S,\bullet]$, what can be done in time $\Oh(|S| \cdot |\DD|^{K\cdot (\omega-1)})$~\cite{10.2307/2005828,DBLP:journals/jal/IbarraMH82}. Let $S'=\{\x^{(1)},\ldots,\x^{(\ell)}\}$ be the set of indices of the rows from the basis. Since the number of columns of $L$ is $|\DD|^{K}$, the rank of $L$ is at most $|\DD|^{K}$ and hence $|S'| \leq |\DD|^K= \big(\prod_{u \in X} (\deg_E(u)+1)\big)^{K}$.

So let us verify that $S'$ is an ($X$-$Y$)-representative of $S$. Let $\y\in \Y$ and let $\x \in S$. Since $S'$ is the set of indices of the rows from the basis, we can express the row $L[\x,\bullet]$ as a linear combination of the rows indexed by elements of $S'$:

\[
L[\x,\bullet]= \sum_{i=1}^{|S'|} \lambda_i \cdot L[\x^{(i)},\bullet],
\]
for some $\lambda_1,\ldots,\lambda_{|S'|}\in \R$.
Thus $M[\x,\y]$ can be expressed as follows:
\[
M[\x,\y]= \Bigg( \sum_{i=1}^{|S'|} \lambda_i \cdot L[\x^{(i)},\bullet] \Bigg) \cdot R[\bullet,\y]= \sum_{i=1}^{|S'|} \lambda_i \cdot \Big( L[\x^{(i)},\bullet] \cdot R[\bullet,\y] \Big)  = \sum_{i=1}^{|S'|} \lambda_i \cdot M[\x^{(i)},\y].
\]
Now if $(\x,\y)$ is good, then $M[\x,\y]$ is non-zero and thus there must be $\x^{(i)} \in S'$ such that $M[\x^{(i)},\y]$ is non-zero. That means that for every $\y \in \Y$, if there is $\x \in S$ such that $(\x,\y)$ is good, then there exists $\x^{(i)} \in S'$, such that $(\x^{(i)},\y)$ is good. Therefore, $S'$ is an ($X$-$Y$)-representative of $S$.
\end{proof}

\subsection{Algorithm} \label{sec:algo-meat}

Let $I=(V,D,C)$ be an instance of BCSP and let $P$ be its primal graph. Fix an ordering $v_1,\ldots,v_n$ of the variables in $V$. For every $i\in[n]$, let us define:
\begin{equation}
\begin{split}
V_i &:= \{v_1,\ldots,v_i\}, \\
X_i & := \{ u \in V_i \ | \ \exists_{v \in V\setminus V_i} \ uv \in E(P) \}, \\
Y_i & := \{ v \in V\setminus V_i \ | \ \exists_{u \in V_i} \ uv \in E(P) \}.
\end{split}
\end{equation}
We also set $V_0$, $X_0$, and $Y_0$ as empty sets. Note that for every $i\in [n]$ it holds that if $j<i$ and $v_j \in X_i$, then $v_j \in X_{i-1}$, so $X_i \subseteq X_{i-1}\cup\{v_i\}$. Similarly, $Y_{i-1}\subseteq Y_i\cup \{v_i\}$.

Let $V' \subseteq V$. We say that a tuple $(x_u)_{u\in V'}$ is \emph{good} (on $V'$) if $x_u \in D_u$ for every $u\in V'$ and for every constraint $(u_1,u_2,S_c)\in C$, where $u_1,u_2 \in V'$, it holds that $(x_{u_1},x_{u_2})\in S_c$. For $V''\subseteq V'$ and a tuple $\x=(x_u)_{u\in V'}$, by $\x|_{V''}$ we denote the tuple $\x'=(x'_u)_{u\in V''}$ such that $x_u'=x_u$ for every $u\in V''$. By $\X_i$ we denote the set of all tuples $(x_u)_{u\in X_i}$ such that $x_u \in D_u$ for every $u\in X_i$ and by $\Y_i$ we denote the set of all tuples $(y_v)_{v\in Y_i}$ such that $y_v \in D_v$ for every $v\in Y_i$. For every $i \in [n]\cup\{0\}$, by $T[i]$ we denote the set of all tuples $\x=(x_u)_{u \in X_i}$, which can be extended to a good tuple on $V_i$. In particular, $T[0]=\{\emptyset\}$, where $\emptyset$ denotes the $0$-tuple. In the following lemma we show how to construct for every $i \in [n]$ a set $T'[i]\subseteq T[i]$, which is an ($X_i$-$Y_i$)-representative of $T[i]$.

\begin{lemma}\label{lem:representing-set}
Let $i\in [n]$ and let $T'[i-1]\subseteq T[i-1]$ be an ($X_{i-1}$-$Y_{i-1}$)-representative of $T[i-1]$. We define:
\begin{align*}
T'[i]:=\Bigl \{ & \x \in \X_i \ | \ \exists_{\x'\in T'[i-1]} \ \exists_{a \in D_{v_i}} \\
& \Big(\big( \x|_{X_{i-1}\cap X_i}=\x'|_{X_{i-1}\cap X_i}\big) \ 
 \land \big( \forall_{(v_j,v_i,S_c) \in C: \ v_j\in X_{i-1}} \   
(x'_{v_j},a)\in S_c\big) \land \big( v_i \in X_i \ \Rightarrow x_{v_i}=a \big) \Big) \Bigr \}.
\end{align*}
Then $T'[i] \subseteq T[i]$ and it is an ($X_i$-$Y_i$)-representative of $T[i]$. Moreover, $T'[i]$ can be computed in time $|T'[i-1]| \cdot |D_{v_i}|\cdot n^{\Oh(1)}$.
\end{lemma}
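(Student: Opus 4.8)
The plan is to prove \cref{lem:representing-set} in three parts: (i) $T'[i] \subseteq T[i]$; (ii) $T'[i]$ is an $(X_i\text{-}Y_i)$-representative of $T[i]$; (iii) the running time bound.

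\medskip
\noindent\textbf{Part (i): $T'[i]\subseteq T[i]$.} Let $\x\in T'[i]$ be witnessed by $\x'\in T'[i-1]$ and $a\in D_{v_i}$ as in the definition. Since $T'[i-1]\subseteq T[i-1]$, the tuple $\x'$ extends to some good tuple $\z=(z_u)_{u\in V_{i-1}}$ on $V_{i-1}$ with $\z|_{X_{i-1}}=\x'$. Define $\z^+$ on $V_i=V_{i-1}\cup\{v_i\}$ by $\z^+|_{V_{i-1}}:=\z$ and $z^+_{v_i}:=a$. I claim $\z^+$ is good on $V_i$. Any constraint $(u_1,u_2,S_c)\in C$ with $u_1,u_2\in V_{i-1}$ is satisfied because $\z$ is good. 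A constraint involving $v_i$ has its other variable $v_j\in V_{i-1}$; since $v_j$ has a neighbour in $V\setminus V_{i-1}$ (namely $v_i$), we have $v_j\in X_{i-1}$, so $z^+_{v_j}=z_{v_j}=x'_{v_j}$, and by the third conjunct in the definition $(x'_{v_j},a)\in S_c$, i.e.\ the constraint is satisfied. Finally I must check $\z^+|_{X_i}=\x$. For $u\in X_i\setminus\{v_i\}$ we have $u\in X_{i-1}\cap X_i$ (by $X_i\subseteq X_{i-1}\cup\{v_i\}$), so $z^+_u=x'_u=x_u$ by the first conjunct; and if $v_i\in X_i$, then $z^+_{v_i}=a=x_{v_i}$ by the last conjunct. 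Hence $\x\in T[i]$.

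\medskip
\noindent\textbf{Part (ii): representation.} Since $T'[i]\subseteq T[i]$, one implication is trivial; I must show that for every $\y\in\Y_i$, if some $\x^*\in T[i]$ makes $(\x^*,\y)$ good, then some $\x\in T'[i]$ makes $(\x,\y)$ good. Fix such $\y$ and $\x^*$. Let $\z^*$ be a good extension of $\x^*$ on $V_i$, and set $a:=z^*_{v_i}$ and $\x'^*:=\z^*|_{X_{i-1}}$. Then $\x'^*\in T[i-1]$. The key step: consider the tuple $\y'$ on $Y_{i-1}$ defined using $\y$ and $a$. Concretely, $Y_{i-1}\subseteq Y_i\cup\{v_i\}$; for $v\in Y_{i-1}\cap Y_i$ set $y'_v:=y_v$, and if $v_i\in Y_{i-1}$ set $y'_{v_i}:=a$. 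I claim $(\x'^*,\y')$ is good (as a pair with respect to $X_{i-1}$, $Y_{i-1}$): the constraints between $X_{i-1}$ and $Y_i\cap Y_{i-1}$ are a subset of those checked by goodness of $(\x^*,\y)$ together with goodness of $\z^*$; and constraints between $X_{i-1}$ and $v_i$ are satisfied because $\z^*$ is good and $z^*_{v_i}=a$. Applying the representation property of $T'[i-1]$ to the tuple $\y'$, there exists $\x'\in T'[i-1]$ with $(\x',\y')$ good. Now define $\x\in\X_i$ by: $x_u:=x'_u$ for $u\in X_{i-1}\cap X_i$, and $x_{v_i}:=a$ if $v_i\in X_i$. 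One checks $\x\in T'[i]$ with witnesses $\x'$ and $a$ (the conjuncts hold: the first by construction; the second because $(\x',\y')$ good forces $(x'_{v_j},a)=(x'_{v_j},y'_{v_i})\in S_c$ for every $v_j\in X_{i-1}$ with $(v_j,v_i,S_c)\in C$, noting $v_i\in Y_{i-1}$ whenever such a constraint exists since then $v_j\in V_{i-1}$ is a neighbour of $v_i$; the third by construction). Finally $(\x,\y)$ is good: constraints wholly inside $X_i$ involving $v_i$ are none relevant here beyond $X_i$-$Y_i$ edges, and the $X_i$-$Y_i$ constraints are verified from $(\x',\y')$ good plus the definition of $\x$ and $\y'$ — a constraint between $u\in X_i$ and $v\in Y_i$ either has $u\in X_{i-1}$, $v\in Y_{i-1}$ (handled by $(\x',\y')$ good and matching coordinates), or involves $v_i$ on one side, handled by $x_{v_i}=a=z^*_{v_i}$ and goodness of $\z^*$ (for $v_i\in X_i$) or by $y'_{v_i}=a$ (impossible to have $v_i\in Y_i$ since $v_i\in V_i$).

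\medskip
\noindent\textbf{Part (iii): running time.} We iterate over all $\x'\in T'[i-1]$ and all $a\in D_{v_i}$; for each pair, checking the conjuncts and assembling $\x$ takes time polynomial in $|X_{i-1}|+|Y_{i-1}|+|C|\le n^{\Oh(1)}$ (constraints are binary and there is at most one per pair). Collecting the resulting tuples (with deduplication) costs another $n^{\Oh(1)}$ factor. Hence $T'[i]$ is computed in time $|T'[i-1]|\cdot|D_{v_i}|\cdot n^{\Oh(1)}$.

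\medskip
\noindent\textbf{Main obstacle.} The bookkeeping in Part (ii) is the delicate point: one must carefully track which variables lie in $X_{i-1}\cap X_i$, which in $Y_{i-1}\cap Y_i$, and how $v_i$ moves between the "processed" and "boundary" sets, so that the tuple $\y'$ on $Y_{i-1}$ is correctly defined and the pair $(\x'^*,\y')$ is genuinely good before invoking the inductive representation hypothesis. The containments $X_i\subseteq X_{i-1}\cup\{v_i\}$ and $Y_{i-1}\subseteq Y_i\cup\{v_i\}$, noted just before the lemma, are exactly what make this consistent, and the proof should lean on them explicitly at each coordinate-matching step. No single computation is hard; the risk is a mis-stated case in the constraint verification.
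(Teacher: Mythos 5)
Your proof is correct and follows essentially the same route as the paper's: extend the witness through a good tuple on $V_i$ for the inclusion $T'[i]\subseteq T[i]$, then transfer $\y$ to a tuple $\y'$ on $Y_{i-1}$ using the value at $v_i$, invoke the representativity of $T'[i-1]$, and reassemble $\x$ from the returned $\x'$ and $a$. The only nit is in your final goodness check: the constraint between $v_i\in X_i$ and a vertex of $Y_i$ is justified by the goodness of the pair $(\x^*,\y)$ (equivalently of $(\z^*,\y)$), not of $\z^*$ alone, since that $Y_i$-vertex lies outside $V_i$ — the needed fact is at hand, so this is a mis-citation rather than a gap.
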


\begin{proof}
First, let us show that $T'[i]$ is a subset of $T[i]$.
Let $\x=(x_u)_{u\in X_i}$ be a tuple from $T'[i]$.
Let $\x'=(x'_u)_{u \in X_{i-1}}\in T'[i-1]$ and $a \in D_{v_i}$ be the values that justify that $\x \in T'[i]$, i.e., they satisfy:
\begin{enumerate}[a)]
\item $\x|_{X_{i-1}\cap X_i}=\x'|_{X_{i-1}\cap X_i}$,
\item for every constraint $(v_j,v_i,S_c)\in C$ with $v_j\in X_{i-1}$, it holds that $(x'_{v_j},a)\in S_c$,
\item if $v_i \in X_i$, then $x_{v_i}=a$.
\end{enumerate}
Since $T'[i-1]\subseteq T[i-1]$, the tuple $\x'$ can be extended to a good tuple $\z'$ on $V_{i-1}$.
Recall that $X_i\setminus \{v_i\}\subseteq X_{i-1}$.
Moreover, for every $v_j \in X_{i-1}$ such that $(v_j,v_i,S_c)\in C$ it holds that $(x'_{v_j},a)\in S_c$.
Thus $\x$ can be extended to a good tuple $\z=(z_u)_{u\in V_i}$, such that $z_u:=z'_u$ for every $u\in V_{i-1}$ and $z_{v_i}=a$, and therefore $\x \in T[i]$.

Now it remains to show that $T'[i]$ is an ($X_i$-$Y_i$)-representative of $T[i]$. Let $\y=(y_v)_{v\in Y_i}$ be a tuple in $\Y_i$ and suppose that there exists a tuple $\x=(x_u)_{u\in X_i}\in T[i]$, such that the pair $(\x,\y)$ is good. Let $\z=(z_u)_{u\in V_i}$ be a good tuple that extends $\x$ on $V_i$. Note that the pair $(\z,\y)$ is good. Let us define $\y'=(y'_v)_{v\in Y_{i-1}}$ such that $\y'|_{Y_{i-1}\cap Y_i}=\y|_{Y_{i-1}\cap Y_i}$ and if $v_i \in Y_{i-1}$, then $y'_{v_i}:=z_{v_i}$. Note that $\y'$ is well-defined, as $Y_{i-1}\subseteq Y_i\cup \{v_i\}$. Define $\z':=\z|_{X_{i-1}}$. Now observe that the pair $(\z',\y')$ is good. Moreover, by the definition, the tuple $\z'$ can be extended to a good tuple on $V_{i-1}$, so $\z' \in T[i-1]$. 

Recall that $T'[i-1]$ is an ($X_{i-1}$-$Y_{i-1}$)-representative of $T[i-1]$. Thus there exists a tuple $\x''=(x''_u)_{u \in X_{i-1}}\in T'[i-1]$, such that the pair $(\x'',\y')$ is good. Define $\x'=(x'_u)_{u\in X_i}$ such that $\x'|_{X_{i-1}\cap X_i}:=\x''|_{X_{i-1}\cap X_i}$ and if $v_i \in X_i$, then we set $x'_{v_i}:=z_{v_i}$. By the definition of $T'[i]$ it holds that $\x'\in T'[i]$. Moreover, the pair $(\x',\y)$ is good. Therefore, $T'[i]$ is an ($X_i$-$Y_i$)-representative of $T[i]$, which completes the proof.
\end{proof}

Now we are ready to prove \cref{thm:algo-bcsp}.

\begin{proof}[Proof of \cref{thm:algo-bcsp}]
Let $I=(V,D,C)$ be an instance of BCSP, let $P$ be its primal graph and let $K:=K(I)$. Let $\pi=(v_1,\ldots,v_n)$ be a linear layout of vertices of $P$ of width $k$. For every $i \in [n]$, by $E_i$ we denote the set of edges in $P$ with one endpoint in $X_i$ and the other in $Y_i$.

For every $i\in [n]$, we will construct a set $T'[i]$ that is an ($X_i$-$Y_i$)-representative of $T[i]$. Note that the sets $X_n$ and $Y_n$ are empty. Thus $T[n]$, which is the set of all tuples $(x_u)_{u\in X_n}$ that can be extended to a good tuple $(x_u)_{u\in V_n}$, is either empty or contains a $0$-tuple $\emptyset$. The latter one holds if and only if there exists an assignment of values to all variables in $V$ that satisfies every constraint in $C$. Therefore, the instance $(V,D,C)$ is a yes-instance of BCSP if and only if $T[n]$ is non-empty. Moreover, the set $T[n]$ is non-empty if and only if its representing set $T'[n]$ is non-empty. So in order to solve the instance $(V,D,C)$, it is sufficent to compute a set $T'[n]$ that is an ($X_n$-$Y_n$)-representative of $T[n]$.

Recall that $T[0]=\{\emptyset\}$ and thus we set $T'[0]:=\{\emptyset\}$.
For every $i\in [n]$ we proceed as follows.
Since we have already computed a set $T'[i-1]$, which is an ($X_{i-1}$-$Y_{i-1}$)-representative of $T[i-1]$,
we can call \cref{lem:representing-set} to construct a set $T'[i]$, that is an ($X_i$-$Y_i$)-representative of $T[i]$.
Then we call \cref{lem:reduce} for $S = T'[i]$ to obtain another set, $T''[i]$, that is an ($X_i$-$Y_i$)-representative of $T'[i]$
and its size is at most $\big(\prod_{u \in X_i} (\deg_{E_i}(u)+1)\big)^{K}$.
Since the relation of representing is transitive, $T''[i]$ is an ($X_i$-$Y_i$)-representative of $T[i]$.
We replace $T'[i]$ with $T''[i]$ and proceed to the next value of $i$. 

Let $\Lambda:=\max_{i \in [n]} \big(\prod_{u \in X_i} (\deg_{E_i}(u)+1)\big)$. Observe that if the width of $\pi=(v_1,\ldots,v_n)$ is $k$, then for every $i \in [n]$, it holds that $\prod_{u \in X_i} (\deg_{E_i}(u)+1) \leq 2^k$ by the AM-GM inequality. Therefore, $\Lambda \leq 2^k$.

Computing $K(I)$ can be done in time $(D_{max}\cdot n)^{\Oh(1)}$. Every $T'[i]$ from \cref{lem:representing-set} can be computed in time $|T''[i-1]| \cdot |D_{v_i}|\cdot n^{\Oh(1)}$. Since every set $T''[i]$ was obtained by \cref{lem:reduce}, $|T''[i-1]|\leq \Lambda^{K}$ and the set $T'[i]$ from \cref{lem:representing-set} can be computed in time $\Lambda^K \cdot D_{max} \cdot n^{\Oh(1)}$. The time of applying \cref{lem:reduce} to every $T'[i]$ is at most $\Lambda^{K\cdot (\omega-1) } \cdot \Lambda^K \cdot D_{max} \cdot (K\cdot n)^{\Oh(1)} = \Lambda^{K\cdot \omega} \cdot (D_{max} \cdot n)^{\Oh(1)}$. So the total time is at most $\Lambda^{K\cdot \omega} \cdot (D_{max} \cdot n)^{\Oh(1)} \leq 2^{k \cdot K\cdot \omega} \cdot (D_{max} \cdot n)^{\Oh(1)}$. That completes the proof.
\end{proof}
\subsection{Application: DP-coloring}\label{sec:DP}
Let us start with a formal definition of the DP-coloring problem.
A \emph{cover} a of graph $G$ is a pair $\mathcal{H}=(L,H)$, where $H$ is a graph and $L : V(G) \to 2^{V(H)}$ is a function, satisfying the following:
\begin{enumerate}[(1)]
\item the family $\{ L(v) ~|~v \in V(G)\}$ is a partition of $V(H)$,
\item for every $v \in V(G)$, the graph $H[L(v)]$ is complete,
\item for every $uv \in E(G)$, the set of edges joining the sets $L(u)$ and $L(v)$ in $H$ is a matching, \label{cond:matching}
\item for every $uv \notin E(G)$, there are no edges in $H$ with one endpoint in $L(u)$ and the other in $L(v)$.
\end{enumerate}
We are interested in determining the existence of an \emph{$\mathcal{H}$-coloring} of $G$, which is an independent set in $H$ of size $|V(G)|$.
Note that this independent set corresponds to choosing for each $v \in V(G)$ one vertex (color) in $L(v)$, so that no two adjacent vertices are mapped to the neighbors in $H$.

Let us show that this problem is a special case of BCSP. Indeed, for an instance $(G,\mathcal{H}=(L,H))$ of DP-coloring, 
let us define an instance $I=(V,D,C)$ as follows.
Let $V = V(G)$. For each $v \in V$, we set $D_v := L(v)$.
For every $uv \in E(G)$, we add a constraint $(u,v,S_c)$, where $S_c = \{ (a,b) \in L(u) \times L(v) ~|~ ab \notin E(H)\}$.
It is straightforward to verify that $G$ admits an $\mathcal{H}$-coloring if and only if $I$ is a yes-instance of BCSP.
Furthermore, the primal graph $P$ of the instance $I$ is exactly the graph $G$, and $K(I) = 1$, by condition \cref{cond:matching} in the definition of a cover. 
Thus \cref{thm:algo-bcsp} immediately yields the following corollary.

\begin{corollary}
Let $G$ be a graph given with a linear ordering of vertices of width $k$.
Every instance $(G,\mathcal{H}=(L,H))$  of DP-coloring can be solved in time $2^{\omega \cdot k}\cdot (|V(H)| \cdot |V(G)|)^{\Oh(1)}$.
\end{corollary}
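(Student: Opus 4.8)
The plan is to formalize the reduction to BCSP sketched just above and then apply \cref{thm:algo-bcsp}. Given an instance $(G,\mathcal{H}=(L,H))$ of DP-coloring, I would build the BCSP instance $I=(V,D,C)$ with $V=V(G)$, $D_v:=L(v)$ for each $v$, and for each $uv\in E(G)$ a single constraint $(u,v,S_c)$ with $S_c=\{(a,b)\in L(u)\times L(v)\suchthat ab\notin E(H)\}$; this is clearly computable in polynomial time. For correctness, note that since $\{L(v)\}_{v\in V(G)}$ partitions $V(H)$ and each $H[L(v)]$ is a clique, an independent set of size $|V(G)|$ in $H$ is exactly a choice of one vertex from each $L(v)$, i.e.\ an assignment $w$ with $w(v)\in D_v$; and such a choice is independent iff $w(u)w(v)\notin E(H)$ for every $uv\in E(G)$ --- edges of $H$ inside a single $L(v)$, and between $L(u)$ and $L(v)$ for $uv\notin E(G)$, being excluded automatically by the definition of a cover. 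Thus $G$ has an $\mathcal{H}$-coloring iff $I$ is a yes-instance of BCSP.

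Next I would record the two quantitative facts that make \cref{thm:algo-bcsp} yield the claimed running time. First, the primal graph $P$ of $I$ has an edge $uv$ exactly when $\{u,v\}$ carries a constraint, i.e.\ exactly when $uv\in E(G)$; hence $P=G$, and the supplied linear ordering of $V(G)$ of width $k$ is a linear ordering of $P$ of width $k$. Second, $K(I)\le 1$: fixing an edge $uv\in E(P)$ and a value $a\in D_v=L(v)$, the set $\{b\in D_u\suchthat (b,a)\notin S_c\}$ equals $\{b\in L(u)\suchthat ab\in E(H)\}$, and by condition~\cref{cond:matching} in the definition of a cover the edges of $H$ between $L(u)$ and $L(v)$ form a matching, so this set has at most one element. (It is empty, giving $K(I)=0$, exactly when $G$ has no edges.)

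Finally I would invoke \cref{thm:algo-bcsp}: the instance $I$ is solved in time $2^{\omega\cdot K(I)\cdot k}\cdot(D_{max}\cdot n)^{\Oh(1)}$ with $n=|V|=|V(G)|$ and $D_{max}=\max_v|L(v)|\le|V(H)|$, and since $K(I)\le 1$ this is at most $2^{\omega k}\cdot(|V(H)|\cdot|V(G)|)^{\Oh(1)}$; a decision for $I$ is a decision for the DP-coloring instance, so we are done. I do not expect any genuine obstacle here --- the only care needed is with the degenerate cases (some $L(v)=\emptyset$ gives a trivial no-instance, and $E(G)=\emptyset$ gives $K(I)=0$), which only make the bound easier, and the trivial observation that translating the BCSP answer back to DP-coloring costs nothing.
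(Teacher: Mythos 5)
Your proposal is correct and follows essentially the same route as the paper: the same reduction of DP-coloring to BCSP (variables $V(G)$, domains $L(v)$, complement-of-matching constraints on edges), the same identification of the primal graph with $G$, the same use of the matching condition in the definition of a cover to bound $K(I)\leq 1$, and then a direct application of \cref{thm:algo-bcsp}. The extra remarks on correctness and degenerate cases are fine but not needed beyond what the paper states.
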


\subsection{Application: $\lhomo{H}$}

Recall that every instance $(G,L)$ of $\lhomo{H}$ can be seen as an instance $I=(V,D,C)$ of BCSP, where $V=V(G)$, for each $v \in V(G)$ we have $D_v = L(v)$, and $C$ consists of all tuples $(u,v,S_c)$, where $uv \in E(G)$, and $S_c = \{(a,b)\in L(u) \times L(v) ~|~ ab \in E(H)\}$.
The primal graph of $I$ is precisely $G$. 

Our algorithm is based on the following result of Okrasa~\emph{et al.}~\cite{LhomoTreewidthFull}.

\begin{theorem}[Okrasa~\emph{et al.}~\cite{LhomoTreewidthFull}]\label{thm:factors}
Let $H$ be a graph.
In time $|V(H)|^{\Oh(1)}$ we can construct a family $\cH$ of $\Oh(|V(H)|)$ connected graphs such that:
\begin{enumerate}[(1)]
\item $H$ is a bi-arc graph if and only if every $H' \in \cH$ is a bi-arc graph, \label{factors:hard}
\item if $H$ is bipartite, then each $H' \in \cH$ is an induced subgraph of $H$, and is either the complement of a circular-arc graphs or is undecomposable, \label{factors:bipartite}
\item otherwise, for each $H' \in \cH$, the graph $H'^*$ is an induced subgraph of $H^*$ and at least one of the following holds: \label{factors:types}
\begin{enumerate}
\item $H'$ is a bi-arc graph, or
\item the vertex set of $H'$ can be partitioned into two sets $P,B$, such that $P$ induces a reflexive clique and $B$ is independent\footnote{The  statement of this condition in \cite{LhomoTreewidthFull} is more involved, but the simpler version we present here is sufficient for our application.}, or \label{factors:types:strongsplit}
\item $(H')^*$ is undecomposable, \label{factors:types:undecomposable}
\end{enumerate}
\item for every instance $(G,L)$ of \lhomo{H} with $n$ vertices, the following implication holds: \label{factors:bottomup}

If there exists a non-decreasing, convex function $f_H \colon \N \to \R$,
such that for every $H' \in \cH$, for every induced subgraph $G'$ of $G$, and for every $H'$-lists $L'$ on $G'$,
we can decide whether $(G',L')\to H'$ in time $f_H(|V(G')|)$, then
we can solve the instance $(G,L)$  in time
\[
\Oh \left (|V(H)| f_H(n) + n^2 \cdot |V(H)|^3 \right).
\]
\end{enumerate}
\end{theorem}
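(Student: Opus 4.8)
The family $\cH$ will be the leaf set of a \emph{decomposition tree} $\cT$ built by recursively applying bipartite decompositions. The root of $\cT$ is labelled $H$ if $H$ is bipartite and $H^*$ otherwise; each internal node labelled by a graph $H_0$ has two children labelled by the two ``sides'' of a bipartite decomposition of $H_0$ (in the non\nobreakdash-bipartite branch we only ever use decompositions compatible with the involution $v'\leftrightarrow v''$, so that the sides stay of the form $(\cdot)^*$); and every leaf is labelled by a connected graph that is irreducible, i.e.\ a complement of a circular\nobreakdash-arc graph, or undecomposable, or (in the non\nobreakdash-bipartite branch) of a special split shape. Splitting leaf labels into connected components gives $\cH$. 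Properties 1, 2, 3 are then read off directly from the stopping rule of the recursion; property 4 is obtained by a bottom\nobreakdash-up algorithm on $\cT$ that lifts a solver for the leaves to a solver for the root. The construction is clearly polynomial in $|V(H)|$ once the two ingredients below are in place: a polynomial\nobreakdash-time test for irreducibility (recognition of circular\nobreakdash-arc graphs is polynomial, and one can search for a bipartite decomposition directly from \cref{def:bipartite-decomposition}, or use the characterization behind \cite[Theorem~46]{LhomoTreewidthFull}), and a bound of $\Oh(|V(H)|)$ on the number of leaves.

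\textbf{The bipartite branch.} Suppose $H$ is bipartite. If $H$ is a complement of a circular\nobreakdash-arc graph or undecomposable, output $\cH=\{H\}$. Otherwise take a bipartite decomposition $(D,N,R)$ and recurse on the induced subgraphs supported on $D\cup N$ and on $N\cup R$ (identifying vertices of $N$ that become twins). Since $N$ is a non\nobreakdash-empty separator of the non\nobreakdash-empty sets $D$ and $R$, each child has strictly fewer vertices than its parent, so the recursion halts; moreover condition 2 of \cref{def:bipartite-decomposition} ensures that $D$ carries at least two vertices of one class, which lets a charging argument (each split commits the $\ge 2$ vertices of $D$ to the left side and the vertices of $R$ to the right side, never to return) bound the number of leaves by $\Oh(|V(H)|)$. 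Because every piece produced is an induced subgraph of $H$, breaking leaves into connected components yields an $\Oh(|V(H)|)$\nobreakdash-size family of connected induced subgraphs of $H$, each a complement of a circular\nobreakdash-arc graph or undecomposable; this is property 2, and property 1 follows here because for bipartite graphs being a bi\nobreakdash-arc graph is the same as being a complement of a circular\nobreakdash-arc graph, and this property survives the split exactly because a bipartite decomposition of a complement of a circular\nobreakdash-arc graph would contradict its recognition — so the root is irreducible-and-polynomial iff all leaves are.

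\textbf{The non\nobreakdash-bipartite branch.} If $H$ is a bi\nobreakdash-arc graph, output $\cH=\{H\}$ (property 3(a)); this is polynomial\nobreakdash-time testable since $H$ is bi\nobreakdash-arc iff the complement of $H^*$ is a circular\nobreakdash-arc graph. Otherwise $H^*$ is not a complement of a circular\nobreakdash-arc graph; we run the bipartite recursion on $H^*$ but restricted to involution\nobreakdash-invariant bipartite decompositions, so that the two sides are $(H_1)^*$ and $(H_2)^*$ for induced subgraphs $H_1,H_2$ of $H$ and $(H_i)^*$ is an induced subgraph of $H^*$. The delicate point is the stopping configuration: when $H^*$ admits no involution\nobreakdash-invariant bipartite decomposition, I would invoke the structural dichotomy of Okrasa~\emph{et al.}\ underlying \cite[Theorem~46]{LhomoTreewidthFull}, to the effect that then $H$ is a bi\nobreakdash-arc graph (excluded), or $V(H)$ partitions into a reflexive clique plus an independent set (property 3(b)), or $H^*$ is itself undecomposable (property 3(c)) — in every case $H$ is declared a leaf with $(H')^*=H^*$. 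Combined with the bipartite analysis applied inside the $(\cdot)^*$ world, this gives properties 1 and 3.

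\textbf{The algorithmic reduction --- and the main obstacle.} For property 4, fix a non\nobreakdash-decreasing convex $f_H$ as hypothesised and process $\cT$ bottom\nobreakdash-up. At a leaf $H'$ that is a complement of a circular\nobreakdash-arc graph, a bi\nobreakdash-arc graph, or of the reflexive\nobreakdash-clique\nobreakdash-plus\nobreakdash-independent\nobreakdash-set form, solve $\lhomo{H'}$ on the relevant induced subgraph of $G$ in polynomial time via Feder--Hell--Huang~\cite{DBLP:journals/jgt/FederHH03}; at the remaining ``hard'' leaves use the black box, at cost $f_H(n')$ on an $n'$\nobreakdash-vertex subinstance. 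At an internal node labelled $H_0$ with bipartite decomposition $(D,N,R)$: first make the instance arc\nobreakdash-consistent in time $\Oh(n^2|V(H_0)|^2)$, and then use the biclique condition on $N$ together with the completeness conditions 3--4 of \cref{def:bipartite-decomposition} to show that, on a connected arc\nobreakdash-consistent instance $(G',L')$, solving $\lhomo{H_0}$ reduces to $\Oh(1)$ calls to the two child problems $\lhomo{H[D\cup N]}$ and $\lhomo{H[N\cup R]}$ on induced subgraphs of $G'$ whose sizes sum to at most $|V(G')|$ up to an $N$\nobreakdash-bounded overlap (the non\nobreakdash-list case entering through \cref{prop:bipartite-associted} on the $(\cdot)^*$ side). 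Since $\cT$ has $\Oh(|V(H)|)$ nodes and at each level the subinstances handed down are, up to separators of size $\le|V(H)|$, vertex\nobreakdash-disjoint, the total black\nobreakdash-box cost telescopes --- using that $f_H$ is non\nobreakdash-decreasing and convex to absorb the overlaps --- to $\Oh(|V(H)|\cdot f_H(n))$, while the arc\nobreakdash-consistency and bookkeeping overhead sums to $\Oh(n^2|V(H)|^3)$. The hard part is exactly this last step: proving that a bipartite decomposition yields only a constant number of child\nobreakdash-calls per connected component and that the total size of all child subinstances is controlled, so that the per\nobreakdash-level disjointness accounting survives the repeated re\nobreakdash-establishment of arc\nobreakdash-consistency; this is where the precise geometry of \cref{def:bipartite-decomposition} (that $N$ is a biclique and $D$ is complete towards $N$) must be exploited.
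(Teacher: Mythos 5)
The first thing to note is that this paper does not prove \cref{thm:factors} at all: it is imported as a black box from Okrasa \emph{et al.}~\cite{LhomoTreewidthFull}, so there is no in-paper argument to compare your reconstruction against. Judged on its own terms, your sketch has the right general shape (a recursion along bipartite decompositions whose leaves form $\cH$, plus a bottom-up solver over the decomposition tree), but it has genuine gaps at exactly the points that carry the theorem.

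First, in the non-bipartite branch you ``invoke the structural dichotomy of Okrasa \emph{et al.}\ underlying \cite[Theorem~46]{LhomoTreewidthFull}'' to conclude that, when no suitable decomposition of $H^*$ exists, either $H$ splits into a reflexive clique plus an independent set or $(H')^*$ is undecomposable; that dichotomy is essentially item (3) of the theorem you are proving, so this step is circular rather than a proof. Second, your charging argument for $|\cH|=\Oh(|V(H)|)$ does not work as stated: recursing on $D\cup N$ and $N\cup R$ duplicates the separator $N$ into both children, so vertices do ``return'', the total size of the pieces can grow at every level, and a binary recursion in which each child is merely strictly smaller than its parent can a priori have exponentially many leaves; a genuine accounting argument is needed and is not supplied. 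Third, and most importantly, property (4) --- the claim that at an internal node $\lhomo{H_0}$ reduces to boundedly many calls to the two children on subinstances whose sizes can be summed, yielding total cost $\Oh\left(|V(H)|\,f_H(n)+n^2|V(H)|^3\right)$ --- is exactly what you yourself flag as ``the hard part'' and leave unproven. Since this is the part of the statement the present paper actually uses (in \cref{sec:algo}, to lift the cutwidth algorithm from the factors to general $H$), the proposal as it stands does not establish the theorem; it reduces it to the main lemmas of \cite{LhomoTreewidthFull} without proving them.
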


The graphs in the family $\cH$ are called \emph{factors} of $H$.
So in order to solve the \lhomo{H} problem, it is sufficient to give an algorithm for \lhomo{H'} for every factor $H'$ of $H$.

Before we proceed to the proof, let us discuss first a special case when $H$ is bipartite.
Then we will lift this result to all target graphs.

\paragraph{Bipartite target graphs.}
Let $H$ be bipartite and let $\cH$ be the family of its factors.
Let $(G,L)$ be an instance of \lhomo{H}, where $G$ is given with a linear layout of width $k$.
Consider $H' \in \cH$.
If $H'$ is the complement of a circular-arc graph, then we can solve the \lhomo{H'} problem in polynomial time.
Otherwise, by \cref{thm:factors}~\cref{factors:bipartite}, we know that $H'$ is a connected induced subgraph of $H$, and it is undecomposable.

Consider an instance $(G',L')$ of \lhomo{H'}, where $G'$ is an induced subgraph of $G$.
Clearly, a linear layout of $G$ with width $k$ induces a linear layout of $G'$ with width at most $k$.
Let $I$ be the instance of BCSP corresponding to $(G',L')$. By \cref{thm:algo-bcsp} we can solve it in time $2^{K(I) \cdot \omega \cdot k} \cdot (|V(G')| \cdot |V(H')|)^{\Oh(1)}$. 

Let us estimate the value of $K(I)$.
Recall that without loss of generality we can assume that $(G',L')$ is consistent.
Furthermore, for every edge $uv$ of $G'$ and every $a \in L'(u)$, we may assume that $a$ is adjacent in $H'$ to some vertex in $L'(v)$,
as otherwise we can safely remove $a$ from $L'(u)$.
Thus $K(I)$ is upper-bounded by $\gamma(H')$, which is defined as the maximum over pairs of incomparable sets $S_1,S_2 \subseteq V(H')$,
each contained in a different bipartition class of $H'$, such that for every $x \in S_1$ there is $y \in S_2 \cap N_{H'}(x)$,
and for every $y \in S_2$ there is $x \in S_1 \cap N_{H'}(y)$, of the value
\[ \max_{x \in S_1} |\{y ~|~xy \notin E(H')\}|.\]

So we conclude that every instance $(G',L')$ of \lhomo{H'}, where $G'$ is an induced subgraph of $G$, can be solved in time $2^{\gamma(H') \cdot \omega \cdot k} \cdot (|V(G')| \cdot |V(H')|)^{\Oh(1)}$.

Now let us define a new parameter $\gamma^*(H)$.
If the complement of $H$ is not a circular-arc graph, we define $\gamma^*(H)$ as the maximum value of $\gamma(H')$ over all connected undecomposable induced subgraphs $H'$ of $H$, which are not the complement of a circular-arc graph.
If $H$ is the complement of a circular-arc graph, we define $\gamma^*(H)=\gamma(H):=1$.

Now observe that for every $H' \in \cH$, every instance $(G,L')$ of \lhomo{H'}, where $G'$ is an induced subgraph of $G$, can be solved in time $2^{\gamma^*(H) \cdot \omega \cdot k} \cdot (|V(G')| \cdot |V(H')|)^{\Oh(1)}$.
As this function is convex and non-decreasing, \cref{thm:factors} yields the following.

\begin{corollary}\label{cor:algo-bipartite}
Let $H$ be a bipartite graph and let $(G,L)$ be an instance of $\lhomo{H}$, where $G$ is given with a linear layout of width $k$.
Then $(G,L)$ can be solved in time $2^{\gamma^*(H) \cdot \omega \cdot k} \cdot (|V(G)| \cdot |V(H)|)^{\Oh(1)}$.
\end{corollary}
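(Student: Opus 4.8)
The plan is to reduce to the per-factor situation handled by \cref{thm:factors} and to invoke the BCSP algorithm of \cref{thm:algo-bcsp} on each factor. First I would construct, in time $|V(H)|^{\Oh(1)}$, the family $\cH$ of factors of $H$ guaranteed by \cref{thm:factors}; since $H$ is bipartite, item~\cref{factors:bipartite} tells us that every $H' \in \cH$ is a connected induced subgraph of $H$ which is either the complement of a circular-arc graph or undecomposable, and in the latter case its complement is not a circular-arc graph. By item~\cref{factors:bottomup}, in order to obtain the claimed running time for $(G,L)$ it suffices to exhibit, for every factor $H'$, an algorithm that solves $\lhomo{H'}$ on every induced subgraph $G'$ of $G$ (with arbitrary $H'$-lists $L'$) in time $f_H(|V(G')|)$ for some fixed non-decreasing convex function $f_H$. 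I will take $f_H(n) := 2^{\gamma^*(H)\cdot\omega\cdot k}\cdot (n\cdot |V(H)|)^{\Oh(1)}$, which is plainly non-decreasing and convex in $n$.

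So fix a factor $H'$ and an induced subgraph $G'$ of $G$ together with $H'$-lists $L'$. If $H'$ is the complement of a circular-arc graph, then $\lhomo{H'}$ is polynomial-time solvable and we are done. Otherwise $H'$ is connected, bipartite, undecomposable, and its complement is not a circular-arc graph. I would first reduce $(G',L')$ to a consistent instance (handle the disconnected and non-bipartite cases trivially, split on the two ways of mapping the bipartition classes of $G'$, and prune each list so that it becomes incomparable), and additionally iteratively remove from each $L'(u)$ every vertex that has no neighbour in $L'(v)$ for some edge $uv$ of $G'$; all these reductions are safe and increase no parameter. Since the given linear layout of $G$ restricts to a linear layout of $G'$ of width at most $k$, I can pass the resulting instance to the associated BCSP instance $I$, whose primal graph is $G'$, and apply \cref{thm:algo-bcsp}, obtaining running time $2^{K(I)\cdot\omega\cdot k}\cdot(|V(G')|\cdot|V(H')|)^{\Oh(1)}$.

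The main work — and the step I expect to be the crux — is bounding $K(I)$. For an edge $uv$ of $G'$ and a value $a \in L'(v)$, the set of forbidden values in $L'(u)$ is exactly $\{x \in L'(u) \colon xa \notin E(H')\}$. By consistency the sets $L'(X_{G'})$ and $L'(Y_{G'})$ lie in distinct bipartition classes of $H'$ and are incomparable, and after the extra pruning every vertex of one of these sets has a neighbour in the other; hence each pair $(L'(u),L'(v))$ realises exactly the configuration appearing in the definition of $\gamma(H')$, so $K(I) \le \gamma(H') \le \gamma^*(H)$, the last inequality being the definition of $\gamma^*$ for bipartite $H$ (recall $H'$ is a connected undecomposable induced subgraph of $H$ whose complement is not a circular-arc graph). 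Substituting this bound gives running time $2^{\gamma^*(H)\cdot\omega\cdot k}\cdot(|V(G')|\cdot|V(H)|)^{\Oh(1)} \le f_H(|V(G')|)$ for every factor $H'$, and then \cref{thm:factors}~\cref{factors:bottomup} assembles these into an algorithm for $(G,L)$ running in time $\Oh(|V(H)|\,f_H(n) + n^2|V(H)|^3) = 2^{\gamma^*(H)\cdot\omega\cdot k}\cdot(|V(G)|\cdot|V(H)|)^{\Oh(1)}$, which is the claimed bound.
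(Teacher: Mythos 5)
Your proposal is correct and takes essentially the same route as the paper's proof: construct the factor family of \cref{thm:factors}, solve each non-polynomial (connected, undecomposable) factor $H'$ by passing the consistent, pruned instance with the induced layout of width at most $k$ to the BCSP algorithm of \cref{thm:algo-bcsp}, bound $K(I)\le\gamma(H')\le\gamma^*(H)$, and assemble the per-factor algorithms via part (4) of \cref{thm:factors}. The only slip is immaterial: consistency guarantees that each individual list $L'(u)$ is an incomparable set contained in one bipartition class, not that the unions $L'(X_{G'})$ and $L'(Y_{G'})$ are incomparable, but your bound on $K(I)$ only invokes the per-edge pairs $(L'(u),L'(v))$, for which per-list incomparability together with your pruning is exactly what the definition of $\gamma(H')$ requires.
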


\paragraph{General target graphs.}
Let $H$ be a non-bi-arc graph. As usual, we extend the definition of $\gamma^*$ to all graphs by setting $\gamma^*(H) := \gamma^*(H^*)$. 

Let $\cH$ be the family of factors of $H$ and let $(G,L)$ be an instance of \lhomo{H}, where $G$ is given with a linear layout with width $k$.
Again, we need to solve \lhomo{H'} for every $H' \in \cH$ on instances $(G',L')$, where $G'$ is an induced subgraph of $G$.
Recall that the linear layout for $G$ with width $k$ induces a linear layout $\sigma$ for $G'$ with width at most $k$.
Let us discuss the complexity of solving $(G',L')$.
Consider three cases corresponding to the options in \cref{thm:factors}~\cref{factors:types}.

\begin{enumerate}[(a)]
\item If $H' \in \cH$ is a bi-arc graph, then every instance of $\lhomo{H'}$ can be solved in polynomial time.

\item If the vertex set of $H'$ can be partitioned into a reflexive clique $P$ and an independent set $B$, we follow the argument by Okrasa \emph{et al.}~\cite{LhomoTreewidth}.
Let $\tH$ be the bipartite graph obtained from $H'$ by removing all edges with both endpoints in $P$ (including loops).
We note that $\tH$ is an induced subgraph of $(H')^*$, and thus also an induced subgraph of $H^*$, so $\gamma^*(\tH) \leq \gamma^*(H)$.

Observe that for every $p \in P$ and $b \in B$ it holds that $N_{H'}(b) \subseteq N_{H'}(p)$.
Since we may assume that each list is a non-empty incomparable set, we can partition the vertex set of $G'$ into two subsets: $P' := \{v ~|~ L'(v) \cap P \neq \emptyset\}$ and $Q' := \{v ~|~ L'(v) \cap Q \neq \emptyset\}$.
Observe that if $Q'$ is not independent, then we can immediately report that $(G',L')$ is a no-instance.
So let $G''$ be the graph obtained from $G'$ by removing all edges with both endpoints in $P'$.
Clearly $G''$ is bipartite, and, as observed by Okrasa \emph{et al.}~\cite{LhomoTreewidth}, $(G',L')$ is a yes-instance of $\lhomo{H'}$ if and only if $(G'',L')$ is a yes-instance of $\lhomo{\tH}$.
Finally, as that $G''$ was obtained from $G'$ by deleting edges, $\sigma$ is a linear layout for $G''$ width width at most $k$.
So applying \cref{cor:algo-bipartite} to $(G'',L')$, we conclude that the instance $(G',L')$ can be solved in time
\[2^{\gamma^*(\tH) \cdot \omega \cdot k} \cdot (|V(G'')| \cdot |V(\tH)|)^{\Oh(1)} \leq 2^{\gamma^*(H) \cdot \omega \cdot k} \cdot (|V(G')| \cdot |V(H')|)^{\Oh(1)}.\]

\item Finally, suppose that $H'$ is a connected non-bi-arc graph and $(H')^*$ is an undecomposable induced subgraph of $H^*$. Observe that if $H'$ is bipartite, then we are done by  \cref{cor:algo-bipartite}. So let us assume otherwise, so in particular $H'^*$ is connected.
Then $\gamma^*(H')=\gamma^*((H')^*)=\gamma((H')^*)\leq \gamma^*(H^*)=\gamma^*(H)$.

Consider an instance $(G',L')$ of $\lhomo{H'}$.
Following Feder \emph{et al.}~\cite{DBLP:journals/jgt/FederHH03}, we define an \emph{associated instance} $(G'^*,L'^*)$ of $\lhomo{H'^*}$, so that for $v \in V(G')$ and $x \in V(H')$ it holds that $x \in L'(v)$ if and only if  $x'\in L'^*(v')$ if and only if $x''\in L'^*(v'')$.
A list homomorphism $\vphi:(G'^*,L'^*)\to H'^*$ is \emph{clean} if for every $v\in V(G')$ and $x\in V(H')$,
it holds that $\vphi(v')=x'$ if and only if $\vphi(v'')=x''$.
As observed by  Okrasa \emph{et al.}~\cite{LhomoTreewidth} (although the original idea comes from  Feder \emph{et al.}~\cite{DBLP:journals/jgt/FederHH03}), it holds that $(G',L')\to H'$ if and only if $(G'^*,L'^*)$ admits a clean homomorphism to $H^*$.

So let us solve the instance $(G',L')$ of $\lhomo{H'}$ by looking for a clean homomorphism from $(G'^*,L'^*)$ to $H'^*$.
We need to adapt the algorithm given in \cref{sec:algo-meat}. As the adaptation is rather straightforward and technical,
we will just point out the differences to the version presented above.

First, observe that if $G'$ is given with a linear layout $\sigma=(v_1,\ldots,v_{|G'|})$ of width at most $k$,
then in polynomial time we can construct the linear layout $\sigma^*=(v_1',v_1'',\ldots,v_{|G'|}',v_{|G'|}'')$ of $G'^*$ with width at most $2k$.
We compute the sets $T'[i]$ similarly as in the proof of \cref{thm:algo-bcsp}, but this time by $V_i$ we denote the set $\{v_1',v_1'',\ldots,v_i',v_i''\}$, i.e., we either include both $v_i',v_i''$, or none of them.
The  definitions of the other sets, i.e., $X_i$, $Y_i$, $T[i]$, and $T'[i]$, are updated in an analogous way.

Moreover, as we are looking for clean homomorphisms, we will only consider the colorings of tuples $\x$ and $\y$, such that vertices $v_j',v_j''$ are mapped, respectively, to $x'$ and $x''$ for some $x \in V(H)$.
Thus the size of the set obtained by \cref{lem:representing-set} does not increase.
Finally, when we apply \cref{lem:reduce} to $T'[i]$, and thus to some set of colorings of $X_i$, it is enough to construct a matrix $M[\x,\y]$, so that $\x$ is a coloring of vertices in $X_i \cap \{v' \ | \ v \in V(G'\}$ and $\y$ is a coloring of vertices in $Y_i\cap \{v'' \ | \ v \in V(G')\}$, as they imply the colorings of all vertices in $X_i$ and $Y_i$.
Therefore, although the upper bound for the width of the linear layout $\sigma^*$ is $2k$,
we only consider half of the edges crossing a cut.
We conclude that the time of finding a clean homomorphism from $(G'^*,L'^*)$ to $H'^*$, and thus solving the instance $(G',L')$ of $\lhomo{H'}$, is at most $2^{k\cdot \gamma(H'^*) \cdot \omega} \cdot (|V(H'^*)| \cdot |V(G'^*)|)^{\Oh(1)}\leq 2^{k\cdot \gamma^*(H) \cdot \omega} \cdot (|V(H')| \cdot |V(G')|)^{\Oh(1)}$.
\end{enumerate}

Similarly to the case that $H$ is bipartite, by \cref{thm:factors} we obtain the following corollary.

\algolhomo*

\newpage
\section{Conclusion}
\subsection{Comparison of parameters}
In this section we will compare parameters $mim^*(H)$, $i^*(H)$, and $\gamma^*(H)$. We will only consider connected, bipartite, undecomposable graphs $H$, whose complement is not a circular-arc graph, and thus only the parameters $mim(H)$, $i(H)$, $\gamma(H)$, as any inequalities for $i(H)$, $mim(H)$, and $\gamma(H)$ imply the same inequalities for $i^*$, $mim^*$, and $\gamma^*$ for general target graphs. 

First let us show that $mim(H)-1 \leq \gamma(H) \leq i(H)-1$. To see the first inequality, consider a strongly incomparable set  $S_1 \subseteq V(H)$, contained in one bipartition class, such that $|S_1|=mim(H)$. Let $S_2\subseteq V(H)$ be a set of private neighbors of $S_2$, i.e., the set $S_1\cup S_2$ induces a matching in $H$. Let $s \in S_1$. The number of vertices in $S_2$ non-adjacent to $s$ is $|S_2|-1=mim(H)-1$. Since $S_1$, $S_2$ are both incomparable sets, each contained in different bipartition class of $H$, for every $s_1 \in S_1$ it holds that $N(s_1)\cap S_2 \neq \emptyset$, and for every $s_2 \in S_2$ it holds that $N(s_2)\cap S_1 \neq \emptyset$, we conclude that $\gamma(H)\geq mim(H)-1$. The second inequality follows from the fact that the sets $S_1$, $S_2$ from the definition of $\gamma(H)$ are incomparable and for every $s \in S_1$ (resp. $S_2$) at least one vertex in $S_2$ (resp. $S_1$) is adjacent to $s$.

Now let us show that the differences between $mim(H)$ and $\gamma(H)$, and between $\gamma(H)$ and $i(H)$ can be arbitrarily large. First, consider $H$ which is a biclique $K_{r,r}$ with a perfect matching removed. Note that if $r\geq3$, then $H$ contains an induced $C_6$ and thus $H$ is not a complement of a circular-arc graph. Moreover, $H$ is undecomposable and every bipartition class of $H$ is an incomparable set. Therefore $i(H)=r$. On the other hand, for every $v \in V(H)$ there is only one vertex in the other bipartition class that is non-adjacent to $v$, and hence $\gamma(H)= 1$.

To show that $\gamma(H)$ might be arbitrarily larger than $mim(H)$, we start with a biclique $K_{r+1,r+1}$ and again we remove from the graph a perfect matching. Let $u_1,u_2$ be non-adjacent vertices from different bipartition classes. We add to the graph two new vertices, $v_1$, $v_2$, and we add edges $u_1v_2$, $v_1v_2$, and $v_1u_2$. That completes the construction of $H$. It can be verified that $H$ is undecomposable and if $r+1\geq 3$, then $H$ is not a complement of a circular-arc graph. Moreover, both bipartition classes of $H$ are incomparable sets, and the number of vertices in the other bipartition class than $v_1$ that are non-adjacent to $v_1$ is $r$. Thus $\gamma(H)\geq r$. The size of any induced matching in $H$ is at most three, since there could be at most two edges from the biclique and at most one of three added edges  $u_1v_2$, $v_1v_2$, $v_1u_2$. 

Finally, let us point out that although we have the inequality $\gamma(H)\leq i(H)-1$ and the difference between $\gamma(H)$ and $i(H)$ can be arbitrarily large, for some $H$ it holds that $i(H) \leq 2^{\omega \cdot \gamma(H)}$. Indeed, in the second example of $H$ we have $i(H)=r+2$ and $\gamma(H)\geq r$, so for $r\geq 2$, it holds that $i(H)<2^{\omega \cdot \gamma(H)}$. Therefore, our algorithm solving $\lhomo{H}$ in time $2^{\omega \cdot \gamma(H)\cdot \ctw{G}} \cdot n^{\Oh(1)}$ and the algorithm from \cite{LhomoTreewidth} that solves $\lhomo{H}$ in time $i(H)^{\tw{G}}\cdot n^{\Oh(1)} \leq i(H)^{\ctw{G}}\cdot n^{\Oh(1)}$ are incomparable.

\subsection{Further research directions}

As a main problem of the paper, we were investigating the fine-grained complexity of the \lhomo{H} problem, parameterized by the cutwidth of the instance graph. We provided a lower bound and two upper bounds, incomparable to each other. A natural open question is to close the gap between lower and upper bounds, and provide a full complexity classification.

As a concrete problem, we believe that a good starting point is to understand the complexity of $\lhomo{C_k}$, where $k \geq 5$.
Recall that we have a lower bound $(mim^*(C_k))^{\ctw{G}} \cdot |V(G)|^{\Oh(1)}$ and an upper bound $(i^*(C_k))^{\ctw{G}}  \cdot |V(G)|^{\Oh(1)}$ (the bound from \cref{thm:alglhomo} is worse in this case). The value of $mim^*(C_k)$ is $\lfloor k/3 \rfloor$  if $k$ is even, and $\lfloor 2k/3 \rfloor$ is $k$ is odd.
On the other hand, $i^*(C_k)$ is $k/2$ is $k$ is even, and $k$ if $k$ is odd.
Where does the truth lie?
To be even more specific, what is the complexity of $\lhomo{C_6}$?

Another research direction that we find exciting is to study the complexity of \homo{H} and \lhomo{H}, depending on different parameters of the instance graph. In particular, Lampis~\cite{DBLP:conf/icalp/Lampis18} showed that $k \coloring$ on a graph $G$ can be solved in time $(2^k-2)^{\textrm{cw}(G)} \cdot |V(G)|^{\Oh(1)}$, where $\textrm{cw}(G)$ is the \emph{clique-width} of $G$. Furthermore, an algorithm with a running time  $(2^k-2-\epsilon)^{\textrm{cw}(G)} \cdot |V(G)|^{\Oh(1)}$, for any $\epsilon >0$, would contradict the SETH. We believe it is exciting to investigate how these results generalize to non-complete target graphs $H$.

\newpage
\bibliographystyle{plain}
\bibliography{main}

\end{document}